\documentclass[a4paper,reqno]{amsart} 
\usepackage{amsfonts}
\usepackage{amssymb}
\usepackage{amsthm}
\usepackage{amsmath}
\usepackage{mathrsfs}
\usepackage{dsfont}
\usepackage{stmaryrd}
\usepackage[english]{babel}
\textwidth14.1cm
\textheight22cm 
\flushbottom

\usepackage[left=3.5cm,right=3.5cm,top=3.5cm,bottom=3cm,headsep=0.7cm]{geometry}

\usepackage{pgf,tikz}
\usetikzlibrary{arrows}
\usetikzlibrary{intersections}
\usepackage[scriptsize,bf]{caption}
\usepackage{floatrow}
\usepackage{enumerate}
\usepackage{enumitem}
\usepackage{hyperref}
\usepackage[normalem]{ulem}

\theoremstyle{plain}
\begingroup
\theoremstyle{plain}
\newtheorem{theorem}{Theorem}[section]

\newtheorem{proposition}[theorem]{Proposition}
\newtheorem{lemma}[theorem]{Lemma}
\theoremstyle{definition}
\newtheorem{definition}[theorem]{Definition}
\theoremstyle{remark}
\newtheorem{remark}[theorem]{Remark}

\endgroup

\theoremstyle{definition}
\theoremstyle{remark}

\numberwithin{equation}{section}


 
\newcommand{\RR}{\mathbb{R}}
\newcommand{\NN}{\mathbb{N}}
\newcommand{\ZZ}{\mathbb{Z}}
\newcommand{\CC}{\mathbb{C}}

\renewcommand{\SS}{\mathbb{S}}

\renewcommand{\S}{\mathcal{S}}

\renewcommand{\H}{\mathcal{H}}
\newcommand{\D}{\mathcal{D}}
\newcommand{\DD}{\mathrm{D}}

\renewcommand{\L}{\mathcal{L}}
\newcommand{\M}{\mathcal{M}}

\newcommand{\q}{\mathfrak{q}}

\mathsurround=1pt
\mathchardef\emptyset="001F
\renewcommand{\d}[1]{\, \mathrm{d} #1}
\newcommand{\de}{\partial}
\newcommand{\e}{\varepsilon}
\renewcommand{\tilde}{\widetilde}
\newcommand{\x}{{\times}}
\newcommand{\ol}{\overline}
\newcommand{\ul}{\underline}
\newcommand{\sm}{\setminus}
\newcommand{\dist}{{\rm dist}}
\newcommand{\argmin}{\mathrm{argmin}}

\newcommand{\cart}{\mathrm{cart}}

\newcommand{\weak}{\rightharpoonup}
\newcommand{\wstar}{\stackrel{*}\rightharpoonup}
\renewcommand{\flat}{\stackrel{\mathrm{f}}\to}
\newcommand{\mres}{\mathbin{\vrule height 1.6ex depth 0pt width 0.13ex\vrule height 0.13ex depth 0pt width 1.3ex}}
\newcommand{\integral}[3]{\int_{#1} \! #2 #3}
\newcommand{\nn}{{\langle i , j \rangle}}
\newcommand{\PC}{\mathcal{PC}}
\newcommand{\geo}{\mathrm{d}_{\SS^1}}
\newcommand{\compact}{\subset\subset}
\renewcommand{\Supset}{\supset\supset}
\newcommand{\w}{{\! \wedge \!}}
\newcommand{\supp}{\mathrm{supp}}
\newcommand{\Geo}{\mathrm{Geo}}

\author{Marco Cicalese}
\address[Marco Cicalese]{TU Munich, Zentrum Mathematik, Boltzmannstr. 3, 85747 Garching bei München, Germany}
\email{cicalese@ma.tum.de}

\author{Gianluca Orlando}
\address[Gianluca Orlando]{Politecnico di Bari, Dipartimento di Meccanica, Matematica e Management, via E. Orabona 4
	70125 Bari BA, Italy}
\email{orlando@ma.tum.de}

\author{Matthias Ruf}
\address[Matthias Ruf]{Ecole polytechnique f\'ed\'erale de Lausanne, SB MATH, Station 8, 1015 Lausanne, Switzerland}
\email{matthias.ruf@epfl.ch}

\title[Variational analysis of the $N$-clock model]{Emergence of concentration effects \\ in the variational analysis of the $N$-clock model}

\begin{document}

\begin{abstract}
	We investigate the relationship between the $N$-clock model (also known as planar Potts model or $\ZZ_N$-model) and the $XY$ model (at zero temperature) through a $\Gamma$-convergence analysis of a suitable rescaling of the energy as both the number of particles and $N$ diverge. We prove the existence of rates of divergence of $N$ for which the continuum limits of the two models differ. With the aid of Cartesian currents we show that the asymptotics of the $N$-clock model in this regime features an energy which may concentrate on geometric objects of various dimensions. This energy prevails over the usual vortex-vortex interaction energy. 
\end{abstract}

\maketitle
 
\noindent {\bf Keywords}: $\Gamma$-convergence, $XY$ model, $N$-clock model, cartesian currents, topological singularities. 

\vspace{1em}

\noindent {\bf MSC 2010}: 49J45, 49Q15, 26B30, 82B20.

\setcounter{tocdepth}{1}
\tableofcontents

\vspace{-2em}

\section{Introduction}

Classical ferromagnetic spin systems on lattices represent fundamental models to understand phase transition phenomena. On the one hand, the study of their properties has motivated the introduction of new mathematical tools which have provided useful insights for a number of problems arising in different fields. On the other hand, many techniques borrowed from probability theory, mathematical analysis, topology, and geometry have contributed to a better understanding of the properties of these systems. 

In this paper we make use of fine concepts in geometric measure theory and in the theory of Cartesian currents to understand the relationship between the $XY$-model and the $N$-clock model (also known as planar Potts model or $\ZZ_N$-model) within a variational framework. 
The $N$-clock model is a two-dimensional nearest neighbors ferromagnetic spin model on the square lattice in which the spin field is constrained to take values in a set of $N$ equi-spaced points of $\SS^{1}$. For $N$ large enough, it is usually considered as an approximation of the $XY$ (planar rotator) model, for which the spin field can attain all the values of $\SS^{1}$. The asymptotic behavior of the $N$-clock model for large $N$ has been considered by Fr\"ohlich and Spencer in the seminal paper~\cite{Fro-Spe}. There the authors have proved that both the $N$-clock model (for $N$ large enough) and the $XY$ model present Berezinskii-Kosterlitz-Thouless transitions, i.e., phase transitions mediated by the formation and interaction of topological singularities. The microscopic picture leading to the emergence of such topological phase transitions (first introduced in ~\cite{Ber, Kos, Kos-Tho}) is a result of a nontrivial interplay between entropic and energetic effects that takes place at different length scales. 

This paper contributes to precisely relating the $N$-clock model and the $XY$ model at zero temperature. Specifically, we show that the enhancement of symmetry, from the discrete one of the $N$-clock model to the continuous one of the $XY$ model, comes along with concentration of energy on geometric objects of various dimension. This is achieved by studying a suitably rescaled version of the energy of the $N$-clock model as $N$ diverges, through a coarse graining procedure which is made rigorous by $\Gamma$-convergence, see~\cite{Bra,DM}. A crucial step of this analysis is the choice of the topologies which best identify the relevant variables of the coarse grained model and lead to the effective description of the microscopic/mesoscopic geometry of the spin field. In contrast to the $XY$ model, the sole study of the distributional Jacobian of the spin field turns out to provide not enough information on the concentration effects of the energy; we shall see how these effects can be detected by Cartesian currents, for the first time introduced in the context of lattice spin models. 

In what follows we present the model and our main result. We consider a bounded, open set with Lipschitz boundary $\Omega \subset \RR^2$. 
Given a small parameter $\e > 0$, we consider $\Omega_\e := \Omega \cap \e \ZZ^2$. The classical $XY$ model is defined on spin fields $u \colon \Omega_\e \to \SS^1$ by 
\begin{equation} \label{eq:XY}
- \sum_\nn \e^2 u(\e i) \cdot u(\e j)  \, ,
\end{equation}
where the sum is taken over ordered pairs of nearest neighbors $\nn$, i.e., $(i, j) \in \ZZ^2 \x \ZZ^2$ such that $|i - j| = 1$ and $\e i, \e j \in \Omega_\e$. The variational analysis of the $XY$ model is part of a larger program devoted to the study of systems of spins with continuous symmetry~\cite{Ali-Cic, Ali-Cic-Pon, Ali-DL-Gar-Pon, Bad-Cic-DL-Pon, Can-Seg, Cic-Sol, Cic-Sol-Ruf, Sci-Val,Cic-For-Orl}.

Here we consider an additional parameter $N_\e \in \NN$ or, equivalently, $\theta_\e := \tfrac{2\pi}{N_\e}$. The admissible spin fields we consider here are only those taking values in the discrete set $\S_\e := \{\exp(\iota k \theta_\e) \colon  k = 0, \dots, N_\e-1\} \subset \SS^1$, i.e., we consider the energy 
\begin{equation*}
	F_\e(u) :=  - \sum_\nn \e^2 u(\e i) \cdot u(\e j) \quad  \text{if } u \colon \Omega_\e \to \S_\e \, , 
\end{equation*}
extended to $+\infty$ otherwise. For $N_\e = N \in \NN$, with $N$ independent of $\e$, the spin system described by the energy~$F_\e$ is usually referred to as $N$-clock model, cf.\ \cite{Fro-Spe}. 
The particular case where $N=2$ is the so-called Ising system, recently analyzed in~\cite{Iof-Sch, Bod, Cer-Pis, Caf-DLL, Ali-Bra-Cic}. See also~\cite{Alb-Bel-Cas-Pre, Ali-Cic-Ruf, Ali-Gel, Bra-Cic} for the long-range case. 

As~\eqref{eq:XY} is minimized on constant spin fields, one refers the energy to its minimum 
\begin{equation*}
	XY_\e(u) = - \sum_\nn \e^2 \Big( u(\e i) \cdot u(\e j) - 1 \Big) = \frac{1}{2} \sum_\nn \e^2 | u(\e i) - u(\e j) |^2 .
\end{equation*}
Analogously, we set
\begin{equation} \label{eq:def of E}
	E_\e(u) := F_\e(u) - \min F_\e = XY_\e(u) \quad  \text{if } u \colon \Omega_\e \to \S_\e \, , 
\end{equation}
extended to $+\infty$ otherwise, and we find the scalings $\kappa_\e \to 0$ for which $\frac{1}{\kappa_\e}E_\e$ has a nontrivial variational limit. These are affected by $N_\e$, as it emerges in the two limiting scenarios $N_\e = 2$ and $\S_\e = \SS^1$ (formally corresponding to $N_\e = +\infty$). If $N_\e = 2$, $\frac{1}{\e} E_\e(u_\e)$ approximates an anisotropic interfacial energy between the phases $(1,0)$ and $(-1,0)$, see~\cite{Ali-Bra-Cic}. In contrast, for the $XY$ system, i.e., $\S_\e = \SS^1$, it has been shown in~\cite[Example~1]{Ali-Cic} that no interfacial-type energy emerges at any scaling $\kappa_\e \gg \e^2$. Indeed, if $u_\e$ interpolates (linearly in the angle) from $u^- = \exp(\iota \varphi^-)$ to $u^+ = \exp(\iota \varphi^+)$ on a length-scale of size $\eta_\e$, the energy amounts to
\begin{equation} \label{eq:jump pays nothing}
	\frac{1}{\kappa_\e} XY_\e(u_\e) \sim \Big( 1 - \cos\Big(\frac{\e}{\eta_\e} (\varphi^+ - \varphi^-)\Big) \Big) \frac{\eta_\e}{\kappa_\e} \sim \frac{\e^2}{\eta_\e \kappa_\e}\, ,
\end{equation}
which goes to 0 if $\eta_\e \gg \tfrac{\e^2}{\kappa_\e}$.
This construction may not be feasible when $\S_\e \neq \SS^1$ if the minimal angle $\theta_\e$ satisfies $\theta_\e \gtrsim \frac{\kappa_\e}{\e}$.
In the constrained case, choosing the largest possible length-scale $\eta_\e = |\varphi^+-\varphi^-|\frac{\e}{\theta_\e}$, one gets (denoting by $\geo$ the geodesic distance on $\SS^1$)
\begin{equation} \label{eqintro:angle interpolation}
	\frac{1}{\kappa_\e} E_\e(u_\e) \sim \big( 1 - \cos(\theta_\e ) \big) \frac{\e}{\theta_\e \kappa_\e} |\varphi^+ - \varphi^-| \sim \frac{\e \theta_\e}{\kappa_\e}  |\varphi^+ - \varphi^-| \sim \frac{\e \theta_\e}{\kappa_\e} \geo(u^+,u^-)\, ,
\end{equation}
which suggests that $\kappa_\e = \e \theta_\e$ leads to an energy proportional to a $BV$ total variation (in the sense of~\cite[Formula~(2.11)]{Amb}). In fact, in Proposition~\ref{prop:BV compactness}, we prove that sub-level sets of	$\frac{1}{\e \theta_\e} E_\e$ are pre-compact in $BV(\Omega;\SS^1)$ equipped with the $L^1$ topology.

Given $u_\e \colon \Omega_\e \to \S_\e$ with $\frac{1}{\e \theta_\e} E_\e(u_\e) \leq C$ we have
\begin{equation} \label{eq:no vortices allowed}
	\frac{1}{\e^2 |\log \e|} XY_\e(u_\e) =  \frac{\e \theta_\e}{\e^2 |\log \e|} \frac{1}{\e \theta_\e}  E_\e(u_\e) \sim \frac{\theta_\e}{\e |\log \e|}  \, .
\end{equation}
As it is known from the theory of the $XY$ model~\cite{Ali-Cic,Ali-Cic-Pon} (see also~\cite{Bet-Bre-Hel, San, Jer, Jer-Son, San-Ser, Alb-Bal-Orl, San-Ser-book, Ali-Pon} for the Ginzburg-Landau theory), boundedness of $\frac{1}{\e^2 |\log \e|} XY_\e(u_\e)$ implies flat compactness\footnote{i.e., with respect to the norm induced by duality with compactly supported Lipschitz functions.} of the discrete vorticity measure $\mu_{u_\e}$, which counts the winding number of $u_\e$ at each point of $\e \ZZ^2$, cf.~\eqref{eq:discrete vorticity measure}. If $\e |\log \e| \ll \theta_\e$, \eqref{eq:no vortices allowed} gives no bound on $\frac{1}{\e^2 |\log \e|} XY_\e(u_\e)$ and suggests that $\mu_{u_\e}$ does not play a role in the asymptotics of~$\frac{1}{\e \theta_\e}  E_\e$. In fact, in~\cite{Cic-Orl-Ruf} we prove that $\frac{1}{\e \theta_\e}  E_\e$ $\Gamma$-converges to an anisotropic total variation in $BV(\Omega;\SS^1)$. Here we are interested in regimes for which the limit cannot be exhaustively described in~$BV$. We start by assuming $\theta_\e \ll \e |\log \e|$, which by~\eqref{eq:no vortices allowed} implies $\mu_{u_\e} \flat 0$. This constraint will induce a $\Gamma$-limit (possibly strictly) larger than the anisotropic total variation in $BV(\Omega;\SS^1)$. 
To prove this fact, our idea is to associate to $u_\e$ with $\frac{1}{\e \theta_\e} E_\e(u_\e) \leq C$ the current $G_{u_\e}$ given by the extended graph in $\Omega \x \SS^1$ of its piecewise constant interpolation, see Subsection~\ref{s.discretecurrents}. Since\footnote{By $\llbracket \SS^1 \rrbracket$ we mean the current given by the integration over $\SS^1$ oriented counterclockwise.} $\de G_{u_\e} = - \mu_{u_\e} \x \llbracket \SS^1 \rrbracket$ and  $\mu_{u_\e} \flat 0$, the limit $T$ of the currents~$G_{u_\e}$ satisfies $\de T = 0$ and, more precisely, is a Cartesian current in $\cart(\Omega \x \SS^1)$. For this reason, the limit of $\frac{1}{\e \theta_\e} E_\e$ in this regime shares strong similarities with the $L^1$-relaxation of the $W^{1,1}$-norm of maps in $C^1(\Omega;\SS^1)$, cf.~\cite{Gia-Mod-Sou-S1,Gia-Muc}. The $\Gamma$-limit, cf.~Propositions~\ref{prop:lb} and~\ref{prop:construction of ueps}, features a term reminiscent of the $BV$-type concentration of $|\DD u_\e|$, possibly inevitable to satisfy $\mu_{u_\e} \flat 0$, in general not expressible as an integral functional on the limit of $u_\e$.

Our main theorem concerns the regime where $u_\e$ displays simultaneously vortex-type and $BV$-type concentration effects. The discretization (in domain and codomain) $v_\e \colon \Omega_\e \to \S_\e$ of a vortex $\frac{x-x_0}{|x-x_0|}$ satisfies,  $\frac{1}{\e \theta_\e} E_\e(v_\e) \sim 2 \pi |\log \e | \frac{\e}{\theta_\e}  \to +\infty$ if $\theta_\e \ll \e |\log \e|$, cf.~\eqref{eq:exactconcentration}. To obtain a finer description of the limit, we renormalize $E_\e$ by removing the diverging energy of $M$ vortices and by studying the excess energy $\frac{1}{\e \theta_\e} E_\e(u_\e) - 2\pi M |\log \e| \frac{\e}{\theta_\e}$. A bound on the latter energy yields, cf.~Proposition~\ref{prop:compactness M vortices}, $\mu_{u_\e} \flat \mu  = \sum_{h=1}^N d_h \delta_{x_h}$, $d_h \in \ZZ$, and $|\mu|(\Omega) \leq M$. If $|\mu|(\Omega) = M$, the diverging energy $2\pi M |\log \e| \frac{\e}{\theta_\e}$ has been saturated by $\mu$ and a finite energy $\frac{1}{\e \theta_\e} E_\e$ is still accessible to the system. This might lead to $BV$-type concentration effects, detected by the current $T$, limit of the extended graphs $G_{u_\e}$. Since $\de G_{u_\e} = - \mu_{u_\e} \x \llbracket \SS^1 \rrbracket$ and $\mu_{u_\e} \flat \mu$, $T$ satisfies the nontrivial constraint $\de T = - \mu  \x \llbracket \SS^1 \rrbracket$. This condition couples the vortex-type and $BV$-type concentration effects displayed by the spin field, resulting in a term $\mathcal{J}(\mu,u;\Omega)$ in the $\Gamma$-limit.\footnote{It is given by $\mathcal{J}(\mu,u;\Omega) := \inf \{ \int_{J_T}{\ell_T(x) |\nu_T(x)|_1}{\d \H^1(x)} \colon T \in \mathrm{Adm}(\mu,u;\Omega) \}$, where $\mathrm{Adm}(\mu,u;\Omega)$, defined in~\eqref{eq:def of Adm}, is a suitable class of currents $T$ satisfying, in particular, the constraint $\de T = - \mu \x \llbracket \SS^1 \rrbracket$. Here $J_T$ is the $1$-codimensional jump-concentration set of $T$ oriented by the normal $\nu_T$. At each point $x \in J_T$, the current $T$ has a vertical part, given by a (not necessarily geodesic) arc in $\SS^1$  of length $\ell_T(x)$ which connects the traces of $u$ on the two sides of $J_T$. The set-function $\mathcal{J}(\mu,u;\, \cdot \, )$ is not subadditive.} This leads to our main result.\footnote{In Theorem~\ref{thm:e smaller theta with vortices} the matrix norm $|\, \cdot \,|_{2,1}$ reflects the anisotropy of the lattice, see Section~\ref{sec:notation}.}

\begin{theorem} \label{thm:e smaller theta with vortices}
	Assume that $\e \ll \theta_\e \ll \e |\log \e|$. Then the following results hold true:
	\begin{itemize}[leftmargin=*]
		\item[i)] (Compactness) Let $M \in \NN$ and let $u_\e \colon \Omega_\e \to \S_\e$ be such that $\frac{1}{\e \theta_\e} E_\e(u_\e) - 2\pi M |\log \e| \frac{\e}{\theta_\e} \leq C$.
		Then there exists a measure $\mu = \sum_{h=1}^N d_h \delta_{x_h}$, $x_h \in \Omega$, $d_h \in \ZZ$, such that (up to a subsequence) $\mu_{u_\e} \flat \mu$ and $|\mu|(\Omega) \leq M$. If, in addition, $|\mu|(\Omega) = M$, then there exists a function $u \in BV(\Omega;\SS^1)$ such that (up to a subsequence) $u_\e \to u$ in $L^1(\Omega;\RR^2)$. 
		\item[ii)] ($\Gamma$-liminf inequality) Let $u_\e  \colon \Omega_\e \to \S_\e$ and let  $\mu = \sum_{h=1}^N d_h \delta_{x_h}$, $x_h \in \Omega$, $d_h \in \ZZ$ with $|\mu|(\Omega) = M$. Assume that $\mu_{u_\e} \flat \mu$. Let $u \in BV(\Omega;\SS^1)$ be such that $u_\e \to u$ in $L^1(\Omega;\RR^2)$. Then 
			\begin{equation*}
				 \int_{\Omega}{|\nabla u|_{2,1} }{\d x} + |\DD^{(c)} u|_{2,1}(\Omega) + \mathcal{J}(\mu,u;\Omega) \leq \liminf_{\e \to 0} \Big( \frac{1}{\e \theta_\e} E_\e(u_\e) - 2\pi M |\log \e| \frac{\e}{\theta_\e} \Big) \, .
			\end{equation*} 
		\item[iii)] ($\Gamma$-limsup inequality) Let  $\mu = \sum_{h=1}^N d_h \delta_{x_h}$, $x_h \in \Omega$, $d_h \in \ZZ$ with $|\mu|(\Omega) = M$ and let $u \in BV(\Omega;\SS^1)$. Then there exists a sequence $u_\e  \colon \Omega_\e \to \S_\e$ such that $\mu_{u_\e} \flat \mu$, $u_\e \to u$ in $L^1(\Omega;\RR^2)$, and
		\begin{equation*}
				\limsup_{\e \to 0}\Big( \frac{1}{\e \theta_\e} E_\e(u_\e) - 2\pi M |\log \e| \frac{\e}{\theta_\e} \Big) \leq \int_{\Omega}{|\nabla u|_{2,1} }{\d x} + |\DD^{(c)} u|_{2,1}(\Omega) + \mathcal{J}(\mu,u;\Omega)  \, .
		\end{equation*}  
	\end{itemize}
\end{theorem}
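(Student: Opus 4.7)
The plan is to reduce all three parts to previously established ingredients combined with a localization around the vortex cores. For part (i), the identity
\[
\frac{1}{\e^2|\log\e|}XY_\e(u_\e) = \frac{\theta_\e}{\e|\log\e|}\Bigl(\frac{1}{\e\theta_\e}E_\e(u_\e)-2\pi M|\log\e|\tfrac{\e}{\theta_\e}\Bigr) + 2\pi M
\]
combined with $\theta_\e\ll\e|\log\e|$ yields a uniform bound on the (unnormalized) $XY$ energy at scale $\e^2|\log\e|$; invoking Proposition~\ref{prop:compactness M vortices} then provides $\mu_{u_\e}\flat\mu$ with $|\mu|(\Omega)\leq M$. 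If $|\mu|(\Omega)=M$, the classical ball-construction lower bound saturates the divergent quota $2\pi M|\log\e|\tfrac{\e}{\theta_\e}$ on arbitrarily small balls $B_\rho(x_h)$, so that $\tfrac{1}{\e\theta_\e}E_\e(u_\e;\Omega\sm\bigcup_h\ol{B_\rho(x_h)})\leq C(\rho)$ uniformly in $\e$; applying Proposition~\ref{prop:BV compactness} on the punctured domain and diagonalizing in $\rho\to 0$ yields $u_\e\to u$ in $L^1(\Omega;\RR^2)$ with $u\in BV(\Omega;\SS^1)$.

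\textbf{Liminf inequality.} For (ii) I would lift each $u_\e$ to its extended-graph current $G_{u_\e}\in\cart(\Omega\x\SS^1)$ from Subsection~\ref{s.discretecurrents}, which satisfies $\de G_{u_\e}=-\mu_{u_\e}\x\llbracket\SS^1\rrbracket$. The renormalized energy bound controls the mass of $G_{u_\e}$ outside the vortex cores, so up to a subsequence $G_{u_\e}\weak T$ for some $T\in\cart(\Omega\x\SS^1)$; by the flat convergence $\mu_{u_\e}\flat\mu$ and weak continuity of $\de$ one obtains $\de T=-\mu\x\llbracket\SS^1\rrbracket$, and from $u_\e\to u$ in $L^1$ the horizontal trace of $T$ equals $u$, hence $T\in\mathrm{Adm}(\mu,u;\Omega)$. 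Fix small disjoint balls $B_\rho(x_h)\compact\Omega$ and set $\Omega_\rho:=\Omega\sm\bigcup_h\ol{B_\rho(x_h)}$. On $\Omega_\rho$ the vorticity is asymptotically flat-trivial, so the vortex-free lower bound of Proposition~\ref{prop:lb} gives
\[
\int_{\Omega_\rho}|\nabla u|_{2,1}\d{x}+|\DD^{(c)}u|_{2,1}(\Omega_\rho)+\int_{J_T\cap\Omega_\rho}\ell_T(x)|\nu_T(x)|_1\d{\H^1(x)}\leq\liminf_{\e\to 0}\tfrac{1}{\e\theta_\e}E_\e(u_\e;\Omega_\rho).
\]
Around each vortex, the refined ball-construction lower bound provides
\[
\liminf_{\e\to 0}\Bigl[\tfrac{1}{\e\theta_\e}E_\e(u_\e;B_\rho(x_h))-2\pi|d_h||\log\e|\tfrac{\e}{\theta_\e}\Bigr]\geq -\omega(\rho),
\]
with $\omega(\rho)\to 0$. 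Summing, adding the two contributions, and letting $\rho\to 0$ yields the inequality with this particular admissible $T$; taking the infimum over $\mathrm{Adm}(\mu,u;\Omega)$ produces $\mathcal{J}(\mu,u;\Omega)$.

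\textbf{Limsup and main obstacle.} For (iii), choose $T\in\mathrm{Adm}(\mu,u;\Omega)$ nearly optimal for $\mathcal{J}(\mu,u;\Omega)$ and small $\rho>0$. On $\Omega_\rho$, apply Proposition~\ref{prop:construction of ueps} to the restriction of $T$ (which carries no boundary there) to obtain $u_\e^{\mathrm{bv}}$ with no residual vorticity and the correct limsup for the $BV$/concentration part. Inside each $B_\rho(x_h)$, insert the canonical discretization of the pure vortex $e^{\iota d_h\vartheta_h}$, whose energy by~\eqref{eq:exactconcentration} is $2\pi|d_h||\log\e|\tfrac{\e}{\theta_\e}+O(\tfrac{\e}{\theta_\e})$; a thin annular interpolation matching the two constructions at $\de B_\rho(x_h)$ contributes an additional $O(\rho)$ in the $BV$-type term. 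A diagonal argument in $\rho$ and in the near-optimality of $T$ concludes. The main difficulty lies in part (ii): one must simultaneously (a) invoke closure of Cartesian currents to secure $T\in\mathrm{Adm}(\mu,u;\Omega)$, namely that $\de T=-\mu\x\llbracket\SS^1\rrbracket$ and the horizontal trace is $u$ both survive the limit; and (b) verify that the localized liminfs add without double-counting the divergent vortex quota, which forces a careful tuning of $\rho$ so that the mass of $T$ inside $B_\rho(x_h)\x\SS^1$ beyond the vertical disc over $x_h$ is controlled by $\omega(\rho)$.
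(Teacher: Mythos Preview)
Your treatment of compactness (i) and the liminf (ii) is essentially the paper's approach: localize around vortex cores, use the XY lower bound near each $x_h$ to absorb the divergent quota, and apply the vortex-free estimates on the complement. Two small corrections: the limit current $T$ lies in $\cart(\Omega_\mu\times\SS^1)$, not $\cart(\Omega\times\SS^1)$, since $\de T=-\mu\times\llbracket\SS^1\rrbracket\neq 0$ in $\Omega\times\RR^2$; the extension to $\D_2(\Omega\times\RR^2)$ is done separately (Remark~\ref{rmk:punctured domains}). Also, the localized vortex lower bound is in fact $\geq 0$ rather than $-\omega(\rho)$, because $\e/\theta_\e\to 0$ kills any finite constant coming from the ball construction (cf.~\eqref{eq:localizedsign}).

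The limsup argument, however, has a genuine gap and misidentifies the main difficulty. First, the discrete energy of the vortex $e^{\iota d_h\vartheta_h}$ with $|d_h|>1$ is \emph{not} $2\pi|d_h||\log\e|\tfrac{\e}{\theta_\e}$ but $2\pi d_h^2|\log\e|\tfrac{\e}{\theta_\e}$, since $|\nabla e^{\iota d\vartheta}|^2=d^2/r^2$; equation~\eqref{eq:exactconcentration} is only for degree $\pm 1$. To obtain the linear cost one must split each degree-$d_h$ singularity into $|d_h|$ nearby singularities of degree $\pm 1$, which is Lemma~\ref{lemma:splitting degree}. Second, the ``thin annular interpolation'' is the crux of the upper bound and cannot be dismissed as an $O(\rho)$ term: the inner construction is $\mathfrak P_\e(x/|x|)$, the outer one is the geodesic-interpolation recovery for a piecewise constant approximation on a mesoscale $\lambda$, and matching them requires a dyadic cascade of shrinking cubes with tailored boundary values so that the transition costs sum geometrically to $O(\eta)$ (Step~2 of Proposition~\ref{p.smoothapprox}). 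Third, you cannot ``apply Proposition~\ref{prop:construction of ueps} to the restriction of $T$'': that proposition takes a $BV$ map as input, not a current. The paper's route is to first invoke the Approximation Theorem on $\Omega_\mu$ (Lemma~\ref{lemma:approximation with sing}) to replace $T$ by graphs of $\SS^1$-valued maps smooth outside finitely many points with prescribed degrees, then move those points onto the lattice (Lemma~\ref{lemma:correct points}), flatten them to exact $x/|x|$ locally (Lemma~\ref{lemma:modifications near sing}), and only then run the discrete construction. Contrary to your assessment, part~(iii) is where the real work lies.
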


The case $\theta_\e \sim \e |\log \e|$ is studied in~\cite{Cic-Orl-Ruf}. If $\theta_\e \ll \e$, in~\cite{Cic-Orl-Ruf} we prove that $\frac{1}{\e^2} E_\e(u_\e) - 2\pi M |\log \e|$ approximates the renormalized and core energies obtained in the first order analysis of the $XY$ model carried out in~\cite{Ali-DL-Gar-Pon}. Instead, Theorem~\ref{thm:e smaller theta with vortices} points out that the $N$-clock and $XY$ models exhibit different asymptotic behaviors if $\e \ll \theta_\e \ll \e |\log \e|$. This is due to the arising of a surprising interaction between vortex-type and  $BV$-type concentration effects. Coexistence of singularities of two different dimensions has been already observed in other models, e.g.~\cite{Bad-Cic-DL-Pon, Gol-Mer-Mil}. The difference is that here they naturally appear as a result of both the dependence on $\e$ of the codomain and of topological obstructions. 

We highlight here some of the main technical difficulties in the very delicate construction of the recovery sequence in the proof of Theorem~\ref{thm:e smaller theta with vortices}. Given $u \in BV(\Omega;\SS^1)$, we define its recovery sequence following a gradual approximation procedure, which involves a series of steps of increasing complexity. 
At each of these steps, the map $u$ is modified without essentially changing the energy. 

The first main issue is to regularize the map $u$. Maps in $BV(\Omega;\SS^1)$ cannot always be approximated in energy by $\SS^1$-valued smooth functions (in general they cannot be lifted without increasing the $BV$-norm~\cite{Ign,Can-Orl}). Nonetheless, the result in~\cite{Bet} (see also~\cite{Ali-CE-Leo}) guarantees the density of $\SS^1$-valued maps that are smooth outside finitely many point-singularities.
These are related to the vorticity measure~$\mu$ using the Approximation Theorem for Cartesian currents, cf.\ Lemma~\ref{lemma:approximation with sing}.
The next main issue is to construct a recovery sequence~$u_\e$ for such a regularization of~$u$. Close to each singularity, $u_\e$ is defined by discretizing (in domain and codomain) a proper translation of $\frac{x}{|x|}$. The energy carried by this discrete spin field close to a singularity
diverges as $2 \pi |\log \e | \frac{\e}{\theta_\e}$. Far from the singularities, the problem reduces to the construction of a recovery sequence for a smooth $\SS^1$-valued map. This can be further simplified to the case of a piecewise constant $\SS^1$-valued map by introducing a mesoscopic scale into the problem, see Lemma~\ref{lemma:discretization of smooth wout sing}. For such maps, the construction is a refinement of the one described above to obtain~\eqref{eqintro:angle interpolation}. 

The most delicate step is to merge the different parts of the recovery sequence close to and far from the singularities. This is achieved in the proof of Proposition~\ref{p.smoothapprox} (Step~2) by a careful interpolation on dyadic layers of mesoscopic squares, whose size is chosen to be smaller for layers closer to the singularity. At each layer generation, $\frac{x}{|x|}$ is sampled at a different mesoscopic length-scale. The latter is optimized in order to provide the correct control on the energy in progressing from each layer to the next one.

\section{Notation and preliminaries}\label{sec:notation}
We denote the imaginary unit by $\iota$. We shall identify $\RR^2$ with~$\CC$. Given $a = (a_1,a_2) \in \RR^2$, its $1$-norm is $|a|_1 = |a_1|+|a_2|$. We define the $|\cdot|_{2,1}$-norm of a matrix $A = (a_{ij}) \in \RR^{2 \times 2}$ by $|A|_{2,1} := \big(a^2_{11} + a^2_{21}\big)^{1/2} + \big(a^2_{12} + a^2_{22}\big)^{1/2}$.

If $u, v \in \SS^1$, their geodesic distance on $\SS^1$ is denoted by $\geo(u,v)$. It is given by the angle in $[0,\pi]$ between the vectors $u$ and $v$, i.e., $\geo(u,v) = \arccos(u\cdot v)$. Observe that
\begin{equation} \label{eq:geo and eucl}
\tfrac{1}{2}|u - v| = \sin\big( \tfrac{1}{2}\geo(u,v) \big)\quad\text{ and }\quad |u - v| \leq \geo(u,v) \leq \frac{\pi}{2}|u - v|\, .
\end{equation}
Given two sequences $\alpha_{\e}$ and $\beta_{\e}$, we write $\alpha_{\e}\ll \beta_{\e}$ if $\lim_{\e \to 0}\tfrac{\alpha_{\e}}{\beta_{\e}}=0$. We will use the notation $\deg(u)(x_0)$ to denote the topological degree of a continuous map $u \in C(B_\rho(x_0) \sm \{x_0\}; \SS^1)$, i.e., the topological degree of its restriction $u|_{\de B_r(x_0)}$, independent of $r < \rho$. We let $I_\lambda(x)$ be the half-open square given by $I_{\lambda}(x) = x + [0,\lambda)^2$.

By $BV(\Omega;\SS^1)$ we denote the space of $\SS^1$-valued $BV$-functions. We refer the reader to \cite{Amb-Fus-Pal} for a detailed introduction to the theory of $BV$-functions.

\subsection{Results for the classical $XY$ model}
We recall here some results when the spin field $u_\e \colon \Omega_\e \to \SS^1$ is not constrained to take values in a discrete set. Following~\cite{Ali-Cic-Pon}, in order to define the discrete vorticity of $u_{\e}$, we introduce the projection $Q \colon \RR \to 2 \pi \ZZ$ defined by
\begin{equation} \label{eq:def of projection}
Q(t) := \argmin \{|t - s| \ : \ s \in 2 \pi \ZZ \} \, ,
\end{equation}
with the convention that, if the argmin is not unique, then we choose the one with minimal modulus. Then for every $t \in \RR$ we define 
$\Psi(t) := t - Q(t)  \in [-\pi, \pi]$.

%

	
	

Let $u  \colon \e \ZZ^2 \to \SS^1$  and let $\varphi \colon \e \ZZ^2 \to [0, 2\pi)$ be the phase of $u$ defined by the relation $u = \exp(\iota \varphi)$. 
The discrete vorticity of $u$ is defined for every $\e i \in \e \ZZ^2$ by 
\begin{equation} \label{eq:discrete vorticity}
\begin{split}
d_u(\e i) := & \frac{1}{2\pi} \Big[ \Psi\big(\varphi(\e i + \e e_1) - \varphi(\e i) \big) + \Psi\big(\varphi(\e i + \e e_1 + \e e_2) - \varphi(\e i + \e e_1) \big)  \\
& \quad + \Psi\big(\varphi(\e i + \e e_2) - \varphi(\e i + \e e_1 + \e e_2) \big) + \Psi\big(\varphi(\e i) - \varphi(\e i + \e e_2) \big)  \Big] \, . 
\end{split}
\end{equation}
As already noted in~\cite{Ali-Cic-Pon}, it holds that $d_u\in\{-1,0,1\}$, i.e., only singular vortices can be present in the discrete setting. The discrete vorticity measure associated to $u$ is given by
\begin{equation} \label{eq:discrete vorticity measure}
\mu_u := \sum_{\e i \in \e \ZZ^2} d_u(\e i) \delta_{\e i + (\e,\e)} \, . 
\end{equation}

We recall the following compactness and lower bound for the $XY$ model. 

\begin{proposition} \label{prop:XY classical}
Let $u_\e \colon \Omega_\e \to \SS^1$ and assume that  $\frac{1}{\e^2|\log \e|} XY_\e(u_\e) \leq C$ for some $C>0$. Then there exists a measure $\mu \in \M_b(\Omega)$ of the form $\mu=\sum_{h=1}^Nd_h\delta_{x_h}$ with $d_h\in\ZZ$ and $x_h\in\Omega$, and a subsequence (not relabeled) such that $\mu_{u_\e} \mres \Omega \flat \mu$. Moreover 
\begin{equation*}
2 \pi |\mu|(\Omega)  \leq \liminf_{\e \to 0} \frac{1}{\e^2|\log \e|} XY_\e(u_\e) \, .
\end{equation*}
\end{proposition}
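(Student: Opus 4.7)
The plan is to run the discrete analogue of the Jerrard--Sandier ball construction, following~\cite{Ali-Cic-Pon}. The proof has two main ingredients: (i) a logarithmic lower bound for the $XY$-energy on an annulus in terms of the enclosed discrete vorticity, and (ii) an iterative inflation--merging procedure that concentrates the energy around a bounded number of limiting points.

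The first main step is an \emph{annular lower bound}: for every $B_R(x_0)\subset\Omega$, every $\e\ll r<R$, and $d := \mu_{u_\e}(B_r(x_0))\in\ZZ$,
\begin{equation*}
XY_\e(u_\e; B_R(x_0)\setminus B_r(x_0)) \ge 2\pi d^2\,\e^2\log(R/r) - C\e^2.
\end{equation*}
To prove it, I would lift the phase $\varphi_\e$ of $u_\e$ on each discrete annulus $A_\rho := \{\e i\in\e\ZZ^2 : \rho\le|\e i-x_0|<\rho+\e\}$ of width $\e$, and observe that by~\eqref{eq:discrete vorticity} the telescoping sum of the $\Psi$-reduced phase increments along the discrete loop in $A_\rho$ encircling $x_0$ equals $2\pi d$. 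A discrete Cauchy--Schwarz (or Jensen) inequality then bounds the tangential contribution of $XY_\e(u_\e;A_\rho)$ from below by $\tfrac{2\pi d^2\,\e^2}{\rho}$, and summing over $\rho\in[r,R]$ recovers the logarithmic bound, mimicking the continuum identity $\int_r^R\int_0^{2\pi}\tfrac{1}{\rho}|\partial_\theta u|^2\,\d\theta\,\d\rho\ge 2\pi d^2\log(R/r)$.

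Next I perform the \emph{ball construction}. Start with the family $\{B_\e(\e i + (\e,\e))\}$ indexed by plaquettes with $d_{u_\e}(\e i)\neq 0$ and inflate all balls simultaneously, merging pairs whenever their closures touch (summing degrees). Stop at a mesoscopic radius $\rho_\e = \e^\alpha$, $\alpha\in(0,1)$ fixed. The output is a finite collection $\{B_{\rho_\e}(x_h^\e)\}_{h=1}^{N_\e}$ of pairwise disjoint balls covering the support of $\mu_{u_\e}$, with integer degrees $d_h^\e := \mu_{u_\e}(B_{\rho_\e}(x_h^\e))$. Applying the annular lower bound on the annuli swept during inflation, a standard accounting yields
\begin{equation*}
XY_\e(u_\e;\Omega) \ge 2\pi\e^2\log(\rho_\e/\e)\sum_{h=1}^{N_\e}(d_h^\e)^2 - CN_\e\e^2.
\end{equation*}
The assumption $XY_\e(u_\e)\le C\e^2|\log\e|$ together with $\log(\rho_\e/\e)=(1-\alpha)|\log\e|$ forces both $N_\e$ and $\sum_h (d_h^\e)^2$ to stay uniformly bounded. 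Extracting a subsequence, $N_\e\equiv N$ stabilizes, $d_h^\e\equiv d_h\in\ZZ$, and $x_h^\e\to x_h\in\overline\Omega$. Since $\mu_{u_\e}\mres\Omega - \sum_{x_h\in\Omega}d_h\delta_{x_h^\e}$ has support in balls of radius $\rho_\e\to 0$ with uniformly bounded total mass, flat convergence $\mu_{u_\e}\mres\Omega\flat\mu := \sum_{x_h\in\Omega}d_h\delta_{x_h}$ follows (combining coincident limit points).

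Dividing the ball-construction lower bound by $\e^2|\log\e|$, taking $\liminf_{\e\to 0}$ and then $\alpha\to 0$, I obtain
\begin{equation*}
\liminf_{\e\to 0}\frac{XY_\e(u_\e)}{\e^2|\log\e|} \ge 2\pi\sum_h d_h^2 \ge 2\pi\sum_h|d_h| \ge 2\pi|\mu|(\Omega),
\end{equation*}
where the second inequality uses $d^2\ge|d|$ for $d\in\ZZ$, and the third uses the triangle inequality at coincident limit points and discards contributions on $\partial\Omega$. The main obstacle is the annular lower bound in step (i): one has to handle the $(-\pi,\pi]$-reduction $\Psi$ carefully so that the discrete winding around each loop captures $2\pi d$ up to boundary errors absorbed in the $-C\e^2$ term, and verify that the tangential discrete Cauchy--Schwarz does not lose the logarithmic factor when summed across scales.
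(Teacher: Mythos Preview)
The paper does not prove this proposition; it is recalled without proof from \cite{Ali-Cic,Ali-Cic-Pon}. Your ball-construction strategy is precisely the one underlying those references, so at the level of the overall plan there is nothing to compare.

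There is, however, a difference in implementation worth flagging. In \cite{Ali-Cic-Pon} the argument does not run a purely discrete ball construction: one passes to the piecewise affine interpolation $\widehat u_\e$, shows that $XY_\e(u_\e)$ controls the continuum Ginzburg--Landau energy of $\widehat u_\e$ up to lower-order errors, and then invokes the continuum Jerrard--Sandier lower bounds and Jacobian compactness (the paper itself uses this link later, cf.\ \cite[Proposition~5.2 and Lemma~3.1]{Ali-Cic-Pon} around~\eqref{eq:diffpiecewiseaffine}). Your proposal instead works directly on the lattice. This is viable, but the annular lower bound as you state it has a gap: the tangential energy on a discrete loop is $\e^2\sum_k(1-\cos\Psi_k)$, and since $1-\cos\theta\le\theta^2/2$, a Cauchy--Schwarz bound on $\sum_k\Psi_k^2$ goes the \emph{wrong way} for a lower bound with the sharp constant. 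One repairs this either by minimizing $\sum_k(1-\cos\Psi_k)$ under the constraint $\sum_k\Psi_k=2\pi d$ directly (convexity forces equal increments $2\pi d/N$, which are small when $\rho\gg\e$, so the quadratic approximation is asymptotically sharp), or---more robustly, and as in \cite{Ali-Cic-Pon}---by interpolating to the continuum. Your closing paragraph names the $\Psi$-reduction as the main obstacle, but the $\cos$ nonlinearity is the more delicate point for recovering the factor $2\pi$.
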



\begin{remark} \label{rmk:vortices vanish}
Observe that in the regime $\theta_{\e}\ll\e|\log \e|$ the bound $\frac{1}{\e \theta_\e} E_\e(u_\e) \leq C$ and Proposition~\ref{prop:XY classical} imply that $\mu_{u_\e} \mres \Omega \flat 0$.
\end{remark}

\section{Currents}

For the theory of currents we refer to~\cite{Fed,Gia-Mod-Sou-I, Gia-Mod-Sou-II}. We recall here some basic facts. 

\subsection{Definitions and basic facts} \label{sec:currentsbasics}
Given an open set $O \subset \RR^d$, we denote by $\D^k(O)$ the space of $k$-forms $\omega \colon O \mapsto \Lambda^k \RR^d$ with $C^\infty_c(O)$-coefficients. A {\em $k$-current} $T \in \D_k(O)$ is an element of the dual of~$\D^k(O)$ and we write $T(w)$ for the duality. The {\em boundary} of a $k$-current $T$ is the $(k{-}1)$-current $\de T \in \D_{k-1}(O)$ defined by $\de T(\omega) := T(\! \d \omega)$ for every $\omega \in \D^{k-1}(O)$  (or $\partial T:=0$ if $k=0$). The {\em support} of a current $T$ is the smallest relatively closed set $K$ in $O$ such that $T(\omega) = 0$ if $\omega$ is supported outside $K$. 
Given a  smooth map $f \colon O \to O' \subset \RR^{N'}$ such that $f$ is proper\footnote{that means, $f^{-1}(K)$ is compact in $O$ for all compact sets $K\subset O'$.}, $f^\# \omega  \in \D^k(O)$ denotes the pull-back of a $k$-form $\omega \in \D^k(O')$ through $f$. The {\em push-forward} of a $k$-current $T \in \D_k(O)$ is the $k$-current $f_\# T \in \D_k(O')$ defined by $f_\# T(\omega) := T(f^\# \omega)$. Given a $k$-form $\omega \in \D^k(O)$, we can write it via its components
$\omega = \sum_{|\alpha| = k} \omega_\alpha \d x^\alpha$ with $\omega_\alpha \in C^\infty_c(O)$, where the expression $|\alpha|=k$ denotes all multi-indices $\alpha = (\alpha_1, \dots, \alpha_k)$ with $1 \leq \alpha_i \leq d$, and $\d x^\alpha = \d x^{\alpha_1} \w \dots \w \d x^{\alpha_k}$. The norm of $\omega(x)$ is denoted by $|\omega(x)|$ and it is the Euclidean norm of the vector with components $(\omega_\alpha(x))_{|\alpha|=k}$. The {\em total variation} of $T \in \D_k(O)$ is defined by
\begin{equation*}
|T|(O) := \sup \{T(\omega) \ : \ \omega \in \D^k(O), \ |\omega(x)| \leq 1 \} \, .
\end{equation*}
If $T \in \D_k(O)$ with $|T|(O) < \infty$, then we can define the measure $|T| \in \M_b(O)$ by
\begin{equation*}
|T|(\psi) := \sup \{T(\omega) \ : \ \omega \in \D^k(O), \ |\omega(x)| \leq \psi(x) \},\quad\psi \in C_0(O),\; \psi \geq 0\,.
\end{equation*}
Due to Riesz's Representation Theorem (see \cite[2.2.3, Theorem 1]{Gia-Mod-Sou-I}) there exists a $|T|$-measurable function $\vec{T} \colon O \mapsto \Lambda_k \RR^d$ with $|\vec{T}(x)| = 1$ for $|T|$-a.e.\ $x \in O$ such that 
\begin{equation} \label{eq:representation}
T(\omega) = \integral{O}{\langle \omega(x), \vec{T}(x) \rangle}{ \d |T|(x) }
\end{equation}
for every $\omega \in \D^k(O)$. If $T$ has finite total variation, then it can be extended to a linear functional acting on all forms with bounded, Borel-measurable coefficients  via the dominated convergence theorem. 
In particular, in this case the push-forward $f_\# T$ can be defined also for $f\in C^1(O,O')$ with bounded derivatives, cf.\ the discussion in \cite[p. 132]{Gia-Mod-Sou-I}.

A set $\M \subset O$ is a countably $\H^k$-rectifiable set if it can be covered, up to an $\H^k$-negligible subset, by countably many $k$-manifolds of class $C^1$. As such, it admits at $\H^k$-a.e.\ $x \in \M$ a tangent space $\mathrm{Tan}(\M,x)$ in a measure theoretic sense. A current $T \in \D_k(O)$ is an {\em integer multiplicity (i.m.) rectifiable current} if it is representable as 
\begin{equation} \label{eq:im rectifiable}
	T(\omega) = \integral{\M}{\langle \omega(x), \xi(x) \rangle \theta(x)}{\d \H^k(x)} \, ,\quad \text{for } \omega \in \D^k(O) \, ,
\end{equation}
where $\M \subset O$ is a $\H^k$-measurable and countably $\H^k$-rectifiable set, $\theta \colon \M \to \ZZ$ is  locally $\H^k \mres \M$-summable, and $\xi \colon \M \to \Lambda_k \RR^d$ is a $\H^k$-measurable map such that $\xi(x)$ spans $\mathrm{Tan}(\M,x)$ and $|\xi(x)| = 1$ for $\H^k$-a.e.\ $x \in \M$. 
We use the short-hand notation $T=\tau(\mathcal{M},\theta,\xi)$.  
One can always remove from $\mathcal{M}$  the set $\theta^{-1}(\{0\})$, so that we may always assume that $\theta\neq 0$. 
Then the triple $(\mathcal{M},\theta,\xi)$ is uniquely determined up to $\mathcal{H}^k$-negligible modifications. Moreover, one can show, according to the Riesz's representation in~\eqref{eq:representation}, that $\vec{T}=\xi$ and the total variation\footnote{For i.m.\ rectifiable currents, the total variation coincides with the so-called mass. Hence, we will not distinguish between these two concepts.} is given by $|T|=|\theta|\mathcal{H}^k\mres\mathcal{M}$. 
   
If $T_j$ are i.m.\ rectifiable currents and $T_j \weak T$ in $\D_k(O)$ with $\sup_j (|T_j|(V) + |\de T_j|(V) ) < +\infty$ for every $V \compact O$, then by the Closure Theorem~\cite[2.2.4, Theorem 1]{Gia-Mod-Sou-I} $T$ is an i.m.\ rectifiable current, too. By $\llbracket \M \rrbracket$ we denote the current defined by integration over $\M$.

\subsection{Currents in product spaces} We recall some notation for currents defined on the product space~$\RR^{d_1} \x \RR^{d_2}$. Let us denote by $(x,y)$ the points in this space. The standard basis for $\RR^{d_1}$ is $\{ e_1,\ldots, e_{d_1} \}$, while $\{ \bar e_1,\ldots, \bar e_{d_2} \}$ is the standard basis of for $\RR^{d_2}$. Given $O_1\subset \RR^{d_1},O_2 \subset \RR^{d_2}$ open sets, $T_1 \in \D_{k_1}(O_1)$, $T_2 \in \D_{k_2}(O_2)$ and a $(k_1+k_2)$-form $\omega \in \D^{k_1+k_2}(O_1 \x O_2)$ of the type
\begin{equation*}
\begin{split}
\omega(x,y) = \sum_{\substack{|\alpha|=k_1\\ |\beta|=k_2}} \omega_{\alpha \beta}(x,y) \d x^\alpha \w \d y^\beta \, ,
\end{split}
\end{equation*} 
the product current $T_1 \times T_2 \in \D_{k_1+k_2}(O_1 \x O_2)$ is defined by
\begin{equation*}
T_1 \x T_2( \omega) := T_1 \Big( \sum_{|\alpha|=k_1} T_2\Big(\sum_{|\beta| = k_2} \omega_{\alpha \beta}(x,y) \d y^\beta \Big) \d x^\alpha \Big),
\end{equation*}
while $T_1 \x T_2(\phi \d x^\alpha \w \d y^\beta) = 0$ if $|\alpha|+|\beta| =k_1+k_2$ but $|\alpha|\neq k_1$, $|\beta| \neq k_2$.


\subsection{Graphs} Let $O \subset \RR^d$ be an open set and $u \colon \Omega \to \RR^2$ a Lipschitz map. Then we can consider the $d$-current associated to the graph of $u$ given by $G_u := (\mathrm{id} \x u)_\# \llbracket O \rrbracket \in \D_2(O\x \RR^2)$, where $\mathrm{id} \x u \colon O \to O \x \RR^2$ is the map $(\mathrm{id} \x u)(x) = (x,u(x))$.  Note that
\begin{equation*}
G_u(\omega)=\integral{O}{\langle \omega(x,u(x)),M(\nabla u(x))\rangle}{\d x}
\end{equation*}
for all $\omega\in\mathcal{D}^{d}(O\times \RR^2)$, with the $d$-vector \begin{equation}\label{eq:minors}
M(\nabla u)=(e_1+\partial_{x^1} u^1\bar e_1+\partial_{x^1} u^2\bar e_2)\wedge\ldots\wedge(e_d+\partial_{x^d}u^1\bar e_1+\partial_{x^d} u^2\bar e_2)\,.
\end{equation}

%
%
%


Later on we use the orientation of the graph of a smooth function $u\colon O\subset\RR^2\to\SS^1$ (cf. \cite[2.2.4]{Gia-Mod-Sou-I}). For such maps we have $|G_u| = \H^2 \mres \M$, where $\M = (\mathrm{id} \x u)(\Omega)$, and 
\begin{equation} \label{eq:smooth components}
\begin{split}
\sqrt{1+|\nabla u(x)|^2} \ \vec{G}_u(x,y) = & \ e_1 \wedge e_2 \\
+ & \ \de_{x^2} u^1(x) e_1 \wedge \bar e_1 + \de_{x^2} u^2(x) e_1 \wedge \bar e_2 \\
- & \ \de_{x^1}  u^1(x)  e_2 \wedge \bar e_1 - \de_{x^1}  u^2(x) e_2 \wedge \bar e_2\quad\quad\text{for all }(x,y) \in \M \, .
\end{split} 
\end{equation}

\subsection{Cartesian currents}
Let $O\subset\RR^d$ be a bounded, open set. We recall that the class of {\em cartesian currents} in~$O\x \RR^2$ is defined by
\begin{equation*}
\begin{split}
\cart(O \x \RR^2) := \{ & T \in \D_d(O \x \RR^2) \ : \ T \text{ is i.m.\ rectifiable, } \de T|_{O \x \RR^2} = 0, \\
& \pi^O_\# T = \llbracket O \rrbracket \, , \ T |_{\d x} \geq 0 \, , \ |T| < +\infty \, , \ \|T\|_1 < +\infty \} \, ,
\end{split}
\end{equation*}
where $\pi^O \colon O \x \RR^2 \to O$ denotes the projection on the first component, $T|_{\d x} \geq 0$ means that $T(\phi(x,y) \d x) \geq 0$ for every $\phi \in C^\infty_c(O \x \RR^2)$ with $\phi \geq 0$, and 
\begin{equation*}
\| T \|_1 = \sup \{ T(\phi(x,y)|y| \d x ) \ : \ \phi \in C^\infty_c(O \x \RR^2) \, , \ |\phi| \leq 1 \}  \, . 
\end{equation*}
Note that, if for some function $u$
\begin{equation} \label{eq:norm 1}
T(\phi(x,y) \d x) = \integral{O}{\phi(x,u(x))}{\d x} \quad \text{then} \quad \| T \|_1  = \integral{O}{|u|}{\d x} \, .
\end{equation}

The class of {\em cartesian currents} in $O \x \SS^1$ is 
\begin{equation*}
\cart(O \x \SS^1) := \{ T \in \cart(O \x \RR^2) \ : \ \supp(T) \subset \ol O \x \SS^1 \} \,,
\end{equation*}
(cf.\ \cite[6.2.2]{Gia-Mod-Sou-II} for this definition). We recall the following approximation theorem which explains that cartesian currents in $O \x \SS^1$ are precisely those currents that arise as limits of graphs of $\SS^1$-valued smooth maps. The proof can be found in~\cite[Theorem 7]{Gia-Mod-Sou-S1}.\footnote{Notice that some results in~\cite{Gia-Mod-Sou-S1} require $O$ to have smooth boundary. This is not the case for this theorem, which is based on a local construction regularizing a local lifting of $T$.}

\begin{theorem}[Approximation Theorem] \label{thm:approximation}
Let $T \in \cart(O \x \SS^1)$. Then there exists a sequence of smooth maps $u_h \in C^\infty(O;\SS^1)$ such that 
\begin{equation*}
G_{u_h} \weak T \quad \text{in } \D_d(O \x \RR^2)\quad\text{ and }\quad|G_{u_h}|(O\x \RR^2) \to |T|(O \x \RR^2) \, .
\end{equation*}
\end{theorem}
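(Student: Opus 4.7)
The plan is to reduce the global approximation to a local one via the universal covering $\pi\colon\RR\to\SS^1$, $\pi(\varphi):=\exp(\iota\varphi)$, and then to smooth the resulting real-valued lifting by mollification. The key observation is that, on simply connected patches, $\SS^1$-valued cartesian currents correspond to graph-type currents of real-valued $BV$ functions, which are classically approximable in area by smooth functions.

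First, I would cover $O$ by finitely many open balls $B_i\compact O$ (augmented by a vanishing neighborhood of $\de O$ to account for the closure). On each $B_i$, using that $\de T = 0$ inside $O\x\SS^1$, $\pi^O_\# T = \llbracket O \rrbracket$, and the sign condition $T|_{\d x}\geq 0$, construct a lifted current $\tilde T_i$ in $\D_d(B_i\x\RR)$ which is itself i.m.\ rectifiable, belongs to $\cart(B_i\x\RR)$, and satisfies $(\mathrm{id}\x\pi)_\# \tilde T_i = T\mres B_i\x\SS^1$. The existence of such a lifting follows from the trivialization of $\pi$ over simply connected sets together with the rectifiable structure of $T$. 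By the classical structure theorem for scalar cartesian currents, $\tilde T_i$ is the graph-type current associated with some $\varphi_i \in BV(B_i;\RR)$ (encoding the absolutely continuous, jump, and Cantor parts of $T$), and the trace map $u$ of $T$ satisfies $u|_{B_i} = \exp(\iota\varphi_i)$.

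Next, mollify each $\varphi_i$ to obtain $\varphi_{i,h}\in C^\infty(B_i)$ converging to $\varphi_i$ strictly in $BV$, i.e.\ $\varphi_{i,h}\to\varphi_i$ in $L^1(B_i)$ and $\int_{B_i}|\nabla\varphi_{i,h}|\d x \to |\DD\varphi_i|(B_i)$. Setting $u_{i,h}:=\exp(\iota\varphi_{i,h})\in C^\infty(B_i;\SS^1)$, the identity $|\nabla u_{i,h}|=|\nabla\varphi_{i,h}|$ together with the explicit formula~\eqref{eq:smooth components} for the orientation of a smooth $\SS^1$-valued graph imply $G_{u_{i,h}}\weak T\mres B_i\x\SS^1$ in $\D_d(B_i\x\RR^2)$ and $|G_{u_{i,h}}|(B_i\x\RR^2)\to |T|(B_i\x\SS^1)$.

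The main technical obstacle is the global assembly, because two local liftings $\varphi_i$ and $\varphi_j$ on an overlap $B_i \cap B_j$ agree only modulo $2\pi\ZZ$, so a naive partition-of-unity combination of the $\varphi_{i,h}$ is not single-valued. The plan is to shrink each ball slightly and, on thin transition layers of thickness $\delta$, interpolate between $u_{i,h}$ and $u_{j,h}$ along the shorter geodesic arc in $\SS^1$ joining their values pointwise. The additional graph mass produced by such an interpolation is bounded by the $\H^d$-measure of a thin cylindrical band in $O\x\SS^1$, itself controlled by the product of $\delta$ and the averaged geodesic gap $\|\geo(u_{i,h},u_{j,h})\|_{L^1(B_i \cap B_j)}$ between adjacent approximations. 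Taking $h$ large before letting $\delta\to 0$ renders both factors arbitrarily small, and a diagonal extraction produces a global smooth map $u_h \in C^\infty(O;\SS^1)$ with $G_{u_h}\weak T$ in $\D_d(O\x\RR^2)$ and $|G_{u_h}|(O\x\RR^2) \to |T|(O\x\RR^2)$, completing the proof.
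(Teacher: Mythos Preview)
The paper does not prove this theorem; it cites \cite[Theorem 7]{Gia-Mod-Sou-S1} and only remarks in a footnote that the argument ``is based on a local construction regularizing a local lifting of $T$''. Your strategy---local $BV$ liftings plus mollification---is therefore exactly the intended one, and Steps~1--3 of your plan are sound (strict/area-strict convergence of the mollified liftings does give local weak convergence and mass convergence of the graphs via Reshetnyak).

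The gap is in the gluing step. Your mass estimate for the transition layer is wrong: if you interpolate over a strip of width $\delta$, the normal derivative of the geodesic interpolation is of order $\geo(u_{i,h},u_{j,h})/\delta$, so after integrating over the strip the factors of $\delta$ cancel and the extra graph mass is of order $\int_{\text{interface}}\geo(u_{i,h},u_{j,h})\,\d\H^{d-1}$, \emph{not} $\delta\cdot\|\geo(u_{i,h},u_{j,h})\|_{L^1}$. Mere $L^1$-closeness of $u_{i,h}$ and $u_{j,h}$ does not control this surface integral. The clean fix---and this is how the cited proof actually proceeds---is to observe that on a connected overlap $B_i\cap B_j$ the two $BV$ liftings satisfy $\varphi_i-\varphi_j\equiv 2\pi k_{ij}\in 2\pi\ZZ$ exactly (a $BV$ function with values in $2\pi\ZZ$ is locally constant). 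Consequently, mollifying with the same kernel gives $\varphi_{i,h}-\varphi_{j,h}\equiv 2\pi k_{ij}$ wherever both are defined, hence $u_{i,h}=u_{j,h}$ there and no genuine interpolation is needed. Equivalently (cf.\ Footnote~\ref{footnote:geodesics}), one first assembles the local liftings into a single global $\varphi\in BV(O)$ with $T=\chi_\# G_\varphi$, and then mollifies $\varphi$ once; the gluing problem disappears entirely.
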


We state an extension result for cartesian currents, which we could not find in the literature. For a proof we refer the interested reader to~\cite{Cic-Orl-Ruf}.
\begin{lemma}[Extension of cartesian currents] \label{lemma:extension of currents}
Let $O \subset \RR^d$ be a bounded, open set with Lipschitz boundary and let $T \in \cart(O \x \SS^1)$. Then there exist an open set $\tilde O \Supset O$ and a current $T \in \cart(\tilde O \x \SS^1)$ such that $\tilde T|_{O \x \RR^2} = T$ and $|\tilde T|(\de O \x \RR^2) = 0$. 
\end{lemma}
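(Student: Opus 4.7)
The plan is to extend $T$ by reflecting it across $\de O$ in bi-Lipschitz flattening charts and then to patch the resulting local pieces. Since $\de O$ is Lipschitz, I cover $\de O$ by finitely many open sets $U_1,\dots,U_N$ endowed with bi-Lipschitz maps $\Phi_i \colon U_i \to B$ (where $B$ denotes the open unit ball of $\RR^d$) satisfying $\Phi_i(U_i\cap O) = B^+ := B \cap \{x_d>0\}$ and $\Phi_i(U_i \cap \de O) = B \cap \{x_d = 0\}$. Denoting by $R(x) := (x_1,\dots,x_{d-1},-x_d)$ the reflection across the flattened boundary and by $\hat T_i := (\Phi_i \x \mathrm{id})_\# (T \mres (U_i\cap O)\x\RR^2)$ the straightened local current, I define the local extension
\begin{equation*}
\hat T_i^{\mathrm{ext}} := \hat T_i - (R\x\mathrm{id})_\#\hat T_i \in \D_d(B \x \RR^2).
\end{equation*}
The sign compensates the orientation reversal produced by $R$: since $R_\#\llbracket B^+\rrbracket = -\llbracket R(B^+)\rrbracket$, the choice above ensures $\pi^{B}_\#\hat T_i^{\mathrm{ext}} = \llbracket B\rrbracket$. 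Pushing back through $(\Phi_i^{-1}\x\mathrm{id})$ yields a local extension $T_i^{\mathrm{ext}}$ on $\tilde U_i := \Phi_i^{-1}(B)$ that coincides with $T$ on $U_i \cap O$.

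The crucial step is to verify that $\de \hat T_i^{\mathrm{ext}}|_{B \x \RR^2} = 0$, i.e., that the traces of $\hat T_i$ and of its reflection cancel along $\{x_d = 0\}$. I first consider the smooth case $T = G_u$ with $u \in C^\infty(O;\SS^1)$ extending smoothly up to $\de O$: the reflected map $\tilde u(x) := u(R(x))$ is still $\SS^1$-valued and pastes continuously to $u$ across the flattened hyperplane, so the glued graph has no interior boundary and a direct computation identifies it with $\hat T_i^{\mathrm{ext}}$ up to the chart. For a general $T\in\cart(O\x\SS^1)$, the Approximation Theorem~\ref{thm:approximation} provides $u_h\in C^\infty(O;\SS^1)$ with $G_{u_h}\weak T$ and $|G_{u_h}|\to|T|$. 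Reflecting each $u_h$ in the chart produces extended graphs of uniformly bounded mass (at most twice that of $G_{u_h}$, up to bi-Lipschitz Jacobian constants); their weak-$*$ limit, via the Closure Theorem~\cite[2.2.4, Theorem 1]{Gia-Mod-Sou-I}, is an i.m.\ rectifiable current with vanishing interior boundary that coincides with $\hat T_i^{\mathrm{ext}}$ by construction.

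To glue the local extensions into a global one, I refine the covering so that the outer collars $\tilde U_i \sm \ol O$ are pairwise disjoint (which can be arranged by shrinking the chart thicknesses), set $\tilde O := O \cup \bigcup_i \tilde U_i$, and define $\tilde T := T$ on $O\x\RR^2$ and $\tilde T := T_i^{\mathrm{ext}}$ on $(\tilde U_i\sm\ol O)\x\RR^2$. The resulting $\tilde T$ is i.m.\ rectifiable, has boundary vanishing in $\tilde O \x \RR^2$ (by the previous paragraph inside each collar and by $\de T = 0$ inside $O$), satisfies $\pi^{\tilde O}_\#\tilde T = \llbracket \tilde O \rrbracket$, $\tilde T|_{\d x}\ge 0$, $\supp \tilde T \subset \ol{\tilde O} \x \SS^1$, and the mass bounds $|\tilde T|(\tilde O \x\RR^2)\le C|T|(O\x\RR^2)$ and $\|\tilde T\|_1 \le C\|T\|_1$ follow from the bi-Lipschitz constants, since the reflection does not affect the $y$-coordinate. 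Finally, the required condition $|\tilde T|(\de O \x \RR^2) = 0$ holds because the rectifiable sets of both $\hat T_i$ and $(R\x\mathrm{id})_\#\hat T_i$ lie in the open sets $B^+\x\SS^1$ and $R(B^+)\x\SS^1$, respectively, each disjoint from $\{x_d = 0\}\x\SS^1$.

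The main obstacle is the boundary cancellation across the flattened hyperplane: the traces of $\hat T_i$ from the two sides must match exactly so that no residual boundary is produced on $\{x_d = 0\}\x\RR^2$. This is transparent in the smooth case via continuity of the glued map, but for a general cartesian current it requires the approximation procedure together with the Closure Theorem; combined with the bookkeeping of signs and the coherent refinement of the covering, it is the only step that is not purely formal.
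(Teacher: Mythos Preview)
The paper does not prove this lemma here but defers to the companion work~\cite{Cic-Orl-Ruf}, so there is no in-text argument to compare against. Your reflection strategy is natural and the boundary-cancellation step via approximation and the Closure Theorem is essentially sound, but the gluing step has a genuine gap. You claim the outer collars $\tilde U_i\sm\ol O$ can be made pairwise disjoint ``by shrinking the chart thicknesses''; this cannot be arranged in general. The open chart domains $U_i$ must overlap to cover the compact set $\de O$, and their exterior parts then overlap as well (already for a square and two charts meeting at a corner, the exterior strips intersect however thin you make them). On such overlaps the local extensions $T_i^{\mathrm{ext}}$ and $T_j^{\mathrm{ext}}$ differ, because the chart-dependent reflections $\Phi_i^{-1}\circ R\circ\Phi_i$ and $\Phi_j^{-1}\circ R\circ\Phi_j$ are different maps; selecting one of them on each overlap creates interior boundary along the transition hypersurfaces, and a partition-of-unity average destroys the cartesian structure. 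If instead you insist on genuinely disjoint open exterior collars, then $O\cup\bigcup_i(\tilde U_i\sm\ol O)$ misses the points of $\de O$ lying on the seams between adjacent collars, so you do not obtain $\tilde O\Supset O$.

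Two repairs are available. One is to replace the local charts by a single global bi-Lipschitz reflection across $\de O$ (such a map exists for Lipschitz domains, but its construction is a separate lemma) and run your argument once, globally. The cleaner route, in line with the tools already quoted in the paper, is to lift rather than reflect: by~\cite{Gia-Mod-Sou-S1} (cf.\ Theorem~\ref{thm:structure} and Footnote~\ref{footnote:geodesics}) one writes $T=\chi_\# G_\varphi$ for some $\varphi\in BV(O)$, where $\chi(x,\vartheta)=(x,\cos\vartheta,\sin\vartheta)$. Extend $\varphi$ to $\tilde\varphi\in BV(\tilde O)$ via the standard $BV$-extension operator for Lipschitz domains, which yields $|D\tilde\varphi|(\de O)=0$, and set $\tilde T:=\chi_\# G_{\tilde\varphi}$. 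This lies in $\cart(\tilde O\times\SS^1)$, restricts to $T$ on $O\times\RR^2$, and satisfies $|\tilde T|(\de O\times\RR^2)=0$, with no patching needed.
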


We will also use the structure theorem for cartesian currents in $O \x \SS^1$ \cite[Section 3, Theorems 1, 5, 6]{Gia-Mod-Sou-S1}.\footnote{As for the Approximation Theorem, no boundary regularity is required for this result.} To simplify notation, from now on we focus on dimension two. Recall that $\Omega\subset\RR^2$ is a bounded, open set with Lipschitz boundary. To state the theorem, we recall the following decomposition for a current $T \in \cart(\Omega \x \SS^1)$. Letting $\M$ be the countably $\H^2$-rectifiable set where $T$ is concentrated, we denote by $\M^{(a)}$ the set of points  $(x,y) \in \M$ at which the tangent plane $\mathrm{Tan}(\M,(x,y))$ does not contain vertical vectors (namely, the Jacobian of the projection $\pi^\Omega$ restricted to $\mathrm{Tan}(\M,(x,y))$ has maximal rank), by $\M^{(jc)} := (\M \sm \M^{(a)}) \cap (J_T \x \SS^1)$, where $J_T := \{ x \in \Omega \ : \ \frac{\d \pi^\Omega_\# |T|}{\d \H^1}(x) > 0 \}$, and by $\M^{(c)} := \M \sm (\M^{(a)} \cup \M^{(jc)})$. Then we can split the current via $T = T^{(a)} + T^{(c)} + T^{(jc)}$, where $T^{(a)} := T \mres \M^{(a)}$, $T^{(c)} := T \mres \M^{(c)}$, $T^{(jc)} := T \mres \M^{(jc)}$ are mutually singular measures, and we denote by $\mres$ the restriction of the Radon measure $T$.
Hereafter we use the notation $ \widehat x^1 =  x^2$ and $ \widehat x^2 = x^1$ .

\begin{theorem}[Structure Theorem for $\cart(\Omega \x \SS^1)$] \label{thm:structure} 
Let $T \in \cart(\Omega \x \SS^1)$. Then there exists a unique map $u_T \in BV(\Omega;\SS^1)$ and an (not unique) i.m.\ rectifiable 1-current $L_T =\tau(\L,k,\vec{L}_T) \in \D_1(\Omega)$  such that $T^{(jc)} = T^{(j)} + L_T \x \llbracket \SS^1 \rrbracket$ and
\begin{align}
T(\phi(x,y) \d x) & = T^{(a)}(\phi(x,y) \d x)  = \integral{\Omega}{\phi(x,u_T(x))}{\d x} \, ,   \label{eq:T horiz} \\
T^{(a)}(\phi(x,y) \d \widehat x^l \w \d y^m) & = (-1)^{2-l} \integral{\Omega}{\phi(x,u_T(x)) \de^{(a)}_{x^l} u_T^m(x)}{\d x} \, , \\
T^{(c)}(\phi(x,y) \d \widehat x^l \w \d y^m) &= (-1)^{2-l} \integral{\Omega}{\phi(x,\tilde u_T(x)) }{\d \de^{(c)}_{x^l} u_T^m(x)} \, , \label{eq:T cantor} \\
T^{(j)}(\phi(x,y) \d \widehat x^l \w \d y^m) &= (-1)^{2-l} \integral{J_{u_T}}{\bigg\{\integral{\gamma_x}{\phi(x,y)}{\d y^m}\bigg\} \nu_{u_T}^l(x)}{\d \H^1(x)} \label{eq:T jump}
\end{align}
for every $\phi \in C^\infty_c(\Omega \x \RR^2)$, $\gamma_x$ being the (oriented) geodesic arc in $\SS^1$ that connects $u_T^-(x)$ to $u_T^+(x)$ and $\tilde u_T$ being the precise representative of $u_T$. \footnote{\label{footnote:geodesics} In \cite[Theorem 6]{Gia-Mod-Sou-S1}	the structure of $T^{(j)}$ is formulated slightly differently with the counter-clockwise arc $\gamma_{\varphi^-,\varphi^+}$ between $(\cos(\varphi^-),\sin(\varphi^-))$, $(\cos(\varphi^+),\sin(\varphi^+))$, $\varphi^-<\varphi^+$, and replacing $J_{u_T}$ by $J_{\varphi}$, where $\varphi\in BV(\Omega)$ is such that $T=\chi_{\#}G_{\varphi}$, where $\chi(x,\vartheta)=(x,\cos(\vartheta),\sin(\vartheta))$ and $G_\varphi \in \cart(\Omega \x \RR)$ is the boundary of the subgraph of $\varphi$.
To explain~\eqref{eq:T jump}, we recall the local construction in~\cite{Gia-Mod-Sou-S1}: for every $x\in J_{\varphi}$ one chooses $p^+(x) \in \RR$ and $k'(x)\in\mathbb{N}$ such that $\varphi^+(x)=p^+(x)+2\pi k'(x)$ and $0\leq p^+(x)-\varphi^-(x)<2\pi$.
Then, locally, the $1$-current $L_T'$ in \cite[Theorem 6]{Gia-Mod-Sou-S1} is given by $L_T'=\tau(\mathcal{L}',k'(x),\vec{L}_T')$, where $\mathcal{L}'\subset J_{\varphi}$ is the set of points with $k'(x)\geq 1$ and $\vec{L}_T' = \nu_{\varphi}^2e_1 - \nu_{\varphi}^1e_2$. To obtain the representation via geodesics, we let 
\begin{equation*}
(q^+(x), k(x))=
\begin{cases}
(p^+(x),k'(x)) &\mbox{if $p^+(x)-\varphi^-(x)<\pi$} \, ,
\\
(p^+(x)-2\pi,k'(x)+1) &\mbox{if $p^+(x)-\varphi^-(x)>\pi$} \, ,
\end{cases}
\end{equation*}
The case $p^+(x)-\varphi^-(x)=\pi$ needs special care. In this case we let $\tilde{\varphi}^\pm(x):=\varphi^{\pm}(x)\mod 2\pi \in [0,2 \pi)$ and we set
\begin{equation*}
(q^+(x),k(x))=
\begin{cases}
(p^+(x),k'(x)) &\mbox{if $ \tilde{\varphi}^+(x) - \tilde{\varphi}^-(x)  = \pi$} \, ,
\\
(p^+(x)-2\pi,k'(x)+1) &\mbox{if $ \tilde{\varphi}^+(x) - \tilde{\varphi}^-(x)  = -\pi$} \, .
\end{cases}
\end{equation*}
Replacing $(p^+(x),k'(x))$ by $(q^+(x), k(x))$, one proves~\eqref{eq:T jump} as in~\cite[p.107-108]{Gia-Mod-Sou-S1}. The curves $\gamma_{\varphi^-,\varphi^+}$ are then replaced by the more intrinsic geodesic arcs $\gamma_x$. Exchanging $u_T^-(x)$ and $u_T^+(x)$ will change the orientation of the arc and of the normal $\nu_{u_T}(x)$, making~\eqref{eq:T jump} invariant.}
\end{theorem}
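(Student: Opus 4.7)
The plan is to bootstrap from the Approximation Theorem~\ref{thm:approximation} and from the known structure of boundaries of subgraphs of scalar $BV$ functions, by lifting the $\SS^1$-valued part of $T$ to a (local) $BV$ phase. First, I fix smooth approximants $u_h\in C^\infty(\Omega;\SS^1)$ with $G_{u_h}\weak T$ and $|G_{u_h}|(\Omega\x\RR^2)\to|T|(\Omega\x\RR^2)$ via Theorem~\ref{thm:approximation}. On any simply connected open $\Omega'\subset\Omega$ one lifts $u_h=\exp(\iota\varphi_h)$ with $\varphi_h\in C^\infty(\Omega')$ having average in $[0,2\pi)$. Since $|\nabla\varphi_h|=|\nabla u_h|$, the uniform mass bound $|G_{u_h}|(\Omega\x\RR^2)\leq C$ translates into a uniform $W^{1,1}$ bound on $\varphi_h$; after extracting a subsequence, $\varphi_h\weak\varphi$ weakly-$*$ in $BV(\Omega')$. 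Patching a locally finite simply-connected cover of $\Omega$ yields a global scalar lift $\varphi\in BV_{\mathrm{loc}}(\Omega)$, and $u_T:=(\cos\varphi,\sin\varphi)$ lies in $BV(\Omega;\SS^1)$. Uniqueness of $u_T$ follows immediately from \eqref{eq:T horiz}, which identifies it as the graph function of the horizontal part of $T$; equivalently, it is forced by $\pi^\Omega_\# T=\llbracket\Omega\rrbracket$ together with $T|_{\d x}\geq 0$ via a Fubini-type disintegration of $T(\phi(x,y)\d x)$.

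Next, I split $T=T^{(a)}+T^{(c)}+T^{(jc)}$ according to the position of the approximate tangent plane to $\M$, exactly as described before the statement. On $\M^{(a)}$, where the tangent plane contains no vertical vector, $\pi^\Omega$ is a local diffeomorphism; combining the representation \eqref{eq:minors} of the tangent $2$-vector of a $C^1$ graph, the fact that $T^{(a)}$ is area-approximated by the graphs $G_{u_h}$ on the non-singular stratum, and a change of variables, one obtains the formulas for $T^{(a)}$ with the absolutely continuous derivatives $\de^{(a)}_{x^l} u_T^m$. For $T^{(c)}$ I appeal to $BV$ slicing: slicing $T$ by families of vertical hyperplanes and matching with the corresponding slice of $u_T$ yields \eqref{eq:T cantor}, with the precise representative $\tilde u_T$ emerging because $u_T$ is approximately continuous on the Cantor stratum while $Du_T$ is singular there.

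The delicate step, and the main obstacle, is the analysis of the jump-concentration part $T^{(jc)}$. At each $x\in J_{u_T}$ the trace of $T$ in the fibre $\{x\}\x\SS^1$ is an integer-multiplicity $1$-cycle whose signed length realizes a $2\pi$-periodic angular increment $\varphi^+(x)-\varphi^-(x)$; it may consist of the geodesic arc between the traces $u_T^\pm(x)$, the complementary long arc, or either of these together with additional full rotations around $\SS^1$. The canonical choice extracts the geodesic arc $\gamma_x$ (defining $T^{(j)}$ through \eqref{eq:T jump}) and absorbs the rest in $L_T\x\llbracket\SS^1\rrbracket$. Concretely, I use the local identification $T=\chi_\# G_\varphi$ with $\chi(x,\vartheta)=(x,\cos\vartheta,\sin\vartheta)$, invoke the classical structure theorem for the boundary of the subgraph of $\varphi\in BV$ (vertical segment of signed length $\varphi^+-\varphi^-$ over each jump point), and push forward through $\chi$: the pair $(q^+,k)$ displayed in footnote~\ref{footnote:geodesics} is designed precisely so that the residual arc becomes geodesic after subtracting $k$ full loops, with the borderline amplitude $\pi$ disambiguated by the modular representatives $\tilde\varphi^\pm$. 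The non-uniqueness of $L_T$ is then exactly the freedom to trade a geodesic arc for its complement at the cost of $\pm 1$ in the multiplicity $k(x)$, while $u_T$ remains intrinsically determined.
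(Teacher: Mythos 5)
First, a point of calibration: the paper does not prove Theorem~\ref{thm:structure} at all. It is quoted from \cite[Section 3, Theorems 1, 5, 6]{Gia-Mod-Sou-S1}, and the only argument actually supplied is the one in footnote~\ref{footnote:geodesics}, which converts the jump representation of \cite{Gia-Mod-Sou-S1} (counterclockwise arcs, multiplicities $k'$) into the geodesic one. Your final paragraph reproduces that conversion correctly, and the rest of your sketch follows the same route as \cite{Gia-Mod-Sou-S1} itself --- lift to a scalar $BV$ phase, use the structure of the boundary of the subgraph, push forward through $\chi$ --- so in substance you are re-deriving the cited result rather than taking a different path.

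Within that re-derivation two steps are asserted where the real work lies. (i) The patching claim is false as stated: if $\Omega$ is not simply connected and the $u_h$ have nonzero winding around a hole, the local liftings $\varphi_h$ carry a $2\pi\ZZ$-valued cocycle of transition constants that need not be a coboundary, so no single-valued $\varphi\in BV_{\mathrm{loc}}(\Omega)$ with $u_h=\exp(\iota\varphi)$, nor with $T=\chi_{\#}G_\varphi$ for the naively patched limit, exists. The correct statement (\cite[Section 3, Theorem 1]{Gia-Mod-Sou-S1}) produces $\varphi$ only after inserting jumps of height $2\pi k$ along cuts; these artificial jumps are invisible in $u_T=\exp(\iota\varphi)$ but are precisely what generate the term $L_T\x\llbracket\SS^1\rrbracket$, so glossing over them hides where $L_T$ comes from. (Note that $u_T\in BV(\Omega;\SS^1)$ needs no lifting at all: it follows from $u_h\to u_T$ in $L^1$ together with $\sup_h|\DD u_h|(\Omega)\leq \sup_h|G_{u_h}|(\Omega\x\RR^2)<+\infty$.) (ii) The Cantor identity \eqref{eq:T cantor}, including the appearance of the precise representative $\tilde u_T$, is the technical core of the proof in \cite{Gia-Mod-Sou-S1} and cannot be dispatched by the phrase ``BV slicing'': one must show that the diffuse vertical part of $T$ away from $J_T$ is carried by the graph of $\tilde u_T$ with density $\DD^{(c)}u_T$, which uses the full strength of the scalar subgraph theory. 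So your text is a correct outline of the standard argument with these two gaps; for the purposes of this paper, the citation plus the geodesic conversion in footnote~\ref{footnote:geodesics} is all that is required.
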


It is convenient to recast the jump-concentration part of $T \in \cart(\Omega \x \SS^1)$ in the following way. Let $L_T = \tau(\L,k,\vec{L}_T)$ as in Theorem~\ref{thm:structure}. We introduce for $\H^1$-a.e.\ $x \in J_T$ the normal $\nu_T(x)$ to the 1-rectifiable set $J_T = J_{u_T} \cup \L$ as 
\begin{equation}\label{eq:defnormal}
\nu_T(x)=
\begin{cases}
\nu_{u_T}(x) &\mbox{if $x\in J_{u_T}$} \, ,
\\
(-\vec{L}^2_T(x), \vec{L}^1_T(x)) &\mbox{if $x\in\mathcal{L}\setminus J_{u_T}$} \, ,
\end{cases}
\end{equation}
where we choose $\nu_{u_T}(x) = (-\vec{L}^2_T(x), \vec{L}^1_T(x))$ if $x \in \mathcal{L} \cap J_{u_T}$. For $\H^1$-a.e.\ $x \in J_T$ we consider the curve~$\gamma^T_x$ given by: the (oriented) geodesic arc $\gamma_x$ connecting~$u_T^-(x)$ to~$u_T^+(x)$ if $x \in J_{u_T} \sm \L$ (in the sense of Footnote~\ref{footnote:geodesics} for antipodal points); the whole~$\SS^1$ turning $k(x)$ times if $x \in \L \sm J_{u_T}$; the sum (in the sense of currents)\footnote{In this case, a more elementary definition of $\gamma_x^T$ is the following: let $\gamma_x \colon [0,1] \to \SS^1$ be the geodesic arc, and let $\varphi_x \colon [0,1] \to \RR$ be a continuous function (unique up to translations of an integer multiple of $2 \pi$) such that $\gamma_x(t) = \exp(\iota \varphi_x(t))$. Then $\gamma_x^T(t) = \exp\big(\iota (1-t) \varphi_x(0) + \iota t(\varphi_x(1) + 2 \pi k(x) )\big)$.} of the oriented geodesic arc~$\gamma_x$ and of $\SS^1$ with multiplicity $k(x)$ if $x \in J_{u_T} \cap \L$. Then
\begin{equation} \label{eq:jc part of T}
T^{(jc)}(\phi(x,y) \d \widehat x^l \w  \d y^m) = (-1)^{2-l}\integral{J_T}{\Big\{ \integral{ \gamma^T_x}{\phi(x,y)}{\d y^m} \Big\} \nu_T^{l}(x)}{\d \H^1(x)}  \, .
\end{equation}
The integration over $\gamma_x^T$ with respect to the form $\d y^m$ in the formula above is intended with the correct multiplicity of the curve $\gamma_x^T$ defined for $\H^1$-a.e.\ $x \in J_T$ by the integer 
\begin{equation} \label{eq:jc multiplicity}
\mathfrak{m}(x,y) := \begin{cases}
\pm 1 \, , & \text{if } x \in J_{u_T} \sm \L \, ,\  y \in \supp(\gamma_x) \, , \\
k(x) \, , & \text{if } x \in \L \sm J_{u_T} \, ,\ y \in \SS^1,\\
k(x) \pm 1 \, , & \text{if } x \in \L \cap J_{u_T} \, ,\  y \in \supp(\gamma_x)  \, ,  \\
k(x) \, , & \text{if } x \in \L \cap J_{u_T} \, ,\  y \in \supp(\gamma^T_x) \sm \supp(\gamma_x)  \, ,
\end{cases}
\end{equation}
where $\pm = +/ -$ if the geodesic arc $\gamma_x$ is oriented counterclockwise/clockwise, respectively. More precisely,
\begin{equation} \label{eq:from dym to H1}
\integral{ \gamma^T_x}{\phi(y)}{\d y^m} = (-1)^m \,  \!\!\! \integral{\supp(\gamma^T_x)}{\phi(y) \widehat y^m \mathfrak{m}(x,y)}{\d \H^1(y)}  \, .
\end{equation}
\begin{remark}\label{rmk:choiceoforientation}
Note that we constructed $\mathfrak{m}(x,y)$ based on the orientation \eqref{eq:defnormal} of $\nu_T$. As discussed in Footnote~\ref{footnote:geodesics}, changing the orientation of $\nu_{u_T}$ changes the orientation of the geodesic $\gamma_x$, while a change of the orientation of $\vec{L}_T$ switches the sign of $k(x)$. Hence changing the orientation of $\nu_T(x)$ changes $\mathfrak{m}(x,y)$ into $-\mathfrak{m}(x,y)$. If we choose locally $\nu_T=\nu_{\varphi}$ as in Footnote~\ref{footnote:geodesics}, our construction above yields $\mathfrak{m}(x,y)\geq 0$.
\end{remark}

Finally, we recall the following result, proven in~\cite[Section 4]{Gia-Mod-Sou-S1}.
\begin{proposition} \label{prop:supporting BV}
For $u \in BV(\Omega;\SS^1)$ there exists $T \in \cart(\Omega \x \SS^1)$ such that $u_T = u$ a.e.
\end{proposition}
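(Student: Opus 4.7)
The strategy is to realize $T$ as the pushforward, via the map $\chi(x,\vartheta) := (x,\cos\vartheta,\sin\vartheta)$, of the boundary of the subgraph of a local $BV$-lifting of $u$. This inverts the representation indicated in Footnote~\ref{footnote:geodesics} of the Structure Theorem, and mirrors the construction used in~\cite{Gia-Mod-Sou-S1}.

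On any simply connected open set $U \subset \Omega$, a $BV$-lifting theorem for $\SS^1$-valued $BV$ maps provides $\varphi \in BV(U;\RR)$ with $u = (\cos\varphi,\sin\varphi)$ a.e. Since $\varphi \in BV(U)$, the subgraph $SG_\varphi := \{(x,t) \in U \x \RR : t < \varphi(x)\}$ has locally finite perimeter in $U \x \RR$, so that $G_\varphi := -\de\llbracket SG_\varphi \rrbracket$ is an i.m.\ rectifiable $2$-current in $U \x \RR$ with $\de G_\varphi = 0$ there (classical; see~\cite{Gia-Mod-Sou-I}). I set $T_U := \chi_\# G_\varphi$; since $\chi$ is smooth with bounded differential and $\SS^1$ is compact, $T_U$ is a well-defined i.m.\ rectifiable $2$-current in $U \x \SS^1$ of finite mass.

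To obtain a global $T$ on $\Omega \x \SS^1$, I would cover $\Omega$ by a countable family of simply connected open sets $\{U_\alpha\}$ and glue the local currents. Any two liftings $\varphi_1,\varphi_2$ of $u$ on a connected component of an overlap differ by a constant in $2\pi\ZZ$, say $\varphi_1 = \varphi_2 + 2\pi k$; the vertical translation $\tau(x,\vartheta) := (x,\vartheta+2\pi k)$ sends $SG_{\varphi_2}$ to $SG_{\varphi_1}$, so $G_{\varphi_1} = \tau_\# G_{\varphi_2}$. Since $\chi\circ\tau = \chi$, the pushforwards coincide, so the $T_{U_\alpha}$ agree on overlaps and assemble into a well-defined $T \in \D_2(\Omega \x \SS^1)$.

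The properties $\de T = 0$ (from $\de G_\varphi = 0$ and commutation of $\de$ with pushforward), $\pi^\Omega_\# T = \llbracket \Omega \rrbracket$ (via $\pi^\Omega\circ\chi = \pi^\Omega$ and $\pi^\Omega_\# G_\varphi = \llbracket U \rrbracket$), $T|_{\d x} \geq 0$, finiteness of $|T|$ and $\|T\|_1$ (the latter because $|y|=1$ on $\SS^1$), and $\supp(T) \subset \ol\Omega \x \SS^1$ all transfer by naturality of the pushforward, showing $T \in \cart(\Omega \x \SS^1)$. For the identification $u_T = u$, the horizontal evaluation gives, for every $\phi \in C^\infty_c(\Omega \x \RR^2)$,
\begin{equation*}
T(\phi(x,y)\d x) = G_\varphi\bigl(\phi(x,\cos\vartheta,\sin\vartheta)\d x\bigr) = \int_{\Omega}\phi(x,u(x))\d x,
\end{equation*}
so the uniqueness of $u_T$ in~\eqref{eq:T horiz} forces $u_T = u$ a.e. The main obstacle is the gluing when $\Omega$ is not simply connected, since global $BV$-liftings of $u$ need not exist; this is precisely bypassed by working locally on simply connected patches and exploiting the $2\pi$-periodicity of $\chi$.
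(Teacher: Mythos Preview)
The paper does not give its own proof of this proposition; it simply records the result and cites~\cite[Section~4]{Gia-Mod-Sou-S1}. Your sketch reproduces precisely the construction carried out in that reference: one realizes $T$ as $\chi_\# G_\varphi$, where $G_\varphi$ is the current associated with the boundary of the subgraph of a $BV$ lifting $\varphi$ of $u$.

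Two remarks on your write-up. First, the step ``a $BV$-lifting theorem \dots\ provides $\varphi \in BV(U;\RR)$'' is itself the substantive content; the existence of such a lifting is not elementary and is exactly what~\cite[Section~4]{Gia-Mod-Sou-S1} (and later work such as the one cited as~\cite{Ign} in the paper) establish. In fact, that reference produces a \emph{global} lifting $\varphi \in BV(\Omega)$ on any bounded Lipschitz domain, with no simple-connectedness hypothesis---$BV$ maps are allowed to jump, so one can always insert a cut. This makes your local-to-global patching unnecessary, though your argument via the $2\pi$-periodicity of $\chi$ is a correct alternative. Second, for the finiteness of $|T|(\Omega\times\RR^2)$ you should either invoke a global lifting or restrict to a \emph{finite} cover by simply connected sets (possible since $\Omega$ is bounded Lipschitz); with a merely countable cover the mass bound does not follow automatically from the local estimates.
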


\subsection{Currents associated to discrete spin fields}\label{s.discretecurrents}

%
%

 We introduce the piecewise constant interpolations of spin fields. For every set $S$, we put 
\begin{equation*}
\PC_\e(S) := \{u \colon \RR^2 \to S \ : \ u(x) = u(\e i)   \text{ if } x \in \e i + [0, \e)^2 \text{ for some } i \in  \e \ZZ^2 \}\,.
\end{equation*}
Given $u \colon \Omega_\e \to \SS^1$, we can always identify it with its piecewise constant interpolation belonging to $\PC_\e(\SS^1)$, arbitrarily extended to $\RR^2$.

To $u \in \PC_\e(\SS^1)$ we associate the current $G_{u} \in \D_2(\Omega \x \RR^2)$ defined by
\begin{align}
G_{u}(\phi(x,y) \d x^1 \w  \d x^2) &:= \integral{\Omega}{\phi(x,u(x))}{\d x} \, , \label{eq:Gu ac} \\
G_{u}(\phi(x,y) \d \widehat x^l \w  \d y^m) &:= (-1)^{2-l}\integral{J_{u}}{\bigg\{ \integral{\gamma_x}{ \phi(x,y)}{\d y^m} \bigg\} \nu^l_{u}(x) }{\d \H^1(x)} \, , \label{eq:Gu j}\\
G_{u}(\phi(x,y) \d y^1 \w  \d y^2) &:= 0 \, ,  \label{eq:Gu v}
\end{align}
for every $\phi \in C^\infty_c(\Omega \x \RR^2)$, where $J_{u}$ is the jump set of $u$, $\nu_{u}(x)$ is the normal to $J_{u}$ at $x$, and $\gamma_x \subset \SS^1$ is the (oriented) geodesic arc which connects the two traces $u^-(x)$ and $u^+(x)$. If $u^+(x)$ and $u^-(x)$ are opposite vectors, the choice of the geodesic arc $\gamma_x \subset \SS^1$ is done consistently with the choice made in~\eqref{eq:def of projection} for the values $\Psi(\pi)$ and $\Psi(-\pi)$ as follows: let $\varphi^\pm(x) \in [0,2 \pi)$ be the phase of $u^\pm(x)$; if  $\Psi(\varphi^+(x) - \varphi^-(x)) = \pi$, then $\gamma_x$ is the arc that connects $u^-(x)$ to $u^+(x)$ counterclockwise; if $ \Psi(\varphi^+(x) - \varphi^-(x)) =  -\pi$, then $\gamma_x$ is the arc that connects $u^-(x)$ to~$u^+(x)$ clockwise. Note that the choice of the arc $\gamma_x$ is independent of the orientation of the normal~$\nu_{u}(x)$. 

We define for $\H^1$-a.e.\ $x \in J_{u}$  the integer number $\mathfrak{m}(x) = \pm 1$, where $\pm = +/ -$ if the geodesic arc $\gamma_x$ is oriented counterclockwise/clockwise, respectively. Then
\begin{equation} \label{eq:from dym to H1 2}
\integral{ \gamma_x}{\phi(y)}{\d y^m} = (-1)^m \, \mathfrak{m}(x) \!  \integral{ \supp(\gamma_x)}{\phi(y) \widehat y^m}{\d \H^1(y)}  \, .
\end{equation}
 





			  



The proof of the following proposition is standard.

\begin{proposition} \label{prop:Gu is im}
Let $u \in \PC_\e(\SS^1)$ and let $G_{u} \in \D_2(\Omega \x \RR^2)$ be the current defined in \eqref{eq:Gu ac}--\eqref{eq:Gu v}. Then $G_{u}$ is an i.m.\ rectifiable current and, according to the representation formula~\eqref{eq:representation}, $G_{u} = \vec{G}_u   |G_{u}|$, where $|G_{u}| = \H^2 \mres \M$,
\begin{equation*}
\M = \M^{(a)} \cup \M^{(j)} = \{(x,u(x)) \ : \ x \in \Omega \sm J_{u} \} \cup  \{(x,y) \ : \ x \in J_{u}, \ y \in \gamma_x \} \, ,
\end{equation*}
and, for $\H^2$-a.e.\ $(x,y) \in \M^{(a)}$,
\begin{equation} \label{eq:orientation of Gu 1}
\begin{split}
\vec{G}_{u}(x,y) =  e_1  \wedge e_2
\end{split}
\end{equation}
while for $\H^2$-a.e.\ $(x,y) \in \M^{(j)}$ we have
\begin{equation} \label{eq:orientation of Gu 2}
\begin{split}
\vec{G}_{u}(x,y) = \mathrm{sign}(\mathfrak{m}(x)) \big[  - & \  \nu^2_{u}(x) y^2 e_1 \wedge \bar e_1 +  \nu^2_{u}(x) y^1 e_1 \wedge \bar e_2  \\
 + & \ \nu^1_{u}(x) y^2 e_2 \wedge \bar e_1 - \nu^1_{u}(x) y^1 e_2 \wedge \bar e_2 \big ] \,. 
\end{split}
\end{equation}
\end{proposition}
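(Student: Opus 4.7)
The plan is to decompose $G_u = G_u^{(a)} + G_u^{(j)}$ according to \eqref{eq:Gu ac}--\eqref{eq:Gu j} (recall that \eqref{eq:Gu v} already prescribes the vanishing of the $\d y^1 \w \d y^2$-component), to verify that each summand is an i.m.\ rectifiable current with the expected rectifiable set and orientation, and then to check that the two pieces sum to $\tau(\M, 1, \vec G_u)$ with $\M = \M^{(a)} \cup \M^{(j)}$. Since $\M^{(a)} \cap \M^{(j)}$ lies in the $\H^2$-negligible set $\{(x, u^\pm(x)) : x \in J_u\}$, no multiplicity issues arise at the interface.

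For the horizontal piece, I would use that $u \in \PC_\e(\SS^1)$ is constant on each half-open square $\e i + [0,\e)^2$, so $\M^{(a)}$ is a countable disjoint union of horizontal flat pieces, hence countably $\H^2$-rectifiable, and the map $x \mapsto (x,u(x))$ pushes $\mathcal{L}^2 \mres (\Omega \sm J_u)$ onto $\H^2 \mres \M^{(a)}$. Thus \eqref{eq:Gu ac} rewrites as $G_u^{(a)}(\omega) = \int_{\M^{(a)}} \langle \omega, e_1 \w e_2 \rangle \,\d{\H^2}$ for every $\omega \in \D^2(\Omega\x\RR^2)$, yielding $G_u^{(a)} = \tau(\M^{(a)}, 1, e_1 \w e_2)$ and \eqref{eq:orientation of Gu 1}.

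For the jump piece, the set $J_u$ is a locally finite union of coordinate segments, hence $\H^1$-rectifiable with locally finite length, and each geodesic arc $\gamma_x \subset \SS^1$ is a smooth $1$-manifold, so $\M^{(j)} = \{(x,y) : x \in J_u, y \in \supp(\gamma_x)\}$ is countably $\H^2$-rectifiable. At $\H^2$-a.e.\ $(x,y) \in \M^{(j)}$ the tangent plane splits as the direct sum of a tangent line to $J_u$ and a tangent line to $\SS^1$, and a direct computation shows that the unit simple $2$-vector $\xi(x,y) = (\nu_u^2(x) e_1 - \nu_u^1(x) e_2) \w (-y^2 \bar e_1 + y^1 \bar e_2)$ expands to the bracketed expression in \eqref{eq:orientation of Gu 2}. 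Using the area formula (Fubini on the rectifiable product structure) together with \eqref{eq:from dym to H1 2}, I would then match the action of $G_u^{(j)}$ on $\phi(x,y)\d \widehat x^l \w \d y^m$ with $\int_{\M^{(j)}} \phi \, \langle \d \widehat x^l \w \d y^m,\, \mathrm{sign}(\mathfrak{m}(x))\,\xi(x,y) \rangle\,\d{\H^2}$; the factor $\mathrm{sign}(\mathfrak{m}(x))$ accounts for whether $\gamma_x$ is traversed counterclockwise or clockwise relative to the reference tangent $(-y^2, y^1)$. This identifies $G_u^{(j)} = \tau(\M^{(j)}, 1, \vec G_u|_{\M^{(j)}})$ and gives \eqref{eq:orientation of Gu 2}; the compatibility with \eqref{eq:Gu v} is automatic because neither $e_1 \w e_2$ nor $\xi$ have a $\bar e_1 \w \bar e_2$ component.

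The main obstacle I expect is the careful bookkeeping of signs in the jump piece: the normal $\nu_u(x)$ is only defined up to a choice of orientation on $J_u$, and flipping it must be compensated by a simultaneous flip of $\mathfrak{m}(x)$ so that $\vec G_u$ is unambiguously defined $\H^2$-a.e.\ on $\M^{(j)}$. A second delicate point is the antipodal case, in which the choice of $\gamma_x$ is dictated by the $\Psi$ convention in \eqref{eq:def of projection} rather than by a geodesic ambiguity; here one has to verify by inspection that \eqref{eq:orientation of Gu 2} and \eqref{eq:from dym to H1 2} remain consistent with that convention. Once these sign conventions are pinned down, the i.m.\ rectifiability of $G_u$ follows immediately from the rectifiability of $G_u^{(a)}$ and $G_u^{(j)}$ and the essential disjointness of their supports.
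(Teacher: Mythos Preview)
Your proposal is correct and is exactly the standard verification the paper has in mind; the paper itself does not spell out a proof but simply declares it ``standard,'' and your decomposition into $G_u^{(a)}+G_u^{(j)}$, identification of each piece as $\tau(\M^{(\cdot)},1,\vec G_u)$, and the sign bookkeeping via $\mathfrak{m}(x)$ and $\nu_u$ are precisely what is needed.
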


We now relate the boundary of the current $G_u$ with the discrete vorticity $\mu_{u}$. The interested reader can find a detailed proof in~\cite{Cic-Orl-Ruf}.

\begin{proposition} \label{prop:bd of Gu is mu}
Let $u \in \PC_\e(\SS^1)$ and let $G_{u} \in \D_2(\Omega \x \RR^2)$ be the current defined in \eqref{eq:Gu ac}--\eqref{eq:Gu v}. Then $\de G_{u}|_{\Omega \x \RR^2} = - \mu_{u} \x \llbracket \SS^1 \rrbracket$, where $\mu_{u}$ is the discrete vorticity measure defined in~\eqref{eq:discrete vorticity measure} for $u|_{\e \ZZ^2} \colon \e \ZZ^2 \to \SS^1$.
\end{proposition}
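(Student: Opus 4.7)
The plan is to compute $\de G_u$ directly by exploiting the geometric decomposition of $G_u$ into horizontal pieces (the graphs of $u$ over the cubes $Q_i := \e i + [0,\e)^2$) and vertical pieces (the geodesic arcs above the edges of $J_u$). A first step is to verify, using \eqref{eq:Gu ac}--\eqref{eq:Gu v}, the identity
\begin{equation*}
G_u = \sum_i \llbracket Q_i \rrbracket \x \delta_{u(\e i)} \, + \, \sum_{e \subset J_u} \sigma_e \, \llbracket e \rrbracket \x \llbracket \gamma_e \rrbracket,
\end{equation*}
where $\sigma_e \in \{\pm 1\}$ encodes the compatibility between the orientation of the normal $\nu_u$ and that of the arc $\gamma_e$.

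I would then apply the product-current boundary formula $\de(T_1 \x T_2) = \de T_1 \x T_2 + (-1)^{k_1} T_1 \x \de T_2$. The horizontal pieces contribute $\sum_i \de \llbracket Q_i \rrbracket \x \delta_{u(\e i)}$, and the vertical pieces contribute $\sum_e \sigma_e \bigl[ \de \llbracket e \rrbracket \x \llbracket \gamma_e \rrbracket - \llbracket e \rrbracket \x (\delta_{u^+_e} - \delta_{u^-_e}) \bigr]$. When restricted to $\Omega \x \RR^2$, the cube-boundary terms along each edge $e$ shared by two cubes $Q_i, Q_j$ with $u(\e i) \neq u(\e j)$ collapse to $\pm \llbracket e \rrbracket \x (\delta_{u(\e j)} - \delta_{u(\e i)})$, which cancels the $\llbracket e \rrbracket \x (\delta_{u^+_e} - \delta_{u^-_e})$ contribution from the vertical part; edges of $\de Q_i$ lying outside $J_u$ cancel pairwise by orientation, and edges on $\de \Omega$ fall outside the open set $\Omega \x \RR^2$. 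Only the endpoint contributions $\sum_{e \subset J_u} \sigma_e \, \de \llbracket e \rrbracket \x \llbracket \gamma_e \rrbracket$ survive.

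The final step is to localize this sum at the primal lattice vertices $p = \e i + (\e,\e)$. At each such $p$, up to four $J_u$-edges meet, and $\sum_{e \ni p} \sigma_e (\delta_{e_+} - \delta_{e_-})$ concentrates the surviving boundary at $p$. I would then show that the signed concatenation of the arcs $\gamma_e$ emanating from $p$, read counterclockwise around $p$, forms a closed oriented loop $\Gamma_p \subset \SS^1$ whose associated $1$-current satisfies $\llbracket \Gamma_p \rrbracket = d_u(\e i) \llbracket \SS^1 \rrbracket$; non-jump edges around $p$ are interpreted as contributing trivial arcs, so that the loop always closes.

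The main technical obstacle is the winding-number identity. It requires careful bookkeeping of orientations: going counterclockwise around $p$, the four consecutive arcs must connect $u(\e i), u(\e i + \e e_1), u(\e i + \e e_1 + \e e_2), u(\e i + \e e_2)$ by geodesic arcs whose signed angular length is exactly $\Psi(\varphi(\e j) - \varphi(\e k))$ between adjacent nodes, with the delicate antipodal case handled by matching the geodesic convention fixed after \eqref{eq:Gu v} with the prescription for $\Psi(\pm \pi)$ in \eqref{eq:def of projection}. Summing the four $\Psi$-increments and dividing by $2\pi$ yields $d_u(\e i)$ via \eqref{eq:discrete vorticity}, identifying the winding number of $\Gamma_p$. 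Collecting all vertex contributions and tracking the orientation signs encoded by $\sigma_e$, one concludes $\de G_u|_{\Omega \x \RR^2} = -\mu_u \x \llbracket \SS^1 \rrbracket$.
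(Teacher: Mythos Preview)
Your approach is sound and is the natural way to prove this identity: decompose $G_u$ as a sum of product currents, apply the Leibniz rule for boundaries, observe the edge-wise cancellations, and identify the surviving vertex contributions with the discrete vorticity via the winding-number computation. The only substantive work is exactly where you locate it --- the sign/orientation bookkeeping for $\sigma_e$ and the verification that the concatenated geodesic arcs around each vertex $p=\e i+(\e,\e)$ form a closed loop in $\SS^1$ of degree $d_u(\e i)$, with the antipodal convention after \eqref{eq:Gu v} matched to the choice of $\Psi(\pm\pi)$.

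Note that the paper does not actually carry out this computation: it states the proposition and refers the reader to~\cite{Cic-Orl-Ruf} for a detailed proof. So there is no in-paper argument to compare yours against. Your outline is precisely the computation one would expect to find there, and I see no gap in the strategy. One minor imprecision: rather than saying ``edges on $\de\Omega$ fall outside $\Omega\times\RR^2$'', it is cleaner to observe that since you only test against forms $\omega\in\D^1(\Omega\times\RR^2)$ with compact support in $\Omega$, all cube faces near $\de\Omega$ lie outside $\supp\omega$ and contribute nothing --- the cubes $Q_i$ need not align with $\de\Omega$.
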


The proof of the following fact follows essentially from the definitions.

\begin{lemma} \label{lemma:flat implied D1}
Assume $\mu_\e \flat \mu$ in $\Omega$. Then $\mu_\e \x \llbracket \SS^1 \rrbracket \weak \mu \x \llbracket \SS^1 \rrbracket$ in $\D_1(\Omega \x \RR^2)$.
\end{lemma}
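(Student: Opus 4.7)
The plan is to reduce everything to testing against horizontal versus vertical forms and then unpack the definition of flat convergence. Since $\mu_\e$ and $\mu$ are $0$-currents while $\llbracket \SS^1 \rrbracket$ is a $1$-current, the product $\mu_\e \x \llbracket \SS^1 \rrbracket$ is a $1$-current in $\Omega \x \RR^2$, and by the very definition of product currents (in the form recalled in Subsection~3.2 with $k_1 = 0$, $k_2 = 1$) it annihilates every $1$-form of type $\phi(x,y)\, \d x^i$. Hence it suffices to consider test forms of the type $\omega = \phi(x,y)\, \d y^j$ with $\phi \in C^\infty_c(\Omega \x \RR^2)$, $j \in \{1,2\}$.

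For such $\omega$, I would unwind the product current definition to obtain
\begin{equation*}
(\mu_\e \x \llbracket \SS^1 \rrbracket)(\phi(x,y)\, \d y^j) \;=\; \mu_\e\bigl(\Phi(x)\bigr) \;=\; \int_\Omega \Phi(x)\, \d \mu_\e(x)\, ,
\end{equation*}
where
\begin{equation*}
\Phi(x) \;:=\; \llbracket \SS^1 \rrbracket\bigl(\phi(x,\cdot)\, \d y^j\bigr) \;=\; \int_{\SS^1} \phi(x,y)\, \d y^j\, .
\end{equation*}
The key observation is that $\Phi \in C^\infty_c(\Omega)$: smoothness comes from differentiation under the integral sign, and the support is contained in the projection of $\supp(\phi)$ onto $\Omega$, which is compact in $\Omega$. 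In particular $\Phi$ is Lipschitz continuous with compact support in $\Omega$.

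At this point the conclusion follows directly from the definition of flat convergence $\mu_\e \flat \mu$, which is precisely convergence in the norm dual to compactly supported Lipschitz functions on $\Omega$: indeed, writing $L := \mathrm{Lip}(\Phi) + \|\Phi\|_\infty$ we can test the flat convergence against $\Phi / L$ and obtain $\int_\Omega \Phi\, \d \mu_\e \to \int_\Omega \Phi\, \d \mu$. The same computation applied to the limit current then gives $(\mu_\e \x \llbracket \SS^1 \rrbracket)(\omega) \to (\mu \x \llbracket \SS^1 \rrbracket)(\omega)$ for every $\omega \in \D^1(\Omega \x \RR^2)$, which is exactly weak convergence in $\D_1(\Omega \x \RR^2)$. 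There is no real obstacle here: the lemma is essentially bookkeeping, the only mildly delicate point being to verify that $\Phi$ inherits compact support in $\Omega$ so that it is an admissible test function for flat convergence.
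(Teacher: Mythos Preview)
Your proof is correct and is precisely the unpacking of definitions that the paper alludes to; the paper gives no explicit argument beyond stating that the fact follows essentially from the definitions, and your computation is exactly that verification.
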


\section{Proofs}\label{sec:theta<<eloge}
Now we give the proof of our main Theorem~\ref{thm:e smaller theta with vortices}. In what follows, for $A \subset \RR^2$ we shall use the localized energy
\begin{equation*}
	E_\e(u;A) :=  \frac{1}{2} \sum_{\substack{\langle i, j \rangle \\ \e i, \e j \in A }} \e^2 |u(\e i) - u(\e j)|^2 .
\end{equation*} 

\subsection{Compactness and lower bound in absence of vortices}
In this section we consider a generic sequence $u_{\e}\colon \e\ZZ^2\to\S_{\e}$ such that $\tfrac{1}{\e\theta_{\e}}E_{\e}(u_{\e})$ is bounded. First we prove that such sequences are compact in $L^1(\Omega)$ with limits in $BV(\Omega;\SS^1)$.

\begin{proposition}[Compactness in $BV$] \label{prop:BV compactness}
Assume that $\theta_{\e}\ll 1$ and $\frac{1}{\e \theta_\e}E_\e(u_\e) \leq C$. Then there exists a subsequence (not relabeled) and a function $u \in BV(\Omega;\SS^1)$ such that $u_\e \to u$ in $L^1(\Omega)$  and $u_{\e}\wstar u$ in $BV_{\rm loc}(\Omega;\RR^2)$.
\end{proposition}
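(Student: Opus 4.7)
The plan is to convert the quadratic energy bound into an $L^1$-type bound on discrete differences, using that the codomain is finite and well-separated. More precisely, since $u_\e$ takes values in $\S_\e$, any two distinct values differ by at least the chord length $2\sin(\theta_\e/2)$, and because $\theta_\e \ll 1$, a standard estimate (e.g.\ $\sin(s)\geq \tfrac{2}{\pi}s$ on $[0,\pi/2]$) gives
\begin{equation*}
|u_\e(\e i)-u_\e(\e j)| \geq c\,\theta_\e\qquad\text{whenever }u_\e(\e i)\neq u_\e(\e j),
\end{equation*}
for some absolute constant $c>0$ and $\e$ small. Consequently, for every pair of nearest neighbors,
\begin{equation*}
|u_\e(\e i)-u_\e(\e j)|^2 \geq c\,\theta_\e\,|u_\e(\e i)-u_\e(\e j)|.
\end{equation*}

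Summing over nearest-neighbor pairs and using the definition of $E_\e$ together with the energy bound $\tfrac{1}{\e\theta_\e}E_\e(u_\e)\leq C$, this would yield
\begin{equation*}
\sum_{\nn} \e^2\,|u_\e(\e i)-u_\e(\e j)| \leq \frac{2}{c\,\theta_\e}\,E_\e(u_\e) \leq C'\,\e.
\end{equation*}
I would then identify $u_\e$ with its piecewise constant interpolation $\tilde u_\e\in\PC_\e(\S_\e)$ introduced in Subsection~\ref{s.discretecurrents}. Since on each $\e$-edge shared by two neighboring cells the jump of $\tilde u_\e$ contributes $\e\,|u_\e(\e i)-u_\e(\e j)|$ to the total variation, we get
\begin{equation*}
|\mathrm{D}\tilde u_\e|(\Omega) \leq \frac{1}{\e}\sum_{\substack{\nn\\ \e i,\e j\in\Omega'}}\!\!\e^2\,|u_\e(\e i)-u_\e(\e j)| \leq C''
\end{equation*}
on any $\Omega'\Supset\Omega$ (after harmlessly extending $u_\e$ outside $\Omega_\e$ with values in $\S_\e$).

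The remaining step is standard $BV$ compactness. Since $|\tilde u_\e|\equiv 1$ and $\sup_\e |\mathrm{D}\tilde u_\e|(\Omega)<\infty$, the sequence is bounded in $BV(\Omega;\RR^2)$, hence relatively compact in $L^1_{\rm loc}(\Omega;\RR^2)$; boundedness of $\Omega$ and the uniform bound $|\tilde u_\e|\leq 1$ upgrade this to convergence in $L^1(\Omega;\RR^2)$ by dominated convergence along a subsequence that also converges a.e. The lower semicontinuity of total variation yields $u\in BV_{\rm loc}(\Omega;\RR^2)$ (or in $BV(\Omega;\RR^2)$ upon extending), and passing to the a.e.\ subsequence shows $|u|=1$ a.e., i.e.\ $u\in BV(\Omega;\SS^1)$. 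Finally, the combination of $L^1$-convergence and the uniform $BV$-bound gives weak-$\ast$ convergence $u_\e\wstar u$ in $BV_{\rm loc}(\Omega;\RR^2)$.

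No step looks essentially hard; the only conceptual point is the quantization inequality $|u_\e(\e i)-u_\e(\e j)|^2 \gtrsim \theta_\e\,|u_\e(\e i)-u_\e(\e j)|$, which is precisely what turns the natural Dirichlet-type scaling $\kappa_\e=\e\theta_\e$ into a $BV$ bound and explains why this is the right scaling in view of \eqref{eqintro:angle interpolation}.
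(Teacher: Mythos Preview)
Your proof is essentially the paper's: the key step in both is the quantization inequality $|u_\e(\e i)-u_\e(\e j)|^2 \gtrsim \theta_\e\,|u_\e(\e i)-u_\e(\e j)|$ (the paper writes it via $2\sin(\theta_\e/2)$ and the geodesic distance, but this is the same thing), which converts the bound on $\tfrac{1}{\e\theta_\e}E_\e$ into a uniform $BV$ bound on the piecewise constant interpolation, followed by standard compactness and the constraint $|u_\e|=1$.

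One technical point worth cleaning up: the paper localizes to $A\compact\Omega$ so that, for $\e$ small, every jump edge of $\tilde u_\e$ inside $A$ corresponds to a pair $\e i,\e j\in\Omega_\e$ actually appearing in $E_\e(u_\e)$, and then concludes by a diagonal argument plus equiintegrability. Your extension to $\Omega'\Supset\Omega$ instead introduces nearest-neighbor pairs not controlled by the energy, so the displayed bound with $\sum_{\e i,\e j\in\Omega'}\leq C'\e$ is not justified as written. This is not a real obstacle---either localize as the paper does, or extend by a single constant value and absorb the additional jump of size at most $2\H^1$ of a Lipschitz curve near $\partial\Omega$---but ``harmlessly extending'' hides exactly this issue.
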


\begin{proof}
Fix $A \compact \Omega$. Since $\geo(u(\e i), u(\e j)) \geq \theta_\e$ when $u(\e i) \neq u(\e j)$, \eqref{eq:geo and eucl} implies that
\begin{equation} \label{eq:rough lower bound}
\begin{split}
C & \geq \frac{1}{\e \theta_\e}E_\e(u_\e) = \frac{1}{2 \e \theta_\e} \sum_{\nn} \e^2 |u(\e i) - u(\e j)|^2 \\
& = \frac{1}{2 \e \theta_\e} \sum_{\nn} \e^2  2 \sin \Big( \tfrac{1}{2}\geo\big(u(\e i),u(\e j) \big) \Big)|u(\e i) - u(\e j)| \\
& \geq \frac{\sin (\tfrac{\theta_\e}{2})}{ \theta_\e} \sum_{\nn} \e |u(\e i) - u(\e j)| \geq \frac{2\sin (\tfrac{\theta_\e}{2})}{\theta_\e} |\DD u_\e|(A) \, .
\end{split}
\end{equation}
Hence $u_{\e}$ is bounded in $BV(A;\SS^1)$ and we conclude that (up to a subsequence) $u_{\e}\to u$ in $L^1(A)$  and $u_{\e}\wstar u$ in $BV(A;\RR^2)$ for some $u\in BV(A;\SS^1)$ with $|\DD u|(A)\leq C$. Since $A\compact\Omega$ was arbitrary and the constant $C$ does not depend on $A$, the claim follows from a diagonal argument and the equiintegrability of $u_{\e}$.
\end{proof}

In the next lemma we prove lower bound for the energy still at the discrete level.
\begin{lemma} \label{lemma:bound eucl from below}
Assume that $\theta_{\e}\ll 1$ and let $\sigma \in (0,1)$. Then for $\e$ small enough we have
\begin{equation} \label{eq:ineq for sin}
|u_\e(\e i) - u_\e(\e j)|^2 \geq (1 - \sigma)\theta_\e \geo\big(u_\e(\e i),u_\e(\e j)\big) \, .
\end{equation}
In particular 
\begin{equation} \label{eq:first lower bound}
\frac{1}{\e \theta_\e} E_\e(u_\e) \geq (1-\sigma) \frac{1}{2}  \sum_{\langle i, j \rangle} \e \geo\big(u_\e(\e i), u_\e(\e j) \big) \, .
\end{equation}
\end{lemma}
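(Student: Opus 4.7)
The plan is to first establish the pointwise inequality~\eqref{eq:ineq for sin} for each ordered pair of nearest neighbors; summing it over $\nn$ and dividing by $2\e\theta_\e$ (and using the definition of $E_\e(u_\e)$) immediately yields~\eqref{eq:first lower bound}, so the whole lemma reduces to~\eqref{eq:ineq for sin}.

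For the pointwise bound, I fix $\langle i,j\rangle$ and set $\alpha := \geo(u_\e(\e i), u_\e(\e j))$. Since $u_\e$ takes values in $\S_\e$, consecutive values differ by a multiple of $\theta_\e$ on $\SS^1$, so $\alpha \in \{0\} \cup [\theta_\e, \pi]$; the case $\alpha = 0$ is trivial, so it suffices to treat $\alpha \in [\theta_\e, \pi]$. Using $|u-v| = 2\sin(\tfrac{1}{2}\geo(u,v))$ from~\eqref{eq:geo and eucl}, the desired inequality becomes
\begin{equation*}
h(\alpha) := 4\sin^2(\alpha/2) - (1-\sigma)\theta_\e \alpha \geq 0 \qquad \text{for all } \alpha \in [\theta_\e, \pi],
\end{equation*}
uniformly for $\e$ small enough in terms of $\sigma$.

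A short one-variable analysis of $h$ does the job. From $h'(\alpha) = 2\sin\alpha - (1-\sigma)\theta_\e$ one sees that $h'$ vanishes precisely at $\alpha_1 := \arcsin((1-\sigma)\theta_\e/2)$ and $\alpha_2 := \pi - \alpha_1$, with $\alpha_1 < \theta_\e < \alpha_2 < \pi$ as soon as $\theta_\e$ is sufficiently small. Hence on $[\theta_\e,\pi]$ the function $h$ is increasing on $[\theta_\e, \alpha_2]$ and decreasing on $[\alpha_2,\pi]$, so its minimum is attained at one of the endpoints. At $\alpha = \pi$ we have $h(\pi) = 4 - (1-\sigma)\pi\theta_\e > 0$ for $\theta_\e$ small. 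At $\alpha = \theta_\e$, the expansion $4\sin^2(\theta_\e/2) = \theta_\e^2(1 - \theta_\e^2/12 + O(\theta_\e^4))$ yields $h(\theta_\e) = \theta_\e^2(\sigma - \theta_\e^2/12 + O(\theta_\e^4))$, which is positive as soon as $\theta_\e^2 < 12\sigma$.

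The only (quite mild) obstacle is the endpoint $\alpha = \theta_\e$: because the quadratic Taylor expansion of $\sin^2$ is sharp at this scale, one has to pick $\theta_\e$ small enough that the $O(\theta_\e^2)$ correction does not eat the prefactor $\sigma$. Crucially, this threshold depends only on $\sigma$ and not on the pair $\nn$ nor on $u_\e$, so the inequality holds uniformly for $\e$ small enough, concluding the proof of~\eqref{eq:ineq for sin} and hence of~\eqref{eq:first lower bound}.
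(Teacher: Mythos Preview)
Your proof is correct, and it takes a slightly different route from the paper's. Both arguments start from the identity $|u_\e(\e i)-u_\e(\e j)|=2\sin(\alpha/2)$ with $\alpha=\geo(u_\e(\e i),u_\e(\e j))$ and reduce to showing $4\sin^2(\alpha/2)\geq (1-\sigma)\theta_\e\alpha$. The paper then exploits the discreteness $\alpha=k\theta_\e$ explicitly: after a Taylor bound $\sin(k\theta_\e/2)\geq k\theta_\e/2-\tfrac{1}{6}(k\theta_\e/2)^3$, it splits into the cases $k\geq 9$ (where $\sqrt{k}\cdot\tfrac{1}{3}\geq 1$ using $k\theta_\e\leq\pi$) and $k\leq 8$ (where the cubic remainder is $O(\theta_\e^2)$). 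You instead treat $\alpha$ as a continuous variable on $[\theta_\e,\pi]$ and carry out a first-derivative analysis of $h(\alpha)=4\sin^2(\alpha/2)-(1-\sigma)\theta_\e\alpha$, reducing the check to the two endpoints.

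Your approach is arguably cleaner: it avoids the somewhat ad hoc case split on $k$ and in fact proves the stronger statement that the inequality holds for \emph{all} $\alpha\in[\theta_\e,\pi]$, not just integer multiples of $\theta_\e$. The paper's approach, on the other hand, yields an explicit quantitative bound ($1-3\theta_\e^2$ in place of $1-\sigma$) without passing through critical points. Either way the essential mechanism is the same: the only delicate point is $\alpha$ near $\theta_\e$, where the margin is $\sigma\theta_\e^2$ and one needs $\theta_\e$ small depending on $\sigma$.
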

\begin{proof}
We show \eqref{eq:ineq for sin}. By~\eqref{eq:geo and eucl} we have that $|u_\e(\e i) - u_\e(\e j)| = 2 \sin \big(\tfrac{1}{2} \geo(u_\e(\e i), u_\e(\e j) \big)$. Since $u_\e$ takes values in $\S_\e$ there exists $k \in \NN$ (depending on $i$, $j$, and $\e$) such that $\geo(u_\e(\e i),u_\e(\e j)) = k \theta_\e$. We can assume $k \neq 0$. Moreover, note that $k \theta_\e \leq \pi$. 

Due to Taylor's formula there exists a $\zeta \in [0, k \tfrac{\theta_\e}{2}]$ such that
\begin{equation*}
\sin\big(k \tfrac{\theta_\e}{2} \big) = k \tfrac{\theta_\e}{2} - \tfrac{1}{6} \cos(\zeta) \big(k \tfrac{\theta_\e}{2}\big)^3  \geq k \tfrac{\theta_\e}{2} - \tfrac{1}{6} \big(k\tfrac{\theta_\e}{2}\big)^3
\end{equation*}
Dividing by $\sqrt{k} \tfrac{\theta_\e}{2}$ we get that
\begin{equation} \label{eq:06041814}
\frac{\sin\big(k \tfrac{\theta_\e}{2} \big)}{\sqrt{k} \tfrac{\theta_\e}{2}} \geq \sqrt{k} \Big[ 1 - \tfrac{1}{6} \big(k\tfrac{\theta_\e}{2}\big)^2 \Big]
\end{equation}
If $k \geq 9$, using the fact that $k \theta_\e \leq \pi \leq 4$ we obtain that
\begin{equation} \label{eq:06041825}
\frac{\sin\big(k \tfrac{\theta_\e}{2} \big)}{\sqrt{k} \tfrac{\theta_\e}{2}} \geq \sqrt{k} \tfrac{1}{3}  \geq 1 \, .
\end{equation}
Otherwise, if $k \leq 8$, \eqref{eq:06041814} directly implies
\begin{equation} \label{eq:06041826}
\frac{\sin\big(k \tfrac{\theta_\e}{2} \big)}{\sqrt{k} \tfrac{\theta_\e}{2}} \geq 1 - 3 \theta_\e^2  \geq  1 - \sigma \, ,
\end{equation}
for $\e$ small enough. Squaring both sides in \eqref{eq:06041825} and \eqref{eq:06041826} (notice that $k \theta_\e \in [0,\pi]$ implies $\sin\big(k \tfrac{\theta_\e}{2} \big) \geq 0$) we have that $4 \sin^2 \big(k \tfrac{\theta_\e}{2} \big) \geq (1 - \sigma) k \theta^2_\e$. We conclude the proof of~\eqref{eq:ineq for sin} by replacing $k \theta_\e = \geo(u_\e(\e i),u_\e(\e j))$ in the last inequality and by \eqref{eq:geo and eucl}.
\end{proof}

We now recast the energy as a parametric integral of the currents $G_{u_\e}$. To do so, we define the convex and positively $1$-homogeneous function $\Phi \colon \Lambda_2 (\RR^2 \x \RR^2) \mapsto \RR$ by
\begin{equation} \label{eq:Phi is 2,1}
\Phi(\xi) := \sqrt{(\xi^{21})^2 + (\xi^{22})^2 }  + \sqrt{(\xi^{11})^2 + (\xi^{12})^2 }
\end{equation}
for every $\xi = \xi^{\ol 0 0} e_1 \wedge e_2 + \xi^{21} e_1 \wedge \bar e_1 + \xi^{22} e_1 \wedge \bar e_2 + \xi^{11} e_2 \wedge \bar e_1 + \xi^{12} e_2 \wedge \bar e_2 + \xi^{0 \ol 0} \bar e_1 \wedge \bar e_2$.

\begin{lemma} \label{lemma:bound with parametric integral}
For every open set $A \compact \Omega$ and $\e$ small enough we have
\begin{equation*}
 \frac{1}{2} \sum_{\substack{\langle i, j \rangle}} \e \geo\big(u_\e(\e i), u_\e(\e j) \big) \geq \integral{A \x \RR^2}{\Phi(\vec G_{u_\e})}{\d |G_{u_\e}|} \, .
\end{equation*}
\end{lemma}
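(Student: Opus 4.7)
The strategy is a direct computation: evaluate $\Phi(\vec{G}_{u_\e})$ on each stratum of the decomposition of $G_{u_\e}$ provided by Proposition~\ref{prop:Gu is im}, and then recognise the resulting surface integral as a sum over nearest-neighbour pairs.

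First I would use Proposition~\ref{prop:Gu is im} to write $|G_{u_\e}| = \H^2 \mres \M$ with $\M = \M^{(a)} \cup \M^{(j)}$. On the absolutely continuous stratum $\M^{(a)}$ we have $\vec{G}_{u_\e} = e_1 \w e_2$, so in the basis used to define $\Phi$ in~\eqref{eq:Phi is 2,1} all components $\xi^{lm}$ with $l,m\in\{1,2\}$ vanish and therefore $\Phi(\vec{G}_{u_\e}) \equiv 0$ on $\M^{(a)}$. Hence only the jump stratum $\M^{(j)}$ contributes.

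On $\M^{(j)}$, inserting \eqref{eq:orientation of Gu 2} into the definition of $\Phi$ and using $(y^1)^2 + (y^2)^2 = 1$ for $y \in \SS^1$ gives
\begin{equation*}
\Phi(\vec{G}_{u_\e}(x,y)) = \sqrt{(\nu_{u_\e}^2)^2\big((y^2)^2+(y^1)^2\big)} + \sqrt{(\nu_{u_\e}^1)^2\big((y^2)^2+(y^1)^2\big)} = |\nu_{u_\e}(x)|_1\,.
\end{equation*}
The sign factor $\mathrm{sign}(\mathfrak{m}(x))$ drops out under the absolute values. Next I would slice $\M^{(j)} \cap (A\x\RR^2) = \{(x,y) : x\in J_{u_\e}\cap A,\ y\in \supp(\gamma_x)\}$ by Fubini, noting that $\H^1(\supp(\gamma_x)) = \geo(u_\e^-(x),u_\e^+(x))$, to obtain
\begin{equation*}
\integral{A\x\RR^2}{\Phi(\vec{G}_{u_\e})}{\d |G_{u_\e}|} = \integral{J_{u_\e}\cap A}{|\nu_{u_\e}(x)|_1\,\geo\big(u_\e^-(x),u_\e^+(x)\big)}{\d \H^1(x)}\,.
\end{equation*}

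Finally, I would translate the right-hand side into the discrete sum. Since $u_\e \in \PC_\e(\SS^1)$, the set $J_{u_\e}$ is contained in the grid skeleton: for each unordered nearest-neighbour pair $\{i,j\}$ with $u_\e(\e i) \ne u_\e(\e j)$, the shared face is a segment of length $\e$ whose normal is $\pm e_1$ or $\pm e_2$, hence $|\nu_{u_\e}|_1 = 1$ on it. For $\e$ small enough (so that $A \compact \Omega$ guarantees that each face contained in $A$ comes from an edge with $\e i,\e j \in \Omega_\e$), collecting these contributions gives
\begin{equation*}
\integral{J_{u_\e}\cap A}{|\nu_{u_\e}|_1\,\geo\big(u_\e^-,u_\e^+\big)}{\d \H^1} = \sum_{\substack{\{i,j\}\text{ edge}\\\text{face}\subset A}} \e\,\geo\big(u_\e(\e i),u_\e(\e j)\big) \leq \tfrac{1}{2}\sum_{\nn} \e\,\geo\big(u_\e(\e i),u_\e(\e j)\big)\,,
\end{equation*}
where the factor $\tfrac{1}{2}$ compensates for counting ordered pairs. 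This yields the claim. The only non-routine point is the explicit computation of $\Phi$ on $\M^{(j)}$, which hinges on the specific ansatz~\eqref{eq:Phi is 2,1} chosen precisely so as to recover $|\nu|_1$; the rest is bookkeeping.
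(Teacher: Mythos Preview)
Your proof is correct and follows essentially the same approach as the paper: compute $\Phi(\vec G_{u_\e})$ on the two strata of Proposition~\ref{prop:Gu is im} to obtain $\mathds{1}_{J_{u_\e}}(x)|\nu_{u_\e}(x)|_1$, slice the $\H^2$-integral over $\M^{(j)}$ to get $\int_{J_{u_\e}\cap A}\geo(u_\e^-,u_\e^+)|\nu_{u_\e}|_1\,\d\H^1$, and then bound this by the discrete sum. Your write-up is in fact more explicit than the paper's in justifying why $\Phi$ vanishes on $\M^{(a)}$ and why the $\e$-smallness enters (to ensure faces in $A$ come from edges with both endpoints in $\Omega_\e$), but the argument is the same.
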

\begin{proof}
By the explicit formulas~\eqref{eq:orientation of Gu 1}--\eqref{eq:orientation of Gu 2} for the orientation of $G_{u_\e}$ we infer that 
\begin{align*}
\Phi(\vec G_{u_\e})(x,y) &= \mathds{1}_{J_{u_\e}}(x)\Big[ |\nu^2_{u_\e}(x)| \sqrt{(y^2)^2 + (y^1)^2 } + |\nu^1_{u_\e}(x)| \sqrt{ (y^2)^2 +   (y^1)^2 } \Big] 
\\
&= \mathds{1}_{J_{u_\e}}(x) |\nu_{u_\e}(x)|_{1} \, .
\end{align*}
Moreover, we recall that $|G_{u_\e}| = \H^2 \mres \M_\e$, where 
\begin{equation*}
\M_\e = \M_\e^{(a)} \cup \M_\e^{(j)} = \{(x,u_\e(x)) \ : \ x \in \Omega \sm J_{u_\e} \} \cup  \{(x,y) \ : \ x \in J_{u_\e}, \ y \in \supp(\gamma^\e_x) \} \, ,
\end{equation*}
$\gamma^\e_x$ being the geodesic arc that connects $u^-_\e(x)$ to $u^+_\e(x)$. Therefore 
\begin{align*}
\integral{A \x \RR^2}{\Phi(\vec G_{u_\e})}{\d |G_{u_\e}|} & = \integral{A \x \RR^2}{|\nu_{u_\e}|_{1}}{\d \H^2 \mres \M_\e^{(j)}} = \integral{J_{u_\e} \cap A}{\bigg\{ \integral{\supp(\gamma^\e_x)}{\!\!\!\!\!\!\!}{\d \H^1(y)} \bigg\}|\nu_{u_\e}(x)|_{1}}{\d \H^1(x)} \\
& = \integral{J_{u_\e} \cap A}{ \geo\big(u_\e^- , u_\e^+  \big) |\nu_{u_\e} |_{1}}{\d \H^1 } 
 \leq  \frac{1}{2} \sum_{\substack{\langle i, j \rangle}} \e \geo\big(u_\e(\e i), u_\e(\e j) \big) \, .
\end{align*}
\end{proof}

Next we show that energy bounds also yield compactness for the associated currents $G_{u_{\e}}$.
\begin{proposition}[Compactness in $\cart(\Omega \x \SS^1)$] \label{prop:current compactness}
Assume that $\theta_{\e}\ll \e|\log \e|$ as well as  $\frac{1}{\e \theta_\e}E_\e(u_\e) \leq C$. Let $G_{u_\e} \in \D_2(\Omega \x \RR^2)$ be the currents associated to $u_\e$ defined as in~\eqref{eq:Gu ac}--\eqref{eq:Gu v}. Then there exists a subsequence (not relabeled) and a current $T \in \D_2(\Omega \x \RR^2)$ such that $G_{u_\e} \weak T$ in $\D_2(\Omega \x \RR^2)$. Moreover, $T \in \cart(\Omega \x \SS^1)$ and $u_T = u$ a.e.\ in $\Omega$, where $u_T$ is the $BV$ function associated to~$T$ given by Theorem~\ref{thm:structure}  and $u \in BV(\Omega;\SS^1)$ is the function given by Proposition~\ref{prop:BV compactness}. 
\end{proposition}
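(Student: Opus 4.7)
My plan is to invoke the Closure Theorem for integer multiplicity rectifiable currents and then verify the axioms of $\cart(\Omega \x \SS^1)$ for the resulting limit. This reduces to three tasks: (a) a locally uniform mass bound on $G_{u_\e}$, (b) a locally uniform bound on $|\partial G_{u_\e}|$, and (c) the identification $u_T = u$. Throughout, I fix an exhaustion $A \compact \Omega$.

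For the mass bound, I would combine Proposition~\ref{prop:Gu is im} with Lemmas~\ref{lemma:bound eucl from below} and~\ref{lemma:bound with parametric integral}. From the orientation formulas \eqref{eq:orientation of Gu 1}--\eqref{eq:orientation of Gu 2} one has $|G_{u_\e}|(A \x \RR^2) = |A| + \int_{J_{u_\e} \cap A} \geo(u_\e^-,u_\e^+) \d\H^1$. Since $|\nu_{u_\e}|_1 \geq 1$, Lemma~\ref{lemma:bound with parametric integral} bounds the jump integral by $\int_{A \x \RR^2} \Phi(\vec G_{u_\e}) \d|G_{u_\e}|$, and Lemma~\ref{lemma:bound eucl from below} with any $\sigma \in (0,1)$ then controls this by $(1-\sigma)^{-1} \frac{1}{\e\theta_\e} E_\e(u_\e) \leq C$. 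Hence $\sup_\e |G_{u_\e}|(A \x \RR^2) < +\infty$.

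For the boundary, Proposition~\ref{prop:bd of Gu is mu} gives $\partial G_{u_\e}|_{\Omega \x \RR^2} = -\mu_{u_\e} \x \llbracket \SS^1 \rrbracket$, so $|\partial G_{u_\e}|(A \x \RR^2) = 2\pi |\mu_{u_\e}|(A)$. The hypothesis $\theta_\e \ll \e|\log\e|$ forces $\tfrac{1}{\e^2|\log\e|}XY_\e(u_\e) = \tfrac{\theta_\e}{\e|\log\e|} \cdot \tfrac{1}{\e\theta_\e}E_\e(u_\e) \to 0$; this is the content of \eqref{eq:no vortices allowed} with $\mu_{u_\e} \flat 0$ (Remark~\ref{rmk:vortices vanish}). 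Here I need the stronger pre-limit bound: by the standard ball-construction estimate underlying Proposition~\ref{prop:XY classical}, the number of vortices of $u_\e$ in $A$ is controlled by $C\, XY_\e(u_\e)/(\e^2|\log\e|) \to 0$. Since $|\mu_{u_\e}|(A)$ is a non-negative integer, it vanishes for $\e$ small. This makes the boundary bound trivial. I expect this step — locating the right quantitative vortex-counting statement compatible with the delicate regime $\theta_\e \ll \e|\log\e|$ — to be the main technical point, even though the ingredient is classical.

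With (a) and (b) in hand, the Closure Theorem yields, along a subsequence, $G_{u_\e} \weak T$ in $\D_2(\Omega \x \RR^2)$ with $T$ i.m.\ rectifiable. Passing to the limit in $\partial G_{u_\e}(\omega) = G_{u_\e}(\d\omega)$ and using Lemma~\ref{lemma:flat implied D1} (with $\mu = 0$) gives $\partial T|_{\Omega \x \RR^2} = 0$. The conditions $\pi^\Omega_\# T = \llbracket \Omega \rrbracket$, $T|_{\d x} \geq 0$ and $\|T\|_1 < +\infty$ all follow by taking limits of the identity $G_{u_\e}(\phi(x,y) \d x^1 \w \d x^2) = \int_\Omega \phi(x,u_\e(x)) \d x$ from \eqref{eq:Gu ac}, together with $|u_\e| = 1$; the inclusion $\supp(T) \subset \overline{\Omega} \x \SS^1$ is inherited from $\supp(G_{u_\e}) \subset \overline{\Omega} \x \SS^1$ via weak-* convergence of the total variation measures. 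Hence $T \in \cart(\Omega \x \SS^1)$. Finally, Proposition~\ref{prop:BV compactness} gives $u_\e \to u$ in $L^1(\Omega;\RR^2)$; by dominated convergence $\int_\Omega \phi(x,u_\e(x)) \d x \to \int_\Omega \phi(x,u(x)) \d x$ for every $\phi \in C^\infty_c(\Omega \x \RR^2)$, while by \eqref{eq:T horiz} of the Structure Theorem the same limit equals $\int_\Omega \phi(x,u_T(x)) \d x$. Arbitrariness of $\phi$ yields $u_T = u$ a.e.
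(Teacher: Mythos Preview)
Your boundary-mass step contains a genuine gap. The claimed pre-limit estimate $|\mu_{u_\e}|(A) \leq C\, XY_\e(u_\e)/(\e^2|\log\e|)$ does not follow from the ball construction: Proposition~\ref{prop:XY classical} and the Jerrard--Sandier machinery control the mass of the \emph{flat limit} of $\mu_{u_\e}$, not the total variation $|\mu_{u_\e}|$ at fixed $\e$. In the relevant regime $\e \ll \theta_\e$ of Theorem~\ref{thm:e smaller theta with vortices}, a single vortex--antivortex dipole (two adjacent $\e$-cells with vorticities $+1$ and $-1$) costs energy $O(\e^2)$, so under $E_\e(u_\e) \leq C\e\theta_\e$ one can pack on the order of $\theta_\e/\e \to \infty$ such dipoles without violating the bound. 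Hence $|\mu_{u_\e}|(A)$ --- and with it $|\partial G_{u_\e}|(A \times \RR^2) = 2\pi|\mu_{u_\e}|(A)$ --- need not be uniformly bounded, and the Closure Theorem is not applicable as stated.

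The paper sidesteps this. It uses only the mass bound to extract a weak limit $G_{u_\e} \weak T$ in $\D_2(\Omega\times\RR^2)$ via weak-$*$ compactness of finite-mass currents, without any rectifiability claim at that stage. The identity $\partial T|_{\Omega\times\RR^2} = 0$ is then obtained from $\partial G_{u_\e} = -\mu_{u_\e}\times\llbracket\SS^1\rrbracket$ together with $\mu_{u_\e}\flat 0$ and Lemma~\ref{lemma:flat implied D1}; this requires only flat convergence of $\mu_{u_\e}$, not a total-variation bound. The remaining axioms of $\cart(\Omega\times\SS^1)$ --- in particular the i.m.\ rectifiability of $T$, which you tried to get for free from closure --- are established by a separate structural argument deferred to \cite[Proposition~4.1]{Cic-Orl-Ruf}. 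Your mass bound and the identification $u_T = u$ are correct and match the paper's reasoning.
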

\begin{proof}
Let us fix an open set $A \compact \Omega$. Since $\Phi(\xi) \geq \sqrt{(\xi^{21})^2 + (\xi^{22})^2 + (\xi^{11})^2 + (\xi^{12})^2 }$, we deduce the estimate
\begin{equation} \label{eq:Gu equibounded}
\begin{split}
|G_{u_\e}|(A \x \RR^2) & = |G_{u_\e}|(\M^{(a)} \cap A \x \RR^2) + |G_{u_\e}|(\M^{(j)} \cap A \x \RR^2) \\
& \leq  |A| + \integral{A \x \RR^2}{\Phi(\vec{G}_{u_\e})}{\d |G_{u_\e}|} \leq |\Omega| +  \frac{1}{2} \sum_{\substack{\langle i, j \rangle}} \e \geo\big(u_\e(\e i), u_\e(\e j) \big) \\
& \leq |\Omega| + \frac{2}{\e\theta_\e} E_\e(u_\e) \leq C \, ,
\end{split}
\end{equation}
where in the last inequality we employed~\eqref{eq:first lower bound} with $\sigma = 1/2$. By the Compactness Theorem for currents~\cite[2.2.3, Proposition~2 and Theorem 1-(i)]{Gia-Mod-Sou-I} we deduce that there exists a subsequence (not relabeled) and a current $T \in \D_2(\Omega \x \RR^2)$ with $|T| < \infty$ such that $G_{u_\e} \weak T$ in $\D_2(\Omega \x \RR^2)$. Due to Proposition~\ref{prop:bd of Gu is mu} we have $\de G_{u_\e}|_{\Omega \x \RR^2} = - \mu_{u_\e} \x \llbracket \SS^1 \rrbracket$. By Remark~\ref{rmk:vortices vanish}, Lemma~\ref{lemma:flat implied D1}, and since $\de G_{u_\e} \weak \de T$ in $\D_1(\Omega \x \RR^2)$, we conclude that $\de T|_{\Omega \x \RR^2} = 0$. The other properties to show that $T \in \cart(\Omega \x \SS^1)$ follow from \cite[Proposition 4.1]{Cic-Orl-Ruf}. Finally, it is easy to see that $u = u_T$ a.e.\ in $\Omega$. 
\end{proof}

\begin{proposition}[Lower bound for the parametric integral] \label{prop:lb for parametric} Assume that $\theta_{\e}\ll \e|\log \e|$ and that $\frac{1}{\e \theta_\e}E_\e(u_\e) \leq C$. Let $G_{u_\e} \in \D_2(\Omega \x \RR^2)$ be the currents associated to $u_\e$ defined as in~\eqref{eq:Gu ac}--\eqref{eq:Gu v} and assume that $G_{u_{\e}}\rightharpoonup T$, where $T \in \cart(\Omega \x \SS^1)$ is a current given by Proposition~\ref{prop:current compactness}, represented as $T = \vec{T} |T|$. Then for every open set $A \compact \Omega$
\begin{equation} \label{eq:parametric lower bound}
\integral{A \x \RR^2}{\Phi( \vec{T} )}{\d |T|} \leq \liminf_{\e \to 0} \integral{A \x \RR^2}{\Phi( \vec{G}_{u_\e} )}{\d |G_{u_\e}|} \, .
\end{equation}
\end{proposition}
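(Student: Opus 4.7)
The plan is to use a standard convex duality argument for parametric integrals of currents, combined with the weak convergence $G_{u_\e} \weak T$ and the equiboundedness of total variations established in~\eqref{eq:Gu equibounded}. Since $\Phi$ is a non-negative, convex, positively $1$-homogeneous function on the finite-dimensional space $\Lambda_2(\RR^2 \x \RR^2)$, I first represent it dually via the bipolar theorem as
\begin{equation*}
\Phi(\xi) = \sup\{\langle \omega, \xi \rangle \colon \omega \in K\}, \qquad K := \{\omega \in \Lambda^2(\RR^2 \x \RR^2) \colon \langle \omega, \xi' \rangle \leq \Phi(\xi') \text{ for all } \xi'\}.
\end{equation*}
Since $\Phi$ vanishes on $e_1 \w e_2$ and $\bar e_1 \w \bar e_2$, any $\omega \in K$ must have zero coefficient on $\d x^1 \w \d x^2$ and $\d y^1 \w \d y^2$, while its remaining ``mixed'' components lie in the dual unit ball of $\Phi$ restricted to the complementary four-dimensional subspace; in particular $K$ is convex and compact.

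Second, by a standard measurable-selection and mollification argument for convex integrands of Radon measures, I would rewrite the parametric integral as a dual supremum
\begin{equation*}
\integral{A \x \RR^2}{\Phi(\vec T)}{\d |T|} = \sup_{\omega} T(\omega),
\end{equation*}
where the supremum runs over $\omega \in C^\infty_c(A \x \RR^2; \Lambda^2)$ satisfying $\omega(x,y) \in K$ pointwise. One starts from a Borel selection $\omega^*$ with $\langle \omega^*(x,y), \vec T(x,y) \rangle = \Phi(\vec T(x,y))$, which exists by compactness of $K$, and mollifies it in both variables; convexity of $K$ ensures the mollified form remains $K$-valued, and a smooth cutoff supported in $A \x \RR^2$ can be inserted using that $|T|$ is finite and concentrated in $\ol \Omega \x \SS^1$.

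Third, for each such $\omega$ the inclusion $\omega(x,y) \in K$ yields the pointwise bound $\langle \omega, \vec G_{u_\e} \rangle \leq \Phi(\vec G_{u_\e})$, so
\begin{equation*}
G_{u_\e}(\omega) = \integral{A \x \RR^2}{\langle \omega, \vec G_{u_\e} \rangle}{\d |G_{u_\e}|} \leq \integral{A \x \RR^2}{\Phi(\vec G_{u_\e})}{\d |G_{u_\e}|}.
\end{equation*}
Since $\omega$ is smooth and compactly supported in $\Omega \x \RR^2$, weak convergence gives $T(\omega) = \lim_{\e \to 0} G_{u_\e}(\omega)$, hence $T(\omega) \leq \liminf_{\e \to 0} \integral{A \x \RR^2}{\Phi(\vec G_{u_\e})}{\d |G_{u_\e}|}$. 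Taking the supremum over $\omega$ and using the dual representation from the second step concludes~\eqref{eq:parametric lower bound}.

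The main technical point is the duality representation of the second step, namely that smooth compactly supported $K$-valued forms are enough to exhaust the pointwise supremum defining $\Phi(\vec T)$; this relies on the compactness of $K$ (so that the measurable selection is essentially bounded) and on the stability of $K$ under convolution. Everything else is a direct consequence of the definition of $K$ and of the weak convergence $G_{u_\e} \weak T$.
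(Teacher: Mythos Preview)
Your proposal is correct and takes the same approach as the paper: the inequality is nothing but the lower semicontinuity of the parametric integral $\int \Phi(\vec T)\,\d|T|$ under mass-bounded weak convergence of currents, which the paper simply invokes by citing \cite[1.3.1, Theorem~1]{Gia-Mod-Sou-II}. You have sketched the standard convex-duality proof of that very theorem (dual representation of $\Phi$ via its polar set $K$, approximation by smooth $K$-valued test forms, and passage to the limit using $G_{u_\e}\weak T$), so the two arguments coincide up to the level of detail at which the cited result is unpacked.
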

\begin{proof}
The statement is a consequence of the lower semicontinuity of parametric integrals with respect to mass bounded weak convergence of currents, \cite[1.3.1, Theorem~1]{Gia-Mod-Sou-II}. 
\end{proof}

We can write explicitly the parametric integral in the left-hand side of~\eqref{eq:parametric lower bound} in terms of the limit $u$ of the sequence $u_\e$. By~\eqref{eq:jc part of T} the jump-concentration part of $T$ is given by
\begin{equation*}
T^{(jc)}(\phi(x,y) \d \widehat x^l \w \d y^m) = (-1)^{2-l} \integral{J_T}{\bigg\{\integral{ \gamma^T_x}{\phi(x,y)}{\d y^m} \bigg\} \nu_{T}^l(x)}{\d \H^1(x)} \, .
\end{equation*}
For $\H^1$-a.e.\ $x \in J_T$ we define the number
\begin{equation}\label{eq:deflT}
\ell_T(x) := \mathrm{length}(\gamma^T_x) = \integral{\supp (\gamma^T_x)}{|\mathfrak{m}(x,y)|}{\d \H^1(y)} \, ,
\end{equation}
where $\mathfrak{m}(x,y)$ is the integer defined in~\eqref{eq:jc multiplicity}. Notice that by $\mathrm{length}(\gamma^T_x)$ we mean the length of the curve $\gamma_x^T$ counted with its multiplicity and not the $\H^1$ Hausdorff measure of its support. Observe that, in particular, $\ell_T(x) = \geo \big( u^-(x), u^+(x) \big)$ if $x \in J_u \sm \L$, whilst $\ell_T(x) = 2 \pi |k(x)|$ if $x \in \L \sm J_u$. The full form of the parametric integral is contained in the lemma below. For a detailed proof see~\cite{Cic-Orl-Ruf}.

\begin{lemma} \label{lemma:parametric in terms of u}
Let $T \in \cart(\Omega \x \SS^1)$ and $u \in BV(\Omega; \SS^1)$ be as in Proposition~\ref{prop:current compactness}, and let~$\Phi$ be the parametric integrand defined in~\eqref{eq:Phi is 2,1}. Then
\begin{equation*}
\begin{split}
\integral{\Omega \x \RR^2}{\Phi( \vec{T} )}{\d |T|}  & = \integral{\Omega}{|\nabla u|_{2,1} }{\d x} + |\DD^{(c)} u|_{2,1}(\Omega) +  \integral{J_T}{\ell_T(x) |\nu_T(x)|_1}{\d \H^1(x)} \, . 
\end{split}
\end{equation*}
\end{lemma}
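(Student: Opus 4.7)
The plan is to decompose $T = T^{(a)} + T^{(c)} + T^{(jc)}$ as in Theorem~\ref{thm:structure} and to split the parametric integral accordingly. Since the three pieces are concentrated on mutually $\H^2$-disjoint rectifiable sets $\M^{(a)},\M^{(c)},\M^{(jc)}$, one has $|T|=|T^{(a)}|+|T^{(c)}|+|T^{(jc)}|$ and $\vec T=\vec T^{(*)}$ on each piece, so that
\[
	\integral{\Omega \x \RR^2}{\Phi(\vec T)}{\d |T|}\;=\;\sum_{*\in\{a,c,jc\}}\integral{\M^{(*)}}{\Phi(\vec T^{(*)})}{\d |T^{(*)}|}.
\]
I would then match each of the three summands with the corresponding term in the claimed expression.

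For the absolutely continuous part I would read off the components of $\vec T^{(a)}$ on the basis of $\Lambda_2(\RR^2\x\RR^2)$ from the Structure Theorem: formulas~\eqref{eq:T horiz}--\eqref{eq:T jump} show that the components of $\vec T^{(a)}|T^{(a)}|$ coincide with those of the smooth-graph formula~\eqref{eq:smooth components} after replacing $\de_{x^l}u^m$ by $\de^{(a)}_{x^l}u^m$, while the vertical $\bar e_1\w\bar e_2$ coefficient vanishes since $u$ is $\SS^1$-valued (so $\det\nabla u=0$). Plugging these components into the definition~\eqref{eq:Phi is 2,1} of $\Phi$, the two square roots reorganize as the row-norms of $\nabla u$, giving $\Phi(\vec T^{(a)})\sqrt{1+|\nabla u|^2} = |\nabla u|_{2,1}$. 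Combined with the area-formula identity $|T^{(a)}|=(\mathrm{id}\x u)_\#\bigl(\sqrt{1+|\nabla u|^2}\,\d x\bigr)$, this reproduces the first term. The Cantor part is treated by the identical algebraic computation after writing $\DD^{(c)}u=\Theta\,|\DD^{(c)}u|$ with $\Theta=(\Theta_{lm})$ the Radon--Nikodym matrix density and invoking~\eqref{eq:T cantor}; the same rearrangement yields $\Phi(\vec T^{(c)})\d|T^{(c)}|=|\Theta|_{2,1}\d|\DD^{(c)}u|$, and integration over $\Omega$ produces $|\DD^{(c)}u|_{2,1}(\Omega)$.

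For the jump-concentration part I would work directly from~\eqref{eq:jc part of T} and~\eqref{eq:from dym to H1}. Up to $\H^2$-null sets, $\M^{(jc)}$ is the rectifiable set $\{(x,y):x\in J_T,\ y\in\supp(\gamma_x^T)\}$ endowed with integer multiplicity $\mathfrak{m}(x,y)$, and its approximate tangent plane at a regular point $(x,y)$ is spanned by the orthonormal pair $-\nu_T^2(x)e_1+\nu_T^1(x)e_2$ (tangent to $J_T$ inside the horizontal factor) and $-y^2\bar e_1+y^1\bar e_2$ (tangent to $\SS^1$ at $y$). Wedging these unit tangents produces a simple unit $2$-vector $\vec T^{(jc)}$ whose coefficient on $e_l\w\bar e_m$ is $\pm\nu_T^j(x)y^k$ for suitable $j,k\in\{1,2\}$; a direct substitution into~\eqref{eq:Phi is 2,1} together with $|y|=1$ collapses to the pointwise identity
\[
	\Phi(\vec T^{(jc)})(x,y)=|\nu_T(x)|_1.
\]
Integrating first in $y$ produces $\int_{\supp(\gamma_x^T)}|\mathfrak{m}(x,y)|\d\H^1(y)=\ell_T(x)$ by~\eqref{eq:deflT}, and a subsequent integration over $J_T$ gives the claimed jump-concentration term.

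I expect the main obstacle to be the bookkeeping for the jump-concentration part: the signs coming from $(-1)^{2-l}$ in~\eqref{eq:jc part of T}, the sign factor in~\eqref{eq:from dym to H1}, the orientation choice~\eqref{eq:defnormal}, and the four-case definition~\eqref{eq:jc multiplicity} of $\mathfrak{m}$ must be tracked simultaneously. The fact that only the absolute values $|\mathfrak{m}|$ and $|\nu_T|_1$ appear in the final formula (making it invariant under the admissible reorientations flagged in Remark~\ref{rmk:choiceoforientation}) is a useful consistency check, but a careful case analysis is still needed at points of $\L\cap J_{u_T}$, where both the geodesic arc and the full $\SS^1$-loop contribute to $\gamma_x^T$.
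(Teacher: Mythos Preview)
The paper does not actually prove Lemma~\ref{lemma:parametric in terms of u}; it simply refers the reader to the companion paper~\cite{Cic-Orl-Ruf} for a detailed proof. So there is no proof in the present paper to compare your proposal against. Your approach is the natural one and is correct: decompose $T$ via the Structure Theorem~\ref{thm:structure}, use the mutual singularity of $T^{(a)},T^{(c)},T^{(jc)}$ to split the parametric integral, and then compute $\Phi(\vec T)$ on each piece by reading off the $\Lambda_2$-components from the explicit formulas~\eqref{eq:T horiz}--\eqref{eq:T jump} and~\eqref{eq:jc part of T}. In fact the jump-concentration computation you outline is exactly the one the paper carries out in the proof of Lemma~\ref{lemma:bound with parametric integral} for the special case $T=G_{u_\e}$ (cf.\ \eqref{eq:orientation of Gu 2}), and your general case follows the same algebra with the multiplicity $|\mathfrak m(x,y)|$ producing $\ell_T(x)$ via~\eqref{eq:deflT}. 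The potential bookkeeping issue you flag at points of $\L\cap J_{u_T}$ is genuine but harmless here, since only $|\mathfrak m|$ and $|\nu_T|_1$ enter the final formula and these are invariant under the reorientations of Remark~\ref{rmk:choiceoforientation}.
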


\begin{remark} \label{rmk:punctured domains}
In presence of vortices, we will work with cartesian currents on punctured open sets. Given a measure $\mu = \sum_{h=1}^N d_h \delta_{x_h}$ and an open set $A$, we adopt the notation 
\begin{equation*}
A_\mu := A \sm \supp(\mu) = A \sm \{x_1, \dots, x_N\}
\end{equation*}
and $A_\mu^\rho := A\sm \bigcup_{h=1}^N B_\rho(x_h)$. We observe that a current $T \in \cart(\Omega_\mu \x \SS^1)$ can be extended to a current $T \in \D_2(\Omega  \x \RR^2)$. Indeed, since $T \in \cart(\Omega_\mu \x \SS^1)$, it can be represented as 
\begin{equation*}
T(\omega) = \integral{\Omega_\mu \x \RR^2}{\langle \omega, \xi \rangle \theta}{\d \H^2 \mres \M}\, , \quad \text{for } \omega \in \D^2(\Omega_\mu \x \RR^2) \, ,
\end{equation*}
according to the notation in~\eqref{eq:im rectifiable}, where $\M \subset \Omega_\mu \x \SS^1$ $\H^2$-a.e. The integral above can be extended to a linear functional on forms $\omega \in \D^2(\Omega \x \RR^2)$, namely
\begin{equation*}
T(\omega) = \integral{\Omega \x \RR^2}{\langle \omega, \xi \rangle \theta}{\d \H^2 \mres \M}\, , \quad \text{for } \omega \in \D^2(\Omega \x \RR^2) \, .
\end{equation*}
To prove the continuity of this extension, fix $\omega \in \D^2(\Omega \x \RR^2)$ with $ \sup_{x}| \omega(x)| \leq 1$. Then
\begin{equation} \label{eq:comparing full with punctured}
|T(\omega)|  \leq |T((1-\zeta) \omega)| + \Big|\integral{\Omega \x \RR^2}{\zeta \langle \omega, \xi \rangle \theta}{\d \H^2 \mres \M} \Big|
  \leq |T|(\Omega_\mu \x \RR^2) + \sum_{h=1}^N \integral{B_\rho(x_h) \x \RR^2}{|\theta|}{\d \H^2 \mres \M}
\end{equation}
where $\zeta \in C^\infty_c(\Omega)$ is such that $0 \leq \zeta \leq 1$, $\supp(\zeta) \subset \bigcup_{h=1}^N B_\rho(x_h)$, and $\zeta \equiv 1$ on $B_{\rho/2}(x_h)$ for every $h=1,\dots,N$. Letting $\rho \to 0$ in the inequality above, we get $|T(\omega)| \leq |T|(\Omega_\mu \x \RR^2)$ since $\H^2\big(\M \cap (\{x_h\} \x \RR^2)\big) \leq  \H^2\big(\{x_h\} \x \SS^1\big) = 0$ for $h=1,\dots,N$ and $\theta$ is $\mathcal{H}^2\mres\M$-summable. This shows that $T \in \D_2(\Omega \x \RR^2)$.

Moreover, since $\omega$ in \eqref{eq:comparing full with punctured} was arbitrary we deduce that $|T|(\Omega \x \RR^2)=|T|(\Omega_\mu \x \RR^2)$ and
\begin{equation*}
\integral{\Omega \x \RR^2}{\!\!\Phi( \vec{T} )}{\d |T|} = \integral{\Omega_\mu \x \RR^2}{\!\! \Phi( \vec{T} )}{\d |T|} = \integral{\Omega}{|\nabla u|_{2,1} }{\d x} + |\DD^{(c)} u|_{2,1}(\Omega) +  \integral{J_T}{\! \ell_T(x) |\nu_T(x)|_1}{\d \H^1(x)} \, .
\end{equation*}
\end{remark}

To state the final lower bound result when $M=0$, for every $u \in BV(\Omega;\SS^1)$ we introduce 
\begin{equation}\label{eq:defJ(u,Om)}
\mathcal{J}(u;\Omega) := \inf \bigg\{ \integral{J_T}{\ell_T(x) |\nu_T(x)|_1}{\d \H^1(x)} \ : \ T \in \cart(\Omega \x \SS^1), \ u_T = u \text{ a.e.\ in } \Omega \bigg\}\,.
\end{equation}

\begin{proposition}[$M=0$, Lower bound]\label{prop:lb}
Assume that $\theta_{\e}\ll \e|\log \e|$ and $\frac{1}{\e \theta_\e} E_\e(u_\e) \leq C$. If $u_{\e}\to u$ in $L^1(\Omega)$, where $u \in BV(\Omega; \SS^1)$ is as in Proposition~\ref{prop:BV compactness}, then 
\begin{equation} \label{eq:final lower bound}
\integral{\Omega}{|\nabla u|_{2,1} }{\d x} + |\DD^{(c)} u|_{2,1}(\Omega) + \mathcal{J}(u;\Omega) \leq \liminf_{\e \to 0} \frac{1}{\e \theta_\e}E_\e(u_\e) \, .
\end{equation}
\end{proposition}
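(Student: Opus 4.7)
The plan is to chain together the lemmas already at our disposal: transform the discrete energy into a lower bound by a parametric integral of the current $G_{u_\e}$, pass to the limit by lower semicontinuity using the cartesian-current compactness, identify the limit integral via the structure theorem for $\cart(\Omega\x\SS^1)$, and finally take the infimum defining $\mathcal{J}(u;\Omega)$.

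More precisely, to prove \eqref{eq:final lower bound} we may pass to a subsequence along which $\liminf_{\e\to 0}\frac{1}{\e\theta_\e}E_\e(u_\e)$ is attained as a limit. By Proposition~\ref{prop:current compactness} we extract a further subsequence (not relabeled) such that $G_{u_\e}\weak T$ in $\D_2(\Omega\x\RR^2)$ with $T\in\cart(\Omega\x\SS^1)$ and $u_T=u$ a.e.\ in $\Omega$. Fix $\sigma\in(0,1)$ and an open set $A\compact\Omega$. Combining Lemma~\ref{lemma:bound eucl from below} with Lemma~\ref{lemma:bound with parametric integral} gives
\begin{equation*}
\frac{1}{\e\theta_\e}E_\e(u_\e)\geq (1-\sigma)\integral{A\x\RR^2}{\Phi(\vec G_{u_\e})}{\d|G_{u_\e}|}
\end{equation*}
for $\e$ sufficiently small. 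The lower-semicontinuity statement of Proposition~\ref{prop:lb for parametric} then yields
\begin{equation*}
\liminf_{\e\to 0}\frac{1}{\e\theta_\e}E_\e(u_\e)\geq (1-\sigma)\integral{A\x\RR^2}{\Phi(\vec T)}{\d|T|}.
\end{equation*}

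Next I would identify the right-hand side using Lemma~\ref{lemma:parametric in terms of u}: since the absolutely continuous, Cantor, and jump-concentration parts of the measure $\Phi(\vec T)|T|$ are respectively $|\nabla u|_{2,1}\,\d x$, $|\DD^{(c)}u|_{2,1}$, and $\ell_T(x)|\nu_T(x)|_1\,\d\H^1\mres J_T$, all supported in $\Omega$, I can take an increasing sequence $A_n\compact\Omega$ with $A_n\nearrow\Omega$ and apply monotone convergence. This gives
\begin{equation*}
\liminf_{\e\to 0}\frac{1}{\e\theta_\e}E_\e(u_\e)\geq (1-\sigma)\Big[\integral{\Omega}{|\nabla u|_{2,1}}{\d x}+|\DD^{(c)}u|_{2,1}(\Omega)+\integral{J_T}{\ell_T(x)|\nu_T(x)|_1}{\d\H^1(x)}\Big].
\end{equation*}

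To conclude, observe that the first two terms in brackets depend only on $u$, while $T$ is an admissible competitor in the infimum defining $\mathcal{J}(u;\Omega)$ because $u_T=u$ a.e.\ in $\Omega$. Hence the bracketed quantity is bounded below by $\int_\Omega|\nabla u|_{2,1}\,\d x+|\DD^{(c)}u|_{2,1}(\Omega)+\mathcal{J}(u;\Omega)$. Letting $\sigma\to 0$ produces \eqref{eq:final lower bound}, and since the bound holds along every subsequence realizing the liminf, a standard Urysohn-type argument transfers it to the original sequence. I do not expect any genuine obstacle here: the main technical care is in the interplay between $A\nearrow\Omega$ and the identification of the jump-concentration term (to ensure no mass of $T$ escapes to $\partial\Omega$), which is guaranteed because $T\in\cart(\Omega\x\SS^1)$ and the decomposition of $|T|$ is concentrated on $\Omega$; everything else is a direct application of the results established earlier in the section.
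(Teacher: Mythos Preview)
Your proof is correct and follows essentially the same approach as the paper: chain Lemma~\ref{lemma:bound eucl from below} and Lemma~\ref{lemma:bound with parametric integral} to bound the energy below by the parametric integral on $A\compact\Omega$, apply the lower semicontinuity of Proposition~\ref{prop:lb for parametric}, identify the limit via Lemma~\ref{lemma:parametric in terms of u}, and let $A\nearrow\Omega$, $\sigma\to 0$. You are simply more explicit than the paper about extracting the subsequence via Proposition~\ref{prop:current compactness} and about observing that the limiting $T$ is an admissible competitor in the infimum defining $\mathcal{J}(u;\Omega)$.
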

\begin{proof}
Let $\sigma \in (0,1)$ and $A \compact \Omega$ be open.  By Lemma~\ref{lemma:bound eucl from below} and Lemma~\ref{lemma:bound with parametric integral} we deduce that
\begin{equation*}
(1-\sigma) \integral{A \x \RR^2}{\Phi( \vec{G}_{u_\e} )}{\d |G_{u_\e}|}  \leq \frac{1}{\e \theta_\e} E(u_\e) \, .
\end{equation*}
Passing to the limit as $\e \to 0$, Proposition~\ref{prop:lb for parametric} implies
\begin{equation*}
(1-\sigma) \integral{A \x \RR^2}{\Phi( \vec{T} )}{\d |T|} \leq \liminf_{\e \to 0} \frac{1}{\e \theta_\e} E(u_\e) \, .
\end{equation*}
Letting $\sigma \to 1$ and $A \to \Omega$, by Lemma~\ref{lemma:parametric in terms of u} we conclude that~\eqref{eq:final lower bound} holds true.
\end{proof}
\begin{remark} \label{rmk:current greater than bv}
The lower bound~\eqref{eq:final lower bound} dominates the anisotropic total variation, namely 
\begin{equation*}
\integral{\Omega}{|\nabla u|_{2,1} }{\d x} + |\DD^{(c)} u|_{2,1}(\Omega) + \mathcal{J}(u;\Omega)  \geq \integral{\Omega}{|\nabla u|_{2,1} }{\d x} + |\DD^{(c)} u|_{2,1}(\Omega) + \integral{J_u}{\geo(u^-,u^+)|\nu_u|_1}{\d \H^1}
\end{equation*}
for all $u\in BV(\Omega;\SS^1)$. This can be seen using the definition of $\ell_T(x)$ for a given $T \in \cart(\Omega \x \SS^1)$ with $u_T = u$. Indeed, for $\H^1$-a.e.\ $x \in J_u \cap \L$ we have $\geo(u^-(x),u^+(x)) \leq \mathrm{length}(\gamma^T_x) = \ell_T(x)$, since $\gamma^T_x$ is a curve connecting $u^-(x)$ and $u^+(x)$ in $\SS^1$.
\end{remark}
\subsection{Compactness and lower bound in presence of vortices}
Next we extend the results of the previous subsection to the case of $M$ vortices. Again we consider a general sequence $u_{\e}\colon \e\ZZ^2\to\S_{\e}$ and the associated current $G_{u_{\e}}$.

\begin{proposition}[$M$ vortices, Compactness] \label{prop:compactness M vortices}
Assume that $\e\ll\theta_{\e}\ll\e|\log \e|$ and that there exist $M \in \NN$ and $C > 0$ such that 
\begin{equation} \label{eq:first order bound}
\frac{1}{\e \theta_\e} E_\e(u_\e) - 2\pi M |\log \e | \frac{\e}{\theta_\e} \leq C \, .
\end{equation}
Then there exists $\mu = \sum_{h=1}^N d_h \delta_{x_h}$ with $d_h\in\ZZ$ such that $\mu_{u_\e} \flat \mu$ (up to a subsequence) and $|\mu|(\Omega) \leq M$. If, in addition, $|\mu|(\Omega) = M$, then there exist $u \in BV(\Omega;\SS^1)$ and $T \in \D_2(\Omega \x \RR^2)$ such that 
\begin{itemize} 
	\item[(i)]  $u_{\e}\to u$ in $L^1(\Omega;\RR^2)$ and $u_\e \wstar u$ weakly* in $BV_{\mathrm{loc}}(\Omega_\mu;\RR^2)$;
	\item[(ii)] $T \in \cart(\Omega_\mu \x \SS^1)$ and $u_T = u$ a.e.\ in $\Omega$;
	\item[(iii)] $G_{u_\e} \weak T$ in $\D_2(\Omega_\mu \x \RR^2)$ (up to a subsequence);
	\item[(iv)] $\de T|_{\Omega \x \RR^2} = - \mu \x \llbracket \SS^1 \rrbracket$.
\end{itemize} 
\end{proposition}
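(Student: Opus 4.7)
The plan is to combine the vorticity compactness from Proposition~\ref{prop:XY classical} with the $BV$ and current compactness of Propositions~\ref{prop:BV compactness}--\ref{prop:current compactness} applied on punctured subdomains of $\Omega_\mu$, and then to identify the boundary of the extended current via a carefully chosen test form. From the hypothesis $E_\e(u_\e)\leq 2\pi M\e^2|\log\e|+C\e\theta_\e$ one gets $\tfrac{1}{\e^2|\log\e|}XY_\e(u_\e)\leq 2\pi M+\tfrac{C\theta_\e}{\e|\log\e|}\to 2\pi M$, so Proposition~\ref{prop:XY classical} gives, along a subsequence, $\mu_{u_\e}\flat\mu=\sum_h d_h\delta_{x_h}$ with $2\pi|\mu|(\Omega)\leq 2\pi M$. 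When $|\mu|(\Omega)=M$, standard quantitative ball-construction lower bounds for the $XY$ energy near vortices (see e.g.\ \cite{Ali-Cic-Pon}) yield $\sum_h XY_\e(u_\e;B_\rho(x_h))\geq 2\pi M\e^2|\log\e|-C(\rho)\e^2$ for every small $\rho>0$; subtracting from the upper bound gives $E_\e(u_\e;\Omega_\mu^\rho)\leq C\e\theta_\e+C(\rho)\e^2$, so $\tfrac{1}{\e\theta_\e}E_\e(u_\e;\Omega_\mu^\rho)\leq C$ for $\e$ small, using $\e\ll\theta_\e$.

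On each $\Omega_\mu^\rho$, Proposition~\ref{prop:BV compactness} provides $u_\e\to u_\rho$ in $L^1(\Omega_\mu^\rho)$ with $u_\rho\in BV(\Omega_\mu^\rho;\SS^1)$, and Proposition~\ref{prop:current compactness} provides, along a further subsequence, $G_{u_\e}\weak T_\rho\in\cart(\Omega_\mu^\rho\x\SS^1)$ with $u_{T_\rho}=u_\rho$. A diagonal extraction over $\rho\to 0$ produces $u\in BV_{\mathrm{loc}}(\Omega_\mu;\SS^1)$ and $T\in\cart(\Omega_\mu\x\SS^1)$ satisfying (ii) and (iii); $L^1(\Omega)$ convergence of $u_\e$ then follows from $|u_\e|=1$ and $|\supp\mu|=0$ via dominated convergence, giving (i).

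For (iv), extend $T$ to $\Omega\x\RR^2$ as in Remark~\ref{rmk:punctured domains}. Since $\de T|_{\Omega_\mu\x\RR^2}=0$ and $\de\de T=0$, the support of $\de T$ is contained in $\supp\mu\x\SS^1$ and the classification of $1$-cycles in $\SS^1$ forces $\de T=-\sum_h n_h\delta_{x_h}\x\llbracket\SS^1\rrbracket$ for some integers $n_h$. To identify $n_h=d_h$, choose $\chi\in C_c^\infty(B_\rho(x_h))$ with $\chi(x_h)=1$ and $\rho$ so small that no other vortex lies in $B_\rho(x_h)$, and test against $\omega=\chi(x)\eta(y)$ with $\eta=(y^1\d y^2-y^2\d y^1)/(2\pi)$. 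A direct computation gives $\d\omega=\d\chi\w\eta+\tfrac{\chi}{\pi}\d y^1\w\d y^2$: the first summand is supported in the annulus $B_\rho(x_h)\sm B_{\rho/2}(x_h)\subset\Omega_\mu$, where $G_{u_\e}\weak T$ holds, while the second lies in $\d y^1\w\d y^2$ and pairs to zero with both $G_{u_\e}$ (by~\eqref{eq:Gu v}) and with $T$ (whose tangent $2$-vector contains at most one $y$-direction by the structure theorem). Hence $\de G_{u_\e}(\omega)\to\de T(\omega)$, and Proposition~\ref{prop:bd of Gu is mu} combined with Lemma~\ref{lemma:flat implied D1} identifies the left-hand side as $\to-d_h$, yielding $n_h=d_h$.

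The main obstacle is precisely this last passage to the limit: the masses $|G_{u_\e}|(B_\rho(x_h)\x\RR^2)$ blow up like $\tfrac{\e|\log\e|}{\theta_\e}\to\infty$, so no direct weak compactness is available near the vortices and the integers $n_h$ cannot be read off the weak limit of $G_{u_\e}$ on any neighborhood of $x_h$. The remedy is the choice of test form above, whose exterior derivative decomposes into one piece localized in a ``good'' annulus inside $\Omega_\mu$ and one piece in $\d y^1\w\d y^2$ which is annihilated by every current supported on $\Omega\x\SS^1$, thereby transferring the topological information from a full neighborhood of $x_h$ to the annular region where weak convergence holds.
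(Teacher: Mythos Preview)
Your proof is correct and follows essentially the same strategy as the paper: deduce vorticity compactness from Proposition~\ref{prop:XY classical}, use the localized lower bound~\eqref{eq:localXYlb} to transfer the energy bound to $\Omega_\mu^\rho$ and then invoke Propositions~\ref{prop:BV compactness} and~\ref{prop:current compactness} with a diagonal argument, extend $T$ via Remark~\ref{rmk:punctured domains}, and identify $\de T$ by the Constancy Theorem and a suitable test form.

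The one genuine (though minor) difference is the choice of test $1$-form in~(iv). The paper takes $\omega=\zeta(x)\,\omega_{\SS^1}(y)$ with $\omega_{\SS^1}$ the $0$-homogeneous extension of the $\SS^1$-volume form; since $\d\omega_{\SS^1}=0$ on $\RR^2\setminus\{0\}$, one has $\d\omega=\d\zeta\wedge\omega_{\SS^1}$, which is already supported in the good annulus, and the limit passage is immediate. You instead take the polynomial form $\eta=(y^1\d y^2-y^2\d y^1)/(2\pi)$, pick up the extra term $\tfrac{\chi}{\pi}\d y^1\wedge\d y^2$, and kill it by observing that currents concentrated on $\Omega\times\SS^1$ annihilate $\d y^1\wedge\d y^2$. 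Both routes achieve the same end; yours avoids the mild singularity of $\omega_{\SS^1}$ at $y=0$, while the paper's avoids invoking the structure of $\vec{T}$. Two small points to tighten: you need $\chi\equiv 1$ on a neighborhood of $x_h$ (not merely $\chi(x_h)=1$) so that $\d\chi$ is genuinely supported in an annulus; and in both arguments the $y$-form is not compactly supported, so one should multiply by a cutoff equal to $1$ near $\SS^1$, which changes nothing since all currents involved are supported there. Finally, from the uniform-in-$\rho$ bound you should upgrade $u\in BV_{\mathrm{loc}}(\Omega_\mu;\SS^1)$ to $u\in BV(\Omega;\SS^1)$, as the statement requires.
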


\begin{proof}
From~\eqref{eq:first order bound} it follows that $\frac{1}{\e^2 |\log \e|} E_\e(u_\e) \leq  2\pi M +  C \frac{\theta_\e}{\e |\log \e | }$, so that by Proposition~\ref{prop:XY classical} we get that (up to a subsequence) $\mu_{u_\e} \flat \mu = \sum_{h=1}^N d_h \delta_{x_h}$ and $|\mu|(\Omega) \leq M$. From now on we assume that $|\mu|(\Omega) = M$, that is $\sum_{h=1}^N |d_h| = M$.
	
Let $\rho > 0$ small enough such that the balls $B_\rho(x_h)$ are pairwise disjoint and $\overline{B_{\rho}(x_h)}\subset\Omega$. Recall the localized lower bound for the $XY$-model \cite[Theorem 3.1]{Ali-DL-Gar-Pon}, which states that
\begin{equation}\label{eq:localXYlb}
\liminf_{\e \to 0} \bigg[ \frac{1}{\e^2} E_\e(u_\e; B_\rho(x_h)) - 2 \pi |d_h| \log \frac{\rho}{\e} \bigg] \geq \tilde C \quad \text{ for some } \tilde C \in \RR \, .
\end{equation}
From this inequality and the fact that $\e\ll\theta_{\e}$ we deduce that 
\begin{equation}\label{eq:localizedsign}
\begin{split}
& \liminf_{\e \to 0} \bigg[ \frac{1}{\e \theta_\e} E_\e(u_\e; B_\rho(x_h)) - 2\pi |d_h| |\log \e | \frac{\e}{\theta_\e} \bigg] \\
& \quad = \liminf_{\e \to 0} \bigg[ \frac{1}{\e \theta_\e} E_\e(u_\e; B_\rho(x_h)) - 2\pi |d_h| |\log \e | \frac{\e}{\theta_\e} - 2 \pi|d_h| \log \rho  \frac{\e}{\theta_\e} \bigg] \\
& \quad = \liminf_{\e \to 0} \frac{\e}{\theta_\e}\bigg[ \frac{1}{\e^2} E_\e(u_\e; B_\rho(x_h)) - 2\pi |d_h| \log \frac{\rho}{\e} \bigg] \geq 0 \, .
\end{split}
\end{equation}
Summing over $h=1,\dots, N$, the superadditivity of the $\liminf$ yields
\begin{equation} \label{eq:lb close to vortex}
\liminf_{\e \to 0} \bigg[ \sum_{h=1}^N  \frac{1}{\e \theta_\e} E_\e(u_\e; B_\rho(x_h)) - 2\pi M |\log \e | \frac{\e}{\theta_\e}  \bigg]  \geq 0 \, .
\end{equation}
Therefore the bound~\eqref{eq:first order bound} implies
\begin{equation*}
\limsup_{\e \to 0} \frac{1}{\e \theta_\e} E_\e(u_\e; \Omega^\rho_\mu)  \leq  C - \liminf_{\e \to 0} \bigg[ \sum_{h=1}^N  \frac{1}{\e \theta_\e} E_\e(u_\e; B_\rho(x_h)) - 2\pi M |\log \e | \frac{\e}{\theta_\e}  \bigg]  \leq C \, ,
\end{equation*}
so that, for $\e$ small enough, $\frac{1}{\e \theta_\e} E_\e(u_\e; \Omega^\rho_\mu)  \leq  2 C$, where $C$ is independent of $\rho$. By Proposition~\ref{prop:BV compactness} and Proposition~\ref{prop:current compactness}, with a diagonal argument we obtain that there exist $u \in BV(\Omega;\SS^1)$ and $T \in \cart(\Omega_\mu \x \SS^1)$ such that $u_\e \wstar u$ weakly* in $BV_{\mathrm{loc}}(\Omega_\mu;\SS^1)$, $G_{u_\e} \weak T$ in $\D_2(\Omega_\mu \x \RR^2)$ up to a subsequence, and $u_T = u$ a.e.\ in $\Omega$. Since $u_{\e}$ is equiintegrable, the local weak* $BV$-convergence implies strong $L^1(\Omega)$-convergence. Thus (i)--(iii) hold true.

By Remark~\ref{rmk:punctured domains}, the current $T$ can be extended to a current $T \in \D_2(\Omega  \x \RR^2)$. Thus, it only remains to prove (iv). The argument is local and we can work close to a single atom~$x_h$ of~$\mu$. Without loss of generality assume that $x_h = 0$ and $\Omega = B:=B_1(0)$. First of all let us note that $\supp(\de T) \subset \{0\} \x \SS^1$. Indeed, on the one hand if $\omega \in \D^1(B \x \RR^2)$ is such that $\supp(\omega) \subset (B \x \RR^2) \sm (\{0\} \x \RR^2)$, then $\de T(\omega) = 0$, since $T \in \cart\big((B \sm \{0\}) \x \SS^1\big)$; on the other hand, if $\omega \in \D^1(B \x \RR^2)$ is such that $\supp(\omega) \subset (B \x \RR^2) \sm (B \x \SS^1)$, then $\supp(\! \d  \omega) \subset (B \x \RR^2) \sm (B \x \SS^1)$ and thus $\de T(\omega) = T(\! \d \omega) = 0$, since $\supp(T) \subset B \x \SS^1$. In conclusion $\supp(\de T) \subset (\{0\} \x \RR^2) \cap (B \x \SS^1) = \{0\} \x \SS^1$. Being $\de T$ a boundaryless 1-current with support in a 1-dimensional manifold, the Constancy Theorem~\cite[5.3.1, Theorem~2]{Gia-Mod-Sou-I} gives that $\de T|_{B \x \RR^2} = - c \, \delta_0 \x \llbracket \SS^1 \rrbracket$ for some $c \in \RR$. Now fix a function $\zeta \in C^\infty_c(B)$ with $\zeta \equiv 1$ in the ball $B_{1/2}(0)$ and define the 1-form $\omega = \zeta \omega_{\SS^1}$, $\omega_{\SS^1}$ being the 0-homogeneous extension of the volume form of $\SS^1$ to $\RR^2\sm \{0\}$. Since $ \d \omega \in \D^2\big((B\sm \{0\}) \x \RR^2\big)$, the convergence in (ii), Proposition~\ref{prop:bd of Gu is mu}, and the flat convergence $\mu_{u_\e} \flat \mu$ yield the claimed equality $c = \mu(\{0\})$. Indeed,
\begin{equation*}
- c \, 2 \pi = \de T (\omega) = T(\! \d \omega) = \lim_{\e \to 0} G_{u_\e} (\! \d \omega) = \lim_{\e \to 0} \de G_{u_\e} (\omega) = \lim_{\e \to 0} - \langle \mu_{u_\e}, \zeta \rangle 2 \pi = - \langle \mu, \zeta \rangle 2 \pi.
\end{equation*}
\end{proof}


We now prove the lower bound for $M$ vortices. Let us define the set of admissible currents
\begin{equation} \label{eq:def of Adm} 
	\begin{split}
		\mathrm{Adm}(\mu,u;\Omega)  :=  \bigg\{  T \in \D_2(\Omega \x \RR^2) \colon   T \in \cart(\Omega_\mu \x \SS^1)\, ,  \de T|_{\Omega \x \RR^2} = - \mu \x \llbracket \SS^1 \rrbracket \, , \ u_T = u \text{ a.e.}   \bigg\}
	\end{split}
\end{equation}
and, similarly to \eqref{eq:defJ(u,Om)}, the energy
\begin{equation} \label{eq:def of surface L}
\mathcal{J}(\mu,u;\Omega) := \inf \bigg\{ \integral{J_T}{\ell_T(x) |\nu_T(x)|_1}{\d \H^1(x)} \ : \ T \in \mathrm{Adm}(\mu,u;\Omega)  \bigg\}
\end{equation}
for every $\mu = \sum_{h=1}^N d_h \delta_{x_h}$ and $u \in BV(\Omega;\SS^1)$ with $\ell_T(x)$ defined in \eqref{eq:deflT}\footnote{$\mathrm{Adm}(\mu,u;\Omega)$ is non-empty. Indeed, by Proposition~\ref{prop:supporting BV} there exists $T \in \cart(\Omega \x \SS^1)$ such that $u_T = u$. Let $\gamma_1 , \dots, \gamma_N$ be pairwise disjoint unit speed Lipschitz curves  such that $\gamma_h$ connects $x_h$ to $\de \Omega$. Define $L_h$ to be the 1-current $\tau(\supp(\gamma_h), -d_h, \dot \gamma_h)$, so that $\de L_h = d_h \delta_{x_h}$. Then $T + \sum_{h=1}^N L_h \x \llbracket \SS^1 \rrbracket \in \mathrm{Adm}(\mu, u;\Omega)$.}.

\begin{proposition}[$M$ vortices, Lower bound] \label{prop:lb for M vortices}
Assume that $\theta_{\e}\ll\e|\log \e|$ and ~\eqref{eq:first order bound} holds. Assume further that $\mu_{u_{\e}}\flat\mu= \sum_{h=1}^N d_h \delta_{x_h}$ with $|\mu|(\Omega)=M$, $u_{\e}\to u$ in $L^1(\Omega;\RR^2)$ with $u\in BV( \Omega;\SS^1)$ as in Proposition \ref{prop:compactness M vortices}. Then
\begin{equation} \label{eq:final lb M vortices}
\integral{\Omega}{|\nabla u|_{2,1} }{\d x} + |\DD^{(c)} u|_{2,1}(\Omega) + \mathcal{J}(\mu, u;\Omega) \leq \liminf_{\e \to 0} \bigg[ \frac{1}{\e \theta_\e} E_\e(u_\e) - 2 \pi M |\log \e| \frac{\e}{\theta_\e} \bigg] \, .
\end{equation}
\end{proposition}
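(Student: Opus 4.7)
The plan is to combine the no-vortex lower bound (Proposition~\ref{prop:lb}) applied on the punctured domain $\Omega^\rho_\mu$ with the localized $XY$ lower bound \eqref{eq:localizedsign} in order to absorb the logarithmically diverging core energy at each vortex $x_h$.

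\emph{Step 1 (Energy splitting).} I fix $\rho>0$ small enough so that the balls $\ol{B_\rho(x_h)}$ are pairwise disjoint and contained in $\Omega$. Since all pair interactions in $E_\e$ are non-negative, dropping the bonds crossing $\partial B_\rho(x_h)$ yields
\[
\frac{1}{\e\theta_\e}E_\e(u_\e)\ \geq\ \frac{1}{\e\theta_\e}E_\e(u_\e;\Omega^\rho_\mu)\ +\ \sum_{h=1}^N \frac{1}{\e\theta_\e}E_\e(u_\e;B_\rho(x_h)).
\]
Subtracting $2\pi M |\log\e|\tfrac{\e}{\theta_\e}$, using the superadditivity of $\liminf$ together with the localized lower bound \eqref{eq:localizedsign}, I deduce
\[
\liminf_{\e\to 0}\bigg[\frac{1}{\e\theta_\e}E_\e(u_\e) - 2\pi M |\log\e|\frac{\e}{\theta_\e}\bigg]\ \geq\ \liminf_{\e\to 0}\frac{1}{\e\theta_\e}E_\e(u_\e;\Omega^\rho_\mu).
\]

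\emph{Step 2 (Parametric lower bound on the punctured domain).} I now reproduce the argument of Proposition~\ref{prop:lb} with $\Omega$ replaced by $\Omega^\rho_\mu$: for $\sigma \in (0,1)$ and any open $A \compact \Omega^\rho_\mu$, Lemma~\ref{lemma:bound eucl from below} and Lemma~\ref{lemma:bound with parametric integral} localized to $A$ give
\[
(1-\sigma)\int_{A \x \RR^2}\Phi(\vec G_{u_\e})\,\d{|G_{u_\e}|}\ \leq\ \frac{1}{\e\theta_\e}E_\e(u_\e;\Omega^\rho_\mu).
\]
Since $G_{u_\e} \weak T$ in $\D_2(\Omega_\mu \x \RR^2)$ by Proposition~\ref{prop:compactness M vortices}, Proposition~\ref{prop:lb for parametric} applies on $A$; sending $\sigma \to 0$ and $A \nearrow \Omega^\rho_\mu$, and then invoking Lemma~\ref{lemma:parametric in terms of u} (in its natural extension to the open set $\Omega^\rho_\mu$), I obtain
\[
\int_{\Omega^\rho_\mu}|\nabla u|_{2,1}\d x + |\DD^{(c)}u|_{2,1}(\Omega^\rho_\mu) + \int_{J_T \cap \Omega^\rho_\mu}\ell_T(x)|\nu_T(x)|_1 \d{\H^1(x)}\ \leq\ \liminf_{\e\to 0}\frac{1}{\e\theta_\e}E_\e(u_\e;\Omega^\rho_\mu).
\]

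\emph{Step 3 (Removing the puncture and optimizing).} Letting $\rho \to 0^+$, monotone convergence handles each of the three terms on the left-hand side: they converge to the corresponding integrals over $\Omega_\mu$, which in turn coincide with those over $\Omega$, since $\nabla u \in L^1(\Omega)$, the Cantor part $|\DD^{(c)}u|$ does not charge points, and $J_T \subset \Omega_\mu$ (as $T \in \cart(\Omega_\mu \x \SS^1)$). Combining with Step~1 and using that, by Proposition~\ref{prop:compactness M vortices}, the limit current $T$ belongs to $\mathrm{Adm}(\mu,u;\Omega)$, the definition \eqref{eq:def of surface L} gives $\int_{J_T}\ell_T|\nu_T|_1\,\d{\H^1}\geq \mathcal{J}(\mu,u;\Omega)$, concluding the proof.

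The main obstacle is the balancing act in the localization to $\Omega^\rho_\mu$: the puncture is needed so that the limit $T$ exists as a Cartesian current in $\cart(\Omega_\mu \x \SS^1)$ and the lower semicontinuity of parametric integrals can be applied, yet one must guarantee no energy is lost in the passage $\rho \to 0$. This is precisely what the a priori bound \eqref{eq:localizedsign} supplies near the vortices, while the non-atomic nature of $\nabla u$, $|\DD^{(c)}u|$, and $\H^1 \mres J_T$ ensures continuity of the left-hand side as the small balls shrink to the isolated atoms of $\mu$.
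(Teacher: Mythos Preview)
Your proof is correct and follows essentially the same approach as the paper's: both split off the core energy near the vortices via the localized estimate~\eqref{eq:localizedsign} (the paper uses its summed version~\eqref{eq:lb close to vortex}), apply the parametric lower bound on the punctured domain, and then exhaust $\Omega_\mu$. The only organizational difference is that you fix $\rho$ and send $\rho\to 0$ at the end, whereas the paper fixes $A\compact\Omega_\mu$, chooses $\rho$ so that $A\compact\Omega^\rho_\mu$, and sends $A\to\Omega_\mu$; these are interchangeable, and your explicit remarks on the non-atomicity of $\nabla u$, $|\DD^{(c)}u|$, and $\H^1\mres J_T$ in Step~3 correspond to the paper's appeal to Remark~\ref{rmk:punctured domains}.
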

\begin{proof}
Let us fix $A \compact \Omega_\mu$ and $\sigma \in (0,1)$. Then there exists $\rho > 0$ such that $A \compact \Omega_\mu^\rho$. Thanks to~\eqref{eq:first lower bound} and Lemma~\ref{lemma:bound with parametric integral}, for $\e$ small enough we infer that
	\begin{equation*}
	(1-\sigma) \integral{A\x \RR^2}{\Phi(\vec{G}_{u_\e})}{\d |G_{u_\e}|} \leq \frac{1}{\e \theta_\e} E_\e(u_\e; \Omega^\rho_\mu) \,.
	\end{equation*}
	Passing to a subsequence, we have that $G_{u_{\e}}\rightharpoonup T$ in $\mathcal{D}_2(\Omega_{\mu}\times\RR^2)$ for some $T\in\mathcal{D}_2(\Omega\times\RR^2)$ given by Proposition \ref{prop:compactness M vortices}. As in Proposition~\ref{prop:lb for parametric} and from the bound~\eqref{eq:lb close to vortex} we infer that
	\begin{equation*}
	\begin{split}
	(1-\sigma)\integral{A \x \RR^2}{\Phi(\vec{T})}{\d |T|} &\leq \liminf_{\e \to 0} \ (1-\sigma)   \integral{A\x \RR^2}{\Phi(\vec{G}_{u_\e})}{\d |G_{u_\e}|} \\
	& \leq \liminf_{\e \to 0} \frac{1}{\e \theta_\e} E_\e(u_\e; \Omega^\rho_\mu) \leq \liminf_{\e \to 0} \bigg[ \frac{1}{\e \theta_\e} E_\e(u_\e) - 2 \pi M |\log \e| \frac{\e}{\theta_\e} \bigg] \, .
	\end{split}
	\end{equation*}
	Letting $A \to \Omega_\mu$ and $\sigma \to 0$ we conclude 
	that
	\begin{equation*}
	\integral{\Omega_{\mu}}{\Phi(\vec{T})}{\d |T|} \leq \liminf_{\e \to 0} \bigg[ \frac{1}{\e \theta_\e} E_\e(u_\e) - 2 \pi M |\log \e| \frac{\e}{\theta_\e} \bigg]\,.
	\end{equation*}
	By Proposition~\ref{prop:compactness M vortices} (ii) \& (iv) we have $T\in \mathrm{Adm}(\mu,u;\Omega)$, so that \eqref{eq:final lb M vortices} is a direct consequence of Lemma~\ref{lemma:parametric in terms of u} and Remark~\ref{rmk:punctured domains}.
\end{proof}

\subsection{Upper bound in absence of vortices} \label{sec:upper bound no vortices}

To reduce notation, for $u\in BV(\Omega;\SS^1)$ we~set
\begin{equation*}
\mathcal{E}(u):=\integral{\Omega}{|\nabla u|_{2,1} }{\d x} + |\DD^{(c)} u|_{2,1}(\Omega) + \mathcal{J}(u;\Omega)
\end{equation*}
with $\mathcal{J}(u;\Omega)$ given by \eqref{eq:defJ(u,Om)}. The proof of the $\Gamma$-limsup inequality is done in several steps which gradually simplify the map $u \in BV(\Omega;\SS^1)$ that we want to approximate. 

In the next proposition we approximate the map $u$ with a sequence of smooth maps. 

\begin{proposition} \label{prop:density smooth}
Let $u \in BV(\Omega;\SS^1)$. Then there exist an open set $\tilde \Omega \Supset \Omega$ and a sequence $u_h \in C^\infty(\tilde \Omega; \SS^1) \cap W^{1,1}(\tilde \Omega; \SS^1)$  such that $u_h \to u$ strongly in $L^1(\Omega;\RR^2)$ and  
\begin{equation*}
\limsup_{h \to +\infty} \integral{\Omega}{|\nabla u_h|_{2,1} }{\d x} \leq \integral{\Omega}{|\nabla u|_{2,1} }{\d x} + |\DD^{(c)} u|_{2,1}(\Omega) + \mathcal{J}(u;\Omega) \, .
\end{equation*}
\end{proposition}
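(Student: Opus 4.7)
The idea is to realize $u$ as the $BV$-trace of a near-optimal Cartesian current in $\mathcal{J}(u;\Omega)$, extend it slightly beyond $\Omega$, and then apply the Approximation Theorem to obtain smooth $\SS^1$-valued approximations whose energy can be controlled via the parametric integrand $\Phi$. The first step is to fix $\delta>0$ and, by the very definition of $\mathcal{J}(u;\Omega)$, choose a current $T\in\cart(\Omega\x\SS^1)$ with $u_T=u$ a.e.\ such that
\begin{equation*}
\integral{J_T}{\ell_T(x)|\nu_T(x)|_1}{\d\H^1(x)}\le \mathcal{J}(u;\Omega)+\delta\,.
\end{equation*}
Combined with Lemma~\ref{lemma:parametric in terms of u}, this rewrites as $\int_{\Omega\x\RR^2}\Phi(\vec T)\,\d|T|\le \mathcal{E}(u)+\delta$. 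Using Lemma~\ref{lemma:extension of currents} I then extend $T$ to a current $\tilde T\in\cart(\tilde\Omega\x\SS^1)$ for some $\tilde\Omega\Supset\Omega$, with the crucial property $|\tilde T|(\de\Omega\x\RR^2)=0$.

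The second step invokes Theorem~\ref{thm:approximation} applied to $\tilde T$ on $\tilde\Omega\x\SS^1$: it produces a sequence $u_h\in C^\infty(\tilde\Omega;\SS^1)$ with $G_{u_h}\weak \tilde T$ in $\D_2(\tilde\Omega\x\RR^2)$ and $|G_{u_h}|(\tilde\Omega\x\RR^2)\to|\tilde T|(\tilde\Omega\x\RR^2)$. The membership $u_h\in W^{1,1}(\tilde\Omega;\SS^1)$ will be automatic since $|G_{u_h}|(\tilde\Omega\x\RR^2)=\int_{\tilde\Omega}\sqrt{1+|\nabla u_h|^2}\,\d x$ controls the $W^{1,1}$ seminorm, and $\tilde\Omega$ is bounded. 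The key algebraic identity is that for any smooth $\SS^1$-valued map $v$, using the explicit components~\eqref{eq:smooth components} (the Jacobian term $\de_{x^1}v^1\de_{x^2}v^2-\de_{x^1}v^2\de_{x^2}v^1$ vanishes since the image is one-dimensional), the $1$-homogeneity of $\Phi$ in~\eqref{eq:Phi is 2,1} yields
\begin{equation*}
\integral{\tilde\Omega\x\RR^2}{\Phi(\vec G_v)}{\d|G_v|}=\integral{\tilde\Omega}{\Phi(M(\nabla v))}{\d x}=\integral{\tilde\Omega}{|\nabla v|_{2,1}}{\d x}\,,
\end{equation*}
matching the anisotropy of $|\cdot|_{2,1}$ on the first factor with the coefficients of $e_1\wedge\bar e_m$ and $e_2\wedge\bar e_m$.

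The third step passes from $\tilde\Omega$ to $\Omega$ via a Reshetnyak-type continuity argument. Since $G_{u_h}\weak\tilde T$ as vector-valued measures on $\tilde\Omega\x\RR^2$ and $|G_{u_h}|(\tilde\Omega\x\RR^2)\to|\tilde T|(\tilde\Omega\x\RR^2)$, and since $\Phi$ is continuous and positively $1$-homogeneous on $\Lambda_2(\RR^2\x\RR^2)$, Reshetnyak's continuity theorem applied on the open set $\Omega\x\RR^2$ together with the boundary condition $|\tilde T|(\de\Omega\x\RR^2)=0$ gives
\begin{equation*}
\lim_{h\to+\infty}\integral{\Omega\x\RR^2}{\Phi(\vec G_{u_h})}{\d|G_{u_h}|}=\integral{\Omega\x\RR^2}{\Phi(\vec T)}{\d|T|}\le\mathcal{E}(u)+\delta\,.
\end{equation*}
Combined with the identity from the previous step, this controls $\limsup_h\int_\Omega|\nabla u_h|_{2,1}\,\d x$ by $\mathcal{E}(u)+\delta$. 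For the $L^1$-convergence $u_h\to u$, I test the weak convergence $G_{u_h}\weak\tilde T$ against horizontal forms $\phi(x,y)\,\d x$: by~\eqref{eq:T horiz} and the $\SS^1$-valuedness (so $u_h$ is uniformly bounded), standard separability arguments give $u_h\to u_T=u$ strongly in $L^1(\Omega;\RR^2)$. Finally, a diagonal extraction as $\delta\to 0$ delivers the claimed sequence.

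The main obstacle I anticipate is the careful handling of the restriction from $\tilde\Omega$ to $\Omega$: the Approximation Theorem only guarantees mass convergence on the full extended domain, and without the boundary vanishing $|\tilde T|(\de\Omega\x\RR^2)=0$ one could lose mass at the interface, which is precisely why Lemma~\ref{lemma:extension of currents} (with its vanishing-mass guarantee on $\de\Omega\x\RR^2$) is essential. A secondary subtlety is verifying that the parametric integrand $\Phi$ coincides with $|\nabla\,\cdot\,|_{2,1}$ on smooth $\SS^1$-valued maps; this relies on the Jacobian component of $M(\nabla u_h)$ vanishing, a feature specific to maps with values in a $1$-dimensional target.
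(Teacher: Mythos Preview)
Your proposal is correct and follows essentially the same route as the paper: choose a near-optimal $T\in\cart(\Omega\times\SS^1)$, extend it via Lemma~\ref{lemma:extension of currents} so that $|\tilde T|(\de\Omega\times\RR^2)=0$, apply the Approximation Theorem~\ref{thm:approximation}, and use Reshetnyak's Continuity Theorem together with the identity $\int_{\Omega\times\RR^2}\Phi(\vec G_{u_h})\,\d|G_{u_h}|=\int_\Omega|\nabla u_h|_{2,1}\,\d x$ to conclude. The only cosmetic difference is that the paper ends with ``since $\eta>0$ was arbitrary'' rather than spelling out the diagonal extraction you mention.
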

\begin{proof}
Let $\eta > 0$ and let $T \in \cart(\Omega \x \SS^1)$ with $u_T = u$ a.e.\ in $\Omega$ be such that 
\begin{equation} \label{eq:T almost u}
\integral{J_T}{\ell_T(x) |\nu_T(x)|_1}{\d \H^1(x)} \leq \mathcal{J}(u;\Omega)  + \eta  \, .
\end{equation}
Note that by Lemma~\ref{lemma:parametric in terms of u}
\begin{equation*}
\integral{\Omega}{|\nabla u|_{2,1} }{\d x} + |\DD^{(c)} u|_{2,1}(\Omega)  + \integral{J_T}{\ell_T(x) |\nu_T(x)|_1}{\d \H^1(x)} = \integral{\Omega \x \RR^2}{\Phi(\vec{T})}{\d |T|} \, ,
\end{equation*}
where $\Phi$ is the parametric integrand defined in~\eqref{eq:Phi is 2,1}. By Lemma~\ref{lemma:extension of currents} we can extend the current $T$ to $\tilde \Omega \x \SS^1$ for some $\tilde \Omega \Supset \Omega$ such that $T \in \cart(\tilde \Omega \x \SS^1)$ and $| T|(\de \Omega \x \RR^2) = 0$.

Thanks to the Approximation Theorem~\ref{thm:approximation} we find a sequence $u_h \in C^\infty(\tilde \Omega; \SS^1)$ such that $G_{u_h} \weak T$ in $\D_2(\tilde \Omega \x \RR^2)$ and $|G_{u_h}|(\tilde \Omega \x \RR^2) \to |T|(\tilde \Omega \x \RR^2)$. In particular, since $|T|$ does not charge $\partial\Omega\times\RR^2$, we have $|G_{u_h}|(\Omega \x \RR^2) \to |T|(\Omega \x \RR^2)$ and the convergence $u_h \to u_T = u$ in~$L^1(\Omega;\RR^2)$. Therefore, by Reshetnyak's Continuity Theorem \cite[Theorem 2.39]{Amb-Fus-Pal} we have 
\begin{equation*}
\integral{\Omega \x \RR^2}{\Phi(\vec{G}_{u_h})}{\d |G_{u_h}|} \to \integral{\Omega \x \RR^2}{\Phi(\vec{T})}{\d |T|} \, .
\end{equation*}
By~\eqref{eq:smooth components} and the Area Formula we can write 
\begin{equation*}
\integral{\Omega \x \RR^2}{\Phi(\vec{G}_{u_h})}{\d |G_{u_h}|} = \integral{\Omega}{|\nabla u_h|_{2,1}(x)}{\d x} \, .
\end{equation*}
This implies that
\begin{equation*}
\limsup_{h \to +\infty} \integral{\Omega}{|\nabla u_h|_{2,1}(x)}{\d x} \leq \integral{\Omega}{|\nabla u|_{2,1} }{\d x} + |\DD^{(c)} u|_{2,1}(\Omega) + \mathcal{J}(u;\Omega) + \eta \, .
\end{equation*}
Since $\eta > 0$ was arbitrary, we conclude the proof.
\end{proof}

The next lemma states that we can discretize on a lattice $\lambda_n \ZZ^2$ any smooth map with values in $\SS^1$ in such a way that the anisotropic $BV$ norm does not increase. The discretized maps $u_n$ satisfy in addition an 'almost continuity property', cf.\ \eqref{eq:almostcontinuity}, which states that for $\lambda_n$ small enough the constant values of $u_n$ in two neighboring cubes are close.  
\begin{lemma}[Discretization of smooth $\SS^1$-valued maps] \label{lemma:discretization of smooth wout sing}
Let $\lambda_n := 2^{-n}$, $n \in \NN$ and let~$O$, $\tilde O$ be bounded, open sets such that $ O\compact \tilde O$. Assume that $u \in C^\infty(\tilde O; \SS^1) \cap W^{1,1}(\tilde O; \SS^1)$.
Then there exist a sequence of piecewise constant maps $u_n \in \PC_{\lambda_n}(\SS^1)$ such that $u_n \to u$ strongly in $L^1(O;\RR^2)$ as $n \to +\infty$ and
\begin{equation} \label{eq:limsup with pc}
\limsup_{n\to +\infty}\integral{J_{u_n} \cap O^{\lambda_n} }{\geo(u_n^+,u_n^-)|\nu_{u_n}|_1}{\d \H^1} \leq \integral{O}{|\nabla u|_{2,1}}{\d x} \, ,
\end{equation} 
where $O^{\lambda_n}$ is the union of half-open squares given by
\begin{equation*}
O^{\lambda_n} := \bigcup \{ I_{\lambda_n}(\lambda_n z) \colon z \in \ZZ^2 \text{ such that } I_{\lambda_n}(\lambda_n z) \cap   O \neq \emptyset\} \, .
\end{equation*}
Moreover , for every $\delta > 0$ there exists $\ol n = \ol n(u,\delta,\tilde O)$ such that for every $n \geq \ol n$ and for every~$z_1, z_2 \in \ZZ^2$ with $\ol I_{\lambda_n}(\lambda_n z_1) \cap \ol I_{\lambda_n}(\lambda_n z_2) \neq \emptyset$ and $I_{\lambda_n}(\lambda_n z_i) \cap  O \neq \emptyset$ we have 
\begin{equation} \label{eq:almostcontinuity}
\geo\big(u_n(\lambda_n(z_1)),u_n(\lambda_n(z_2)) \big) \leq \delta \, .
\end{equation}
\end{lemma}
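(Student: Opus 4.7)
The plan is a vertex-based sampling. For each $z \in \ZZ^2$ such that $\lambda_n z \in \tilde O$, set $u_n(x) := u(\lambda_n z)$ on $I_{\lambda_n}(\lambda_n z)$, and extend $u_n$ arbitrarily to an element of $\PC_{\lambda_n}(\SS^1)$ elsewhere. Since $O \compact \tilde O$, for $n$ large the set $O^{\lambda_n}$ together with all cubes touching it lies in a fixed compact set $K \subset \tilde O$, so $u_n$ is unambiguously defined where needed. The strong $L^1$-convergence $u_n \to u$ on $O$ is immediate from the uniform continuity of $u$ on $K$. The almost-continuity property \eqref{eq:almostcontinuity} follows similarly: neighboring sample points satisfy $|\lambda_n z_1 - \lambda_n z_2| \leq \sqrt{2}\, \lambda_n$, so uniform continuity combined with \eqref{eq:geo and eucl} turns Euclidean smallness of $|u(\lambda_n z_1) - u(\lambda_n z_2)|$ into geodesic smallness.

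The energy estimate \eqref{eq:limsup with pc} is the heart of the argument. The jump set of $u_n$ lies on axis-aligned grid edges, where $|\nu_{u_n}|_1 = 1$. For a horizontal edge $e$ shared by $I_{\lambda_n}(\lambda_n z)$ and $I_{\lambda_n}(\lambda_n(z + e_1))$, the curve $\gamma_e(t) := u(\lambda_n z + t \lambda_n e_1)$, $t \in [0,1]$, is a smooth path in $\SS^1$ joining the two traces of $u_n$ across $e$, so the geodesic distance between its endpoints is bounded above by its intrinsic $\SS^1$-length:
\begin{equation*}
\geo(u_n^-,u_n^+) \leq \int_0^{\lambda_n} |\partial_{x^1} u|(\lambda_n z + s e_1) \d s .
\end{equation*}
Multiplying by $\H^1(e) = \lambda_n$ and summing over all horizontal edges with both adjacent cubes intersecting $O$, then writing $z = (a,b)$ and summing first in $a$, yields a left Riemann sum in $y = \lambda_n b$ for the continuous function $y \mapsto \int_{A_n^{(1)}(y)} |\partial_{x^1} u|(s,y) \d s$, which converges to $\int_O |\partial_{x^1} u| \d x$ as $n \to \infty$. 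The analogous bound for vertical edges contributes $\int_O |\partial_{x^2} u| \d x$, and the two together recover $\int_O |\nabla u|_{2,1} \d x$ by the very definition of $|\cdot|_{2,1}$.

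The main technical subtlety is securing the constant one in the jump estimate. A direct application of \eqref{eq:geo and eucl} would only yield $\geo(u_n^-, u_n^+) \leq (\pi/2)\,|u_n^+ - u_n^-|$, producing a spurious factor $\pi/2$ that cannot be reabsorbed later. One circumvents this by exploiting that $u$ takes values in the one-dimensional manifold $\SS^1$: the arclength of the image curve $\gamma_e$ equals $\int_0^{\lambda_n} |\partial_{x^1} u|(\lambda_n z + s e_1) \d s$ and already dominates the geodesic distance sharply. A secondary point is that the contributions from cubes straddling $\de O$ must be shown to vanish in the limit, which is immediate from the boundedness of $|\nabla u|$ on $K$ and $|O^{\lambda_n} \setminus O| \to 0$.
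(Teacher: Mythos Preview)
Your proof is correct and follows essentially the same route as the paper: sample $u$ on the lattice (the paper uses cube centers, you use vertices---immaterial), bound the geodesic jump across each edge by the $\SS^1$-arclength of the image of the connecting segment (the paper phrases this via a local lifting $\varphi$, which yields the identical integrand since $|\partial_i\varphi|=|\partial_i u|$), and then pass from the one-dimensional integrals to the area integral. The only place where the paper is more explicit is this last passage: rather than describing it as a Riemann sum in $y$ for a function whose integration domain $A_n^{(1)}(y)$ still depends on $n$, the paper compares $\lambda_n\int_0^{\lambda_n}|\partial_i u|(\lambda_n z+se_i)\,ds$ directly with $\int_{I_{\lambda_n}(\lambda_n z)}|\partial_i u|\,dx$ via the uniform bound $\|\nabla^2 u\|_{L^\infty(O')}$ (error $C(u)\lambda_n^3$ per cube), sums over cubes contained in a fixed $O'$ with $O\subset\!\subset O'\subset\!\subset\tilde O$, and then lets $O'\searrow O$; this cleanly avoids any regularity issue for the slice function.
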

\begin{proof}
	Let $O'$ be an open set such that $O \compact O' \compact \tilde O$ and let $\ol n$ be so large that for every~$n \geq \ol n$ we have $O^{ \lambda_n} \subset O'$.
	For every $z \in \ZZ^2$ such that $I_{\lambda_n}(\lambda_n z) \cap O  \neq \emptyset$ we define
	\begin{equation*}
		u_n(\lambda_n z) := u\big(\lambda_n (z + \tfrac{1}{2}e_1 + \tfrac{1}{2}e_2 ) \big) \, ,	
	\end{equation*}
	$\lambda_n (z + \tfrac{1}{2}e_1 + \tfrac{1}{2}e_2 )$ being the center of the square $I_{\lambda_n}(\lambda_n z)$. The definition is well-posed, since $\lambda_n (z + \tfrac{1}{2}e_1 + \tfrac{1}{2}e_2 ) \in \tilde O$. Then we extend $u_n$ to~$\lambda_n \ZZ^2$ by choosing an arbitrary value in $\SS^1$. This defines a piecewise constant map $u_n \in \PC_{\lambda_n}(\SS^1)$. 

	Since $u$ is continuous on $O'$, it follows that $u_n \to u$ pointwise on $O$ and thus also strongly in $L^1(O;\RR^2)$ by dominated convergence. 
	%
	Next we show~\eqref{eq:limsup with pc}. For $i \in \{1,2\}$ define 
	\begin{equation*}
		\mathcal{Z}_i(\lambda_n) := \{ z \in \ZZ^2 \colon I_{\lambda_n}(\lambda_n z) \cap O \neq \emptyset \text{ and } I_{\lambda_n}(\lambda_n (z+e_i)) \cap O \neq \emptyset \} \,. 
	\end{equation*} 
	Let $z \in \mathcal{Z}_i(\lambda_n)$. Since $u$ is $C^\infty$ in the interior of the rectangle $I_{\lambda_n}(\lambda_n z) \cup I_{\lambda_n}(\lambda_n (z+e_i))$, it admits a $C^\infty$ lifting $\varphi$ such that $u = \exp(\iota \varphi)$ in the interior of $I_{\lambda_n}(\lambda_n z) \cup I_{\lambda_n}(\lambda_n (z+e_i))$. Then, by the fundamental theorem of calculus and the definition of $u_n$,
	\begin{equation} \label{eq:2311181951}
		\begin{split}
			\geo\big(u_n(\lambda_n(z+e_i)), u_n(\lambda_n z)\big) & \leq \Big|\varphi\big(\lambda_n(z+e_i+\tfrac{1}{2}e_1 + \tfrac{1}{2}e_2 )\big)  -  \varphi\big(\lambda_n(z+\tfrac{1}{2}e_1 + \tfrac{1}{2}e_2 )\big) \Big| \\
			& \leq \lambda_n \int_0^1 \big| \de_i \varphi\big(\lambda_n(z+te_i+\tfrac{1}{2}e_1 + \tfrac{1}{2}e_2 )\big) \big| \d t \\
			& = \lambda_n \int_0^1 \big| \de_i u\big(\lambda_n(z+te_i+\tfrac{1}{2}e_1 + \tfrac{1}{2}e_2 )\big) \big| \d t \, .
		\end{split}
	\end{equation}
We notice, in addition, that for every $t \in [0,1]$ and $z\in\mathcal{Z}_i(\lambda_n)$
\begin{equation*}
\begin{split}
& \bigg| \integral{I_{\lambda_n}(\lambda_n z)}{\hspace{-1em}\big|\de_i u(x) \big|}{\d x}  - \lambda_n^2  \big| \de_i u\big(\lambda_n(z+te_i+\tfrac{1}{2}e_1 + \tfrac{1}{2}e_2 )\big) \big| \ \bigg| \\
& \quad \leq \integral{I_{\lambda_n}(\lambda_n z)}{\hspace{-1em} \big|\de_i u(x) -  \de_i u\big(\lambda_n(z+te_i+\tfrac{1}{2}e_1 + \tfrac{1}{2}e_2 )\big) \big|}{\d x} \leq 2 \lambda_n^3  \|\nabla^2 u\|_{L^\infty(O')  }  =: C(u) \lambda_n^3 \, .
\end{split}
\end{equation*}
From~\eqref{eq:2311181951} and the previous estimate it follows that 
	\begin{equation*}
		\begin{split}
			\integral{J_{u_n} \cap O^{\lambda_n} }{  \geo(u_n^+,u_n^-)|\nu_{u_n}|_1}{\d \H^1} & \leq \sum_{i=1}^2  \sum_{z \in \mathcal{Z}_i(\lambda_n)} \hspace{-1em} \lambda_n \geo\big(u_n(\lambda_n(z+e_i)), u_n(\lambda_n z)\big) \\
			& \leq \sum_{i=1}^2 \int_0^1 \hspace{-0.5em} \sum_{z \in \mathcal{Z}_i(\lambda_n)}  \hspace{-0.5em} \lambda_n^2  \big| \de_i u\big(\lambda_n(z+te_i+\tfrac{1}{2}e_1 + \tfrac{1}{2}e_2 )\big) \big| \d t \\
			& \leq \sum_{i=1}^2 \int_0^1 \hspace{-0.5em} \sum_{z \in \mathcal{Z}_i(\lambda_n)} \bigg\{  \integral{I_{\lambda_n}(\lambda_n z)}{\hspace{-1em}\big|\de_i u(x) \big|}{\d x}  + C(u) \lambda_n^3 \bigg\}\d t  \\
			& \leq   \integral{O'}{  |\nabla u|_{2,1}}{\d x} + C(u) |O'| \lambda_n \, .
		\end{split}
	\end{equation*}
	We conclude the proof of~\eqref{eq:limsup with pc} letting $n \to +\infty$ and then $O' \searrow O$.

	Finally, in order to prove~\eqref{eq:almostcontinuity}, observe that the condition $\ol I_{\lambda_n}(\lambda_n z_1) \cap \ol I_{\lambda_n}(\lambda_n z_2) \neq \emptyset$ implies that $|\lambda_n(z_1+\tfrac{1}{2}e_1+e_2)-\lambda_n(z_2+\tfrac{1}{2}e_1+\tfrac{1}{2}e_2)|\leq \sqrt{2}\lambda_n$, so the claim follows from the Lipschitz continuity of $u$ on the larger set $\tilde{O}$. 
	\end{proof}

Now we can construct a recovery sequence $u_\e \colon \e \ZZ^2 \to \S_\e$. Due to the previous simplifications, it suffices to approximate the energy of piecewise constant maps on the lattice~$\lambda_n \ZZ^2$ which come from Lemma~\ref{lemma:discretization of smooth wout sing}. To define the recovery sequence we shall construct a minimal transition (in $\SS^1$) between two constant values of $\SS^1$. First we introduce some notation about geodesics in $\SS^1$ and state an elementary stability property, whose proof we omit.
%
\begin{definition}\label{d.geodesic}
For $u^1,u^2\in\SS^1$ denote by $\Geo[u^1,u^2] \colon [0,\geo(u^1,u^2)]\to\SS^1$ the (in case of non-uniqueness counterclockwise rotating) unit speed geodesic between $u^1$ and $u^2$ which we extend by $\Geo[u^1,u^2](t)=u^1$ for $t<0$ and $\Geo[u^1,u^2](t)=u^2$ for $t>\geo(u^1,u^2)$. As such the geodesics are $1$-Lipschitz continuous functions on $\RR$. We further set ${\rm mid}(u^1,u^2)=\Geo(\tfrac{1}{2}\geo(u^1,u^2))$ as the midpoint on that geodesic.	
\end{definition}
\begin{lemma}\label{l.ongeodesics}
There exists a constant $c>0$ such that whenever $u^1,u^2,b\in\SS^1$ are such that $u^1,u^2\in B_{c}(b)$, then for all $t\in\RR$
\begin{equation*}
|\Geo[u^1,b](t)-\Geo[u^2,b](t)|\leq \geo(u^1,u^2) \, .
\end{equation*}  	
\end{lemma}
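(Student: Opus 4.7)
The plan is to reduce to an explicit computation in angular coordinates. We would choose $c\in(0,\sqrt{2})$ small enough (e.g.\ $c=1$) so that for $u^1,u^2\in B_c(b)$ the geodesic from each $u^j$ to $b$ is unique and has length strictly less than $\pi/2$. Writing $b=e^{\iota\beta}$ and $u^j=e^{\iota\alpha_j}$ with $\alpha_j-\beta\in(-\pi/2,\pi/2)$, set $d_j:=|\alpha_j-\beta|$ and $\sigma_j:=\mathrm{sign}(\alpha_j-\beta)\in\{\pm1\}$. Directly from Definition~\ref{d.geodesic} one then has the explicit formula $\Geo[u^j,b](t)=e^{\iota(\alpha_j-\sigma_j t)}$ for $t\in[0,d_j]$, extended constantly outside this interval.

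We would actually prove the stronger bound $|\Geo[u^1,b](t)-\Geo[u^2,b](t)|\leq|u^1-u^2|$, which implies the claim since $|u^1-u^2|\leq\geo(u^1,u^2)$ by \eqref{eq:geo and eucl}. Assume without loss of generality that $d_1\leq d_2$. The estimate is trivial for $t\leq 0$ and $t\geq d_2$. On $[d_1,d_2]$, where the first geodesic already sits at $b$ and the second is still moving at unit speed toward $b$, the distance equals $2\sin((d_2-t)/2)\leq 2\sin((d_1+d_2)/2)=|u^1-u^2|$, by monotonicity of $\sin$ on $[0,\pi/2]$.

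The main bookkeeping is on $[0,d_1]$, where a short calculation using $\alpha_1-\alpha_2=\sigma_1 d_1-\sigma_2 d_2$ expresses the distance as
\begin{equation*}
|\Geo[u^1,b](t)-\Geo[u^2,b](t)|=\bigl|e^{\iota[(\sigma_1 d_1-\sigma_2 d_2)+(\sigma_2-\sigma_1)t]}-1\bigr|.
\end{equation*}
A case split on whether $\sigma_1=\sigma_2$ (both geodesics rotate in the same direction around $b$) or $\sigma_1=-\sigma_2$ (they rotate toward each other) reduces this to the constant $2\sin((d_2-d_1)/2)$ in the first case and to $2|\sin((d_1+d_2)/2-t)|$ in the second case, both of which are bounded by $2\sin((d_1+d_2)/2)=|u^1-u^2|$. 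We expect no genuine obstacle beyond the choice of $c$ ensuring uniqueness of geodesics and keeping all half-angles in the monotonicity regime of $\sin$ on $[0,\pi/2]$; once that is fixed, the proof is a routine trigonometric bookkeeping.
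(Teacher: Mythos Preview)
Your approach is correct and complete in spirit; the paper itself omits the proof entirely, calling it ``an elementary stability property, whose proof we omit,'' so there is nothing to compare against. Your reduction to angular coordinates and the case split on the relative orientation $\sigma_1=\pm\sigma_2$ is exactly the natural route.

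There is, however, one small slip in your treatment of the interval $[d_1,d_2]$. You write the bound as $2\sin((d_2-t)/2)\leq 2\sin((d_1+d_2)/2)=|u^1-u^2|$, but the identity $2\sin((d_1+d_2)/2)=|u^1-u^2|$ holds only when $\sigma_1\neq\sigma_2$; when $\sigma_1=\sigma_2$ one has instead $|u^1-u^2|=2\sin((d_2-d_1)/2)$. The fix is immediate: on $[d_1,d_2]$ one has $d_2-t\leq d_2-d_1$, hence $2\sin((d_2-t)/2)\leq 2\sin((d_2-d_1)/2)$, and the right-hand side equals $|u^1-u^2|$ in the same-sign case and is bounded by $|u^1-u^2|=2\sin((d_1+d_2)/2)$ in the opposite-sign case. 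So the conclusion stands; only the intermediate equality needs to be replaced by the correct case-dependent identity. You may also want to dispose explicitly of the degenerate case $u^j=b$ (where $\sigma_j$ is undefined but the geodesic is constant), though it is trivial.
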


We introduce a map which will be used to project vectors of $\SS^1$ on $\S_\e$. Given $u\in\SS^1$ we let $\varphi_u\in [0,2\pi)$ be the unique angle such that $u=\exp(\iota\varphi_u)$. We define $\mathfrak{P}_{\e}\colon \SS^1\to\S_\e$ by 
\begin{equation}\label{eq:defproj}
\mathfrak{P}_{\e}(u)=\exp\left(\iota \theta_{\e}\left\lfloor\tfrac{\varphi_u}{\theta_{\e}}\right\rfloor\right).
\end{equation}
Combined with Propositions \ref{prop:BV compactness} and \ref{prop:lb} the next result completes the proof of Theorem~\ref{thm:e smaller theta with vortices} when $M=0$.

\begin{proposition}[$M=0$, Upper bound] \label{prop:construction of ueps}
Assume $\e\ll\theta_{\e}\ll 1$. Let $u \in BV(\Omega;\SS^1)$. Then there exist $u_\e \in \PC_\e(\S_\e)$ such that $u_\e \to u$ strongly in $L^1(\Omega;\RR^2)$ and
\begin{equation*}
\limsup_{\e \to 0} \frac{1}{\e \theta_\e} E_\e(u_\e) \leq \integral{\Omega}{|\nabla u|_{2,1} }{\d x} + |\DD^{(c)} u|_{2,1}(\Omega) + \mathcal{J}(u;\Omega) \, . 
\end{equation*}
\end{proposition}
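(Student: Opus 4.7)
The plan is to construct the recovery sequence via a three-level approximation, reducing the problem from a generic $BV$-map to a smooth $\SS^1$-valued map, then to a piecewise constant map on a mesoscopic grid, and finally to a discrete spin field on $\e\ZZ^2$ with values in $\S_\e$. First, I invoke Proposition~\ref{prop:density smooth} to find an extension $\tilde\Omega \Supset \Omega$ and maps $u_h \in C^\infty(\tilde\Omega;\SS^1)\cap W^{1,1}(\tilde\Omega;\SS^1)$ such that $u_h \to u$ in $L^1(\Omega;\RR^2)$ and $\limsup_h \int_\Omega |\nabla u_h|_{2,1}\d x \leq \mathcal{E}(u)$. Next, for each $h$, Lemma~\ref{lemma:discretization of smooth wout sing} applied with $O = \Omega \Supset \tilde O$ produces piecewise constant maps $u_{h,n} \in \PC_{\lambda_n}(\SS^1)$ with $\lambda_n = 2^{-n}$ such that $u_{h,n} \to u_h$ in $L^1$, the jump-length bound \eqref{eq:limsup with pc} holds, and the almost-continuity property \eqref{eq:almostcontinuity} guarantees that adjacent cells carry geodesically close values for $n$ large.

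The heart of the proof is the construction of a discrete recovery sequence $u_\e^{h,n} \in \PC_\e(\S_\e)$ for the mesoscopic piecewise constant map $u_{h,n}$. In the bulk of each $\lambda_n$-cell $I_{\lambda_n}(\lambda_n z)$ I set $u_\e^{h,n}\equiv \mathfrak{P}_\e(u_{h,n}(\lambda_n z))$, using the projection defined in \eqref{eq:defproj}. Across every pair of neighboring cells carrying values $u^1,u^2\in\SS^1$, I insert a transition strip of width $\eta_\e(u^1,u^2) := \lceil \geo(u^1,u^2)/\theta_\e\rceil \e$ (attached to the interface on one side, say): on the $k$-th $\e$-column of this strip, $k=0,\dots,\lceil \geo(u^1,u^2)/\theta_\e\rceil$, set $u_\e^{h,n}(\e i) := \mathfrak{P}_\e\big(\Geo[u^1,u^2](k\theta_\e)\big)$. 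By construction, consecutive spin values within a column are equal, while consecutive columns differ by exactly one lattice step $\theta_\e$ in $\S_\e$, so each horizontal nearest-neighbor bond in the strip contributes $|u_\e(\e i)-u_\e(\e j)|^2 = 2(1-\cos\theta_\e) \sim \theta_\e^2$. The total number of horizontal bonds across a single interface of length $\lambda_n$ is $\sim \lambda_n/\e\cdot \lceil\geo(u^1,u^2)/\theta_\e\rceil$, so the contribution to $\tfrac{1}{\e\theta_\e}E_\e(u_\e^{h,n})$ from that interface amounts to
\begin{equation*}
\tfrac{1}{\e\theta_\e}\cdot \e^2\cdot \theta_\e^2\cdot \tfrac{\lambda_n}{\e}\cdot \tfrac{\geo(u^1,u^2)}{\theta_\e}(1+o(1)) = \lambda_n \geo(u^1,u^2)(1+o(1)),
\end{equation*}
which is exactly the desired jump contribution. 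Summing over all interfaces of $u_{h,n}$ and recalling that each interface has normal of $1$-norm equal to $1$, Lemma~\ref{lemma:discretization of smooth wout sing} yields $\limsup_{\e\to 0}\tfrac{1}{\e\theta_\e}E_\e(u_\e^{h,n}) \leq \int_{\Omega}|\nabla u_h|_{2,1}\d x + o_n(1)$, after a final $n\to\infty$ sent outside.

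The main technical obstacle lies in making the transition strips compatible near the corners of the mesoscopic grid, where several interfaces meet. Since $\e\ll\theta_\e$, the width $\eta_\e(u^1,u^2)\sim \e/\theta_\e$ of each strip is much smaller than $\lambda_n$ for every fixed $n$, so strips attached to different faces of the same cell are well separated and occupy only a vanishing fraction of the cell; nevertheless one has to fix an orientation convention (e.g.\ attach the strip to the cell with lexicographically smaller index) and resolve the remaining small corner squares by a further short interpolation along the geodesic between the two adjacent column-values, which by \eqref{eq:almostcontinuity} and Lemma~\ref{l.ongeodesics} costs at most $o(1)$ per corner once summed. The almost-continuity property is essential here, as it guarantees that geodesics between adjacent constant values are uniquely defined and short, ruling out antipodal ambiguities and $2\pi$-jumps. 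Finally, $L^1$-convergence $u_\e^{h,n}\to u_{h,n}$ follows because the union of transition strips has vanishing Lebesgue measure, and the projection error $|\mathfrak{P}_\e(v)-v|\leq\theta_\e\to 0$ is uniform. Combining the three limits by a standard diagonal argument (using the metrizability of $L^1$-convergence and choosing $n=n(h)$, $\e=\e(h,n)$ appropriately) produces the desired recovery sequence $u_\e$ with the claimed energy estimate.
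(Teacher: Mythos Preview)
Your reduction strategy---approximate by smooth $\SS^1$-valued maps via Proposition~\ref{prop:density smooth}, then by piecewise constant maps on $\lambda_n\ZZ^2$ via Lemma~\ref{lemma:discretization of smooth wout sing}, then build a discrete recovery sequence---is exactly the paper's route, and your strip computation for a single interface is correct. The gap is in the corner handling, which you describe in one sentence but which is in fact the entire technical content of the proof.

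Your asymmetric scheme (attach the transition strip to the lexicographically smaller cell) does not, as written, give $|u_\e(\e i)-u_\e(\e j)|=O(\theta_\e)$ for all nearest-neighbour bonds. Consider the vertical interface between $z$ and $z+e_1$ near the common corner with $z+e_2$ and $z+e_1+e_2$. On the $z$-side your rightmost strip column carries the value $u(z+e_1)$ (or whatever you put in the corner square of $z$); on the $z+e_1$-side, the top strip of $z+e_1$ already interpolates towards $u(z+e_1+e_2)$. These two constructions are decided independently in the two cells, so the bond across the interface sees a jump of order $\delta$ (the almost-continuity parameter from~\eqref{eq:almostcontinuity}), not $\theta_\e$. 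There are $\sim\theta_\e^{-1}$ such bonds per corner, so the contribution to $\tfrac{1}{\e\theta_\e}E_\e$ from one corner is $\sim\e\delta^2/\theta_\e^2$, and summed over $\sim\lambda_n^{-2}$ corners this gives $\sim\lambda_n^{-2}\delta^2\,\e/\theta_\e^2$, which need not vanish under the sole assumption $\e\ll\theta_\e$ (take e.g.\ $\theta_\e\sim\sqrt{\e}$). Your appeal to Lemma~\ref{l.ongeodesics} does not fix this: that lemma compares two geodesics with a common endpoint, but here the two sides of the interface run geodesics between different pairs of values.

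The paper resolves this by abandoning the one-sided strip picture in favour of a \emph{symmetric} construction. On every side $S$ of every cell it prescribes a boundary datum $b^\e_S$ built from the two corner values of $u_n$ and the midpoint $\mathrm{mid}(u^-_S,u^+_S)$ of the two traces, interpolated geodesically on a length scale $\e/\theta_\e$ near the endpoints (see~\eqref{eq:defbcases}--\eqref{eq:defvSu}). This yields a single continuous function $b_\e$ on the whole skeleton $\bigcup_z\partial I_{\lambda_n}(\lambda_n z)$, agreeing on shared sides because the midpoint and the corner values are intrinsic to the side. In each cell one then interpolates geodesically from $b_\e(P(\e i))$ to the bulk value $u_z$, parametrised by $\theta_\e\e^{-1}\dist(\e i,\partial I)$. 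With this setup, for a bond crossing an interface both endpoints are within $\e$ of the skeleton, hence within $\theta_\e$ of the common boundary value, and one recovers the uniform bound $|u_\e(\e i)-u_\e(\e j)|\leq C\theta_\e$ in all six geometric cases (the paper's Steps~1--2). Only then does the corner count $\lambda_n^{-2}\cdot\e/\theta_\e\to 0$ go through. You should either adopt this symmetric boundary-condition mechanism or give an explicit two-dimensional corner interpolation that makes both sides of every interface agree up to $O(\theta_\e)$; the sketch you have does neither.
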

\begin{proof}
By Proposition~\ref{prop:density smooth}, Lemma~\ref{lemma:discretization of smooth wout sing} and by the $L^1$-lower semicontinuity of the the $\Gamma$-limsup, it is enough to prove that for $u_n \in \PC_{\lambda_n}(\SS^1)$ 
\begin{equation}\label{eq:toshow}
\Gamma\text{-}\limsup_{\e \to 0}\frac{1}{\e \theta_\e} E_\e(u_n) \leq 	\integral{J_{u_n} \cap \Omega^{\lambda_n} }{\geo(u_n^+,u_n^-)|\nu_{u_n}|_1}{\d \H^1}\, .
\end{equation}
Since $u_n$ is fixed in the following discussion, to simplify the notation we denote $u_n$ by $u$ and $\lambda_n$ by $\lambda$, always assuming that $\lambda\ll 1$.
	
We will define a recovery sequence locally on each half-open cube $I_{\lambda}(\lambda z)$ for $z\in\mathbb{Z}^2$. First, we define a boundary condition on $\partial I_{\lambda}(\lambda z)$. For a side $S=\{\lambda z^{\prime}+te_i:\,t\in[0,\lambda]\}$ with $z^{\prime}\in\mathbb{Z}^2$ and $i\in\{1,2\}$, and three values $v=(v^1,v^2,v^3)\in(\SS^1)^3$, we set $b^{\e}_S[v]\colon S\to\SS^1$~as
	\begin{equation}\label{eq:defbcases}
	b^{\e}_S[v](\lambda z^{\prime}+t e_i)=
	\begin{cases}
	v^1 &\mbox{if $t\in c_0 \frac{\e}{\theta_{\e}}[0, 1)$} \, ,
	\\
	\Geo[v^1,v^2]\left(\frac{\geo(v^1,v^2)\theta_{\e}}{c_0\e}(t-c_0\frac{\e}{\theta_{\e}})\right) &\mbox{if $t\in c_0 \frac{\e}{\theta_{\e}}[1,2)$} \, ,
	\\
	v^2 &\mbox{if $t\in[2c_0\frac{\e}{\theta_{\e}},\lambda-2 c_0\frac{\e}{\theta_{\e}})$} \, ,
	\\
	\Geo[v^2,v^3]\left(\frac{\geo(v^2,v^3)\theta_{\e}}{c_0\e}(t-(\lambda- 2c_0\frac{\e}{\theta_{\e}})\right) &\mbox{if $t\in\lambda-c_0\frac{\e}{\theta_{\e}}(1,2]$} \, , 
	\\
	v^3 &\mbox{if $t\in\lambda-c_0\frac{\e}{\theta_{\e}}[0,1]$} \, .
	\end{cases}
	\end{equation}
	The particular choice of the constant $c_0$ is not important. For this proof it suffices that $c_0 > 2 \pi$. This condition will be clear only after~\eqref{eq:jlocate0}. (However, to apply this construction also in the proof of Proposition~\ref{p.smoothapprox} we need to choose a larger constant, namely $c_0 = 393$.) Since $\frac{\e}{\theta_{\e}}\to 0$ by assumption, the function $b_S[v]$ can be interpreted as follows: in a small neighborhood of the two endpoints of $S$ we set the two values $v^1$ and $v^3$, while in a contiguous small neighborhood we use the geodesic for a transition to the value $v^2$, which is taken on most of the side. 
	
	Next, given $u\in\mathcal{PC}_{\lambda}(\SS^1)$ and a side $S$ as above we specify the values
	\begin{equation}\label{eq:defvSu}
	v_S(u)=(u(\lambda z^{\prime}),{\rm mid}(u^-_{S},u^+_{S}),u(\lambda(z^{\prime}+e_i))) \, ,
	\end{equation}
	where $u^-_{S}$ and $u^+_{S}$ denote the (constant) traces of $u$ along the side $S$ and the midpoint is given by Definition \ref{d.geodesic}. The boundary values $b_{z,\e}\colon\partial I_{\lambda}(\lambda z)\to\SS^1$ are then defined by
	\begin{align*}
	b_{z,\e}(x)=
	b_{S}^{\e}[v_S(u)](x) \quad\mbox{if $x=\lambda z^{\prime}+t e_i\in S$ for some $z^{\prime}\in\mathbb{Z}^2$ and $t\in[0,\lambda]$}  \, .
	\end{align*} 
	Note that this function is well-defined also in the corners with $b_{z,\e}(\lambda z_0)=u(\lambda z_0)$ for all $z_0\in\mathbb{Z}^2$. Moreover, since we have chosen unit speed geodesics and $c_0>2\pi$, on each side $S$ the function $b_S^{\e}[v_S(u)]$ satisfies a Lipschitz-estimate of the form
	\begin{equation}\label{eq:bLip}
	|b_S^{\e}[v_S(u)](x)-b_S^{\e}[v_S(u)](y)|\leq  \frac{1}{2} \frac{\theta_{\e}}{\e}|x-y| \, \quad x,y \in S \, .  
	\end{equation}
	Repeating the construction on every half-open cube we obtain a continuous function on the skeleton $\bigcup_{z \in \ZZ^2} \de I_{\lambda}(\lambda z)$. 
	
	We are now in a position to define the recovery sequence of $u$. We will interpolate between the constant  $u(\lambda z)$ and the boundary value $b_{z,\e}$ in $I_{\lambda}(\lambda z)$. This will be done on a mesoscale towards the boundary $\partial I_{\lambda}(\lambda z)$. Let $P \colon I_{\lambda}(\lambda z)\to\partial I_{\lambda}(\lambda z)$ be a function satisfying $|P(x)-x|=\dist(x,\partial I_{\lambda}(\lambda z))$ for all $x\in I_{\lambda}(\lambda z)$ (such a function can be defined globally by periodicity). To reduce notation, let $u_z=u(\lambda z)$. Set $\bar u_{\e}  \colon \e\ZZ^2\cap I_{\lambda}(\lambda z)\to\SS^1$ as 
	\begin{equation} \label{eq:transition}
	 \bar u_{\e} (\e i)=
	\Geo\left[b_{z,\e}(P(\e i)),u_z\right]\left(\theta_{\e}\e^{-1}\dist\big(\e i,\partial I_{\lambda}(\lambda z)\big)\right),
	\end{equation}
	with the extended geodesics given by Definition~\ref{d.geodesic}. Note that in general $\bar{u}_{\e}(\e i)\notin\S_\e$. Hence we define $u_{\e}\in\mathcal{PC}_{\e}(\S_\e)$ by $u_{\e}:=\mathfrak{P}_{\e}(\bar{u}_{\e})$ with $\mathfrak{P}_{\e}$ given by \eqref{eq:defproj}. We claim that $u_{\e}$ converges to $u$ in $L^1(\Omega;\RR^2)$. Indeed, for all $\e i\in I_{\lambda}(\lambda z)$ we have by Definition~\ref{d.geodesic}
	\begin{equation}\label{eq:interiorpoints}
	u_{\e}(\e i)=\mathfrak{P}_{\e}(u_z)\quad\quad\text{if }\dist(\e i,\partial I_{\lambda}(\lambda z))\geq \pi\frac{\e}{\theta_{\e}} \,,
	\end{equation}
	so that the assumptions $\frac{\e}{\theta_{\e}}\to 0$ and $\theta_{\e}\to0$ yield that $u_{\e}  \to u_z$ in measure on $I_{\lambda}(\lambda z)$. Here we used that $|\mathfrak{P}_{\e}-I|\leq\theta_{\e}$. Vitali's theorem then implies $u_{\e}\to u$ in $L^1(\Omega;\RR^2)$. 
	
	Next we bound the differences $u_{\e}(\e i)-u_{\e}(\e j)$ for all $i,j\in\mathbb{Z}^2$ with $|i-j|=1$.
	
	\medskip
	\ul{Step 1} (interactions within one cube)
	
	\noindent We start with $\e i,\e j\in \e\ZZ^2\cap I_{\lambda}(\lambda z)$ for the same $z$  and $|i-j|=1$.  Let us write $I=I_{\lambda}(\lambda z)$ for short. One has to distinguish several cases:
	
	\medskip
	\noindent {\it Case 1}: If $\dist(\e i,\partial I)\geq \pi \e\theta_{\e}^{-1}$ and $\dist(\e j,\partial I)\geq \pi \e\theta_{\e}^{-1}$, then \eqref{eq:interiorpoints} yields
	\begin{equation*}
	|u_{\e}(\e i)-u_{\e}(\e j)|=0 \, .
	\end{equation*}
	
	\noindent Since for neighboring lattice points it holds that
	\begin{equation} \label{eq:dist is lipschitz}
	\frac{\theta_{\e}}{\e}|\dist(\e i,\partial I)-\dist(\e j,\partial I)|\leq \theta_{\e}\, ,
	\end{equation}
	for the remaining cases we can assume that 
	\begin{equation}\label{eq:standing assump}
	\max\{\dist(\e i,\partial I),\dist(\e j,\partial I)\}< (\pi+1)\e\theta_{\e}^{-1}.
	\end{equation}
	
	\noindent {\it Case 2}: We next analyze when $P(\e i)$ and $P(\e j)$ lie on different $1$-dimensional boundary segments $S_i\neq S_j$ of $I$. We claim that $P(\e i)$ and $P(\e j)$ are then close to a node of the lattice  $\lambda\mathbb{Z}^2$. Indeed, denote by $\Pi_{S_i}$ and $\Pi_{S_j}$ the projections onto the subspaces spanned by the segments $S_i$ and $S_j$, respectively. Since by \eqref{eq:standing assump}
	\begin{equation*}
	|P(\e i)-P(\e j)|\leq \e|i-j|+\dist(\e i,\partial I)+\dist(\e j,\partial I)\leq (2\pi+2)\e\theta_{\e}^{-1}+\e \, ,
	\end{equation*}
	for $\e$ small enough the sides $S_i$ and $S_j$ cannot be parallel. Hence the point $\lambda z_{i,j}:=\Pi_{S_i}(\Pi_{S_j}(\e i))$ belongs to $S_i\cap S_j\subset\lambda\ZZ^2$ and therefore the $1$-Lipschitz continuity of $\Pi_{S_i}$ and $\Pi_{S_j}$ combined with \eqref{eq:standing assump} implies
	\begin{align}\label{eq:ilocate0}
	\dist(P(\e i),\lambda\ZZ^2)&=\dist(\Pi_{S_i}(\e i),\lambda\ZZ^2)\leq |\e i - \Pi_{S_j}(\e i)|\nonumber
	\\
	&\leq |\e i- \e j| +|\e j - \Pi_{S_j}(\e j)|  + |\Pi_{S_j}(\e j)-\Pi_{S_j}(\e i)| \leq 2\e+(\pi+1)\e\theta_{\e}^{-1}.
	\end{align}
	Exchanging the roles of $i$ and $j$ we derive by the same argument the bound
	\begin{equation}\label{eq:jlocate0}
	\dist(P(\e j),\lambda\ZZ^2)\leq 2\e+(\pi+1)\e\theta_{\e}^{-1}.
	\end{equation}
	For $\e$ small enough both terms can be bounded by $2\pi\e\theta_{\e}^{-1}$. In particular, the distance to $\lambda\ZZ^2$ of both $P(\e i)$ and $P(\e j)$ is realized by the point $\lambda z_{i,j}$, which is an endpoint of both the sides $S_i$ and $S_j$. Hence from the definition of the boundary condition $b_{z,\e}$ in \eqref{eq:defbcases} and \eqref{eq:defvSu}, and the fact that $c_0 > 2 \pi$ we deduce that
	\begin{equation*}
	b_{z,\e}(P(\e i)) = b_{z,\e}(P(\e j)) = u_{z_{i,j}}\,.
	\end{equation*}
	Equation~\eqref{eq:dist is lipschitz}, the $1$-Lipschitz continuity of $\Geo[u_{z_{i,j}},u_z]$ and the construction of $\bar{u}_{\e}$ yield $|\bar{u}_{\e}(\e i)-\bar{u}_{\e}(\e j)|\leq \theta_{\e}$. Due to the definition of the function $\mathfrak{P}_{\e}$ this inequality implies
	\begin{equation}\label{eq:verygoodbound}
	|u_{\e}(\e i)-u_{\e}(\e j)|\leq \theta_{\e} \, .
	\end{equation}
	Moreover, note that by \eqref{eq:standing assump},  \eqref{eq:ilocate0}, and \eqref{eq:jlocate0}, for $\e$ small enough,
	\begin{equation}\label{eq:reclocate1}
	\begin{split}
	\dist(\e j,\lambda \ZZ^2) & \leq |P(\e j)-\e j|+\dist(P(\e j),\lambda \ZZ^2)\leq (2\pi+2)\e\theta_{\e}^{-1}+2\e<2c_0\e\theta_{\e}^{-1}, \\
	\dist(\e i,\lambda \ZZ^2) & \leq |P(\e i)-\e i|+\dist(P(\e i),\lambda \ZZ^2)\leq (2\pi+2)\e\theta_{\e}^{-1}+2\e<2c_0 \e\theta_{\e}^{-1}.
	\end{split}	
	\end{equation}
	These inequalities will be used in Step~3 to count how many interactions fall into Case 2. 
	{\it Case 3}: Now consider points $i$ and $j$ such that $P(\e i)=\Pi_{S_i}(\e i)$ and $P(\e j)=\Pi_{S_i}(\e j)$ and assume additionally that $\dist(P(\e j),\lambda \ZZ^2)\geq  3c_0\e\theta_{\e}^{-1}$. Since $\Pi_{S_i}$ is $1$-Lipschitz, this implies that $\dist(P(\e i),\lambda \ZZ^2)\geq  2c_0\e\theta_{\e}^{-1}$ for $\e$ small enough. Hence by the definition of the boundary condition (cf. \eqref{eq:defbcases} and \eqref{eq:defvSu})   
	\begin{equation*}
	b_{z,\e}(P(\e i))=b_{z,\e}(P(\e j))={\rm mid}(u^-_{S_i},u^+_{S_i})  \, . 
	\end{equation*}
	Using again the $1$-Lipschitz-continuity of the geodesic $\Geo[{\rm mid}(u^-_{S_i},u^+_{S_i}),u_z]$, similar to \eqref{eq:verygoodbound} we obtain that
	\begin{equation}\label{eq:verygoodbound2}
	|u_{\e}(\e i)-u_{\e}(\e j)|\leq \theta_{\e} \, .
	\end{equation} 
	However, we need to analyze more accurately which points yield a non-zero difference. On the one hand, the projection property of $P$ and the definition of $\bar{u}_{\e}$ yield the implication
	\begin{equation}\label{eq:nointeraction}
	\text{\bf if} \quad \dist(\e j,S_i)=\dist(\e j,\partial I)\geq  \geo({\rm mid}( u^-_{S_i}, u^+_{S_i}), u_z) \e\theta_{\e}^{-1}  \quad \text{\bf then} \quad \bar{u}_{\e}(\e j)=  u_z \,.
	\end{equation}
	The same conclusion holds true for $\e i$. Hence for {\it Case 3} the estimate \eqref{eq:verygoodbound2} needs to be taken into account only for $(i,j)$ such that one of them violates the condition in \eqref{eq:nointeraction}, while for other couples $\e i,\e j$ the difference vanishes as in {\it Case~1}. 
	
	On the other hand, using that $P(\e i)=\Pi_{S_i}(\e i)$ and $P(\e j)=\Pi_{S_i}(\e j)$, one can show the following implication (where $\parallel$ means parallel):
	\begin{equation}\label{eq:notallinteractions}
	(\e i-\e j) \parallel S_i =0 \quad\implies \quad\dist(\e i,\partial I)=\dist(\e j,\partial I)\quad\implies\quad|u_{\e}(\e i)-u_{\e}(\e j)|=0.
	\end{equation}
	{\it Case 4}: It remains to treat the case of points $i$ and $j$ such that $P(\e i)=\Pi_{S_i}(\e i)$ and $P(\e j)=\Pi_{S_i}(\e j)$, but $\dist(P(\e j),\lambda \ZZ^2)< 3c_0\e\theta_{\e}^{-1}$. Here we use the Lipschitz-continuity of $b_{z,\e}$ on $S_i$ and the stability estimate of Lemma \ref{l.ongeodesics}. For the latter, we need that $b_{z,\e}(P(\e i))$ and $b_{z,\e}(P(\e j))$ are sufficiently close to $u_z$. Since on $S_i$ the boundary condition $b_{z,\e}$ is defined by geodesic interpolation between the elements of the vector $v_{S_i}(u)\in(\SS^1)^3$ defined in~\eqref{eq:defvSu} and $u_z\in\{u^-_{S_i},u^+_{S_i}\}$, we know that
	\begin{align*}
	|b_{z,\e}(P(\e i))-u_z|&\leq \geo(b_{z,\e}(P(\e i)),u_z)
	\\
	&\leq \max_{r=1,3}\geo((v_{S_i}(u),e_r),{\rm mid}(u^-_{S_i},u^+_{S_i}))+\geo({\rm mid}(u^-_{S_i},u^+_{S_i}),u_z)
	\\
	&=\max_{r=1,3}\geo((v_{S_i}(u),e_r),{\rm mid}(u^-_{S_i},u^+_{S_i}))+\frac{1}{2}\geo(u^-_{S_i},u^+_{S_i})
	\\
	&\leq\max_{r=1,3}\geo((v_{S_i}(u),e_r),u_z)+\geo(u^-_{S_i},u^+_{S_i})\,.
	\end{align*}
	Recall that the first and third component of $v_{S_i}[u]$ are given by the evaluation of $u$ at the endpoints of $S_i$. Hence by the almost continuity estimate \eqref{eq:almostcontinuity} we deduce for $\lambda\ll 1$ that $|b_{z,\e}(P(\e i))-u_z|< c$, where $c$ is the constant given by Lemma \ref{l.ongeodesics}. Repeating the argument one proves the analogue estimate for $P(\e j)$. To reduce notation, we set $d_{\e,i}=\theta_{\e}\e^{-1}\dist(\e i,\partial I)$ and $d_{\e,j}=\theta_{\e}\e^{-1}\dist(\e j,\partial I)$. Then by~\eqref{eq:bLip} and the applicable Lemma \ref{l.ongeodesics} we have
	\begin{align*}
	|\bar{u}_{\e}(\e i)-\bar{u}_{\e}(\e j)|\leq&\, \big|\Geo[b_{z,\e}(P(\e i)),u_z](d_{\e,i})-\Geo[b_{z,\e}(P(\e i)),u_z](d_{\e,j})\big|
	\\
	&+\big|\Geo[b_{z,\e}(P(\e i)),u_z](d_{\e,j})-\Geo[b_{z,\e}(P(\e j)),u_z](d_{\e,j})\big|
	\\
	\leq &\,|d_{\e,i}-d_{\e,j}|+\geo(b_{z,\e}(P(\e i)),b_{z,\e}(P(\e j)))
	\\
	\leq &\,\theta_{\e}+\frac{\pi}{4}\theta_{\e}\e^{-1}|\Pi_{S_i}(\e i)-\Pi_{S_i}(\e j)|\leq 2\theta_{\e} \, .
	\end{align*}
	Hence in {\it Case 4} we deduce the slightly weaker bound 
	\begin{equation}\label{eq:worstbound}
	|u_{\e}(\e i)-u_{\e}(\e j)|\leq 2\theta_{\e} \, .
	\end{equation}
	Finally the location condition on $j$ and \eqref{eq:standing assump} imply that 
	\begin{equation}\label{eq:reclocate2}
	\dist(\e j,\lambda \ZZ^2)\leq |P(\e j)-\e j|+\dist(P(\e j),\lambda \ZZ^2)< 4c_0\e\theta_{\e}^{-1}.
	\end{equation}

	\ul{Step 2} (interactions between different cubes)
	
	\noindent Now we consider points $\e i\in I_{\lambda}(\lambda z_i)$ and $\e j\in I_{\lambda}(\lambda z_j)$ with $z_i\neq z_j$ and $|i-j|=1$. By the definition of $\bar{u}_{\e}$ via geodesics and by the $1$-Lipschitz continuity of the latter we have 
	\begin{align*}
	\begin{split}
	|\bar{u}_{\e }(\e i)  -b_{z_i,\e}(P(\e i))|&=|\bar{u}_{\e}(\e i)-\Geo[b_{z_i,\e}(P(\e i)),\bar{u}_{z_i}](0) |\leq \frac{\theta_{\e}}{\e}\dist(\e i,\partial I_{\lambda}(\lambda z_i)) \, ,
	\\
	|\bar{u}_{\e }(\e j)  -b_{z_j,\e}(P(\e j))|&=|\bar{u}_{\e}(\e j)-\Geo[b_{z_j,\e}(P(\e j)),\bar{u}_{z_j}](0) |\leq \frac{\theta_{\e}}{\e}\dist(\e j,\partial I_{\lambda}(\lambda z_j)) \, .
	\end{split}
	\end{align*}
	Note that there exists a boundary segment $S_{ij}$ of $\de I_{\lambda}(\lambda z_j)$ such that the line segment $[\e i, \e j]$ intersects $S_{ij}$ orthogonally and moreover $S_{ij}\subset \partial I_{\lambda}(\lambda z_i)$. In particular, \begin{equation*}
	\dist(\e i,\partial I_{\lambda}(\lambda z_i))+\dist(\e j,\partial I_{\lambda}(\lambda z_j))\leq \e\,.
	\end{equation*} 
	Summing the previous two estimates then yields
	\begin{equation}\label{eq:generalbounds}
	| \bar{u}_{\e}(\e i)  -b_{z_i,\e}(P(\e i))|+| \bar{u}_{\e}(\e j)  -b_{z_j,\e}(P(\e j))|\leq\theta_{\e} \,.
	\end{equation}
	We claim that either $P(\e i)\in S_{ij}$ and $P(\e j)\in S_{ij}$ or that both $P(\e i)$ and $P(\e j)$ are close to $\lambda\ZZ^2$. Indeed, first assume that $P(\e j)\notin S_{ij}$. Then there exists another facet $S_j$ of~$I_{\lambda}(\lambda z_j)$ such that $P(\e j)\in S_j$. Since $\dist(\e j,S_j)\leq \e$ and $\dist(\e j,S_{ij})\leq\e$, the sides $S_j$ and $S_{ij}$ cannot be parallel. Denoting by $\Pi_S$ the projection onto the subspace spanned by a segment $S$, we deduce that $\Pi_{S_{j}}(\Pi_{S_{ij}}(\e j))\in S_j\cap S_{ij}\subset\lambda\ZZ^2$. Hence
	\begin{equation*}
	\dist(P(\e j),\lambda\ZZ^2)=\dist(\Pi_{S_j}(\e j),\lambda\ZZ^2)\leq |\e j - \Pi_{S_{ij}}(\e j)|\leq \e \, .
	\end{equation*}
	For $P(\e i)$ we check two possibilities. First consider $P(\e i)\in S_{ij}$. Then we may assume that $P(\e j)\notin S_{ij}$ as above. From the Lipschitz-continuity of $\Pi_{S_{ij}}$ we infer
	\begin{align*}
	\dist(P(\e i),\lambda \ZZ^2) & = \dist(\Pi_{S_{ij}}(\e i),\lambda\ZZ^2)\leq |\Pi_{S_{ij}}(\e i)-\Pi_{S_{ij}}(\Pi_{S_j}(\e j))|
	\\
	& \leq |\e i-\e j|+|\e j-\Pi_{S_j}(\e j)|\leq 2\e \, .
	\end{align*}
	On the contrary, if $P(\e i)\notin S_{ij}$, then there exists a facet $S_i\neq S_{ij}$ of $I_{\lambda}(\lambda z_i)$ such that~$P(\e i)\in S_i$. Since $S_i$ and $S_{ij}$ are both sides of the cube $I_{\lambda}(\lambda z_i)$ which cannot be parallel, we deduce that $\Pi_{S_i}(\Pi_{S_{ij}}(\e i))\in {\rm span}(S_i)\cap{\rm span}(S_{ij})\subset\lambda\ZZ^2$ and thus the defining property of $S_{ij}$ implies that
	\begin{equation*}
	\dist(P(\e i),\lambda\ZZ^2)=\dist(\Pi_{S_i}(\e i),\lambda\ZZ^2)\leq |\e i - \Pi_{S_{ij}}(\e i)|\leq \e
	\, .
	\end{equation*}
	It remains to establish an estimate for $\dist(P(\e j),\lambda\ZZ^2)$ when $P(\e i)\notin S_{ij}$ and $P(\e j)\in S_{ij}$. In this case we have
	\begin{align*}
	\dist(P(\e j),\lambda\ZZ^2) & =\dist(\Pi_{S_{ij}}(\e j),\lambda\ZZ^2)\leq |\Pi_{S_{ij}}(\e j)-\Pi_{S_{ij}}(\Pi_{S_i}(\e i)|
	\\
	& \leq |\e j-\e i|+|\e i-\Pi_{S_i}(\e i)|\leq 2\e \, .
	\end{align*} 
	To sum up, we have proved the following two alternatives: 
	\begin{itemize}
		\item[(i)] $P(\e i),P(\e j)\in S_{ij}$;
		\item[(ii)]  $\max\{\dist(P(\e i),\lambda\ZZ^2),\dist(P(\e j),\lambda\ZZ^2)\}\leq 2\e$.
	\end{itemize}
	Again we treat the two cases separately.

	\medskip
	\noindent {\it Case 5}: Note that the conditions in (ii) above imply that the unique points $\lambda \bar{z}_i,\lambda \bar{z}_j\in\lambda\ZZ^2$ realizing the minimal distance satisfy
	\begin{equation*}
	|\lambda \bar{z}_i-\lambda \bar{z}_j|\leq |\lambda \bar{z}_i-P(\e i)|+|P(\e i)-\e i|+|\e i-\e j|+|\e j-P(\e j)|+|P(\e j)-\lambda \bar{z}_j|\leq 7\e,
	\end{equation*}
	so that necessarily $\bar{z}_i=\bar{z}_j$ for $\e$ small enough. In particular, the construction of the boundary condition forces $b_{z_i,\e}(P(\e i))=b_{z_j,\e}(P(\e j))=u_{\bar{z}_i}$.  From~\eqref{eq:generalbounds} we infer
	\begin{equation*}
	|\bar{u}_{\e}(\e i)-\bar{u}_{\e}(\e j)|\leq |\bar{u}_{\e}(\e i)-u_{\bar{z}_i}|+|\bar{u}_{\e}(\e j)-u_{\bar{z}_j}|\leq \theta_{\e} \, , 
	\end{equation*}
	which by the definition of $\mathfrak{P}_{\e}$ allows to conclude that
	\begin{equation}\label{eq:verygoodbound3}
	|u_{\e}(\e i)-u_{\e}(\e j)|\leq\theta_{\e} \, . 
	\end{equation}
	Furthermore we know that
	\begin{equation} \label{eq:reclocate3}
	\dist(\e j, \lambda \ZZ^2) \leq \dist(P(\e j), \lambda \ZZ^2) + \dist(\e j, \de I_{\lambda}(\lambda z)) \leq 3 \e\, . 	
	\end{equation}
	
	\noindent{\it Case 6}:  We now analyze the case $P(\e i),P(\e j)\in S_{ij}$. By the symmetric definition  $b_{z_i,\e}$ and~$b_{z_j,\e}$ coincide on $S_{ij}$. Since by assumption the segment~$[\e i, \e j]$ is orthogonal to $S_{ij}$ and $S_{ij}\subset \partial I_{\lambda}(\lambda z_i)\cap\partial I_{\lambda}(\lambda z_j)$, we have $P(\e i) =\Pi_{S_{ij}}(\e i)= \Pi_{S_{ij}}(\e j)= P(\e j)$. Hence 
	estimate \eqref{eq:generalbounds} yields
	\begin{equation*}
	|\bar{u}_{\e}(\e i)-\bar{u}_{\e}(\e j)|   = |\bar{u}_{\e}(\e i)- b_{z_i,\e}(P(\e i))| + |b_{z_j,\e}(P(\e j))-\bar{u}_{\e}(\e j)|\leq \theta_{\e}   \, ,
	\end{equation*} 
	which again can be turned into an estimate for  $u_{\e}$ that reads
	\begin{equation}\label{eq:worstbound2}
	|u_{\e}(\e i)-u_{\e}(\e j)|\leq  \theta_{\e} \, . 
	\end{equation}
	Moreover, we can give an estimate for the location of $\e j$ by
	\begin{equation}\label{eq:reclocate4}
	\dist(\e j,S_{ij})=\dist(\e j,\partial I_{\lambda}(\lambda z_j))\leq\e.
	\end{equation}
	
	\medskip
	\ul{Step 3} (energy estimates)
	
	\noindent Let us first sum up our analysis hitherto. The interactions of couples $(\e i,\e j)$ with $|i-j|=1$ and at least one point in an half-open cube $I_{\lambda}(\lambda z)$ can be grouped as follows:
	\begin{itemize}[leftmargin=*]
		\item[(1)] In {\it Case 1} it holds that $|u_{\e}(\e i)-u_{\e}(\e j)|=0$. 
		\item[(2)] In the {\it Cases 2, 4}, and {\it 5} we have for $\e$ small enough $\dist(\e j,\lambda\ZZ^2)\leq 4c_0 \e\theta_{\e}^{-1}$ (see \eqref{eq:reclocate1}, \eqref{eq:reclocate2}, and \eqref{eq:reclocate3}) and by \eqref{eq:verygoodbound}, \eqref{eq:worstbound}, and \eqref{eq:verygoodbound3} the continuity estimate
		\begin{equation*}
		|u_{\e}(\e i)-u_{\e}(\e j)|\leq 2\theta_{\e} \, .
		\end{equation*}
		\item[(3)] In {\it Cases 3} and {\it 6}, according to \eqref{eq:verygoodbound2}--\eqref{eq:notallinteractions} respectively \eqref{eq:worstbound2}--\eqref{eq:reclocate4}, there exists a side $S$ of $I_{\lambda}(\lambda z)$ such that, setting $\kappa_{S}(z):=\frac{1}{2}\geo(u^-_{S},u^+_S)=\geo({\rm mid}(u^-_{S},u^+_S),u_z)$,
		\begin{equation*}
		|u_{\e}(\e i)-u_{\e}(\e j)|  \leq \begin{cases}
		\theta_{\e} &\mbox{if $\dist(\e j,S)<\kappa_{S}(z) \frac{\e}{\theta_{\e}} + \e$ and $(i-j) \perp S$} \, ,
		\\
		0 &\mbox{otherwise.}
		\end{cases}
		\end{equation*}
	\end{itemize}
	
	\noindent Now we can estimate the discrete energy. Due to (1)-(3) above it is bounded by
	\begin{align*}
	\frac{1}{\e \theta_\e} E_\e(u_{\e})\leq& \,C\e\theta_{\e}\#\Big(\Omega_{\e}\cap\big\{\dist(\cdot,\lambda\ZZ^2)\leq C\e\theta_{\e}^{-1}\big\}\Big)
	\\
	&+\sum_{I_{\lambda}(\lambda z)\cap \Omega\neq\emptyset}\sum_{S\subset\partial I_{\lambda}(\lambda z)}\e\theta_{\e}\#\left(\e\ZZ^2\cap I_{\lambda}(\lambda z)\cap\left\{\dist(\cdot,S)< \tfrac{\e}{\theta_{\e}}\kappa_{S}(z)+2\e\right\}\right) \, ,
	\end{align*}
	where we used that each point in $\mathbb{Z}^2$ has four neighbors, but for (3) we have to count only half of the interactions. We claim that the first right hand side term vanishes when $\e\to 0$. To this end, fix a large cube $Q$ such that $\Omega \compact Q$. For $\e$ small enough we have
	\begin{equation*}
	\e^2\#\Big(\Omega_{\e}\cap\big\{\dist(\cdot,\lambda\ZZ^2)\leq C\e\theta_{\e}^{-1}\big\}\Big)\leq \sum_{x\in\lambda\ZZ^2\cap Q}|B_{2C\e\theta_{\e}^{-1}}(x)|
	\leq C|Q|\lambda^{-2}\e^2\theta_{\e}^{-2}.
	\end{equation*} 
	Inserting this estimate into the first term, we obtain
	\begin{equation}\label{eq:bound1st}
	\e\theta_{\e}\#\Big(\Omega_{\e}\cap\big\{\dist(\cdot,\lambda\ZZ^2)\leq C\e\theta_{\e}^{-1}\big\}\Big)\leq C|Q|\lambda^{-2}\frac{\e}{\theta_{\e}} \, ,
	\end{equation}
	which vanishes when $\e\to 0$ due to the assumption $\e\ll\theta_{\e}$.
	
	Now we treat the second term. Since each segment $S$ has length $\lambda$, for any fixed $\kappa>0$ it holds that
	\begin{equation*}
	\e\theta_{\e}\#\left(\e\ZZ^2\cap I_{\lambda}(\lambda z)\cap\left\{\dist(\cdot,S)< \tfrac{\e}{\theta_{\e}}\kappa+2\e\right\}\right)\leq \frac{\theta_{\e}}{\e}\Big(\lambda+\frac{2\e}{\theta_{\e}}\kappa+6\e\Big)\Big(\frac{\e}{\theta_{\e}}\kappa+4\e\Big) \, . 
	\end{equation*}
	For only finitely many cubes $I_{\lambda}(\lambda z)$ intersecting $\Omega$, we can insert this estimate with $\kappa=\kappa_{S}(z)$, pass to the limit in $\e$ and obtain by \eqref{eq:bound1st} that
	\begin{equation}\label{eq:limiteps}
		\limsup_{\e\to 0}\frac{1}{\e \theta_\e} E_\e(u_{\e})  \leq\sum_{I_{\lambda}(\lambda z)\cap \Omega\neq\emptyset}\sum_{S\subset\partial I_{\lambda}(\lambda z)} \frac{\lambda}{2}\geo(u^-_{S},u^+_{S})  
		=  \integral{J_{u}\cap \Omega^{\lambda}}{\geo(u^-,u^+)|\nu_{u}|_1}{\d\mathcal{H}^1} \, .
	\end{equation}
	This estimate agrees with \eqref{eq:toshow} and hence concludes the proof. 
\end{proof}

\subsection{Upper bound in presence of vortices}  \label{sec:with vortices}

Also in the case of vortices the construction of the recovery sequence is done by gradually simplifying the map $u \in BV(\Omega;\SS^1)$, following the main idea of Section~\ref{sec:upper bound no vortices}. However, due to the presence of the vortex measure $\mu = \sum_{h=1}^N d_h \delta_{x_h}$, in general the map $u$ cannot be approximated by smooth maps with values in $\SS^1$. This requires additional steps in the simplification of $u$. For notational convenience, set
\begin{equation*}
\mathcal{E}(\mu,u)=\integral{\Omega}{|\nabla u|_{2,1} }{\d x} + |\DD^{(c)} u|_{2,1}(\Omega) + \mathcal{J}(\mu,u;\Omega)
\end{equation*}
with $\mathcal{J}(\mu,u;\Omega)$ given by \eqref{eq:def of surface L}. 
We start with the approximation result for currents $T$ with boundary $\de T = - \mu \x \llbracket \SS^1 \rrbracket$ which appear in the definition of the limit functional. 

\begin{lemma}[Approximations creating finitely many singularities] \label{lemma:approximation with sing}
Let $\mu = \sum_{h=1}^N d_h \delta_{x_h}$ and let $T \in \D_2(\Omega \x \RR^2)$ be such that $T \in \cart(\Omega_\mu \x \SS^1)$ and  $\de T|_{\Omega \x \RR^2} = - \mu \x \llbracket \SS^1 \rrbracket$. Then there exist an open set $\tilde \Omega \Supset \Omega$ and a sequence $u_k \in C^\infty(\tilde \Omega_\mu;\SS^1) \cap W^{1,1}(\tilde \Omega; \SS^1)$ such that 
\begin{align}
u_k & \to u_T \quad \text{in } L^1(\Omega;\RR^2) \, , \label{eq:smooth converge L1}\\
G_{u_k} & \weak T \quad \text{in } \D_2(\Omega_\mu \x \RR^2) \, , \label{eq:weak conv outside points} \\
|G_{u_k}|(\Omega \x \RR^2) & \to |T|(\Omega \x \RR^2)\, , \label{eq:mass smooth}\\
\deg(u_k)(x_h) & = d_h \, , \quad \text{for } h = 1,\dots, N\,. \label{eq:degree of smooth}
\end{align}
\end{lemma}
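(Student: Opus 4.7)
The plan is to reduce the statement to the classical vortex-free setting where Theorem~\ref{thm:approximation} directly applies. The reduction is based on multiplying $T$ fibrewise by a reference vortex map whose topological singularities at $x_1,\ldots,x_N$ exactly cancel those of $u_T$, so that the quotient becomes a genuine cartesian current over $\tilde\Omega\x\SS^1$ with no vortex boundary.

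First I would fix an open set $\tilde\Omega\Supset\Omega$ and construct a reference map $V\in C^\infty(\tilde\Omega_\mu;\SS^1)\cap W^{1,1}(\tilde\Omega;\SS^1)$ with $\deg(V)(x_h)=d_h$ for all $h$---say $V(x)=\prod_{h=1}^N\big((x-x_h)/|x-x_h|\big)^{d_h}$ (using the identification $\RR^2\cong\CC$), suitably modified near $\de\tilde\Omega$. This $V$ has $|\nabla V|\in L^p_{\mathrm{loc}}$ for every $p<2$, and its graph $G_V$ lies in $\cart(\tilde\Omega_\mu\x\SS^1)$ with $\de G_V|_{\tilde\Omega\x\RR^2}=-\mu\x\llbracket\SS^1\rrbracket$. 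I would then consider the smooth diffeomorphism $\Psi(x,y)=(x,y\,\ol{V(x)})$ of $\tilde\Omega_\mu\x\SS^1$ and set $T':=\Psi_\# T$. Since $\Psi$ is the identity in the base variable and acts as a fibrewise rotation, $T'\in\cart(\tilde\Omega_\mu\x\SS^1)$ with associated BV-function $u_{T'}=u_T\,\ol V=:\tilde u$. The key point is that the boundary $\de T'|_{\tilde\Omega\x\RR^2}$ vanishes: on $\tilde\Omega_\mu\x\RR^2$ this is automatic, while near each $x_h$ the Constancy Theorem combined with the cancellation $d_h+(-d_h)=0$ of winding numbers (argued exactly as in Proposition~\ref{prop:compactness M vortices}\,(iv)) forces the boundary to be zero. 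After a local extension across $\de\Omega$, carried out away from $\supp(\mu)$ in the spirit of Lemma~\ref{lemma:extension of currents}, this promotes $T'$ to a current in $\cart(\tilde\Omega\x\SS^1)$ with $|T'|(\de\Omega\x\RR^2)=0$, and $\tilde u$ to an element of $BV(\tilde\Omega;\SS^1)$.

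The next step is to apply Theorem~\ref{thm:approximation} to $T'$ to produce smooth $\tilde u_k\in C^\infty(\tilde\Omega;\SS^1)$ with $G_{\tilde u_k}\weak T'$ in $\D_2(\tilde\Omega\x\RR^2)$ and $|G_{\tilde u_k}|(\tilde\Omega\x\RR^2)\to|T'|(\tilde\Omega\x\RR^2)$, and to take $u_k:=\tilde u_k\cdot V$ as the candidate sequence. The stated regularity $u_k\in C^\infty(\tilde\Omega_\mu;\SS^1)\cap W^{1,1}(\tilde\Omega;\SS^1)$, the convergence $u_k\to \tilde u\,V=u_T$ in $L^1(\Omega;\RR^2)$, and the degree identity $\deg(u_k)(x_h)=0+d_h=d_h$ are then immediate from $|V|\equiv 1$ and the multiplicativity of degrees under pointwise product. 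For \eqref{eq:weak conv outside points}, from the factorization $G_{u_k}=(\Psi^{-1})_\# G_{\tilde u_k}$ and the smoothness of $\Psi^{-1}$ on compact subsets of $\Omega_\mu\x\SS^1$, I would deduce $G_{u_k}\weak (\Psi^{-1})_\# T'=T$ in $\D_2(\Omega_\mu\x\RR^2)$.

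The main obstacle is expected to be the mass convergence \eqref{eq:mass smooth}, because the derivatives of $\Psi^{-1}$ blow up like $|\nabla V|$ as one approaches the singular fibres, so mass is not preserved under $(\Psi^{-1})_\#$ in a naive way. I would handle this by a two-scale argument: on each punctured set $\Omega_\mu^\rho:=\Omega\sm\bigcup_h B_\rho(x_h)$, where $\Psi^{-1}$ has uniformly bounded derivatives, Reshetnyak's continuity theorem delivers $|G_{u_k}|(\Omega_\mu^\rho\x\RR^2)\to|T|(\Omega_\mu^\rho\x\RR^2)$; near each $x_h$, the product rule $\nabla u_k=V\nabla\tilde u_k+\tilde u_k\nabla V$, the $L^1$-smallness of $\nabla V$ on $B_\rho(x_h)$, and the uniform $L^1_{\mathrm{loc}}$-boundedness of $\nabla\tilde u_k$ inherited from $|T'|(\{x_h\}\x\RR^2)=0$ and $|G_{\tilde u_k}|(\tilde\Omega\x\RR^2)\to|T'|(\tilde\Omega\x\RR^2)$, jointly yield a bound $|G_{u_k}|(B_\rho(x_h)\x\RR^2)\leq C\rho$ that is uniform in $k$. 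Letting $\rho\to 0$ and invoking Remark~\ref{rmk:punctured domains} to identify $|T|(\Omega_\mu^\rho\x\RR^2)\to|T|(\Omega\x\RR^2)$ will close the estimate and conclude the proof.
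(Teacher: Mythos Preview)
Your approach is correct in outline but takes a genuinely different route from the paper. The paper never untwists the vortices: after extending $T$ across $\de\Omega$ (via Lemma~\ref{lemma:extension of currents}, applied on an annular region $\Omega\sm\ol{\Omega'}$ containing no singularities), it simply applies the Approximation Theorem~\ref{thm:approximation} \emph{directly on the punctured domain} $\tilde\Omega_\mu$. Since the extended current $S$ lies in $\cart(\tilde\Omega_\mu\times\SS^1)$, Theorem~\ref{thm:approximation} already yields $u_k\in C^\infty(\tilde\Omega_\mu;\SS^1)$ with $|G_{u_k}|(\tilde\Omega_\mu\x\RR^2)\to|S|(\tilde\Omega_\mu\x\RR^2)$; the $W^{1,1}(\tilde\Omega)$ regularity follows because points are removable for $W^{1,1}$ in two dimensions, and the degree identity~\eqref{eq:degree of smooth} is recovered a posteriori by testing $\de G_{u_k}$ against $\zeta\omega_{\SS^1}$ as in Proposition~\ref{prop:compactness M vortices}\,(iv).

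Your untwisting strategy works too, but you trade the paper's one-line invocation of Theorem~\ref{thm:approximation} for several nontrivial verifications: that $T'=\Psi_\# T$ has finite mass (this hinges on the cancellation $y\nabla_\tau\bar V\parallel \bar V\,\iota y$, which makes $\Psi_\#$ mass-preserving on the vertical parts $T^{(c)}$, $T^{(jc)}$ and leaves only the harmless $L^1$ contribution $|\nabla V|$ on $T^{(a)}$); that $\de T'|_{\Omega\x\RR^2}=0$ (your Constancy-Theorem sketch is right, but actually computing the constant requires pulling back $\zeta\omega_{\SS^1}$ through $\Psi$ and using $\Psi^\#\omega_{\SS^1}=\omega_{\SS^1}-d\arg V$); and the mass convergence via Reshetnyak with integrand $|\Lambda^2 d\Psi^{-1}(\cdot)|$. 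One small correction: your bound near the singularities is not $|G_{u_k}|(B_\rho(x_h)\x\RR^2)\leq C\rho$ uniformly in $k$, but rather $\limsup_k|G_{u_k}|(B_\rho(x_h)\x\RR^2)\leq |T'|(\ol{B_\rho(x_h)}\x\RR^2)+C\rho$, which still tends to zero as $\rho\to 0$. The payoff of your route is that $\deg(u_k)(x_h)=d_h$ is immediate from the factorisation $u_k=\tilde u_k\cdot V$; the paper's payoff is that none of the above computations are needed.
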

\begin{proof}
Let $\Omega'$ and $\Omega''$ be open sets with Lipschitz boundary such that $\Omega' \compact \Omega'' \compact \Omega$ and $\{x_1,\dots,x_N\} \subset \Omega'$ and let us define the open set $O := \Omega \sm \ol \Omega'$. Since $\de T|_{O \x \RR^2} = 0$, we have $T \in \cart(O \x \SS^1)$. By Lemma~\ref{lemma:extension of currents} there exist an open set $\tilde O \Supset O$ and a current $\tilde T \in \cart(\tilde O \x \SS^1)$ such that $\tilde T|_{O \x \RR^2} = T|_{O \x \RR^2}$ and $|\tilde T|(\de O \x \RR^2) = 0$. In particular,
\begin{equation} \label{eq:0904191851}
\tilde T|_{(\Omega'' \sm \ol \Omega') \x \RR^2} = T |_{(\Omega'' \sm \ol \Omega') \x \RR^2} \, . 
\end{equation}
This allows us to glue together the currents $T$ and $\tilde T$. To do so, we define the set~$\tilde \Omega := \Omega \cup \tilde O$ and the current $S \in \D_2(\tilde \Omega \x \RR^2)$ as follows. Fix a cut-off function $\zeta \in C^\infty_c(\Omega'')$ such that $0 \leq \zeta   \leq 1$ and $\zeta \equiv 1$ on  a neighborhood of  $\ol \Omega'$. For every $\omega \in \D^2(\tilde \Omega \x \RR^2)$ we put $S(\omega) :=  T( \zeta \omega) + \tilde T\big((1-\zeta) \omega \big)$. Then by~\eqref{eq:0904191851} it follows that $S |_{\Omega \x \RR^2} = T$, $S|_{(\tilde \Omega \sm \ol \Omega') \x \RR^2} = \tilde T|_{(\tilde \Omega \sm \ol \Omega') \x \RR^2}$, and~$|S|(\de \Omega \x \RR^2) =0$. In particular, using the product rule for the exterior derivative, for any $1$-form $\omega \in\mathcal{D}^1(\tilde{\Omega}\x\SS^1)$ we find that
\begin{align*}
\partial S(\omega) & = T(\zeta \d \omega)+\tilde{T}((1-\zeta)\d \omega)
\\
& = \partial T(\zeta \omega)-T(\d \zeta\wedge \omega)+\partial \tilde{T}((1-\zeta)\omega)+\tilde{T}(\d\zeta\wedge\omega)
=-\mu\x\llbracket\SS^1\rrbracket(\omega)\,,
\end{align*}
where we used that $\zeta\equiv 1$ on ${\supp(\mu)}$ and $\d\zeta\wedge\omega\in \mathcal{D}^2(\Omega''\setminus\ol{\Omega}'\x\RR^2)$.
Hence $S \in \cart(\tilde \Omega_\mu \x \SS^1)$ and $\de S|_{\tilde \Omega \x \RR^2} = - \mu \x \llbracket \SS^1 \rrbracket$.

Since $S \in \cart(\tilde \Omega_\mu \x \SS^1)$, by the Approximation Theorem for Cartesian currents (Theorem~\ref{thm:approximation}) there exists a sequence $u_k \in C^\infty(\tilde \Omega_\mu; \SS^1) \cap W^{1,1}(\tilde \Omega; \SS^1)$ such that $G_{u_k} \weak S$ in $\D_2(\tilde \Omega_\mu \x \RR^2)$ and $|G_{u_k}|(\tilde \Omega_\mu \x \RR^2) \to |S|(\tilde \Omega_\mu \x \RR^2)$. In particular, we get~\eqref{eq:weak conv outside points} and thus~\eqref{eq:smooth converge L1}. Moreover
\begin{equation} \label{eq:1311181701}
|G_{u_k}|(\tilde \Omega \x \RR^2) \to |S|(\tilde \Omega \x \RR^2) \, ,
\end{equation}
since $G_{u_k}$ and $S$ do not charge the sets $\{x_h\} \x \RR^2$ (being i.m.\ rectifiable 2-currents concentrated on a subset of $\RR^2 \x \SS^1$, see also Remark~\ref{rmk:punctured domains}). 

Thanks to the convergence in~\eqref{eq:weak conv outside points} we can prove~\eqref{eq:degree of smooth}. Indeed, let $h = 1,\dots,N$. For~$\rho > 0$ small enough we have (e.g., by~\cite[Section 6, Proposition 1]{Gia-Mod-Sou-S1})
\begin{equation*}
\de G_{u_k}|_{B_\rho(x_h) \x \RR^2} = - \deg(u_k)(x_h) \delta_{x_h} \x \llbracket \SS^1 \rrbracket \, .
\end{equation*}
Fix a cut-off function $\zeta \in C^\infty_c(B_\rho(x_h))$ such that $\zeta \equiv 1$ on $B_{\rho/2}(x_h)$ and define the 1-form $\omega = \zeta \omega_{\SS^1}$, $\omega_{\SS^1}$ being the 0-homogeneous extension of the volume form of $\SS^1$ to $\RR^2 \sm \{0\}$. Observing that $ \d \omega \in \D^2\big((B_\rho(x_h)\sm \{x_h\}) \x \RR^2\big)$, the convergence in~\eqref{eq:weak conv outside points} implies that 
\begin{equation*}
- \deg(u_k)(x_h) = \de G_{u_k} (\omega) = G_{u_k}(\! \d \omega) \to T(\! \d \omega) = \de T(\omega) = - d_h   \, .
\end{equation*}
Then $\deg(u_k)(x_h)$ is a sequence of integer numbers which converges to the integer number~$d_h$. Thus for $k$ large enough $\deg(u_k)(x_h) = d_h$. 

To conclude, we observe that~\eqref{eq:1311181701} and $|S|(\de \Omega \x \RR^2) = 0$ imply~\eqref{eq:mass smooth}.
\end{proof}
The next result shows how to reduce the analysis to singularities with degree $\pm 1$.

\begin{lemma}[Splitting of the degree] \label{lemma:splitting degree}
Let $V:= \{x_1,\dots, x_N\} \subset \Omega$ and let $u \in C^\infty(\Omega \sm V;\SS^1) \cap W^{1,1}(\Omega;\SS^1)$ be such that $\deg(u)(x_h) \neq 0$ for $h=1,\dots,N$. Then for $0<\tau\ll 1$ there exist a set $V_{\tau} = \{x_1^{\tau}, \dots, x_{N_{\tau}}^{\tau}\} \subset \Omega$ and a function $u^{\tau} \in C^\infty(\Omega \sm V_{\tau};\SS^1) \cap W^{1,1}(\Omega;\SS^1)$ such that $u^{\tau} \to u$ strongly in $L^1(\Omega;\RR^2)$,
$|\deg(u^{\tau})(x^{\tau}_h)| = 1$ for $h=1,\ldots,N_{\tau}$ and
\begin{align} 
	\lim_{{\tau}\to 0}  \int_{\Omega}|\nabla u^{\tau}|_{2,1}\,\mathrm{d}x &= \int_{\Omega}|\nabla u|_{2,1}\,\mathrm{d}x \, , \label{eq:convergence less degree} \\
	N_{\tau}  = \sum_{h=1}^{N_{\tau}}|\deg(u^{\tau})(x^{\tau}_h)| &= \sum_{h=1}^{N}|\deg(u)({x}_h)| \, . \label{eq:degree splitted}
\end{align}
Moreover, defining the measures $\mu^{\tau}=\sum_{h=1}^{N_{\tau}}\deg(u^{\tau})(x_h^{\tau})\delta_{x_h^{\tau}}$, we have that \begin{equation*}
\mu^{\tau}\flat\sum_{h=1}^{N}\deg(u)(x_h)\delta_{x_h}\quad\text{ as }\tau\to 0\,.
\end{equation*}
Finally, if $u \in C^\infty(\tilde \Omega \sm V;\SS^1) \cap W^{1,1}(\tilde \Omega;\SS^1)$ for some $\tilde \Omega \Supset \Omega$, then one can additionally choose $u^{\tau} \in C^\infty(\tilde \Omega \sm V;\SS^1) \cap W^{1,1}(\tilde \Omega;\SS^1)$.
\end{lemma}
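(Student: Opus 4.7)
The construction will be local: I leave $u^\tau=u$ outside a disjoint union of balls $B_{\rho}(x_h)\compact\Omega$ and, inside each $B_\rho(x_h)$, I split the degree-$d_h$ singularity into $|d_h|$ singularities of degree $\mathrm{sign}(d_h)$ placed at distinct points $x_h^{\tau,1},\ldots,x_h^{\tau,|d_h|}\in B_\tau(x_h)$. This design gives $N_\tau=\sum_h|d_h|$ directly, i.e.,~\eqref{eq:degree splitted}. For the flat convergence $\mu^\tau\flat\mu$ I would test against an arbitrary $1$-Lipschitz function $\phi$ supported in $\Omega$: since $|x_h^{\tau,j}-x_h|\leq\tau$,
\[
|\langle\mu^\tau-\mu,\phi\rangle|\leq\sum_h\sum_{j=1}^{|d_h|}|\phi(x_h^{\tau,j})-\phi(x_h)|\leq\Bigl(\sum_h|d_h|\Bigr)\tau\to 0.
\]
The convergence $u^\tau\to u$ in $L^1(\Omega;\RR^2)$ will follow from dominated convergence, as $u^\tau=u$ outside the balls and $|u^\tau|,|u|\leq 1$.

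\textbf{Local construction.} Fix $h$, set $x_0=x_h$, $d=d_h$ and identify $\RR^2\cong\CC$. Since $u\in C^\infty(B_{\rho}(x_0)\sm\{x_0\};\SS^1)$ has degree $d$ at $x_0$, the quotient $\psi(x):=u(x)\bigl(\tfrac{x-x_0}{|x-x_0|}\bigr)^{-d}$ is $\SS^1$-valued, smooth, and has vanishing degree on $B_\rho(x_0)\sm\{x_0\}$, hence extends smoothly to $B_\rho(x_0)$; therefore $u=\psi\cdot v^0$ on $B_\rho(x_0)$ with $v^0(x):=(\tfrac{x-x_0}{|x-x_0|})^d$. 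For $\tau\ll\rho$ I would choose distinct points $x_h^{\tau,j}\in B_\tau(x_0)$ (e.g.\ equally spaced on $\de B_{\tau/2}(x_0)$) and set $v^\tau(x):=\prod_{j=1}^{|d|}\bigl(\tfrac{x-x_h^{\tau,j}}{|x-x_h^{\tau,j}|}\bigr)^{\mathrm{sign}(d)}$. I define $u^\tau:=\psi\cdot v^\tau$ on $B_{\rho/2}(x_0)$, $u^\tau:=u$ outside $B_\rho(x_0)$, and interpolate in the annulus $A:=B_\rho(x_0)\sm\ol{B_{\rho/2}(x_0)}$ by $u^\tau:=\psi\cdot w^\tau$ with
\[
w^\tau:=\frac{\chi\, v^0+(1-\chi)\, v^\tau}{|\chi\, v^0+(1-\chi)\, v^\tau|},
\]
where $\chi$ is a radial cutoff, $\chi\equiv 0$ on $\de B_{\rho/2}(x_0)$ and $\chi\equiv 1$ on $\de B_\rho(x_0)$. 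Since $v^\tau\to v^0$ in $C^\infty(\ol A)$ as $\tau\to 0$, the denominator stays bounded away from zero for small $\tau$; thus $w^\tau$ is smooth and $\SS^1$-valued, and $u^\tau$ glues smoothly to $u$ across $\de B_\rho(x_0)$. By construction $\deg(u^\tau)(x_h^{\tau,j})=\mathrm{sign}(d_h)$ because $\psi$ is smooth and $\SS^1$-valued near $x_0$. The last assertion of the lemma is automatic: the modification only affects $\bigcup_h B_\rho(x_h)\compact\Omega$, so if $u\in C^\infty(\tilde\Omega\sm V;\SS^1)$, then $u^\tau=u$ on $\tilde\Omega\sm\Omega$ retains the same smoothness.

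\textbf{Energy convergence.} The delicate point is~\eqref{eq:convergence less degree}. Outside $\bigcup_h B_\rho(x_h)$ the two energies coincide, so it suffices to show $\int_{B_\rho(x_0)}|\nabla u^\tau|_{2,1}\d x\to\int_{B_\rho(x_0)}|\nabla u|_{2,1}\d x$ for each $h$. I split $B_\rho(x_0)=B_\sigma(x_0)\cup(B_\rho(x_0)\sm B_\sigma(x_0))$ with $\tau\ll\sigma\ll\rho$. On the outer piece every new singularity lies in $B_\tau(x_0)\compact B_\sigma(x_0)$, so $u^\tau\to u$ and $\nabla u^\tau\to\nabla u$ uniformly there and the integrals converge. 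On $B_\sigma(x_0)$ I compute in polar coordinates $|\nabla v^0|_{2,1}=\tfrac{|d|}{r}(|\sin(d\vartheta)|+|\cos(d\vartheta)|)$, which gives $\int_{B_\sigma(x_0)}|\nabla v^0|_{2,1}\d x=8|d|\sigma$; writing $v^\tau=\exp(\iota\sum_j\theta_j^\tau)$ with $\theta_j^\tau(x):=\arg(x-x_h^{\tau,j})$ produces the pointwise bound $|\nabla v^\tau|_{2,1}\leq\sqrt{2}\sum_j|x-x_h^{\tau,j}|^{-1}$, and the inclusion $B_\sigma(x_0)\subset B_{2\sigma}(x_h^{\tau,j})$ yields $\int_{B_\sigma(x_0)}|\nabla v^\tau|_{2,1}\d x\leq C|d|\sigma$ uniformly in $\tau<\sigma/2$. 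Multiplying by the smooth, bounded factor $\psi$ upgrades both bounds to the corresponding energies of $u$ and $u^\tau$ on $B_\sigma(x_0)$. Sending first $\tau\to 0$ and then $\sigma\to 0$ yields~\eqref{eq:convergence less degree}. The hard part of the whole argument is precisely this uniform-in-$\tau$ control of the anisotropic $|\cdot|_{2,1}$ energy near the vortex cores; it is delivered by the observation that the phase of a product of $\SS^1$-valued factors is the sum of individual phases, reducing the estimate to the integral of the locally integrable kernel $r^{-1}$.
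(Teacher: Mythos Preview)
Your factorization $u=\psi\cdot v^0$ with $\psi(x):=u(x)\bigl(\tfrac{x-x_0}{|x-x_0|}\bigr)^{-d}$ is natural, but the assertion that $\psi$ ``extends smoothly to $B_\rho(x_0)$'' is false under the hypotheses of the lemma: degree zero together with $C^\infty$ on the punctured ball and $W^{1,1}$-regularity does not remove the singularity. For a counterexample take $u(x)=\exp\bigl(\iota\,|x-x_0|^{\alpha}\bigr)\bigl(\tfrac{x-x_0}{|x-x_0|}\bigr)^{d}$ with $\alpha\in(-1,0)$ and $d\neq 0$; one checks $|\nabla u|\lesssim |x-x_0|^{\alpha-1}+|d|\,|x-x_0|^{-1}\in L^1(B_\rho)$, yet $\psi(x)=\exp(\iota\,|x-x_0|^{\alpha})$ has no limit at $x_0$. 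Since you place all the new vortices on $\de B_{\tau/2}(x_0)$, the factor $v^\tau$ is smooth near $x_0$ and $u^\tau=\psi\cdot v^\tau$ inherits a genuine non-smooth point at $x_0$ of degree zero; thus $u^\tau\notin C^\infty(\Omega\sm V_\tau;\SS^1)$ for your $V_\tau$, and enlarging $V_\tau$ to include $x_0$ would violate the requirement $|\deg(u^\tau)(x_h^\tau)|=1$. (The same issue contaminates the energy step where you call $\psi$ a ``smooth, bounded factor'' on $B_\sigma$, although that part is easily patched via $|\nabla(\psi v)|_{2,1}\le |\nabla\psi|_{2,1}+|\nabla v|_{2,1}$ and $\nabla\psi\in L^1$.)

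The paper avoids precisely this trap: rather than stripping the full degree from $x_0$ and redistributing it to fresh points, it peels off one unit at a time via $u^\tau(x)=u(x)\odot\bigl(\tfrac{x-x_0}{|x-x_0|}\bigr)^{-1}\odot\tfrac{x-x_0-\tau e_1}{|x-x_0-\tau e_1|}$ (for $d>0$) and then iterates with $\tau/2,\tau/4,\dots$; the original point $x_0$ retains a nonzero degree throughout, so no removable-singularity argument is ever needed, and~\eqref{eq:convergence less degree} follows from the product rule plus Vitali's theorem with the equiintegrable majorant $|\nabla u|+2|x-x_0|^{-1}+2|x-x_0-\tau e_1|^{-1}$. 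Your construction can be repaired in the same spirit by placing one of the $x_h^{\tau,j}$ at $x_0$ itself: then $\deg(u^\tau)(x_0)=\mathrm{sign}(d)$ and the (possibly non-removable) singularity of $\psi$ is legitimately absorbed into $V_\tau$, after which your annulus interpolation, energy estimate, and flat-convergence argument go through essentially unchanged.
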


\begin{remark} \label{rmk:also with zero degree}
	In this section we shall apply Lemma~\ref{lemma:splitting degree} to functions given by Lemma~\ref{lemma:approximation with sing}, cf.\ \eqref{eq:degree of smooth}. 
	In~\cite{Cic-Orl-Ruf} we have to consider $u \in C^\infty(\Omega \sm V;\SS^1) \cap W^{1,1}(\Omega;\SS^1)$ without assuming that $\deg(u)(x_h) \neq 0$ for every $h=1,\dots,N$. In that case, the statement of the lemma holds true, but~\eqref{eq:degree splitted} needs to be adapted to 
	\begin{equation*}
		N_{\tau} = \sum_{h=1}^{N}|\deg(u)({x}_h)| + 2 \# \{ x_h \ : \ \deg(u)({x}_h) = 0 \}\, .
	\end{equation*}
	The argument in the proof remains unchanged. 
\end{remark}

\begin{proof}[Proof of Lemma~\ref{lemma:splitting degree}]
	
	\noindent Via an iterative construction we create $|\deg(u)(x_h)|$ different singularities out of one singularity whenever $|\deg(u)(x_h)|>1$ in such a way that the new function is close in energy. To reduce notation, we assume that $x_1=0$ and $\deg(u)(x_1)>1$ (the case of a negative degree less than $-1$ can be treated similarly). We equip $\RR^2$ with the complex product, which we denote by $\odot$. Given $0<\tau\ll 1$ we set $u^{\tau}\in C^{\infty}(\Omega\setminus (V\cup \{\tau e_1\});\SS^1)$ as 
	\begin{equation*}
	u^{\tau}(x)= u(x)\odot\big(\tfrac{x}{|x|}\big)^{-1}\odot\tfrac{(x-\tau e_1)}{|x-\tau e_1|} \, .
	\end{equation*}
	Defined as above, it follows that $u^{\tau}\to u$ in $L^1(\Omega;\RR^2)$ by dominated convergence. Next, we estimate its anisotropic gradient norm. By the product rule, for $i=1,2$ we obtain
	\begin{align*}
	\partial_iu^{\tau}(x)& = \partial_i u (x) \odot\big(\tfrac{x}{|x|}\big)^{-1}\odot\tfrac{(x-\tau e_1)}{|x-\tau e_1|}
	\\
	&\quad +u(x)\odot\left\{\partial_i\big(\tfrac{x}{|x|}\big)^{-1}\odot\tfrac{(x-\tau e_1)}{|x-\tau e_1|}+ \big(\tfrac{x}{|x|}\big)^{-1}\odot\partial_i\tfrac{(x-\tau e_1)}{|x-\tau e_1|}\right\} \, .
	\end{align*}
	A straightforward computation shows that, for a.e.\ $x\in\Omega$, we have
	\begin{align*}
	\lim_{\tau\to 0}\partial_i u^{\tau}(x) & = \partial_i u(x) +u\odot\left\{\partial_i\big(\tfrac{x}{|x|}\big)^{-1}\odot\tfrac{x}{|x|}+ \big(\tfrac{x}{|x|}\big)^{-1}\odot\partial_i\tfrac{x}{|x|}\right\}
	\\
	& = \partial_i u(x) +u\odot\partial_i\left\{\big(\tfrac{x}{|x|}\big)^{-1}\odot\tfrac{x}{|x|}\right\} = \partial_i u(x) \, .
	\end{align*} 
	In order to use dominated convergence, we observe that $\big(\tfrac{x}{|x|}\big)^{-1}=\tfrac{1}{|x|}(x_1,-x_2)$, so that
	\begin{equation*}
	|\partial_iu^{\tau}(x)|\leq |\partial_i u(x)|+\left|\partial_i\big(\tfrac{x}{|x|}\big)^{-1}\right|+\left|\partial_i\big(\tfrac{x-\tau e_1}{|x-\tau e_1|}\big)\right|\leq |\partial_i u(x)|+\tfrac{2}{|x|}+\tfrac{2}{|x-\tau|} \, .
	\end{equation*}
	The right-hand side is equi-integrable on $\Omega\subset\RR^2$, so that we conclude
	\begin{equation*}
	\lim_{\tau\to 0}\int_{\Omega}|\nabla u^{\tau}|_{2,1}\,\mathrm{d}x=\int_{\Omega}|\nabla u|_{2,1}\,\mathrm{d}x \, .
	\end{equation*}
	Finally, we need to compute the degree of $u^{\tau}$. Let us introduce the complex-valued functions
	\begin{equation*}
	\tilde{u}(x)=u_1(x)+\iota u_2(x),\quad{f}(x)=\tfrac{1}{|x|}(x_1-\iota x_2),\quad{g}(x)=\tfrac{1}{|x-\tau e_1|}\big((x_1-\tau)+\iota x_2\big) \, .
	\end{equation*}
	In $\RR^2$ the degree around a point $\ol x$ can be expressed via the winding number, that means
	\begin{align*}
	(2\pi \iota)\deg(u^{\tau})(\ol x)=&\int_{\partial B_r(\ol x)}\frac{\d(\tilde{u}fg)}{\tilde{u}fg}=\int_{\partial B_r(\ol x)}\frac{(\! \d\tilde{u}) fg+\tilde{u}(\! \d f) g+\tilde{u}f (\!\d g)}{\tilde{u}fg}
	\\
	=&2\pi\iota \left(\deg(u)(\ol x)-\delta_0(\ol x)+\delta_{\tau e_1}(\ol x)\right) \, .
	\end{align*}
	We deduce that the degree of $u^{\tau}$ is of the form (recalling that $x_1=0$)
	\begin{equation*}
	\deg(u^{\tau})(\ol x)=\begin{cases}
	\deg(u)(x_1)-1 &\mbox{if } \ol x=x_1 \, ,
	\\
	1 &\mbox{if } \ol x=x_1+\tau e_1 \, ,
	\\
	\deg(u)(\ol x) &\mbox{otherwise} \, ,
	\end{cases}
	\end{equation*}
	where for the second equality we used that $\deg(u)(x_1+\tau e_1)=0$ due to the local smoothness of $u$ around $x_1+\tau e_1$ (see \cite[Corollary 8]{Br-Ni}). Repeating this construction (with $\tau/2$, $\tau/4$, and so on) we find a finite set $V_{\tau}=\{{x}^{\tau}_1,\dots,{x}^{\tau}_{N_{\tau}}\}$ and a sequence $u^{\tau}\in C^{\infty}(\Omega\setminus V_{\tau};\SS^1)\cap W^{1,1}(\Omega;\SS^1)$ such that $u^{\tau}\to u$ in $L^1(\Omega;\RR^2)$,  $\deg(u^{\tau})({x}^{\tau}_h)\in\{\pm 1\}$ and~\eqref{eq:convergence less degree}--\eqref{eq:degree splitted} hold true. The claim on the flat convergence follows by the construction.
\end{proof}
In the next lemma we move the singularities onto a lattice $\lambda_n\ZZ^2$ which makes them compatible with a piecewise constant approximation $u_n\in\mathcal{PC}_{\lambda_n}(\SS^1)$.  
\begin{lemma}[Moving singularities on a lattice] \label{lemma:correct points}
Let $V:= \{x_1, \dots, x_N\} \subset \Omega$, let $\tilde \Omega \Supset \Omega$, and let  $u \in C^\infty(\tilde \Omega \sm V;\SS^1) \cap W^{1,1}(\tilde \Omega;\SS^1)$. Then for every $\lambda > 0$ there exist a set $V_\lambda = \{x_1^\lambda,\dots,x_N^\lambda\} \subset \lambda \ZZ^2 \cap \Omega$ and a map $u^\lambda \in C^\infty(\tilde \Omega \sm V_\lambda;\SS^1) \cap W^{1,1}(\tilde \Omega;\SS^1)$ such that $u^\lambda \to u$ strongly in $W^{1,1}(\tilde \Omega;\SS^1)$ as $\lambda \to 0$. Moreover, $\deg(u^{\lambda})(x_h^{\lambda}) = \deg(u)(x_h)$ for $h = 1, \dots, N$ for $\lambda$ small enough and, defining the measures $\mu^{\lambda}=\sum_{h=1}^N\deg(u^{\lambda})(x_h^{\lambda})\delta_{x_h^{\lambda}}$, it holds that $\mu^{\lambda}\flat \sum_{h=1}^N\deg(u)(x_h)\delta_{x_h}$.
\end{lemma}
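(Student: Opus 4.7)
The plan is to build $u^{\lambda}$ as a pullback $u^{\lambda}:=u\circ\Phi_{\lambda}$, where $\Phi_{\lambda}\colon\tilde\Omega\to\tilde\Omega$ is a smooth diffeomorphism that drags each closest lattice point $x_h^{\lambda}\in\lambda\ZZ^2$ onto the original singularity $x_h$. First, I would pick $x_h^{\lambda}\in\lambda\ZZ^2$ minimizing the distance to $x_h$, so that $|x_h-x_h^{\lambda}|\leq\tfrac{\sqrt 2}{2}\lambda$; for $\lambda$ small enough these points are pairwise distinct and lie in $\Omega$. Next, choose $\rho>0$ such that the balls $\overline{B_{2\rho}(x_h)}$ are pairwise disjoint and contained in $\tilde\Omega$, fix $\eta\in C^\infty_c(B_{2\rho}(0);[0,1])$ with $\eta\equiv 1$ on $B_{\rho}(0)$, and set
\begin{equation*}
\Phi_\lambda(x) := x + \sum_{h=1}^N \eta(x-x_h^\lambda)(x_h-x_h^\lambda)\,.
\end{equation*}
Since $\|D\Phi_\lambda-I\|_\infty\leq C\|\nabla\eta\|_\infty\lambda\to 0$, for $\lambda$ small enough $\Phi_\lambda$ is an orientation-preserving $C^\infty$-diffeomorphism of $\tilde\Omega$, equal to the identity outside $\bigcup_h B_{2\rho}(x_h^\lambda)$, and satisfying $\Phi_\lambda(x_h^\lambda)=x_h$. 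Then $u^\lambda:=u\circ\Phi_\lambda$ takes values in $\SS^1$, is smooth on $\tilde\Omega\setminus V_\lambda$, and by testing against small circles using the winding number definition of degree one obtains $\deg(u^\lambda)(x_h^\lambda)=\deg(u)(x_h)$, because pre-composing with an orientation-preserving diffeomorphism mapping $x_h^\lambda$ to $x_h$ preserves the local winding number.

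The central estimate is the $W^{1,1}$-convergence $\nabla u^\lambda\to \nabla u$ in $L^1(\tilde\Omega;\RR^{2\times 2})$. I would decompose
\begin{equation*}
\nabla u^\lambda(x)-\nabla u(x)=\bigl(\nabla u(\Phi_\lambda(x))-\nabla u(x)\bigr)D\Phi_\lambda(x)+\nabla u(x)\bigl(D\Phi_\lambda(x)-I\bigr).
\end{equation*}
The second summand is pointwise bounded by $\|D\Phi_\lambda-I\|_\infty|\nabla u(x)|=o(1)|\nabla u(x)|$ and hence vanishes in $L^1$ because $\nabla u\in L^1(\tilde\Omega)$. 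For the first summand I would exploit the continuity of translations in $L^1$: given $\eta>0$ approximate $\nabla u$ by a compactly supported continuous $g$ with $\|\nabla u-g\|_{L^1}<\eta$, so that
\begin{equation*}
\int_{\tilde\Omega}|\nabla u\circ\Phi_\lambda-\nabla u|\d x\leq \int_{\tilde\Omega}|\nabla u\circ\Phi_\lambda-g\circ\Phi_\lambda|\d x+\int_{\tilde\Omega}|g\circ\Phi_\lambda-g|\d x+\eta\,.
\end{equation*}
The middle term vanishes as $\lambda\to 0$ by uniform continuity of $g$ together with $\|\Phi_\lambda-\mathrm{id}\|_\infty\to 0$, while the first term is controlled by $C\eta$ via the change of variables $y=\Phi_\lambda(x)$, whose Jacobian is uniformly bounded above and below thanks to $D\Phi_\lambda\to I$ uniformly. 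The $L^1$-convergence $u^\lambda\to u$ follows immediately from dominated convergence, since $|u^\lambda|\leq 1$ and $u^\lambda(x)\to u(x)$ for every $x\notin V$ by continuity of $u$ on $\tilde\Omega\setminus V$.

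Finally, the flat convergence of $\mu^\lambda$ is elementary: for every Lipschitz test function $\varphi$ with $\|\varphi\|_\infty,\mathrm{Lip}(\varphi)\leq 1$,
\begin{equation*}
\bigl|\langle\mu^\lambda-\mu,\varphi\rangle\bigr|=\Bigl|\sum_{h=1}^N\deg(u)(x_h)\bigl(\varphi(x_h^\lambda)-\varphi(x_h)\bigr)\Bigr|\leq \sum_{h=1}^N|\deg(u)(x_h)|\,|x_h^\lambda-x_h|\xrightarrow[\lambda\to 0]{}0\,.
\end{equation*}
The only technical point that requires care is ensuring that $\Phi_\lambda$ remains a global diffeomorphism of $\tilde\Omega$ with uniformly controlled Jacobian while still displacing each singularity by the prescribed vector; the localized cut-off construction above handles this as soon as $\lambda$ is small enough that $\max_h|x_h^\lambda-x_h|<\rho$ and that the balls $B_{2\rho}(x_h^\lambda)$ are pairwise disjoint and contained in $\tilde\Omega$.
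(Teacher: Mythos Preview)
Your proposal is correct and follows exactly the approach sketched in the paper, which simply sets $u^\lambda:=u\circ\psi_\lambda$ for a diffeomorphism $\psi_\lambda$ of $\tilde\Omega$ sending $x_h^\lambda$ to $x_h$, cites a reference for the construction, and omits the remaining details as standard. You have supplied precisely those details---the explicit cut-off diffeomorphism, the chain-rule decomposition for $W^{1,1}$-convergence, and the degree and flat-convergence checks---all of which are sound.
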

\begin{proof}
	We set $u^\lambda := u \circ \psi_\lambda$, where $\psi_\lambda \colon \tilde \Omega \to \tilde \Omega$ is a suitable diffeomorphism with $\psi_\lambda(x^\lambda_h) = x_h$ (see, e.g., \cite[p.\ 210]{Guo-Ran-Jos} for a construction). The details are omitted as they are standard.
\end{proof}
We modify the target sequence one last time  close to the singularities.
\begin{lemma}[Modification near a singularity] \label{lemma:modifications near sing}
Let $\rho \leq 1$ and let $u \in C^\infty(B_\rho \sm \{0\};\SS^1) \cap W^{1,1}(B_\rho;\SS^1)$ with $|\deg(u)(0)|= 1$. Then for every $\sigma > 0$ there exist $\tilde u \in C^\infty(B_\rho \sm \{0\};\SS^1)\cap W^{1,1}(B_\rho;\SS^1)$ and a radius $\eta_0 \in (0,\rho^2)$ such that 
\begin{itemize}
\item[i)] $\tilde u(x) = \frac{(x_1, \pm x_2)}{|x|}$ for every $x \in \ol B_{\eta_0} \sm \{0\}$;
\item[ii)] $\tilde u(x) = u(x)$ for every $x \in B_\rho \sm \ol B_{\sqrt{\eta_0}}$;
\item[iii)] it holds that
\begin{equation*}
\integral{B_\rho}{|\nabla \tilde u|_{2,1}}{\d x} \leq \integral{B_\rho}{|\nabla u|_{2,1}}{\d x} + \sigma \, .
\end{equation*}
\end{itemize}
\end{lemma}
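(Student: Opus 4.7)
Let $d := \deg(u)(0) \in \{-1,+1\}$ and set $V(x) := (x_1, d\cdot x_2)/|x|$. The strategy is to factor $u = V\cdot v$ via complex multiplication; the factor $v := u\cdot\bar V$ then belongs to $W^{1,1}(B_\rho;\SS^1) \cap C^\infty(B_\rho\sm\{0\};\SS^1)$, has degree $0$ at the origin, and hence admits a smooth lifting $\psi\colon B_\rho\sm\{0\}\to\RR$ with $v = \exp(\iota\psi)$, unique up to an integer multiple of $2\pi$. Given $\sigma > 0$, I would choose $\eta_0$ small and set $\tilde u := V$ on $B_{\eta_0}$, $\tilde u := u$ on $B_\rho\sm B_{\sqrt{\eta_0}}$, and perform a localized transition on the annulus $A := B_{\sqrt{\eta_0}}\sm\overline{B_{\eta_0}}$, defining $\tilde u := V \cdot \exp(\iota\beta(|x|)\psi(x))$ there. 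Rather than a linear or logarithmic interpolation, the cutoff $\beta\colon[\eta_0,\sqrt{\eta_0}]\to[0,1]$ will be chosen smooth, equal to $0$ flatly on $[\eta_0, r^*-\delta]$, equal to $1$ flatly on $[r^*+\delta, \sqrt{\eta_0}]$, and transitioning on the narrow interval $[r^*-\delta, r^*+\delta]$, where the radius $r^* \in (\eta_0, \sqrt{\eta_0})$ and thickness $\delta>0$ are chosen below. The flatness of $\beta$ at the endpoints yields $\tilde u \in C^\infty(B_\rho\sm\{0\};\SS^1)\cap W^{1,1}(B_\rho;\SS^1)$, while properties (i) and (ii) are immediate.

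For the energy bound (iii), the contribution from $B_{\eta_0}$ equals $\int_{B_{\eta_0}}|\nabla V|_{2,1}\,\mathrm{d}x = O(\eta_0)$, the contribution from $B_\rho\sm B_{\sqrt{\eta_0}}$ equals $\int_{B_\rho\sm B_{\sqrt{\eta_0}}} |\nabla u|_{2,1}\,\mathrm{d}x$, and on $A$ the splitting induced by $\beta$ further shows that on the outer band $\{r^*+\delta < |x| < \sqrt{\eta_0}\}$ one has $\tilde u = u$ (since $\beta \equiv 1$), while on the inner band $\{\eta_0 < |x| < r^*-\delta\}$ one has $\tilde u = V$ with contribution $O(r^*) = O(\sqrt{\eta_0})$. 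The only delicate contribution comes from the transition strip $A_\delta := \{|x|\in[r^*-\delta, r^*+\delta]\}$. Using $|\nabla \tilde u|_{2,1} \leq |\nabla V|_{2,1} + |\nabla \tilde\psi|_1$ with $\tilde\psi(x)=\beta(|x|)\psi(x)$ and $|\nabla \tilde\psi| \leq |\beta'(|x|)|\,|\psi| + |\nabla\psi|$, the task reduces to estimating $\int_{A_\delta}|\beta'|\,|\psi|\,\mathrm{d}x$, since $\int_{A_\delta}|\nabla\psi| \leq \int_A|\nabla\psi| \leq \int_A |\nabla u|_{2,1} + \int_A |\nabla V|_{2,1}$ vanishes as $\eta_0\to 0$ by the absolute continuity of the integral, and $\int_{A_\delta}|\nabla V| = O(\delta)$.

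The main technical challenge is controlling $\int_{A_\delta}|\beta'|\,|\psi|\,\mathrm{d}x$ when $\psi$ may be unbounded near the origin; the point of concentrating $\beta'$ is precisely to reduce this to a single-circle estimate at a well-chosen radius. By Fubini, $\int_{\eta_0}^{\sqrt{\eta_0}} \int_{\de B_r}|\partial_\tau\psi|\,\mathrm{d}\H^1\,\mathrm{d}r \leq \int_A |\nabla\psi|\,\mathrm{d}x$, so by Markov's inequality the set of $r^*\in(\eta_0,\sqrt{\eta_0})$ with $\int_{\de B_{r^*}}|\partial_\tau\psi|\,\mathrm{d}\H^1 \leq 2\int_A|\nabla\psi|\,\mathrm{d}x/(\sqrt{\eta_0} - \eta_0)$ has positive measure; pick such an $r^*$ and fix the $2\pi$-branch of $\psi$ so that its angular mean on $\de B_{r^*}$ vanishes. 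Then the Poincaré--Wirtinger inequality on the circle gives $\int_{\de B_{r^*}}|\psi|\,\mathrm{d}\H^1 \leq \pi r^*\int_{\de B_{r^*}}|\partial_\tau\psi|\,\mathrm{d}\H^1$. Combined with $|\beta'|\leq C/\delta$ on $A_\delta$, Fubini on $A_\delta$, and the continuity of $r\mapsto\int_{\de B_r}|\psi|\,\mathrm{d}\H^1$ (which allows the thin-strip integral to be approximated by $2\delta$ times its value at $r^*$), this yields $\int_{A_\delta}|\beta'|\,|\psi|\,\mathrm{d}x \leq C \pi r^*\int_{\de B_{r^*}}|\partial_\tau\psi|\,\mathrm{d}\H^1 \leq C\int_A|\nabla\psi|\,\mathrm{d}x$. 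Putting all pieces together, $\int_{B_\rho}|\nabla\tilde u|_{2,1}\,\mathrm{d}x - \int_{B_\rho}|\nabla u|_{2,1}\,\mathrm{d}x \leq O(\sqrt{\eta_0}) + O(\delta) + C\int_A|\nabla\psi|\,\mathrm{d}x$, which is below $\sigma$ once $\eta_0$ (and then $\delta$) are chosen small enough.
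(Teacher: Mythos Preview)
Your argument is essentially correct and takes a genuinely different route from the paper. One small slip: the lifting $\psi$ is determined only up to $2\pi\ZZ$, so you cannot force its angular mean on $\partial B_{r^*}$ to be exactly zero---only to lie in $(-\pi,\pi]$. When you apply Poincar\'e--Wirtinger this produces an extra $O(r^*)=O(\sqrt{\eta_0})$ term in the bound for $\int_{A_\delta}|\beta'|\,|\psi|$, which is harmless. With that fix the estimate goes through; note also that the order of choices must be $\eta_0$ first (to make $\int_A|\nabla\psi|$ small), then $r^*$ by the averaging argument, then $\delta$ small enough for the thin-strip continuity approximation.

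The paper proceeds differently. It does \emph{not} factor out the vortex; instead it lifts $u$ on the slit disk $B_\rho\setminus\Sigma$ (with $\Sigma$ a radius) to obtain $\varphi\in C^\infty(B_\rho\setminus\Sigma)\cap W^{1,1}$ with a $2\pi$ jump across $\Sigma$. Since the principal argument $\arg$ has the \emph{same} $2\pi$ jump, the convex interpolation $\tilde\varphi=\zeta\arg+(1-\zeta)\varphi$ still descends to a smooth $\SS^1$-valued map on the punctured ball. The cutoff $\zeta$ is chosen by the standard logarithmic-capacity argument on the annulus $B_{\sqrt{\eta_0}}\setminus B_{\eta_0}$, so that $\|\nabla\zeta\|_{L^2}^2\leq C/|\log\eta_0|$; the cross term $\int|\nabla\zeta|\,|\varphi-\arg|$ is then controlled by Cauchy--Schwarz together with the critical Sobolev embedding $W^{1,1}(\RR^2)\hookrightarrow L^2$, which furnishes a fixed $L^2$ bound on $\varphi$. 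Your approach trades the slit/capacity/Sobolev machinery for a factorisation $u=V\cdot v$ with $\deg(v)(0)=0$ (hence a globally single-valued phase on the punctured disk), a good-radius selection, and a circle Poincar\'e inequality. The paper's route is somewhat more streamlined; yours is more elementary in that it avoids both the slit-domain regularity discussion and the embedding theorem.
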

\begin{proof}
We give a proof in the case $\deg(u)(0) = 1$, the case $\deg(u)(0) = -1$ being completely analogous. We also assume, without loss of generality, that $\rho = 1$ and we denote $B_\rho$ simply by~$B$. Let us consider the set $\Sigma := \{ (x_1,0)  \ : \ 0 \leq x_1 \leq 1 \}$. To modify the map $u$ we will actually modify its lifting $\varphi$. We start by discussing some useful properties of $\varphi$.  

Since $B \sm \Sigma$ is simply connected, the map $u \colon B \sm \Sigma \to \SS^1$ admits a lifting $\varphi \colon B \sm \Sigma \to \RR$, i.e., a function satisfying $u = \exp(\iota \varphi)$. The function $\varphi$ is unique up to integer multiples of~$2 \pi$ and has the same regularity of $u$, namely $\varphi \in C^\infty(B\sm \Sigma;\RR) \cap W^{1,1}(B \sm \Sigma;\RR)$. 

The fact that $u \in C^\infty(B \sm \{0\};\SS^1)$ can be translated in terms of the regularity of $\varphi$ as follows: for every $x \in \Sigma \cap (B \sm \{0\})$ and $r > 0$ such that $B_{r}(x) \compact B \sm \{0\}$, we have
\begin{equation} \label{eq:lifting extended}
\mathds{1}_{\ol B^+_r(x) } \varphi + \mathds{1}_{B^-_r(x)} (\varphi - 2\pi) \in C^\infty(B_r(x)) \, ,
\end{equation}
where $B^\pm_r(x)  = \{ x \in B_r(x) \colon \pm x_2 > 0 \}$. To show this, we observe that $u \in C^\infty(B_{r}(x);\SS^1)$, thus it admits a lifting $\varphi_x \in C^\infty(B_{r}(x))$. Up to adding an integer multiple of $2\pi$ to $\varphi_x$, by the uniqueness of the lifting up to integer multiples of $2\pi$ we have $\varphi = \varphi_x$ in $B^+_r(x)$ and there exists a $k_x \in \ZZ$ such that $\varphi = \varphi_x + 2 \pi k_x$ in $B^-_{r}(x)$, thus $\mathds{1}_{\ol B^+_r(x) } \varphi + \mathds{1}_{B^-_r(x)} (\varphi - 2 \pi k_x) = \varphi_x \in C^\infty(B_r(x))$. To prove that $k_x = 1$, we observe that the proven regularity implies, in particular, that the restrictions $\varphi|_{{B^\pm_r(x)}}$ admit traces $\varphi^\pm$ on~$\Sigma$ in the classical sense. To compute the jump $\varphi^+ - \varphi^- = -2\pi k_x$ at a point $x \in \Sigma \cap (B \sm \{0\})$, we parametrize the circle $\de B_{|x|}$ counterclockwise with the closed path $\gamma(t) = |x|\exp(\iota 2 \pi t)$, $t \in [0,1]$. Observing that $\nabla \varphi  = u_1 \nabla u_2 - u_2 \nabla u_1$ in $B \sm \Sigma$, we infer that 
\begin{equation*}
\begin{split}
\varphi^+(x) - \varphi^-(x) & = \varphi(\gamma(0^+)) - \varphi(\gamma(1^-)) = - \int_0^1 \frac{\d}{\d t} \big(\varphi(\gamma(t)) \big) \d t = - \int_0^1 \nabla \varphi(\gamma(t)) \cdot \dot \gamma(t) \d t  \\
& = - \int_\gamma (u_1 \nabla u_2 - u_2 \nabla u_1) \cdot \tau \d \H^1 = - 2 \pi \deg(u)(0) = - 2 \pi \, .
\end{split}
\end{equation*}
This proves $k_x = 1$, and in turn~\eqref{eq:lifting extended}.  

Finally, it holds that $\varphi \in L^2(B)$ due to the Sobolev Embedding Theorem. 

We are now in a position to define a modification $\tilde \varphi$ of $\varphi$. Let us fix $\sigma > 0$. By a classical capacity argument, we find $\eta_0 > 0$ small enough and a cut-off function $\zeta \in C^\infty_c(B_{\sqrt{\eta_0}})$, $0 \leq \zeta \leq 1$, $\zeta \equiv 1$ on $\ol B_{\eta_0}$ satisfying  
\begin{equation*}
\|\nabla \zeta\|^2_{L^2(B)} \leq \frac{C}{\log \tfrac{\sqrt{\eta_0}}{\eta_0}} \leq \frac{C}{| \log \eta_0 | } < \sigma^2  .
\end{equation*}

We use the cut-off $\zeta$ to interpolate between $\varphi$ and the principal argument $\arg$. The function $\arg$ is defined in polar coordinates $(\rho,\vartheta)$ by $\arg(\rho,\vartheta) = \vartheta$ and satisfies $\frac{x}{|x|} = \exp(\iota \arg(x))$ in $B \sm \{0\}$. In particular, also $\arg \in C^\infty(B \sm  \Sigma;\RR) \cap W^{1,1}(B\sm \Sigma;\RR)$ and it satisfies the regularity property as in~\eqref{eq:lifting extended}. Let us define
\begin{equation*}
\tilde \varphi := \zeta \arg + (1-\zeta) \varphi \, , \quad \tilde u := \exp(\iota \tilde \varphi) \, .
\end{equation*}
Since $\tilde \varphi \in C^\infty(B \sm  \Sigma;\RR)$ and for every $x \in \Sigma \cap (B \sm \{0\})$ and $r > 0$ such that $B_{r}(x) \compact B \sm \{0\}$, we have $\mathds{1}_{\ol B^+_r(x) } \tilde \varphi + \mathds{1}_{B^-_r(x)} (\tilde \varphi - 2\pi) \in C^\infty(B_r(x);\RR)$, we deduce that   $\tilde u \in C^\infty(B \sm \{0\}; \SS^1)$. By definition $\tilde u$ satisfies (i) and (ii). To prove (iii), let us compute 
\begin{equation*}
\begin{split}
\integral{B}{|\nabla \tilde u|_{2,1}}{\d x} & = \integral{B \sm \Sigma}{|\nabla \tilde \varphi|_1}{\d x} \leq \integral{B \sm \Sigma}{|\nabla \tilde \varphi - \nabla \varphi|_1}{\d x}  +\integral{B \sm \Sigma}{|\nabla \varphi|_1}{\d x}  \\
& = \integral{B \sm \Sigma}{|\nabla \tilde \varphi - \nabla \varphi|_1}{\d x}  +\integral{B}{|\nabla u|_{2,1}}{\d x} \, ,
\end{split}
\end{equation*}
thus it only remains to estimate the first integral in the right-hand side. We have
\begin{equation*}
\begin{split}
\integral{B \sm \Sigma}{|\nabla \tilde \varphi - \nabla \varphi|_1}{\d x} & \leq \sqrt{2}\integral{B \sm \Sigma}{|\nabla \zeta| |\arg - \varphi|}{\d x} + \sqrt{2}\integral{B \sm \Sigma}{\zeta | \nabla \arg - \nabla \varphi|}{\d x} \\
& \leq  \sqrt{2}\|\nabla \zeta\|_{L^2(B)} \big(\|\arg\|_{L^2(B)} + \|\varphi\|_{L^2(B)}\big) + \sqrt{2} \integral{B_{\sqrt{\eta_0}} \sm \Sigma}{\hspace{-0.5em}\big( | \nabla \arg| + |\nabla \varphi| \big)}{\d x} \\
& \leq \big( \sqrt{2}\|\arg\|_{L^2(B)} + \sqrt{2}\|\varphi\|_{L^2(B)} + 1 \big) \sigma \, , 
\end{split}
\end{equation*}
if $\eta_0 > 0$ is also chosen small enough such that 
\begin{equation*}
\sqrt{2}\integral{B_{\sqrt{\eta_0}} \sm \Sigma}{\big( | \nabla \arg| + |\nabla \varphi| \big)}{\d x}  < \sigma \,.
\end{equation*}
\end{proof}

 Now we are in a position to construct the discrete recovery sequence.

\begin{proposition}\label{p.smoothapprox}
Assume that $\e\ll\theta_{\e}\ll \e |\log \e|$. Let $\lambda_n := 2^{-n}$, $n \in \NN$ and let $V:=\{x_1,\dots,x_N\} \subset \lambda_n \ZZ^2 \cap \Omega$. Assume that $u \in C^\infty(\tilde  \Omega \sm V; \SS^1) \cap W^{1,1}(\tilde \Omega; \SS^1)$ with $\tilde \Omega \Supset \Omega$ has the following structure: $|\deg(u)(x_h)|=1$ for all $1\leq h\leq N$ and 
\begin{equation*}
u(x)=\left(\begin{smallmatrix} 1 &0 \\0 &\deg(u)(x_h)\end{smallmatrix}\right) \tfrac{x-x_h}{|x-x_h|}\quad\text{ in }\quad B_{\eta_0}(x_h)
\end{equation*}
for some $\eta_0>0$. Set moreover $\mu=\sum_{h=1}^N\deg(u)(x_h)\delta_{x_h}$. Then there exists a sequence $u_\e  \colon \Omega_\e \to \S_\e$ such that $\mu_{u_\e} \flat \mu$, $u_\e \to u$ in $L^1(\Omega;\RR^2)$, and
		\begin{equation*}
				\limsup_{\e \to 0}\Big( \frac{1}{\e \theta_\e} E_\e(u_\e) - 2\pi |\mu|(\Omega) |\log \e| \frac{\e}{\theta_\e}  \Big) \leq \int_{\Omega}|\nabla u|_{2,1}\,\mathrm{d} x \, .
		\end{equation*} 
\end{proposition}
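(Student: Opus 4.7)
The strategy is to construct $u_\e$ as a gluing of three pieces, adapted to the two length-scales in the problem: the atomic scale $\e$ in a core around each singularity, where the vortex field cannot be resolved by any coarser sampling, and a family of mesoscopic scales decreasing dyadically toward each singularity. Concretely I would split $\Omega=\Omega_0\cup\bigcup_h\big(B_{\eta_0}(x_h)\setminus B_{r_\e}(x_h)\big)\cup\bigcup_h B_{r_\e}(x_h)$ for a suitable radius $\e\ll r_\e\ll\eta_0$, where $\Omega_0:=\Omega\setminus\bigcup_h B_{\eta_0}(x_h)$, and define $u_\e$ separately on each region. The divergent contribution $2\pi|\mu|(\Omega)|\log\e|\tfrac{\e}{\theta_\e}$ will come from the direct discretization in the cores, so that the remaining pieces must together contribute at most $\int_\Omega|\nabla u|_{2,1}\,\d x+o(1)$ after renormalization.

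First I would construct the outer piece. On $\Omega_0$ the map $u$ is smooth with values in $\SS^1$, so Lemma \ref{lemma:discretization of smooth wout sing} yields a piecewise-constant approximation $u_n\in\PC_{\lambda_n}(\SS^1)$ with $u_n\to u$ in $L^1(\Omega_0)$ and the sharp total-variation bound, and the cube-by-cube boundary-layer construction of Proposition \ref{prop:construction of ueps} turns $u_n$ into a discrete spin field with rescaled energy bounded by $\int_{\Omega_0}|\nabla u|_{2,1}\,\d x+o(1)$. Next, for each singularity at $x_h$ (translating to the origin and using the structural hypothesis on $u$ so that $u(x)=R_h x/|x|$ in $B_{\eta_0}(x_h)$ with $R_h=\mathrm{diag}(1,\deg(u)(x_h))$), I would set $u_\e(\e i):=\mathfrak{P}_\e\big(R_h(\e i-x_h)/|\e i-x_h|\big)$ on $B_{r_\e}(x_h)\cap\e\ZZ^2$. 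A direct energy computation for the $\S_\e$-projected vortex (splitting the contributions at the transition radius $r\sim\e/\theta_\e$ between the $XY$-regime and the sector-boundary regime) gives
\begin{equation*}
\tfrac{1}{\e\theta_\e}E_\e(u_\e;B_{r_\e}(x_h))=2\pi|\log\e|\tfrac{\e}{\theta_\e}+o(1),
\end{equation*}
and the discrete vorticity carries exactly $\deg(u)(x_h)$ on the cell containing $x_h$, which will yield $\mu_{u_\e}\flat\mu$ after all pieces are assembled.

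The heart of the argument is the transition on the annulus $B_{\eta_0}(x_h)\setminus B_{r_\e}(x_h)$. Cover this annulus by dyadic sub-annuli $A_k:=B_{2^{k+1}r_\e}(x_h)\setminus B_{2^k r_\e}(x_h)$ for $k=0,\dots,K_\e$ with $2^{K_\e}r_\e\sim\eta_0$, and tile each $A_k$ by mesoscopic squares of side length $\lambda_{k,\e}$ chosen in the window in which, simultaneously, the boundary-layer width $c_0\e/\theta_\e$ fits comfortably inside a mesoscopic square, and the vortex field $v(x):=x/|x|$ varies by at most geodesic distance $O(\theta_\e)$ between centers of adjacent squares, ensuring the almost-continuity hypothesis \eqref{eq:almostcontinuity}. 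On each mesoscopic square set $u_\e$ constant equal to $\mathfrak{P}_\e(v(y_z))$ at the center $y_z$ and connect neighboring squares by the geodesic boundary-layer transitions of \eqref{eq:defbcases}. Step 3 of the proof of Proposition \ref{prop:construction of ueps}, applied inside each $A_k$, gives
\begin{equation*}
\limsup_{\e\to 0}\tfrac{1}{\e\theta_\e}E_\e(u_\e;A_k)\leq\int_{A_k}|\nabla v|_{2,1}\,\d x+o_k(1),
\end{equation*}
and telescoping the geometric series in $k$ produces the bound $\int_{B_{\eta_0}(x_h)\setminus B_{r_\e}(x_h)}|\nabla v|_{2,1}\,\d x+o(1)\leq\int_{B_{\eta_0}(x_h)}|\nabla u|_{2,1}\,\d x+o(1)$.

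The main obstacle is ensuring the construction is consistent at the interfaces: between consecutive dyadic layers $A_k$ and $A_{k+1}$ (where squares of side $\lambda_{k,\e}$ meet squares of side $\lambda_{k+1,\e}=2\lambda_{k,\e}$), between the innermost layer and $\partial B_{r_\e}(x_h)$ (where mesoscopic squares meet the atomic-scale vortex), and between the outermost layer and $\partial B_{\eta_0}(x_h)$ (where mesoscopic squares meet the cubes of side $\lambda_n$ from the exterior construction). The flexibility built into the geodesic boundary-layer construction \eqref{eq:defbcases}, in particular the large buffer constant $c_0=393$ between corner and mid-edge values, is precisely what permits these mismatched interpolations to be carried out without creating spurious vorticity or adding more than $o(1)$ to the rescaled energy; this requires a case analysis in the spirit of Step 2 of the proof of Proposition \ref{prop:construction of ueps}, partitioning the lattice interactions according to the relative location of the endpoints with respect to the dyadic grid. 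Once the three pieces are glued into a single spin field $u_\e\colon\Omega_\e\to\S_\e$, $L^1$ convergence $u_\e\to u$ follows from dominated convergence on each region, and the sum of the three energy contributions yields the required limsup bound.
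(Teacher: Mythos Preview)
Your three-region decomposition and the idea of a dyadic transition near each singularity are in the right spirit, but the analysis of the transition annulus has a genuine gap. The displayed bound $\limsup_{\e\to 0}\tfrac{1}{\e\theta_\e}E_\e(u_\e;A_k)\leq\int_{A_k}|\nabla v|_{2,1}\,\d x+o_k(1)$ is ill-posed because $A_k=B_{2^{k+1}r_\e}\setminus B_{2^k r_\e}$ itself moves with $\e$, and in any case you must control the sum of the errors over $K_\e\to\infty$ layers uniformly in $\e$, which you do not address. More fundamentally, your two mesoscale requirements are incompatible at the innermost layers: asking that $v=x/|x|$ vary by $O(\theta_\e)$ across a square forces $\lambda_{k,\e}\lesssim 2^k r_\e\,\theta_\e$, while fitting the boundary layer of width $c_0\e/\theta_\e$ needs $\lambda_{k,\e}\gg\e/\theta_\e$; at $k=0$ this yields $r_\e\gg\e/\theta_\e^2$, and in the present regime $\theta_\e\ll\e|\log\e|$ one has $\e/\theta_\e^2\gg 1/(\e|\log\e|^2)\to\infty$, so no admissible $r_\e<\eta_0$ exists.

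The paper sidesteps this by \emph{not} aiming for the sharp $\int|\nabla v|_{2,1}$ near the singularity. It keeps the direct projection $v_\e=\mathfrak{P}_\e\big(\tfrac{x-x_h}{|x-x_h|}\big)$ on essentially all of a cube $Q(\lambda)\subset B_{\eta/2}(x_h)$ of \emph{fixed} half-side $\sim\eta$, and shows (Step~1) that its energy there is $2\pi|\log\e|\tfrac{\e}{\theta_\e}+C\eta$: the logarithm comes from the $XY$-core $B_{2r_\e}$ with $r_\e=4\e/\theta_\e$, and a sector-boundary count on $B_\eta\setminus B_{r_\e}$ gives the $C\eta$. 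The dyadic layers occupy only a thin strip of width $2\lambda$ at $\partial Q(\lambda)$, shrink \emph{inward}, serve solely to match $v_\e$ to the exterior mesoscopic field, and cost another $C\eta$ (Steps~2--3); crucially the paper only needs the crude bound $|u_\e(\e i)-u_\e(\e j)|\leq C\theta_\e$ there, not a sharp energy identity. The term $\int_\Omega|\nabla u|_{2,1}$ then comes entirely from the exterior $\Omega\setminus\bigcup_h Q(\lambda,x_h)$ via Lemma~\ref{lemma:discretization of smooth wout sing} and Proposition~\ref{prop:construction of ueps}, and a final $\eta\to0$ removes the $C\eta$ terms. Finally, your assertion that the discrete vorticity sits ``exactly'' on the cell containing $x_h$ is not obvious; the paper identifies $\mu_{u_\e}\flat\mu$ in Step~4 by comparing Jacobians of piecewise-affine interpolations of $v_\e$ and $u$, and uses Remark~\ref{rmk:vortices vanish} (this is where $\theta_\e\ll\e|\log\e|$ enters) to exclude spurious vorticity away from the $x_h$.
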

\begin{remark}\label{rmk:also works with e log smaller theta}
We emphasize that the construction presented below also works under the sole assumption $\e \ll \theta_\e$. The scaling $\theta_\e \ll \e |\log \e|$  will be used only after \eqref{eq:almostrecovery} on, where we identify the flat limit of the  discrete vorticity measure. This observation will be useful to study the regimes $\theta_\e \sim \e |\log \e|$ and $\e |\log \e| \ll \theta_\e$ in \cite{Cic-Orl-Ruf}.
\end{remark}
\begin{proof}[Proof of Proposition \ref{p.smoothapprox}]
We divide the proof into several steps. First we define a good approximation very close to the singularity. Then we define an interpolation between this construction and the piecewise constant approximations provided by Lemma \ref{lemma:discretization of smooth wout sing} far from the singularities. In the third step we estimate the energy of this interpolation. In a final step we bound the energy and identify the flat limit of the discrete vorticity measures.

\ul{Step 1} Local discrete approximation of (degree $\pm 1$)-singularities.

\noindent We define a local recovery sequence close to the first singularity $x_1$. In the whole proof we will assume that $\deg(u)(x_1)=1$, so that 
	\begin{equation} \label{eq:def of x/x}
		u(x)=\tfrac{x-x_1}{|x-x_1|} \quad \text{in } B_{\eta_0}(x_1) \,.
		\end{equation}
	(If, instead, $\deg(u)(x_1)=-1$, we have $u(x)= \left(\begin{smallmatrix}
		&1 &0 \\ &0 &- 1
		\end{smallmatrix} \right) \frac{x-x_1}{|x-x_1|}$ and the construction below is adapted accordingly.) For simplification, we specify $u(x_1):=e_1$. Next we partition $\RR^2\setminus\{0\}$ according to the value of the angle in polar coordinates. More precisely, for $k=0,1,\dots,N_{\e}-1$ we set
	\begin{equation}\label{eq:defsector}
	S_{k,\e}:=\{x=r\exp(\iota \varphi)\in\RR^2:\,r>0\, ,\ \varphi\in [k\theta_{\e},(k+1)\theta_{\e})\}\,.
	\end{equation}
	Based on this partition, we approximate the functions $u$ with sequences $v_{\e}\in\mathcal{PC}_{\e}(\S_\e)$ defined on $\e\mathbb{Z}^2\setminus \{x_1\}$ by
	\begin{equation}\label{eq:defbestapproximation}
	v_{\e}(\e i) = \exp(\iota k\theta_{\e})\quad\quad\text{if }\e i-x_1\in S_{k,\e}\,
	\end{equation}	
	while $v_{\e}(\e i)=e_1$ if $\e i = x_1$. Note that $v_\e = \mathfrak{P}_\e(u)$ by definition of $\mathfrak{P}_\e$, see~\eqref{eq:defproj}. Then, writing $\e i-x_1= |\e i-x_1|\exp(\iota (k^{\e}_i\theta_{\e}+\phi^{\e}_i))$ with $\phi^{\e}_i\in[0,\theta_{\e})$, it holds that
	\begin{equation}\label{almostbestapproximation}
	|u(\e i)-v_{\e}(\e i)|\leq|\exp(\iota (k^{\e}_i\theta_{\e}+\phi^{\e}_i))-\exp(\iota k^{\e}_i\theta_{\e})|\leq \theta_{\e}\,,
	\end{equation}
	for $\e i \in B_{\eta_0}(x_1)$. We define the radius $r_{\e}=4 \e\theta_{\e}^{-1}$ (its role will become clear below). We start estimating the energy of $v_{\e}$ in $B_{2r_{\e}}(x_1)$. By a change of variables we may assume that $x_1=0$ in order to reduce the notation. Observe that for any two vectors $a,b\in\RR^2$ we have
	\begin{equation} \label{eq:difference of squares}
	|a|^2-|b|^2\leq |a-b|(|a|+|b|)\leq 2|b||a-b| + |a-b|^2\,.
	\end{equation}
	Hence, using also \eqref{almostbestapproximation},
	\begin{align}\label{eq:avoidprojection}
	\frac{1}{\e\theta_{\e}}E_{\e}(v_{\e};B_{2r_{\e}}) & \leq \frac{1}{\e\theta_{\e}}E_{\e}(u;B_{2r_{\e}})\nonumber
	+\frac{1}{2\e\theta_{\e}} \hspace{-1em} \sum_{\substack{\nn \\ \e i, \e j \in B_{2r_{\e}}}} \hspace{-1em} \e^2\left(\left|v_{\e}(\e i) - v_{\e}(\e j)\right|^2-\left|u(\e i) - u(\e j)\right|^2\right)\nonumber
	\\
	& \leq \frac{1}{\e\theta_{\e}}E_{\e}(u;B_{2r_{\e}})+\frac{C}{\e\theta_{\e}}\hspace{-1em} \sum_{\substack{\nn \\ \e i, \e j \in B_{2r_{\e}}}} \hspace{-1em} \e^2\left(\theta_{\e}|u(\e i)-u(\e j)|+\theta_{\e}^2\right).
	\end{align}
	We prove that the last sum vanishes, while the first right hand side term scales as $2\pi|\log \e|\frac{\e}{\theta_{\e}}$ in the sense that the difference vanishes. To this end, we estimate finite differences of $u$ away from the singularity. As for $t\in[0,1]$ and $i,j\in\ZZ^2$ with $|i-j|=1$ we have
	\begin{equation*}
	|(1-t) \e i+t\e j|\geq |\e i|-\e \, ,
	\end{equation*}
	for any $\e i,\e j\in\e\ZZ^2\setminus B_{2\e}$ with $|i-j|=1$ the regularity of $u$ in $B_{\eta_0}\setminus\{0\}$ implies
	\begin{equation*}
	|u(\e i)-u(\e j)|\leq \int_0^1|\nabla u(t \e i+(1-t)\e j)(\e i-\e j)|\,\mathrm{d}t \, .
	\end{equation*}
	Since $i-j\in\{\pm e_1,\pm e_2\}$, a direct computation yields the two cases
	\begin{equation}\label{eq:estimatecases}
	|u(\e i)-u(\e j)|\leq
	\begin{cases} 
	\int_0^1 \frac{|i\cdot e_2|}{|t i+(1-t) j|^2}\,\mathrm{d}t &\mbox{if $(i-j)\parallel e_1$} \, ,\vspace*{2mm}
	\\
	 \int_0^1 \frac{|i \cdot e_1|}{|t i+(1-t) j|^2}\,\mathrm{d}t &\mbox{if $(i-j)\parallel e_2$} \, ,
	\end{cases}
	\end{equation}

	\begin{figure}[H]
		\hspace{2cm}\scalebox{0.8}{
		\begin{tikzpicture}[scale=0.8] 
\def\l{0.25};

\fill[black!5] (14*\l+0.5*\l,14*\l+0.5*\l) --  (14*\l+0.5*\l,3*\l) -- (3*\l,3*\l) -- (3*\l,14*\l+0.5*\l);

\begin{scope}[yscale=-1]
\fill[black!5] (14*\l+0.5*\l,14*\l+0.5*\l) --  (14*\l+0.5*\l,3*\l) -- (3*\l,3*\l) -- (3*\l,14*\l+0.5*\l);
\end{scope}

\begin{scope}[xscale=-1]
\fill[black!5] (14*\l+0.5*\l,14*\l+0.5*\l) --  (14*\l+0.5*\l,3*\l) -- (3*\l,3*\l) -- (3*\l,14*\l+0.5*\l);
\end{scope}

\begin{scope}[xscale=-1, yscale=-1]
\fill[black!5] (14*\l+0.5*\l,14*\l+0.5*\l) --  (14*\l+0.5*\l,3*\l) -- (3*\l,3*\l) -- (3*\l,14*\l+0.5*\l);
\end{scope}

\fill[black!10] (0,0) circle (3.4cm);

\fill[black!15] (6*\l,6*\l) -- (-6*\l,6*\l) -- (-6*\l, -6*\l) -- (6*\l, -6*\l) -- (6*\l,6*\l);

\fill[black!20] (15*\l,-4*\l) -- (2*\l,-4*\l) -- (2*\l,4*\l) -- (15*\l,4*\l) -- (15*\l,-4*\l);

\foreach \i in {-14,...,14} {
\foreach \j in {-14,...,14} {
	\fill[black!30] (\l*\i,\l*\j) circle(0.5pt);
}
}

\fill (0,0) circle(1pt);

\draw[line width = 0.7pt] (0,0) circle (3.4cm);

\draw[line width = 0.7pt] (6*\l,6*\l) -- (-6*\l,6*\l) -- (-6*\l, -6*\l) -- (6*\l, -6*\l) -- (6*\l,6*\l);

\draw (-6*\l,0) node[anchor=west] {$Q_{6\varepsilon}$};

\draw[line width = 0.7pt] (2*\l,-4*\l) -- (2*\l,4*\l) -- (15*\l,4*\l) 
 (2*\l,-4*\l) -- (15*\l,-4*\l) ;

\draw[line width = 0.7pt] (14*\l+0.5*\l,3*\l) -- (3*\l,3*\l) -- (3*\l,14*\l+0.5*\l);
\draw (11*\l,14*\l) node[anchor=north] {$\varepsilon \mathcal{Q}^{(+,+)}_3$};

\begin{scope}[yscale=-1]
\draw[line width = 0.7pt] (14*\l+0.5*\l,3*\l) -- (3*\l,3*\l) -- (3*\l,14*\l+0.5*\l);
\draw (11*\l,14*\l) node[anchor=south] {$\varepsilon \mathcal{Q}^{(+,-)}_3$};
\end{scope}

\begin{scope}[xscale=-1]
\draw[line width = 0.7pt] (14*\l+0.5*\l,3*\l) -- (3*\l,3*\l) -- (3*\l,14*\l+0.5*\l);
\draw (11*\l,14*\l) node[anchor=north] {$\varepsilon \mathcal{Q}^{(-,+)}_3$};
\end{scope}

\begin{scope}[xscale=-1, yscale=-1]
\draw[line width = 0.7pt] (14*\l+0.5*\l,3*\l) -- (3*\l,3*\l) -- (3*\l,14*\l+0.5*\l);
\draw (11*\l,14*\l) node[anchor=south] {$\varepsilon \mathcal{Q}^{(-,-)}_3$};
\end{scope}

\draw (0,0) -- (0,-3.4);
\draw (0,3.4) node[anchor=north] {$B_{2 r_\varepsilon}$};
\draw (0,-2.8) node[anchor=west] {$2 r_\varepsilon$};

\foreach \k in {2,...,14} {
	\draw (\l*\k+0.05,\l/2-0.03) -- (\l*\k+0.05,\l/2+0.03);
	\draw (\l*\k+0.05,\l/2) -- (\l*\k+\l-0.05,\l/2);
	\draw (\l*\k+\l-0.05,\l/2-0.03) -- (\l*\k+\l -0.05,\l/2+0.03);
	\draw[<-,line width=0.1pt, black!80] (\l*\k + \l/2,\l/2+0.05) to[out=90,in=180] (4.5,2);
}
\draw (4.5,2) node[anchor=west] {$\sim \big\lceil \frac{2r_\varepsilon}{\varepsilon} \big\rceil$};
		\end{tikzpicture}
		}
	\caption{The trimmed quadrants $\e \mathcal{Q}^s_3$ used to bound the energy in~\eqref{eq:splitbytrim}. }
	\label{fig:trimming}
	\end{figure}

	To further simplify the energy, given a sign $s=(s_1,s_2)\in\{(+,+), (-,+), (-,-), (+,-)\}$ and $n\in\mathbb{N}$ we define the trimmed quadrants $\mathcal{Q}_n^{s}$ as
	\begin{equation} \label{eq:trimmed quadrants}
	\mathcal{Q}_n^{s}:=\{x\in\RR^2:s_1 \, x \cdot e_1 \geq n,\ s_2\,  x\cdot e_2 \geq n\}\,.
	\end{equation}
	Applying Jensen's inequality in \eqref{eq:estimatecases} and specifying $n=3$, we can bound the energy via
	\begin{equation}\label{eq:splitbytrim}
	\frac{1}{\e\theta_{\e}}E_{\e}(u;B_{2r_{\e}})\leq \sum_s\frac{1}{\e\theta_{\e}}E_{\e}(u,\e\mathcal{Q}_3^s\cap B_{2r_{\e}})+\frac{1}{\e\theta_{\e}}E_{\e}(u;Q_{6\e})
	+\frac{C}{\e\theta_{\e}}\sum_{k=2}^{\lceil 2r_{\e}/\e\rceil}\e^2 \frac{k^2}{(k-1)^4}\,,
	\end{equation}
	see Figure~\ref{fig:trimming}. The last sum is converging with respect to $k$, so that the second and third term can be estimated by
	\begin{equation}\label{eq:minorcontributions}
	\frac{1}{\e\theta_{\e}}E_{\e}(u;Q_{6\e})
	+\frac{C}{\e\theta_{\e}}\sum_{k=2}^{\lceil 2r_{\e}/\e\rceil}\e^2 \frac{k^2}{(k-1)^4}\leq C\frac{\e}{\theta_{\e}} \, ,
	\end{equation}
	where for the first term we used the estimate $|u(\e i)-u(\e j)|^2\leq 4$. On the trimmed quadrants $\mathcal{Q}_3^s$ we can use again Jensen's inequality in \eqref{eq:estimatecases} and a monotonicity argument to deduce
	\begin{align}\label{eq:trimtrick}
	E_{\e}(u;\e\mathcal{Q}_3^s\cap B_{2r_{\e}})& =\sum_{\substack{\e i\in\e\ZZ^2\cap B_{2r_{\e}}\\ i\in \mathcal{Q}_3^s}}\!\!\! \e^2|u(\e(i+s_1e_1))-u(\e i)|^2+|u(\e(i+s_2e_2))-u(\e i)|^2\nonumber
	\\
	& \leq\sum_{\substack{\e i\in\e\ZZ^2\cap B_{2r_{\e}}\\ i\in \mathcal{Q}_3^s}}\e^2 \frac{\e^2}{|\e i|^2}\leq \integral{\e\mathcal{Q}_{2}^s\cap B_{2r_{\e}}}{\frac{\e^2}{|x|^2}}{\,\mathrm{d}x} \, .
	\end{align}
	Note that we shifted the trimming in the last inequality to pass from discrete to continuum. We sum \eqref{eq:trimtrick} over all possible $s$ and, after multiplying with $\frac{1}{\e\theta_{\e}}$, we infer that
	\begin{equation}\label{eq:integralfree}
	\sum_s\frac{1}{\e\theta_{\e}}E_{\e}(u;\e\mathcal{Q}_3^s\cap B_{2r_{\e}})\leq \frac{1}{\e\theta_{\e}}\integral{B_{2r_{\e}}\setminus B_{\e}}{\frac{\e^2}{|x|^2}}{\,\mathrm{d}x}\leq 2\pi |\log \e|\frac{\e}{\theta_{\e}}\, ,
	\end{equation}
	since $r_{\e}=4\e\theta_{\e}^{-1} < 1$. 
	The combination of \eqref{eq:splitbytrim}, \eqref{eq:minorcontributions}, and \eqref{eq:integralfree} yields
	\begin{equation}\label{eq:XYprecise}
	\frac{1}{\e\theta_{\e}}E_{\e}(u;B_{2r_{\e}})-2\pi |\log \e|\frac{\e}{\theta_{\e}}\leq C\frac{\e}{\theta_{\e}} \, .
	\end{equation}
	To bound the remaining sum in \eqref{eq:avoidprojection}, note that on the one hand $r_{\e}=4\e\theta_{\e}^{-1}$ implies that
	\begin{equation}\label{eq:r_econtribution}
	\frac{1}{\e\theta_{\e}}\sum_{\substack{\nn \\ \e i, \e j \in B_{2r_{\e}}}}\e^2\theta_{\e}^2\leq C\frac{\theta_{\e}}{\e}(2r_{\e}+2\e)^2\leq C\frac{\e}{\theta_{\e}}.
	\end{equation}
	On the other hand, using the trimmed quadrants $\mathcal{Q}_3^s$ to split the sum as in \eqref{eq:splitbytrim} yields  
	\begin{align*}
	\frac{1}{\e\theta_{\e}}\!\!\!\!\! \sum_{\substack{\nn \\ \e i, \e j \in B_{2r_{\e}}}}\!\!\!\!\! \e^2\theta_{\e}|u(\e i)-u(\e j)|\leq \frac{ \sqrt{2}}{\e\theta_{\e}}\integral{B_{2r_{\e}}\setminus B_{\e}}{\theta_{\e}\frac{\e}{|x|}}{\,\mathrm{d}x}+C\e+\frac{C}{\e\theta_{\e}}\sum_{k=2}^{\lceil 2r_{\e}/\e\rceil}\e^2\theta_{\e}\frac{k}{(k-1)^2}.
	\end{align*}
	Note that due to the non-quadratic structure we have the additional constant $\sqrt{2}$ in front of the integral. Moreover, the last sum diverges logarithmically, but we have an additional factor $\theta_{\e}$ which compensates this growth. We conclude that
	\begin{equation*}
	\frac{1}{\e\theta_{\e}} \!\!\!\!\! \sum_{\substack{\nn \\ \e i, \e j \in B_{2r_{\e}}}}\!\!\!\!\! \e^2\theta_{\e}|u(\e i)-u(\e j)|\leq C\left(r_{\e}^2+\e+\e|\log \theta_{\e}|\right).
	\end{equation*}
	Since $\e\ll\theta_{\e}\ll 1$ the right hand side vanishes when $\e\to 0$. Thus this estimate, \eqref{eq:avoidprojection}, \eqref{eq:XYprecise}, and \eqref{eq:r_econtribution} imply that
	\begin{equation}\label{eq:exactconcentration}
	\limsup_{\e \to 0}\left(\frac{1}{\e\theta_{\e}}E_{\e}(v_{\e};B_{2r_{\e}})-2\pi|\log \e|\frac{\e}{\theta_{\e}}\right)\leq 0 \, . 
	\end{equation}
	
	Next we control the energy in $B_{\eta}(x_1)\setminus B_{r_{\e}}(x_1)$ for $0<\eta<\eta_0$, where $\eta_0$ is given by the assumptions. To this end, we need to examine the precise behavior of the sequence $v_{\e}$ for $i,j\in\ZZ^2$ satisfying $|i-j|=1$ and $|\e j-x_1|,|\e i-x_1|\geq r_{\e}$. The basic idea is that for many such pairs the energy contribution vanishes. Indeed, write such points as
	\begin{equation}\label{polarcoordinates}
	\e i-x_1 = r_i^{\e}\exp(\iota (k^{\e}_i\theta_{\e}+\phi^{\e}_i)),\quad\quad\e j-x_1 = r_j^{\e}\exp(\iota (k^{\e}_j\theta_{\e}+\phi^{\e}_j))
	\end{equation}
	with $k^{\e}_i,k^{\e}_j\in\{0,\dots,N_{\e}-1\}$ and $\phi^{\e}_i,\phi^{\e}_j\in[0,\theta_{\e})$. By \eqref{eq:geo and eucl} we obtain
	\begin{align*}
	\e & = \big|r_i^{\e}\exp(\iota (k^{\e}_i\theta_{\e}+\phi^{\e}_i))-r_j^{\e}\exp(\iota (k^{\e}_j\theta_{\e}+\phi^{\e}_j))\big|
	\\
	& \geq r_{\e}|\exp(\iota (k^{\e}_i\theta_{\e}+\phi^{\e}_i))-\exp(\iota (k^{\e}_j\theta_{\e}+\phi^{\e}_j))|-|r_i^{\e}-r_j^{\e}|
	\\
	& \geq \min_{n\in\{0,\pm 1\}}\frac{2r_{\e}}{\pi}| (k^{\e}_i-k_j^{\e})\theta_{\e}+\phi^{\e}_i-\phi^{\e}_j+2\pi n|-\e
	\\
	& \geq \min_{n\in\{0,\pm 1\}}\frac{2r_{\e}\theta_{\e}}{\pi}(| k^{\e}_i- k^{\e}_j+N_{\e} n|-1)-\e\,.
	\end{align*}
	Inserting $r_{\e}=4\e\theta_{\e}^{-1}$, the above estimate can be rearranged into
	\begin{equation*}
	\frac{\pi}{4}\geq\min_{n\in\{0,\pm 1\}}(| k^{\e}_i- k^{\e}_j+N_{\e} n|-1)\,. 
	\end{equation*}
	Since $k^{\e}_i- k^{\e}_j+N_{\e} n$ is an integer, we get the following two possibilities:
	\begin{itemize}
		\item[(i)] $k^{\e}_i-k^{\e}_j-N_{\e}n=0$, which is possible only for $n=0$, since $k^{\e}_i, k^{\e}_j \in \{0,\dots, N_\e -1\}$. This yields $v_{\e}(\e i)=v_{\e}(\e j)$ since $\e i$ and $\e j$ belong to the same sector;
		\item[(ii)] $k^{\e}_i-k^{\e}_j=\pm 1\mod(N_{\e})$, which implies $|v_{\e}(\e i)-v_{\e}(\e j)|\leq\theta_{\e}$. Moreover, since $k^{\e}_i\neq k^{\e}_j$ we infer that $\dist(\e i-x_1,\partial S_{k_i^{\e},\e})\leq \e$.
	\end{itemize}
With this information at hand, we can estimate the energy by bounding the number of all points in $\e\ZZ^2\cap B_{\eta}(x_1)$ which are $\e$-close to one of the lines in $\bigcup_{k=0}^{N_{\e}-1}\partial S_{k,\e}+x_1$. Since $N_{\e}\leq C\theta_{\e}^{-1}$, this leads to
\begin{align*}
E_{\e}(v_{\e};B_{\eta}(x_1)\setminus B_{r_{\e}}(x_1))&\leq C\theta_{\e}^2\sum_{k=0}^{N_{\e}-1}\e^2\#\{\e i\in\e\ZZ^2\cap B_{\eta}(x_1):\,\dist(\e i,\partial S_{k,\e}+x_1)\leq\e\}
\\
&\leq C\theta_{\e} (\eta+2\e)\e\,.
\end{align*}
Dividing the inequality by $\e\theta_{\e}$ we obtain for $\e$ small enough that
\begin{align*}
\frac{1}{\e\theta_{\e}}E_{\e}(v_{\e};B_{\eta}(x_1))&\leq\frac{1}{\e\theta_{\e}}E_{\e}(v_{\e};B_{2r_{\e}}(x_1))+\frac{1}{\e\theta_{\e}}E_{\e}(v_{\e};B_{\eta}(x_1)\setminus B_{r_{\e}}(x_1))
\\
&\leq\frac{1}{\e\theta_{\e}}E_{\e}(v_{\e};B_{2r_{\e}}(x_1))+C\eta\,,
\end{align*}
where we used that $r_{\e}\gg\e$ to split the energy via changing the inner radius from $r_{\e}$ to $2r_{\e}$. Subtracting the term $2\pi|\log \e|\tfrac{\e}{\theta_{\e}}$ and using \eqref{eq:exactconcentration}, we proved that for some $C<+\infty$
\begin{equation}\label{eq:step1final}
\limsup_{\e \to 0}\left(\frac{1}{\e\theta_{\e}}E_{\e}(v_{\e};B_{\eta}(x_1))-2\pi|\log \e|\frac{\e}{\theta_{\e}}\right)\leq C\eta\,.
\end{equation} 

\medskip 
\ul{Step 2} An interpolation between singular and piecewise constant approximations.

\noindent  We do the construction in the case where the singularity lies in the origin. The case of singularities contained in $\lambda \ZZ^2$ will be treated with a translation argument. Consider a cube ${Q}({\lambda})=[-2^{m(\lambda)}\lambda,2^{m(\lambda)}\lambda]^2$, where $\lambda=\lambda_k$ with $k\geq n$ will be small, but fixed in this step, and $1\ll m(\lambda)\in\mathbb{N}$ is chosen maximal such that $Q(\lambda)\subset B_{\eta/2}$ with fixed $0<\eta<\eta_0$. Note that the corners of $Q(\lambda)$ belong to $\lambda\ZZ^2$. Define then a sequence of dyadically shrinking cubes by $Q_{k}= [(-2^{m(\lambda)}+(2-2^{-k}))\lambda,(2^{m(\lambda)}-(2-2^{-k}))\lambda]^2$ for $k\geq 0$. Here the factor $2-2^{-k}$ is chosen as the value of the geometric sum $\sum_{l=0}^k2^{-l}$. For notational reasons we also set $Q_{-1}:=Q(\lambda)$ and $Q_{-2}:=[-(2^{m(\lambda)}+1)\lambda,(2^{m(\lambda)}+1)\lambda]$. Then for $k\geq 0$ the layer $L_{k}=\overline{Q_{k-1}\setminus Q_{k}}$ can be decomposed into finitely many closed cubes with disjoint interior and side lengths $2^{-k}\lambda$. Indeed, those cubes are given by the closures of the half-open cubes belonging to the family  
\begin{equation*}
\mathcal{Q}_{k} := \left\{\mathfrak{q}^z_k = \left(2^{-k}\lambda z+[0,2^{-k}\lambda)^2\right) \ : \  z \in \ZZ^2 \, , \ \mathfrak{q}^z_k \subset L_{k}  \right\}\,.
\end{equation*}
With a slight abuse of notation we also set
\begin{equation*}
\mathcal{Q}_{-1}:=\left\{\mathfrak{q}^z_{-1} = \left(\lambda z+[0,\lambda)^2\right) \ : \  z \in \ZZ^2  \, , \ \mathfrak{q}^z_{-1} \subset  L_{-1} \right\}\,.
\end{equation*}
For $k \geq -1$, a generic element of $\mathcal{Q}_k$ is of the form 
\begin{equation*}
\q^z_k = 2^{-\max\{k,0\}}\lambda z+[0,2^{-\max\{k,0\}}\lambda)^2 \subset L_{k}\,.
\end{equation*}
(See Figure~\ref{fig:dyadic decomposition}.) We introduced the square $Q_{-2}$ and the family of cubes $\mathcal{Q}_{-1}$ since they will be useful later to glue in the layer $L_{-1} = \ol{Q_{-2} \sm Q_{-1}}$ the construction of the recovery sequence $u_\e$ inside $Q(\lambda)$ and outside $Q(\lambda)$, respectively. The construction of $u_\e$ outside $Q(\lambda)$ will be based, as in Proposition~\ref{prop:construction of ueps}, on a piecewise constant approximation of $u$ on the $\lambda \ZZ^2$ lattice and its boundary value on $\de Q(\lambda)$ will agree with that of the construction from the inside. For this reason the cubes in $\mathcal{Q}_{-1}$ have volume $\lambda^2$, as those of $\mathcal{Q}_0$, instead of the notationally more consistent  volume~$(2\lambda)^2$.

We further choose $k_{\e}\in\mathbb{N}$ as the unique number such that \begin{equation}\label{eq:choicek_e}
2^{-k_{\e}} \leq\theta_{\e} <2^{-k_{\e}+1} \,.
\end{equation}
Note that, in particular, we have that
\begin{equation}\label{eq:minimalhole}
Q_{k_{\e}}\supset B_{(2^{m(\lambda)}-2)\lambda}\,.
\end{equation}

\begin{figure}[H] 
\scalebox{0.7}{
\begin{tikzpicture}
	\def\n{8}
	\def\l{1/2}
	\draw[line width=1pt] (0,0) rectangle (\n*\l,\n*\l);
	\draw[line width=1pt] (-\l,-\l) rectangle (\n*\l+\l,\n*\l+\l);
	\draw (3.5,4.5) node[anchor=south] {$Q_{-2}$};
	
	\draw[->,dotted] (2-0.1,4+0.1) to[out=90,in=0] (1,5);
	\draw (1,5) node[anchor=east] {$Q_{-1} = Q(\lambda)$};
	
	\foreach \y in {0,...,8}
		\draw (-\l,\y*\l) -- (0,\y*\l);
	
	\draw (\l,0) -- (\l,\n*\l);
	\foreach \y in {1,...,7}
		\draw (0,\y*\l) -- (\l,\y*\l);
	\draw (\l+\l/2,\l) -- (\l+\l/2,\n*\l-\l);
	\foreach \y in {2,...,14}
		\draw (\l,\y*\l/2) -- (\l + \l/2,\y*\l/2);
	\draw (\l+\l/2+\l/4,\l+\l/2) -- (\l+\l/2+\l/4,\n*\l-\l-\l/2);
	\foreach \y in {6,...,26}
		\draw (\l+\l/2,\y*\l/4) -- (\l + \l/2 + \l/4,\y*\l/4);
	\draw (\l+\l/2+\l/4+\l/8,\l+\l/2+\l/4) -- (\l+\l/2+\l/4+\l/8,\n*\l-\l-\l/2-\l/4);
	\foreach \y in {14,...,50}
		\draw (\l+\l/2+\l/4,\y*\l/8) -- (\l + \l/2 + \l/4 + \l/8,\y*\l/8);
		
	\begin{scope}[xshift=4cm,xscale=-1]
	\foreach \y in {0,...,8}
		\draw (-\l,\y*\l) -- (0,\y*\l);
	\draw (\l,0) -- (\l,\n*\l);
	\foreach \y in {1,...,7}
		\draw (0,\y*\l) -- (\l,\y*\l);
	\draw (\l+\l/2,\l) -- (\l+\l/2,\n*\l-\l);
	\foreach \y in {2,...,14}
		\draw (\l,\y*\l/2) -- (\l + \l/2,\y*\l/2);
	\draw (\l+\l/2+\l/4,\l+\l/2) -- (\l+\l/2+\l/4,\n*\l-\l-\l/2);
	\foreach \y in {6,...,26}
		\draw (\l+\l/2,\y*\l/4) -- (\l + \l/2 + \l/4,\y*\l/4);
	\draw (\l+\l/2+\l/4+\l/8,\l+\l/2+\l/4) -- (\l+\l/2+\l/4+\l/8,\n*\l-\l-\l/2-\l/4);
	\foreach \y in {14,...,50}
		\draw (\l+\l/2+\l/4,\y*\l/8) -- (\l + \l/2 + \l/4 + \l/8,\y*\l/8);
	\end{scope}
	
	\begin{scope}[xshift=4cm,rotate=90]
	\foreach \y in {0,...,8}
		\draw (-\l,\y*\l) -- (0,\y*\l);
	\draw (\l,0+\l) -- (\l,\n*\l-\l);
	\foreach \y in {2,...,6}
		\draw (0,\y*\l) -- (\l,\y*\l);
	\draw (\l+\l/2,\l+\l/2) -- (\l+\l/2,\n*\l-\l-\l/2);
	\foreach \y in {4,...,12}
		\draw (\l,\y*\l/2) -- (\l + \l/2,\y*\l/2);
	\draw (\l+\l/2+\l/4,\l+\l/2+\l/4) -- (\l+\l/2+\l/4,\n*\l-\l-\l/2-\l/4);
	\foreach \y in {8,...,24}
		\draw (\l+\l/2,\y*\l/4) -- (\l + \l/2 + \l/4,\y*\l/4);
	\draw (\l+\l/2+\l/4+\l/8,\l+\l/2+\l/4+\l/8) -- (\l+\l/2+\l/4+\l/8,\n*\l-\l-\l/2-\l/4-\l/8);
	\foreach \y in {16,...,48}
		\draw (\l+\l/2+\l/4,\y*\l/8) -- (\l + \l/2 + \l/4 + \l/8,\y*\l/8);
	\end{scope}
	
	\begin{scope}[xshift=0cm,yshift=4cm,rotate=-90]
	\foreach \y in {0,...,8}
		\draw (-\l,\y*\l) -- (0,\y*\l);
	\draw (\l,0+\l) -- (\l,\n*\l-\l);
	\foreach \y in {2,...,6}
		\draw (0,\y*\l) -- (\l,\y*\l);
	\draw (\l+\l/2,\l+\l/2) -- (\l+\l/2,\n*\l-\l-\l/2);
	\foreach \y in {4,...,12}
		\draw (\l,\y*\l/2) -- (\l + \l/2,\y*\l/2);
	\draw (\l+\l/2+\l/4,\l+\l/2+\l/4) -- (\l+\l/2+\l/4,\n*\l-\l-\l/2-\l/4);
	\foreach \y in {8,...,24}
		\draw (\l+\l/2,\y*\l/4) -- (\l + \l/2 + \l/4,\y*\l/4);
	\draw (\l+\l/2+\l/4+\l/8,\l+\l/2+\l/4+\l/8) -- (\l+\l/2+\l/4+\l/8,\n*\l-\l-\l/2-\l/4-\l/8);
	\foreach \y in {16,...,48}
		\draw (\l+\l/2+\l/4,\y*\l/8) -- (\l + \l/2 + \l/4 + \l/8,\y*\l/8);
	\end{scope}
	
	\draw[fill=black] (2,2) circle(0.03);
	\draw (2,2) node[anchor=north] {$0$};
	
	\draw[<->] (-\l -0.3,-\l+0.5) -- (-\l-0.3,2);
	\draw (-\l -0.3,1) node[anchor=east] {$2^{m(\lambda)} \lambda$};
	\draw[<->] (-\l -0.3,4+\l) -- (-\l-0.3,4+\l-1/2);
	\draw (-\l -0.3,4+\l-1/4) node[anchor=east] {$\lambda$};
	\draw[<->] (2,-\l-0.3) -- (6,-\l-0.3);
	\draw (2+2,-\l-0.3) node[anchor=north] {$\eta/2$};

	\draw[fill=black!20] (4-1/2-1/4,2-1/4) rectangle (4-1/2,2);
	\draw[->,dotted] (4-1/2-1/4+1/8,2-1/4+1/8) to[out=0+60,in=180-40] (6-0.3,2+0.5);
	
	\draw[fill=black!20] (6,2) rectangle (7,3);
	\draw[line width=1pt] (6,2) -- (7,2)
	(6,2) -- (6,3);
	\draw (6.5,3) node[anchor=south] {$\q_k^z \in \mathcal{Q}_k$};
	
	\draw[fill=black] (6,2) circle(0.03);
	\draw (6,2)  node[anchor=north] {$2^{-k}\lambda z$};
	\draw[<->] (7+0.3,3) -- (7+0.3,2);
	\draw (7+0.3,2.5) node[anchor=west] {$2^{-k}\lambda$};

	\draw (2,2) circle (0.7cm);

	\begin{scope}[xshift=11cm, yshift=2cm, scale=0.8]
		\draw 
			(0,-3) -- (0,3)
			(2,-3) -- (2,3)
			(0,-2) -- (2,-2);
			\draw (1,-3) node[anchor=north] {$L_k$};
			\draw (0,3) node[anchor=south] {$\partial Q_k$};

			\draw[line width=2pt]
			(0,0) -- (2,0) -- (2,2) -- (0,2);
			\draw (1,1) node {$\q_k^z$};

			\draw[color=black!30]
			(6,-3) -- (6,3)
			(2,2) -- (6,2)
			(2,-2) -- (6,-2);
			\draw (4,-3) node[anchor=north] {$L_{k-1}$};
			\draw (2,3) node[anchor=south] {$\partial Q_{k-1}$};

			\draw[color=black!30]
			(-1,-3) -- (-1,3)
			(-1,-2) -- (0,-2)
			(-1,-1) -- (0,-1)
			(-1,0) -- (0,0)
			(-1,1) -- (0,1)
			(-1,2) -- (0,2);
			\draw (-0.5,-3) node[anchor=north] {$L_{k+1}$};
	\end{scope}
\end{tikzpicture}	
}
\caption{On the left: Dyadic decomposition of $Q(\lambda)$ and example of a square belonging to the family~$\mathcal{Q}_k$ (in the picture, $k=1$). The ball contained in all squares $Q_k$ is given by~\eqref{eq:minimalhole}. On the right: Sides of a cube $\q_k^z$ contained in the layer $L_k$ where we define the boundary conditions.}
\label{fig:dyadic decomposition}
\end{figure}

To each (non-empty, half-open) cube $\q_{k}^z=2^{-\max\{k,0\}}\lambda z+[0,2^{-\max\{k,0\}}\lambda)^2\cap L_{k}\in\mathcal{Q}_{k}$, we associate the value 
\begin{equation*}
u_{k,\e}^{z}=u(2^{-\max\{k,0\}}\lambda (z+\tfrac{1}{2}e_1+\tfrac{1}{2}e_2)) \, ,
\end{equation*}
where $2^{-\max\{k,0\}}\lambda (z+\tfrac{1}{2}e_1+\tfrac{1}{2}e_2)$ is the midpoint of the cube $\q_k^z$. We use these values $u_{k,\e}^{z}$ to define an interpolation similar to the one in the proof of Proposition~\ref{prop:construction of ueps}, but on a family of shrinking cubes. In order to obtain quantitative energy estimates, we need a bound on the differences of the values $u_{k,\e}^z$ between cubes which touch at their boundaries. A key ingredient will be the estimate
\begin{equation}\label{eq:xmodxcontinuous}
\left|\frac{x}{|x|}-\frac{y}{|y|}\right|\leq \frac{\big|x|y|-x|x|+x|x|-y|x|\big|}{|x||y|}\leq 2\frac{|x-y|}{|y|} \, ,
\end{equation}
which is valid for $x,y\in\RR^2\setminus\{0\}$. Due to \eqref{eq:minimalhole}  it holds that $0 \notin \q_k^z$ for $-1\leq k\leq k_{\e}$.
Hence, for two touching cubes $\q_{k_1}^{z_1}$ and $\q_{k_2}^{z_2}$ with $-1\leq k_1, k_2 \leq k_{\e}$ (i.e., $\ol \q_{k_1}^{z_1} \cap \ol \q_{k_2}^{z_2} \neq \emptyset$), the estimate \eqref{eq:xmodxcontinuous} implies the bound
\begin{equation}\label{eq:closeness1}
|u_{k_1,\e}^{z_1}-u_{k_2,\e}^{z_2}|\leq 2\frac{  \sqrt{2} \lambda (2^{-k_1}+2^{-k_2})}{\displaystyle\min_{l=1,2} |2^{-\max\{k_l,0\}}\lambda(z_l+\tfrac{1}{2}e_1+\tfrac{1}{2}e_2)|} \, ,
\end{equation}
where we used that the distance between midpoints is bounded by the sum of the diameters of the cubes. Assuming that $m(\lambda)\geq 2$, the inclusion \eqref{eq:minimalhole} implies that the denominator can be estimated from below via
\begin{equation*}
\left|2^{-\max\{k_l,0\}}\lambda(z_l+\tfrac{1}{2}e_1+\tfrac{1}{2}e_2)\right|\geq (2^{m(\lambda)}-2)\lambda\geq 2^{m(\lambda)-1}\lambda\,.
\end{equation*}
In combination with \eqref{eq:closeness1} and the bound $2^{-k_{\e}}\leq\theta_{\e}$ (cf. \eqref{eq:choicek_e}) we obtain
\begin{equation}\label{eq:closeness2}
|u_{k_1,\e}^{z_1}-u_{k_2,\e}^{z_2}|\leq 2^{ 3-m(\lambda)} (2^{-k_1}+2^{-k_2})\leq 16\max_{l=1,2}2^{k_{\e}-k_l-m(\lambda)}\theta_{\e} \, .
\end{equation}

Next, we define the piecewise constant function $\bar{w}_{\e} \colon Q_{-2}\setminus Q_{k_{\e}}\to\S_{\e}$ via
\begin{align}\label{eq:againPC}
\bar{w}_{\e}(x)=
u_{k,\e}^z &&\mbox{if $x\in \left(2^{-\max\{k,0\}}\lambda z+[0,2^{-\max\{k,0\}}\lambda)^2\right)\cap L_k \, ,\,-1\leq k\leq k_{\e}$} \, .  
\end{align} 
Note that this function is pointwise well-defined except on parts of $\partial Q_{-2}$ since we consider half-open cubes. In order to define an interpolation between cubes which approximates the piecewise constant function $\bar{w}_{\e}$, we introduce again boundary conditions. In each cube~$\q_k^z \in \mathcal{Q}_k$, we define the boundary conditions only on those sides which are not contained in $\partial Q_k$  (recall that $\partial Q_k$ is the inner part of the boundary of the layer $L_k$). On the side contained in $\partial Q_k$ (if there is any) we define the boundary condition via the cubes in $L_{k+1}$ (cf.\ Figure~\ref{fig:dyadic decomposition}). To fix ideas, in what follows one can use an iterative definition starting with $k=k_{\e}$, for which we neglect the inner boundary. 






For a generic side $S=\{2^{-k}\lambda z^{\prime}+te_i:\,t\in[0,2^{-k}\lambda]\}$ with $z^{\prime}\in\mathbb{Z}^2$, $k\geq 0$ and $i\in\{1,2\}$, and three values $w=(w^1,w^2,w^3)\in(\SS^1)^3$, we set $b^{\e}_{S,k}[w]\colon S\to\SS^1$ as
\begin{equation*}
b^{\e}_{S,k}[w](x)=b^{\e}_{2^{k}S}[w](2^{k}x) \, ,
\end{equation*}
where $b^{\e}_{S^{\prime}}[w]$ is defined in~\eqref{eq:defbcases} for every side $S^{\prime}=\{\lambda z^{\prime}+te_i:\,t\in[0,\lambda]\}$. In this proof we work with the constant $c_0 := 393$ in \eqref{eq:defbcases}; this choice will be clear only after formula~\eqref{eq:b will be constant}. Given a side~$S$ as above satisfying additionally $S\subset L_k$ with $k\geq 0$ and $S\nsubseteq\partial Q_k$ (recall that the layer~$L_k$ is closed), we specify the three values $w=w^{\e}_{S}$ on $S$ by
\begin{equation}\label{eq:defv_S}
w^{\e}_{S}=\Big(\bar{w}_{\e}(2^{-k}\lambda z^{\prime}),{\rm mid}\big((\bar{w}_{\e})^-_{S},(\bar{w}_{\e})^+_{S}),\bar{w}_{\e}(2^{-k}\lambda(z^{\prime}+e_i) \big) \Big) \, ,
\end{equation}
where $(\bar{w}_{\e})^-_{S}$ and $(\bar{w}_{\e})^+_{S}$ denote the (constant) traces along the side $S$ of the function~$\bar{w}_{\e}$ defined in \eqref{eq:againPC} (note that on sides in $\partial Q_k$ the trace from outside $L_k$ may be non-constant because the cubes shrink). It is only here where we have to use the values in the layer $L_{-1}$. 

Fix a cube $\q_k^z\in\mathcal{Q}_k$ ($k\geq 0$) and define the boundary values $b_{k,\e}[z]\colon \partial \q_k^z\setminus\partial Q_k\to\SS^1$ by
\begin{align*}
b_{k,\e}[z](x)=
b_{S,k}^{\e}[w^{\e}_{S}](x) \quad\mbox{if $x=2^{-k}\lambda z^{\prime}+t e_i \in S$ for some $z^{\prime}\in\mathbb{Z}^2$ and $t\in[0,2^{-k}\lambda]$} \, .
\end{align*} 
Having in mind the definition \eqref{eq:defbcases}, on each side $S$ the function $b_{k,\e}[z]$ satisfies the Lipschitz-estimate
\begin{align}\label{eq:quantitativeLip}
\big|b_{k,\e}[z](x)-b_{k,\e}[z](y)\big|&=\big|b^{\e}_{2^kS}[w_S^{\e}](2^k x)-b^{\e}_{2^kS}[w_S^{\e}](2^k y)\big|\nonumber
\\
& \leq  \max_{i=1,3}\geo\big((w_S^{\e})^i,(w_{S}^{\e})^2\big)\frac{2^k\theta_{\e}}{c_0 \e}|x-y|\nonumber
\\
& \leq \max_{i=1,3}\big|(w_S^{\e})^i-(w_{S}^{\e})^2\big| \frac{\pi 2^{k-1}}{c_0} \frac{\theta_{\e}}{\e} |x-y| \, ,
\end{align} 
where we used \eqref{eq:geo and eucl} in the last inequality. We continue with estimating the right hand side of~\eqref{eq:quantitativeLip}. On the one hand, equation \eqref{eq:closeness2} implies
\begin{align}\label{eq:midpointest}
|(w^{\e}_S)^{2} -(\bar w_{\e})^{\pm}_S| &\leq\geo\left({\rm mid}((\bar w_{\e})^+_S,(\bar w_{\e})^{-}_S),(\bar w_{\e})^{\pm}_S\right)=\frac{1}{2}\geo\left((\bar w_{\e})^+_S,(\bar w_{\e})^{-}_S\right)\nonumber
\\
& \leq \frac{\pi}{4}|(\bar w_{\e})^+_S-(\bar w_{\e})^{-}_S|\leq C_1  2^{k_{\e}-k-m(\lambda)} \theta_{\e} \, ,
\end{align}
with $C_1 = 32$, where in the last inequality we also used that due to the definition \eqref{eq:againPC} and the fact that $S\subset \partial \q_k^z$ we have $(\bar w_{\e})^{\pm}_S=u_{k_{\pm},\e}^{z_{\pm}}$ for some $z_{\pm}\in\ZZ^2$ and $-1\leq k_{\pm}\leq k_{\e}$ with $|k-k_{\pm}|\leq 1$. On the other hand, observe that in the definition of $w_S^{\e}$ in \eqref{eq:defv_S} the points $2^{-k}\lambda z^{\prime}$ and $2^{-k}\lambda (z^{\prime} +e_i)$ belong to $S$. Thus the cubes $\q_{k_1}^{z_1}$ and $\q_{k_3}^{z_3}$ used in the definition~\eqref{eq:againPC} for $(w_S^{\e})^1 = \bar w_\e(2^{-k} \lambda z')$ and~$(w_S^{\e})^3 = \bar w_\e(2^{-k} \lambda (z' + e_i))$, respectively, must touch both the cubes $\q_{k_{\pm}}^{z_{\pm}}$ used in the definition~\eqref{eq:againPC} for~$(\bar w_{\e})^{\pm}_S$. Hence, again due to~\eqref{eq:closeness2},
\begin{align}\label{eq:v_Sestimate} 
|(w_S^{\e})^i-(\bar w_{\e})^{\pm}_S|\leq C_1  2^{k_{\e}-k-m(\lambda)} \theta_{\e} \, ,\quad\quad \text{for }i=1,3 \, .
\end{align}
Combining the last two estimates with \eqref{eq:quantitativeLip} and the bound $2^{k_{\e}}\theta_{\e}\leq 2$ yields
\begin{equation}\label{eq:LiponS}
\left|b_{k,\e}[z](x)-b_{k,\e}[z](y)\right|\leq \frac{\theta_{\e}}{\e}|x-y| \, ,
\end{equation}
where we used that $2 C_1  2^{k_\e - k - m(\lambda)}  \theta_\e \frac{\pi 2^{k-1}}{c_0} \leq 1$. Next observe that the locally defined boundary values yield a function 
\begin{align}\label{eq:bglobally}
&b_{\e} \colon \bigcup_{0\leq k\leq k_{\e}}\bigcup_{\q_k^z\in\mathcal{Q}_k}\partial \q_k^z\setminus \partial Q_{k_{\e}}\to\SS^1, &x\mapsto b_\e(x) := b_{k,\e}[z](x)\quad\text{if }x\in\partial \q_{k}^z\setminus\partial Q_k \, .
\end{align}

We briefly explain the idea how to construct the recovery sequence. In $Q_{k_{\e}}$ we put the value of the function $v_{\e}$ used in Step~1 and defined in~\eqref{eq:defbestapproximation}, namely, we approximate $\tfrac{x}{|x|}$ close to its singularity. In the first layer $L_{k_{\e}}$ we keep this construction and then we start an interpolation scheme with respect to the cubes $\q_k^z$, where we put the value~$u_{k,\e}^z$ in most part of a cube. The boundary conditions $b_{k,\e}[z]$ help to control interactions between different cubes. In the estimates we can allow for multiplicative constants since the total contribution will be proportional to $2^{m(\lambda)}\lambda\sim\eta$. However, a precise dependence on the energy with respect to the layer number $k$ is crucial since we have to sum over all layers.

Now let us start with the details. For the moment fix $0\leq k<k_{\e}$. Given a cube $\q_k^z\in\mathcal{Q}_k$, let $P_{k,z} \colon \q_k^z\to\partial \q_k^z$ be any function such that $|P_{k,z}(x)-x|=\dist(x,\partial \q_k^z)$ for all $x\in \q_{k}^z$. Set $ \bar u_{\e}  \colon \e\ZZ^2\cap \q_k^z\to\SS^1$ as 
\begin{equation*}
\bar u_{\e} (\e i)=
\Geo\left[b_{\e}(P_{k,z}(\e i)),u_{k,\e}^z\right]\left(\theta_{\e}\e^{-1}\dist\big(\e i,\partial \q_k^z\big)\right),
\end{equation*}
with the extended geodesics given by Definition \ref{d.geodesic} and $b_{\e}$ given by \eqref{eq:bglobally}. Since in general $\bar{u}_{\e}(\e i)\notin\S_\e$, we project it. The function $u_\e$ in the square~$Q(\lambda)$ is then given by
\begin{equation}\label{eq:globalu_e}
u_{\e}(\e i):=
\begin{cases}
v_{\e}(\e i) &\mbox{if $\e i\in Q_{k_{\e}-1}$} \, ,
\\ 
\mathfrak{P}_{\e}(\bar{u}_{\e}(\e i)) &\mbox{if $\e i\in \q_k^z$ for some $\q_k^z\in\mathcal{Q}_k$ with $0\leq k<k_{\e}$} \, ,
\end{cases}
\end{equation}  
with the operator $\mathfrak{P}_{\e}$ defined in \eqref{eq:defproj}. In this step we are interested in the energy restricted to $Q(\lambda)$ and for this reason we defined $u_\e$ only in $Q(\lambda)$. The sequence~$u_\e$ will be defined later in Step~4 outside $Q(\lambda)$, that means, far from the singularity, as in Proposition~\ref{prop:construction of ueps}.

First let us identify the $L^1(Q(\lambda))$-limit of $u_{\e}$. To this end, observe that for all $\e i\in \q_k^z$ with $0\leq k<k_{\e}$ we have by Definition \ref{d.geodesic}
\begin{equation}\label{eq:interiorrigidity}
u_{\e}(\e i)=\mathfrak{P}_{\e}(u_{k,\e}^z) \quad\quad\text{if }\quad\theta_{\e}\e^{-1}\dist(\e i,\partial \q_k^z)\geq \geo(b_{\e}(P_{k,z}(\e i)),u_{k,\e}^z) \,.
\end{equation}
We need to quantify the dependence on $k$ in the right hand side. Let $S\subset\partial \q_k^z$ be a side such that $P_{k,z}(\e i)\in S$. Since the boundary datum $b_{\e}$ restricted to $S$ interpolates via geodesic arcs between the three elements of the vector $w_S^{\e}$ defined in \eqref{eq:defv_S} and by construction $u_{k,\e}^z\in\{(\bar w_{\e})^{-}_S,(\bar w_{\e})^{+}_S\}$, it follows from \eqref{eq:midpointest} and \eqref{eq:v_Sestimate} (with $k+1$ in place of $k$ if $S \subset \partial Q_k$, which improves the estimate) that
\begin{align}\label{eq:closenessbc}
\geo(b_{\e}(P_{k,z}(\e i)),u_{k,\e}^z)&\leq \geo\left(b_{\e}(P_{k,z}(\e i)),{\rm mid}((\bar w_{\e})^{+}_S,(\bar w_{\e})^{-}_S)\right)+C_1  2^{k_{\e}-k-m(\lambda)} \theta_{\e}\nonumber
\\
&\leq \max_{i=1,3}\geo\left((w_S^{\e})^i,{\rm mid}((\bar w_{\e})^{+}_S,(w_{\e})^{-}_S)\right)+C_1 2^{k_{\e}-k-m(\lambda)} \theta_{\e}\nonumber
\\
&\leq C_2   2^{k_{\e}-k-m(\lambda)} \theta_{\e} \, ,
\end{align}
for $C_2 = \big(\tfrac{\pi}{2} + 1\big)C_1 \leq 96$. In particular, the condition \eqref{eq:interiorrigidity} implies that
\begin{equation}\label{eq:layerthickness}
u_{\e}(\e i)=\mathfrak{P}_{\e}(u_{k,\e}^z) \quad\quad\text{if }\quad\dist(\e i,\partial \q_k^z)\geq C_2 \,2^{k_{\e}-k}\e \, .
\end{equation}
Since $2^{k_{\e}}\theta_{\e}\leq 2$ (cf. \eqref{eq:choicek_e}), the term $2^{k_{\e}}\e$ vanishes when $\e\to 0$. As the measure of each~$\q_k^z$ is $2^{-k}\lambda$, we deduce from \eqref{almostbestapproximation}  that a.e.\ in $Q(\lambda)$ (and thus in $L^1(Q(\lambda))$) it holds that
\begin{equation}\label{eq:deflimit}
u_{\e}\to u^{\lambda}_0=\begin{cases} \displaystyle
 \tfrac{x}{|x|} &\mbox{on $Q_{\infty}:= [(-2^{m(\lambda)}+2)\lambda,(2^{m(\lambda)}-2)\lambda]^2$,}
\\
\\
\displaystyle \tfrac{2^{-k-1}\lambda(2z+e_1+e_2)}{|2^{-k-1}\lambda(2z+e_1+e_2)|} &\mbox{if $x\in \q_k^z\cap L_k$ for some $k\in\mathbb{N}\cup\{0\}$} \, .
\end{cases}
\end{equation}
Notice that $Q_\infty = \bigcap_{k = 0}^\infty Q_k$ and that $u^\lambda_0 = \tfrac{x}{|x|}$  except in the layer $Q(\lambda) \sm Q_\infty$, whose thickness is~$2 \lambda$, thus infinitesimal when $\lambda \to 0$. 

Below we bound the differences $u_{\e}(\e i)-u_{\e}(\e j)$ for all $\e i,\e j\in\e\mathbb{Z}^2\cap Q(\lambda)$ with $|i-j|=1$.

\medskip
\ul{Substep 2.1} (interactions within a single cube)

\noindent Consider first $\e i,\e j\in \e\ZZ^2\cap \q_k^z$ with $0\leq k<k_{\e}$ and $|i-j|=1$. We treat several cases:

\medskip
\noindent {\it Case 1}: If $\dist(\e i,\partial \q_k^z)\geq C_2 \, 2^{k_{\e}-k}\e$ and $\dist(\e j,\partial \q_k^z)\geq C_2 \,2^{k_{\e}-k}\e$, then by \eqref{eq:layerthickness}
\begin{equation*}
|u_{\e}(\e i)-u_{\e}(\e j)|=|\mathfrak{P}_\e(u_{k,\e}^z)-\mathfrak{P}_\e(u_{k,\e}^z)|=0 \, .
\end{equation*}
By the Lipschitz continuity of $\dist(\cdot, \de \q_k^z)$, we can from now on assume that 
\begin{equation}\label{eq:k_standing assump}
\max\{\dist(\e i,\partial \q_k^z),\dist(\e j,\partial \q_k^z)\}< (C_2+1)2^{k_{\e}-k}\e \, .
\end{equation}

\noindent {\it Case 2}: We first analyze when $P_{k,z}(\e i)$ and $P_{k,z}(\e j)$ lie on different $1$-dimensional boundary segments $S_i\neq S_j$ of $\q_{k}^z$. We claim that $P_{k,z}(\e i)$ and $P_{k,z}(\e j)$ are then close to a node of the lattice  $2^{-k}\lambda\mathbb{Z}^2$. Indeed, denote by $\Pi_{S_i}$ and $\Pi_{S_j}$ the projections onto the subspaces spanned by the segments $S_i$ and $S_j$, respectively. Assumption \eqref{eq:k_standing assump} and the defining property of~$P_{k,z}$ imply that
\begin{equation*}
|P_{k,z}(\e i)-P_{k,z}(\e j)|\leq \e|i-j|+\dist(\e i,\partial \q_k^z)+\dist(\e j,\partial \q_k^z)\leq 2(C_2+1)2^{k_{\e}-k}\e+\e\,.
\end{equation*} 
Hence for $\e$ small enough the sides $S_i$ and $S_j$ cannot be parallel because $2^{k_{\e}}\e\ll\lambda$. Therefore the point $\Pi_{S_i}(\Pi_{S_j}(\e i))$ belongs to $S_i\cap S_j\subset 2^{-k}\lambda\ZZ^2$. We claim that
\begin{equation}\label{eq:batcorner}
b_{\e}(P_{k,z}(\e i)) = b_{\e}(P_{k,z}(\e j)) = \bar{w}_{\e}(\Pi_{S_i}(\Pi_{S_j}(\e i)))\,.
\end{equation}
Indeed, denote by $0\leq k^*_i,k^*_j\leq k_{\e}$ the layer numbers and by $S_i^*\subset S_i$ and $S_j^*\subset S_j$ the sides satisfying
\begin{equation*}
b_{\e}(P_{k,z}(\e i))=b^{\e}_{2^{k^*_i}S_i^*}[w_{S^*_i}^{\e}](2^{k^*_i}P_{k,z}(\e i)) \, ,\quad\quad b_{\e}(P_{k,z}(\e j))=b^{\e}_{2^{k^*_j}S_j^*}[w_{S^*_j}^{\e}](2^{k^*_j}P_{k,z}(\e j)) \, .
\end{equation*}
(The sides $S_i^*$ and $S_j^*$ are needed due to the fact that~$S_i$ or~$S_j$ may be contained in~$\de Q_k$, where~$b_{\e}$ is defined using the cubes which decompose the layer~$L_{k+1}$; if, for instance, $S_i$ is not contained in~$\de Q_k$, then~$k_i^* = k$ and $S_i^* = S_i$.) Since by the dyadic construction $S^*_i$ either agrees with $S_i$ or is exactly one half of the side~$S_i$, it follows that $\Pi_{S_i}(\Pi_{S_j}(\e i))$ is an endpoint of $S^*_i$. By the same reasoning it is also an endpoint of $S^*_j$.
Since $\Pi_{S_i}(\Pi_{S_j}(\e i))=\Pi_{S_j}(\Pi_{S_i}(\e j))$, it then suffices to show that $2^{k_i^*}P_{k,z}(\e i)$ and $2^{k_j^*}P_{k,z}(\e j)$ are sufficiently close to $2^{k_i^*}\Pi_{S_i}(\Pi_{S_j}(\e i))$ and $2^{k_j^*}\Pi_{S_j}(\Pi_{S_i}(\e j))$, respectively, since by construction the boundary datum is constant in a neighborhood of the endpoints of a side. The $1$-Lipschitz continuity of $\Pi_{S_i}$ and $\Pi_{S_j}$ combined with \eqref{eq:k_standing assump} yields
\begin{align}\label{eq:k_ilocate0}
|P_{k,z}(\e i)-\Pi_{S_i}(\Pi_{S_j}(\e i))|&\leq |\e i-\Pi_{S_j}(\e i)|\nonumber
\leq |\e i- \e j|+|\e j-\Pi_{S_j}(\e j)|+|\Pi_{S_j}(\e i)-\Pi_{S_j}(\e j)|   
\\
&\leq 2\e+(C_2+1)2^{k_{\e}-k}\e \, .
\end{align}
Similarly we can derive the estimate
\begin{equation}\label{eq:k_jlocate0}
|P_{k,z}(\e j)-\Pi_{S_j}(\Pi_{S_i}(\e j))|\leq 2\e+(C_2+1)2^{k_{\e}-k}\e \, .
\end{equation}
By \eqref{eq:choicek_e}, for $\e$ small enough both terms can be bounded by $2^{-k+1}(C_2+2)\e\theta_{\e}^{-1}$. Since $k\leq k^*_i,k^*_j\leq k+1$, multiplying \eqref{eq:k_ilocate0} by $2^{k_i^*}$ and \eqref{eq:k_jlocate0} by $2^{k_j^*}$ yields
\begin{equation}\label{eq:b will be constant}
\max\left\{|2^{k_i^*}P_{k,z}(\e i)-2^{k_i^*}\Pi_{S_i}(\Pi_{S_j}(\e i))|,|2^{k_j^*}P_{k,z}(\e j)-2^{k_j^*}\Pi_{S_j}(\Pi_{S_i}(\e j))|\right\}\leq C_3 \e\theta_{\e}^{-1},
\end{equation}
where $C_3 = 4(C_2+2) = 392 < c_0$, $c_0$ being the constant in the definition~\eqref{eq:defbcases} (thus explaining the choice $c_0 = 393$). The estimate \eqref{eq:b will be constant} thus implies \eqref{eq:batcorner}. 

Having in mind that $\frac{\theta_{\e}}{\e}|\dist(\e i,\partial I)-\dist(\e j,\partial I)|\leq \theta_{\e}$,
the $1$-Lipschitz continuity of $\Geo[\bar{w}_{\e}(\Pi_{S_i}(\Pi_{S_j}(\e i))),u_{k,\e}^z]$ and the formula for $\bar{u}_{\e}$ then yield that $|\bar{u}_{\e}(\e i)-\bar{u}_{\e}(\e j)|\leq \theta_{\e}$. From the definition of the function $\mathfrak{P}_{\e}$ and the previous estimate we infer that
\begin{equation}\label{eq:k_verygoodbound}
|u_{\e}(\e i)-u_{\e}(\e j)|\leq \theta_{\e} \, .
\end{equation}

\noindent {\it Case 3}: It remains to treat the case of points $i$ and $j$ such that $P_{k,z}(\e i)=\Pi_{S_i}(\e i)$ and $P_{k,z}(\e j)=\Pi_{S_i}(\e j)$. Here we use the Lipschitz-continuity of $b_{\e}$ on $S_i$. Note that $b_{\e}$ might be defined separately on two smaller sides contained in $S_i$, but nevertheless the Lipschitz property \eqref{eq:LiponS} holds on the whole $S_i$ due to convexity. Moreover, we want to apply the stability estimate of Lemma \ref{l.ongeodesics}. To this end, observe that  by~\eqref{eq:closenessbc} and~\eqref{eq:choicek_e} we have
\begin{equation*}
|b_{\e}(P_{k,z}(\e i))-u_{k,\e}^z|\leq C_2 2^{k_{\e}-k-m(\lambda)} \theta_{\e}\leq 2 C_2  2^{-m(\lambda)} 
\end{equation*}
and the right hand side can be made arbitrarily small (specifically, $2 C_2  2^{-m(\lambda)}  < c$, where~$c$ is the constant given in Lemma~\ref{l.ongeodesics}) since $m(\lambda)\gg 1$ for small $\lambda$. The same estimate holds with $i$ replaced by $j$. To reduce notation, we set $d_{\e,i}=\theta_{\e}\e^{-1}\dist(\e i,\partial \q_k^z)$ and $d_{\e,j}=\theta_{\e}\e^{-1}\dist(\e j,\partial \q_k^z)$. Then by the triangle inequality,~\eqref{eq:LiponS}, and Lemma \ref{l.ongeodesics} we have
\begin{align*}
|\bar{u}_{\e}(\e i)-\bar{u}_{\e}(\e j)|& \leq \big|\Geo[b_{\e}(P_{k,z}(\e i)),u_{k,\e}^z](d_{\e,i})-\Geo[b_{\e}(P_{k,z}(\e i)),u_{k,\e}^z](d_{\e,j})\big|
\\
& \quad +\big|\Geo[b_{\e}(P_{k,z}(\e i)),u_{k,\e}^z](d_{\e,j})-\Geo[b_{\e}(P_{k,z}(\e j)),u_{k,\e}^z](d_{\e,j})\big|
\\
& \leq |d_{\e,i}-d_{\e,j}|+\geo\left(b_{\e}(P_{k,z}(\e i)),b_{\e}(P_{k,z}(\e j))\right)
\\
& \leq \theta_{\e}+\tfrac{\pi}{2} \,\theta_{\e}\e^{-1}|\Pi_{S_i}(\e i)-\Pi_{S_i}(\e j)|\leq (1+\tfrac{\pi}{2}) \theta_{\e} \, .
\end{align*}
Hence we deduce the weaker but still sufficient bound 
\begin{equation}\label{eq:k_worstbound}
|u_{\e}(\e i)-u_{\e}(\e j)|\leq 3 \theta_{\e} \, .
\end{equation}

\ul{Substep 2.2} (interactions between different cubes)

\noindent Now we consider lattice points $\e i\in \q_{k_i}^{z_i}$ and $\e j\in \q_{k_j}^{z_j}$ with $\q_{k_i}^{z_i}\neq \q_{k_j}^{z_j}$ and $|i-j|=1$. In this substep we assume that $0\leq k_i,k_j \leq k_{\e} -1$, that means, we consider only the layers where we interpolate. Assume without loss of generality that $k_i\leq k_j$. We also have to consider the numbers $k^*_i$ and $k^*_j$ characterized by the property
\begin{equation*}
P_{k_i,z_i}(\e i)\in L_{k^*_i}\setminus \partial Q_{k_i^*},\quad\quad P_{k_j,z_j}(\e j)\in L_{k^*_j}\setminus \partial Q_{k_j^*}\,,
\end{equation*}
that means, those values which determine the rescaling of the boundary conditions. Note that from the definition of $P_{k,z}$ it follows that
\begin{equation}\label{eq:relstar}
k_i\leq k_i^*\leq k_i+1,\quad\quad k_j\leq k_j^*\leq k_j+1\,.
\end{equation}
Since all cubes $\q_k^z$ are half-open and oriented along the coordinate axes, there exists a side $S_{ij}$ of $\partial \q_{k_j}^{z_j}$ such that the segment $[\e i, \e j]$ intersects $S_{ij}$ orthogonally and additionally 
\begin{equation} \label{eq:nontrivial inclusion of side}
	S_{ij}\subset \partial \q_{k_i}^{z_i} \cap \partial \q_{k_j}^{z_j} \,,
\end{equation} 
where we used that $k_j \geq k_i$ to ensure the inclusion. In particular,
\begin{equation} \label{eq:dist smaller than eps}
	\dist(\e i,\partial \q_{k_i}^{z_i})+\dist(\e j,\partial \q_{k_j}^{z_j})\leq \e
\end{equation}
which implies that
\begin{equation}\label{eq:projectionsclose}
|P_{k_i,z_i}(\e i)-P_{k_j,z_j}(\e j)|\leq |P_{k_i,z_i}(\e i)-\e i|+|\e i-\e j|+|\e j-P_{k_j,z_j}(\e j)|\leq 2\e\,.
\end{equation} 
Moreover, in analogy to the estimate \eqref{eq:generalbounds} we deduce the bound
\begin{equation}\label{eq:k_generalbounds}
| \bar{u}_{\e}(\e i)  -b_{\e}(P_{k_i,z_i}(\e i))|+| \bar{u}_{\e}(\e j)  -b_{\e}(P_{k_j,z_j}(\e j))|\leq\theta_{\e} \,.
\end{equation}
Note that the above estimate does not give information on $|\bar{u}_{\e}(\e i) - \bar{u}_{\e}(\e j)|$ since, {\em a priori}, $b_{\e}(P_{k_i,z_i}(\e i))$ might differ  from $b_{\e}(P_{k_j,z_j}(\e j))$. We will show that this is not the case. To this end, we shall prove two alternatives: 
\begin{equation}\label{eq:alternatives}
\begin{split}
&{\rm (i)} \quad P_{k_i,z_i}(\e i)\in S_{ij}\text{ and }P_{k_j,z_j}(\e j)\in S_{ij}\,;\\
&{\rm (ii)}\quad \dist(2^{k_i^*}P_{k_i,z_i}(\e i),\lambda\ZZ^2)\leq 2^{k_i+3}\e\text{ and }\dist(2^{k_j^*}P_{k_j,z_j}(\e j),\lambda\ZZ^2)\leq 2^{k_j+2}\e\,.
\end{split} 
\end{equation}
Indeed, first assume that $P_{k_j,z_j}(\e j)\notin S_{ij}$. Then there exists another facet $S_j$ of~$\q_{k_j}^{z_j}$, $S_j \neq S_{ij}$, such that $P_{k_j,z_j}(\e j)\in S_j$. Since $\dist(\e j,S_{ij})\leq \e$ and $\dist(\e j,S_j)\leq\e$, the sides $S_{ij}$ and $S_j$ cannot be parallel since the distance between parallel sides of $\partial \q_{k_j}^{z_j}$ is given by $2^{-k_j}\lambda\geq\tfrac{1}{2}\theta_{\e}\lambda\gg\e$. Hence $\Pi_{S_j}(\Pi_{S_{ij}}(\e j))\in S_j\cap S_{ij}\subset2^{-k_j}\lambda\ZZ^2$, so that
\begin{equation}\label{eq:auxest}
\dist(P_{k_j,z_j}(\e j),2^{-k_j}\lambda\ZZ^2)=\dist(\Pi_{S_j}(\e j),2^{-k_j}\lambda\ZZ^2)\leq |\e j - \Pi_{S_{ij}}(\e j)|\leq \e \, .
\end{equation}
In particular, applying \eqref{eq:relstar} we deduce from the above estimate that
\begin{equation}\label{eq:locatePj}
\begin{split}
	\dist(2^{k_j^*}P_{k_j,z_j}(\e j),\lambda\ZZ^2)
	&  =  2^{k_j^*} \dist(P_{k_j,z_j}(\e j),2^{-k_j^*}\lambda\ZZ^2) \\
	& \leq 2^{k_j + 1}\dist(P_{k_j,z_j}(\e j),2^{-k_j}\lambda\ZZ^2)
	\leq 2^{k_j + 1}\e\,,
\end{split}
\end{equation}
where we used that $2^{-k_j} \lambda \ZZ^2 \subset 2^{-k_j^*} \lambda \ZZ^2$. For the point $P_{k_i,z_i}(\e i)$ consider first the case $P_{k_i,z_i}(\e i)\in S_{ij}$. Due to what we aim to prove we then assume that $P_{k_j,z_j}(\e j)\in S_j\setminus S_{ij}$ as above, so that \eqref{eq:locatePj} holds. Hence \eqref{eq:projectionsclose} and \eqref{eq:auxest} imply
\begin{equation*}
\dist(P_{k_i,z_i}(\e i),2^{-k_j}\lambda \ZZ^2)\leq\dist(P_{k_j,z_j}(\e j),2^{-k_j}\lambda\ZZ^2)+2\e\leq 3\e  < 2^2 \e \, .
\end{equation*}
In order to conclude the claimed estimate, observe that the condition $P_{k_i,z_i}(\e i)\in S_{ij}\subset \partial \q_{k_j}^{z_j} \subset L_{k_j}$ forces $k_i^*\geq k_j$ so that
\begin{align}\label{eq:locatePi}
\dist(2^{k_i^*}P_{k_i,z_i}(\e i),\lambda\ZZ^2)& = 2^{k^*_i}\dist(P_{k_i,z_i}(\e i),2^{-k_i^*}\lambda\ZZ^2)\nonumber
\\
&\leq 2^{k_i+1}\dist(P_{k_i,z_i}(\e i),2^{-k_j}\lambda\ZZ^2)\leq 2^{k_i+3}\e\,,
\end{align}
where we used that $2^{-k_j} \lambda \ZZ^2 \subset 2^{-k_i^*}\lambda \ZZ^2$. On the contrary, if $P_{k_i,z_i}(\e i)\notin S_{ij}$, denote by $S_i$ a facet of $\q_{k_i}^{z_i}$ such $P_{k_i,z_i}(\e i)\in S_i$, $S_i \neq S_{ij}$. Then $S_i$ and $S_{ij}$ do not lie on the same straight line. To show this, we argue by contradiction. Assume that $S_i\subset {\rm span}(S_{ij})$. Since the segment $[\e i,\e j]$ is orthogonal to $S_{ij}$, this would imply the false statement
\begin{equation*}
\Pi_{S_i}(\e i)=\Pi_{S_{ij}}(\e i)=\Pi_{S_{ij}}(\e j)\in S_{ij}\,,
\end{equation*}
where the last inclusion holds since $\e j\in \q_{k_j}^{z_j}$ and $S_{ij}$ is a side of the cube $\q_{k_j}^{z_j}$. Since~$S_i$ and~$S_{ij}$ can neither be parallel for $\e$ small enough, we conclude that   $\Pi_{S_i}(\Pi_{S_{ij}}(\e i))\in {\rm span}(S_i)\cap {\rm span}(S_{ij})$. Since, by \eqref{eq:nontrivial inclusion of side}, $S_{ij}\subset\partial \q_{k_i}^{z_i}$, we know that $\Pi_{S_i}(\Pi_{S_{ij}}(\e i))\in 2^{-k_i}\lambda\ZZ^2$. Thus the defining property of $S_{ij}$ yields
\begin{equation*}
\dist(P_{k_i,z_i}(\e i),2^{-k_i}\lambda\ZZ^2)=\dist(\Pi_{S_i}(\e i),2^{-k_i}\lambda\ZZ^2)\leq |\Pi_{S_{ij}}(\e i)-\e i|\leq |\e i-\e j|=\e
\, .
\end{equation*}
Again in combination with \eqref{eq:relstar} this inequality implies the estimate
\begin{equation}\label{eq:locatePi2}
\dist(2^{k_i^*}P_{k_i,z_i}(\e i),\lambda\ZZ^2)\leq 2^{k_i+1}\dist(P_{k_i,z_i}(\e i),2^{-k_i}\lambda\ZZ^2)\leq 2^{k_i+1}\e \, .
\end{equation}
It remains to establish an estimate for $\dist(2^{k_j^*}P_{k_j,z_j}(\e j),\lambda\ZZ^2)$ when $P_{k_i,z_i}(\e i)\notin S_{ij}$ and $P_{k_j,z_j}(\e j)\in S_{ij}$. In this case we have
\begin{align*}
\dist(P_{k_j,z_j}(\e j),2^{-k_j}\lambda\ZZ^2)& = \dist(\Pi_{S_{ij}}(\e j),2^{-k_i}\lambda\ZZ^2)
\leq |\Pi_{S_{ij}}(\e j)-\Pi_{S_{ij}}(\Pi_{S_i}(\e i))| \\
& \leq |\e j-\e i|+|\e i-\Pi_{S_i}(\e i)|\leq 2\e \, ,
\end{align*} 
where we used the inclusion $2^{-k_i} \lambda \ZZ^2 \subset 2^{-k_j} \lambda \ZZ^2$ (recall the assumption $k_i \leq k_j$ at the beginning of Substep~2.2). From the above inequality we deduce the estimate
\begin{equation}\label{eq:locatePj2}
\dist(2^{k_j^*}P_{k_j,z_j}(\e j),\lambda\ZZ^2)\leq 2^{k_j+1}\dist(P_{k_j,z_j}(\e j),2^{-k_j}\lambda\ZZ^2)\leq 2^{k_j+2}\e\,.
\end{equation}
Combining the estimates \eqref{eq:locatePj}, \eqref{eq:locatePi}, \eqref{eq:locatePi2}, and \eqref{eq:locatePj2} we proved the claimed alternatives (i) or (ii) in \eqref{eq:alternatives}. We analyze them separately below.

\medskip
\noindent {\it Case 5}: Assume $\dist(2^{k_i^*}P_{k_i,z_i}(\e i),\lambda\ZZ^2)\leq 2^{k_i+3}\e$ and $\dist(2^{k_j^*}P_{k_j,z_j}(\e j),\lambda\ZZ^2)\leq 2^{k_j+2}\e$ (that means, alternative~(ii)) and denote by $\lambda \bar{z}_i,\lambda \bar{z}_j\in \lambda\ZZ^2$ points realizing the minimal distance. We start by observing that $2^{-k_i^*}\lambda\bar{z}_i=2^{-k_j^*}\lambda\bar{z}_j$. Indeed, on the one hand we use \eqref{eq:projectionsclose} to estimate
\begin{equation*}
|2^{-k_i^*}\lambda\bar{z}_i-2^{-k_j^*}\lambda\bar{z}_j|\leq|2^{-k_i^*}\lambda\bar{z}_i-P_{k_i,z_i}(\e i)|+|P_{k_j,z_j}(\e j)-2^{-k_j^*}\lambda\bar{z}_j|+2\e\leq 14\e\,.
\end{equation*} 
On the other hand, since both points on the left hand side belong to $2^{-\max\{k_i^*,k_j^*\}}\lambda\ZZ^2\subset 2^{-k_j-1}\lambda\ZZ^2$ and $2^{-k_j-1}\lambda\geq \tfrac{1}{2}\theta_{\e}\lambda\gg\e$ by~\eqref{eq:choicek_e}, we deduce that $2^{-k_i^*}\lambda\bar{z}_i=2^{-k_j^*}\lambda\bar{z}_j$. We set $p_{ij} := 2^{-k_i^*}\lambda\bar{z}_i=2^{-k_j^*}\lambda\bar{z}_j$.

Let now $S_i$ and $S_j$ be the sides of the cubes in $\mathcal{Q}_{k_i^*}$ and $\mathcal{Q}_{k_j^*}$, respectively, such that $P_{k_i,z_i}(\e i) \in S_i$, $P_{k_j,z_j}(\e j) \in S_j$, and 
\begin{equation}\label{eq:sidechoice}
b_{\e}(P_{k_i,z_i}(\e i))=b^{\e}_{2^{k_i^*}S_i}[w^{\e}_{S_i}](2^{k_i^*}P_{k_i,z_i}(\e i)) \,,\quad  b_{\e}(P_{k_j,z_j}(\e i))=b^{\e}_{2^{k_j^*}S_j}[w^{\e}_{S_j}](2^{k_j^*}P_{k_j,z_j}(\e j))\,.
\end{equation} 
We claim that $p_{ij} \in S_i \cap S_j$. Indeed, since by assumption 
\begin{equation*}
|p_{ij}-P_{k_i,z_i}(\e i)|\leq 8\e\,,\quad\quad
|p_{ij}-P_{k_j,z_j}(\e j)|\leq 4 \e
\end{equation*}
and for a given side $S\subset\partial \q_k^z$ with $0\leq k<k_{\e}$ and $z\in\ZZ^2$ it holds that $\dist(S,2^{-k}\lambda\ZZ^2\setminus S)\geq 2^{-k}\lambda\gg \e$, the claim follows by a triangle inequality argument. Moreover, recalling that $k_i,k_j \leq k_\e {-}1$, property \eqref{eq:choicek_e} yields $2^{k_i+3}\e\leq 4\e\theta_{\e}^{-1} < c_0 \e\theta_{\e}^{-1}$ and $2^{k_j+2}\e\leq 2\e\theta_{\e}^{-1} < c_0 \e\theta_{\e}^{-1}$, $c_0$ being the constant used in the definition~\eqref{eq:defbcases}. Thus we conclude from \eqref{eq:sidechoice} and the definition of the boundary condition that
\begin{equation*}
	b_{\e}(P_{k_i,z_i}(\e i))=b_{\e}(p_{ij}) = b_{\e}(P_{k_j,z_j}(\e j))\,,
\end{equation*}   
where we used that $p_{ij}$ must be an endpoint of $S_i$ and $S_j$. 
%
Combined with \eqref{eq:k_generalbounds} we infer
\begin{equation*}
|\bar{u}_{\e}(\e i)-\bar{u}_{\e}(\e j)|\leq |\bar{u}_{\e}(\e i)-b_{\e}(P_{k_i,z_i}(\e i)) |+|\bar{u}_{\e}(\e j)-b_{\e}(P_{k_j,z_j}(\e j))|\leq \theta_{\e} \, , 
\end{equation*}
which by the definition of $\mathfrak{P}_{\e}$ allows to conclude that
\begin{equation}\label{eq:k_verygoodbound3}
|u_{\e}(\e i)-u_{\e}(\e j)|\leq\theta_{\e} \, . 
\end{equation}

\noindent{\it Case 6}:  Finally we analyze the case $P_{k_i,z_i}(\e i)\in S_{ij}$ and $P_{k_j,z_i}(\e j)\in S_{ij}$ (that means, alternative~(i)). Since by assumption the line segment $[\e i,\e j]$ intersects $S_{ij}$ orthogonally and $S_{ij}$ is a side of both cubes $\q_{k_i}^{z_i}$and $\q_{k_j}^{z_j}$, we know that $P_{k_i,z_i}(\e i)=P_{k_j,z_j}(\e j)$. In combination with
estimate \eqref{eq:k_generalbounds} we therefore obtain
\begin{equation*}
|\bar{u}_{\e}(\e i)-\bar{u}_{\e}(\e j)|   = |\bar{u}_{\e}(\e i)- b_{\e}(P_{k_i,z_i}(\e i))| + |b_{\e}(P_{k_j,z_j}(\e j)) -\bar{u}_{\e}(\e j) |\leq \theta_{\e}   \, ,
\end{equation*} 
which yields the estimate
\begin{equation}\label{eq:k_optbound}
|u_{\e}(\e i)-u_{\e}(\e j)|\leq  \theta_{\e} \, . 
\end{equation}

\medskip
\ul{Substep 2.3} (interactions between $Q_{k_{\e}-1}$ and the layers)

\noindent In this step we consider the case where $\e i\in\e\ZZ^2\cap Q_{k_{\e}-1}$ but $\e j\in\e\ZZ^2\setminus Q_{k_{\e}-1}$. Since $|\e i-\e j|=\e$, it follows that $\e i\in L_{k_{\e}}$ and $\e j\in L_{k_{\e}-1}$, that means, the last and the last but one layers. Indeed, the thickness of the last layer is $2^{-k_{\e}}\lambda\geq \tfrac{1}{2}\theta_{\e} \lambda\gg \e$, so that the claim follows by a triangle inequality argument. Let $z_j\in\ZZ^2$ be such that $\e j\in \q_{k_{\e}-1}^{z_j}$. From the definition of $u_{\e}$ in~\eqref{eq:globalu_e} and \eqref{almostbestapproximation} we infer 
\begin{align*}
|u_{\e}(\e i)-u_{\e}(\e j)|&=|v_{\e}(\e i)-\mathfrak{P}_{\e}(\bar{u}_{\e}(\e j))|\leq |v_{\e}(\e i)-\bar{u}_{\e}(\e j)|+\theta_{\e}
\\
& \leq |v_{\e}(\e i)-b_{\e}(P_{z_j,k_{\e}-1}(\e j))|+ |b_{\e}(P_{z_j,k_{\e}-1}(\e j)) -\bar{u}_{\e}(\e j)| + \theta_{\e} \\
&\leq |v_{\e}(\e i)-b_{\e}(P_{z_j,k_{\e}-1}(\e j))|+2\theta_{\e}\,,
\end{align*}
where the last inequality can be proven as the estimates \eqref{eq:generalbounds} and \eqref{eq:k_generalbounds}. Using the general estimate \eqref{eq:closenessbc} with $k=k_{\e}-1$ we can further bound the last term to conclude that
\begin{equation}\label{eq:nointerpolation1}
|u_{\e}(\e i)-u_{\e}(\e j)|\leq |v_{\e}(\e i)-u_{k_{\e}-1,\e}^{z_j}|+C\,\theta_{\e}=|v_{\e}(\e i)-u(2^{ -k_{\e} + 1}\lambda(z_j+\tfrac{1}{2} e_1+ \tfrac{1}{2} e_2))|+C\,\theta_{\e} \, .
\end{equation}
Recall that $2^{ -k_{\e} + 1 }\lambda( z_j+\tfrac{1}{2} e_1+ \tfrac{1}{2}e_2)$ is the midpoint of the cube $\q_{k_{\e}-1}^{z_j}$, so that by \eqref{eq:choicek_e}
\begin{align*}
|\e i - 2^{ -k_{\e} + 1}\lambda( z_j+ \tfrac{1}{2} e_1+ \tfrac{1}{2} e_2)|&\leq |\e i-\e j|+|\e j-2^{ -k_{\e} + 1}\lambda(z_j+\tfrac{1}{2} e_1+ \tfrac{1}{2} e_2)|
\\
&\leq \e+2^{-k_{\e}+1}\lambda\leq\e+2\theta_{\e}\lambda\,.
\end{align*}
We insert this bound with the estimates \eqref{almostbestapproximation} and \eqref{eq:xmodxcontinuous} in \eqref{eq:nointerpolation1} to obtain
\begin{align*}
|u_{\e}(\e i)-u_{\e}(\e j)|&\leq |u(\e i)-u( 2^{ -k_{\e} + 1}\lambda( z_j+\tfrac{1}{2}e_1+ \tfrac{1}{2}e_2))|+C\,\theta_{\e}
\\
&\leq 2\frac{|\e i - 2^{ -k_{\e} + 1}\lambda(z_j+ \tfrac{1}{2} e_1+ \tfrac{1}{2} e_2)|}{|2^{ -k_{\e} + 1}\lambda(z_j+\tfrac{1}{2}e_1+\tfrac{1}{2}e_2)|}+C\,\theta_{\e}\leq C\frac{\e+2\theta_{\e}\lambda}{(2^{m(\lambda)}- 2)\lambda}+C\,\theta_{\e},
\end{align*}
where for the last inequality we used the set inclusion \eqref{eq:minimalhole}. Since $\e\ll\theta_{\e}$, for $\lambda>0$ fixed we can assume that $\e\leq\theta_{\e}\lambda$, so that the last estimate turns into the bound
\begin{equation}\label{eq:nointerpolation2}
|u_{\e}(\e i)-u_{\e}(\e j)|\leq C\,\theta_{\e} \, .
\end{equation}

\medskip
\ul{Step 3} (energy estimates in $Q(\lambda)$)

\noindent Let us first summarize what we have proven so far. By our choice of $m(\lambda)$ at the beginning of Step 2 we have $Q(\lambda)\subset B_{\eta/2}$ and~\eqref{eq:minimalhole}. Hence we can use the bound \eqref{eq:step1final} of Step~1 to control the energy due to interactions with both points in $Q_{k_{\e}-1}$, where $u_\e = v_\e$, cf.~\eqref{eq:globalu_e}. For the interactions with at least one point in $Q(\lambda)\setminus Q_{k_{\e}-1}$, we showed in Substeps 2.1--2.3 (cf.\ \eqref{eq:k_verygoodbound}, \eqref{eq:k_worstbound}, \eqref{eq:k_verygoodbound3}, \eqref{eq:k_optbound}, and~\eqref{eq:nointerpolation2}) that the bound
\begin{equation}\label{eq:fineconstruction}
|u_{\e}(\e i)-u_{\e}(\e j)|\leq C\,\theta_{\e}
\end{equation}
holds with a uniform constant $C<+\infty$. In order to obtain precise estimates on the energy due to interactions with at least one point in $Q(\lambda)\setminus Q_{k_{\e}-1}$, we have to count the number of lattice points $\e i, \e j$ satisfying $u_{\e}(\e i)\neq u_{\e}(\e j)$. For such points~\eqref{eq:fineconstruction} will suffice.

Fix such $\e i, \e j$. Then there exists a cube $\q_k^z\in\mathcal{Q}_k$ with $0\leq k< k_{\e}$ and $z\in\ZZ^2$ with
\begin{equation} \label{eq:count these instead}
\dist(\e i,\partial \q_k^z)\leq C\, 2^{k_{\e}-k}\e\,.
\end{equation}
Indeed, if $\e i, \e j \in Q(\lambda)\setminus Q_{k_{\e}-1}$ and they belong to the same cube of $\mathcal{Q}_k$ (Substep~2.1), then this is a consequence of~\eqref{eq:k_standing assump}. If $\e i, \e j \in Q(\lambda)\setminus Q_{k_{\e}-1}$, but they belong to two different cubes (Substep~2.2), then this follows from~\eqref{eq:dist smaller than eps}. Finally, if, for instance, $\e i \in Q_{k_\e -1}$ and $\e j \in Q(\lambda)\setminus Q_{k_{\e}-1}$ (Substep~2.3), then this is a consequence of the fact that $\e i \in L_{k_\e}$ and $\e j \in L_{k_\e -1}$ (see also~\eqref{eq:dist smaller than eps}). Therefore it suffices to count lattice points that satisfy~\eqref{eq:count these instead}. 

From a covering argument with cubes of volume $\e^2$ and \eqref{eq:choicek_e} we infer that
\begin{align*}
\e^2\#\{\e i\in\e\ZZ^2:\,\dist(\e i,\partial \q_k^z)\leq C 2^{k_{\e}-k}\e\}&\leq 4(2^{-k}\lambda+2(C2^{k_{\e}-k}\e+\e))(2C2^{k_{\e}-k}\e+2\e)
\\
&\leq C\, (2^{-k}\lambda+\e(2^{-k}\theta_{\e}^{-1}+1))(2^{-k}\theta_{\e}^{-1}+1)\e
\\
&\leq C\,2^{-2k}\lambda\e\theta_{\e}^{-1},
\end{align*} 
where in the last inequality we also used that $\e\theta_{\e}^{-1}\ll\lambda$ for $\e$ small enough. Next recall that the number of cubes $\q_{k}^z$ in the layer $L_k$ can be roughly bounded by
\begin{equation*}
\#\mathcal{Q}_k\leq C\,2^{m(\lambda)}2^k.
\end{equation*}
Combining the previous two estimates with \eqref{eq:fineconstruction} we can estimate the energy of $u_{\e}$ via
\begin{align*}
\frac{1}{\e\theta_{\e}}E_{\e}(u_{\e};Q(\lambda))&\leq\frac{1}{\e\theta_{\e}}E_{\e}(v_{\e};B_{\eta})+\frac{C\theta_{\e}^2}{\e\theta_{\e}}2^{m(\lambda)}\lambda\sum_{k=0}^{k_{\e}} 2^{-k}\e\theta_{\e}^{-1}
\\
&\leq\frac{1}{\e\theta_{\e}}E_{\e}(v_{\e};B_{\eta})+ C 2^{m(\lambda)}\lambda\,.
\end{align*}
Due to the choice of $m(\lambda)$ it holds that $2^{m(\lambda)}\lambda\leq\eta$. Subtracting the term $2\pi|\log \e|\tfrac{\e}{\theta_{\e}}$ and inserting the upper bound \eqref{eq:step1final} we conclude that
\begin{equation}\label{eq:step3final}
\limsup_{\e \to 0}\left(\frac{1}{\e\theta_{\e}}E_{\e}(u_{\e};Q(\lambda))-2\pi|\log \e|\frac{\e}{\theta_{\e}}\right)\leq C\,\eta\,.
\end{equation}
We emphasize that $Q(\lambda)$ implicitly depends on $\eta$ through the quantity $2^{m(\lambda)}\lambda$.

\medskip
\ul{Step 4} (from local to global constructions)

\noindent We are now in a position to define $u_\e$ globally. In this step we stress again the dependence on~$n$ of~$\lambda_n$. We start by repeating the construction presented in Step~2 around each singularity $x_h$ of $u$, by defining $u_\e$ as in~\eqref{eq:globalu_e} (combined with a reflection if $\deg(u)(x_h)=-1$) in the squares $Q(\lambda_n,x_h) = Q(\lambda_n) + x_h$.

To define $u_\e$ outside the squares $Q(\lambda_n,x_h)$, we first observe that the square $x_h + [-2^{m(\lambda_n) + 1} \lambda_n, 2^{m(\lambda_n) + 1} \lambda_n]$ is not contained in $B_{\eta/2}(x_h)$, since $m(\lambda_n)$ has been chosen as the maximal integer such that $Q(\lambda_n,x_h) = x_h + [-2^{m(\lambda_n)} \lambda_n, 2^{m(\lambda_n)} \lambda_n] \subset B_{\eta/2}(x_h)$. This yields $\eta/4 \leq 2^{m(\lambda_n) + 1 } \lambda_n$ and thus, by~\eqref{eq:minimalhole}, 
\begin{equation} \label{eq:far enough from singularity}
	Q^{\lambda_n}_k(x_h) = x_h + Q_k  \supset B_{(2^{m(\lambda_n)}-2)\lambda}(x_h)\supset B_{\eta/16}(x_h) \, .
\end{equation}
Note that here we stress the dependence of $Q^{\lambda_n}_k(x_h)$ on $\lambda_n$, in contrast to the notation adopted for $Q_k$ in Step~2. We recall that $Q^{\lambda_n}_{-1}(x_h) = Q(\lambda_n,x_h)$.

Applying Lemma~\ref{lemma:discretization of smooth wout sing} with $O=\Omega \sm \bigcup_{h=1}^N\ol B_{\eta/16}(x_h)$ and $\tilde O = \tilde \Omega \sm \bigcup_{h=1}^N\ol B_{\eta/32}(x_h)$ to $u \in C^{\infty}(\tilde \Omega \sm \bigcup_{h=1}^N\ol B_{\eta/32}(x_h);\SS^1)$, we get a sequence of piecewise constant functions $u_n \in \PC_{\lambda_n}(\SS^1)$ such that $u_n \to u$ strongly in $L^1(\Omega \sm \bigcup_{h=1}^N\ol B_{\eta/16}(x_h);\RR^2)$ and, by~\eqref{eq:far enough from singularity},
\begin{equation} \label{eq:energy of pc far from singularity}
\limsup_{n \to +\infty}  \integral{J_{u_n} \cap (\Omega \sm \bigcup_{h=1}^N Q_0^{\lambda_n}(x_h))^{\lambda_n}}{\hspace{-1em}\geo(u_n^-,u_n^+) |\nu_{u_n}|_1}{ \d \H^1} \leq \integral{\Omega}{|\nabla u|_{2,1}}{\d x} \, .
\end{equation}
Notice that the squares $Q_0^{\lambda_n}(x_h)$ have vertices on the lattice $\lambda_n \ZZ^2$.

Let us fix $n$ large enough. For $\e i \in\e\ZZ^2\setminus\bigcup_{h=1}^N Q_0^{\lambda_n}(x_h)$ we define $u'_{\e}(\e i)$ as the recovery sequence given in the proof of Proposition~\ref{prop:construction of ueps} for the piecewise constant function~$u_n \in \PC_{\lambda_n}(\SS^1)$ with the constant $c_0=393$ in \eqref{eq:defbcases}. 
Then we define $u_\e(\e i) := u_\e'(\e i)$ for~$\e i \in\e\ZZ^2\setminus\bigcup_{h=1}^N Q(\lambda_n,x_h) = \e\ZZ^2\setminus\bigcup_{h=1}^N Q^{\lambda_n}_{-1}(x_h)$. This completes the definition of~$u_\e$ in~$\e \ZZ^2$. 

We claim that 
\begin{equation} \label{eq:equal on the last layer}
	\e i \in \e \ZZ^2 \cap Q^{\lambda_n}_{-1}(x_h) \text{ and }  \dist(\e i, \de Q^{\lambda_n}_{-1}(x_h)) \leq  \e \implies  u_\e(\e i) = u_\e'(\e i) \, ,
\end{equation} 
that means, the two constructions given by Step 2 and Proposition \ref{prop:construction of ueps} are identical. Indeed, first note that the assumptions on $\e i$ above imply that $\e i\in L_0$. Hence we find $\q_0^{z_0}\in\mathcal{Q}_{0}$ such that $\e i\in\q_0^{z_0}$. We now consider the two cases $P_{0,z_0}(\e i)\in\partial\q_0^{z_0} \setminus\partial Q^{\lambda_n}_{-1}(x_h)$ and $P_{0,z_0}(\e i)\in \partial Q^{\lambda_n}_{-1}(x_h)$. If $P_{0,z_0}(\e i)\in\partial\q_0^{z_0} \setminus\partial Q^{\lambda_n}_{-1}(x_h)$, let $S_i \subset \partial\q_0^{z_0}$ be the side such that~$P_{0,z_0}(\e i) \in S_i$. By the assumption in~\eqref{eq:equal on the last layer}, $S_i$ is not contained in $\de Q^{\lambda_n}_{0}(x_h)$ and thus it intersects a side $S_0$ of $\q_0^{z_0}$ such that $S_0 \subset \de Q^{\lambda_n}_{-1}(x_h)$. In particular, $\Pi_{S_i}(\Pi_{S_0}(\e i))\in\lambda_n \ZZ^2$ is an endpoint of $S_i$ and, by~\eqref{eq:equal on the last layer},
\begin{equation*} 
|P_{0,z_0}(\e i)-\Pi_{S_i}(\Pi_{S_0}(\e i))|=|\Pi_{S_i}(\e i)-\Pi_{S_i}(\Pi_{S_0}(\e i))|\leq |\e i-\Pi_{S_0}(\e i)|\leq \e\ll c_0\frac{\e}{\theta_{\e}}\,,
\end{equation*}
where we used that $S_0$ is the side such that $\dist(\e i, S_0) = \dist(\e i, \de Q^{\lambda_n}_{-1}(x_h)) \leq \e$. Since $P_{0,z_0}(\e i)$ is close enough to the corner $ p_{i,0} := \Pi_{S_i}(\Pi_{S_0}(\e i)) \in \lambda_n \ZZ^2$, the boundary condition used for the definition of $u_\e$ at $P_{0,z_0}(\e i)$ agrees with its value at the corner, cf.~\eqref{eq:defbcases}. Thus
\begin{equation*}
    \begin{split}
        u_\e(\e i) & = \Geo\Big[ b_\e( p_{i,0} ) , u^{z_0}_{0,\e} \Big]\big( \theta_\e \e^{-1} \dist(\e i, \de \mathfrak{q}^{z_0}_{0}) \big)\\
        & = \Geo\Big[u\big(p_{i,0} + \lambda_n ( \tfrac{1}{2} e_1 + \tfrac{1}{2} e_2 ) \big)  , u\big( \lambda_n ( z_0 +  \tfrac{1}{2} e_1 + \tfrac{1}{2} e_2 ) \big) \Big] \big(\theta_\e \e^{-1} \dist(\e i, \de \mathfrak{q}^{z_0}_{0}) \big) \, .
    \end{split}
\end{equation*}
The same holds true for $u'_\e$. This concludes the proof of~\eqref{eq:equal on the last layer} when $P_{0,z_0}(\e i) \in \de \mathfrak{q}^{z_0}_0 \sm \de Q^{\lambda_n}_{-1}(x_h)$.  If, instead, $P_{0,z_0}(\e i) \in \de Q^{\lambda_n}_{-1}(x_h)$, let $S_i$ be the side of $\mathfrak{q}^{z_0}_0$ such that $P_{0,z_0}(\e i) \in S_i$. Then the two 3-tuples of values $v_{S_i}(u)$ and $w^{\e}_{S_i}$ defined in \eqref{eq:defvSu} and \eqref{eq:defv_S}, respectively, coincide (note that, by definition, both cubes in $\mathcal{Q}_0$ and $\mathcal{Q}_{-1}$ have size $\lambda$). Then~\eqref{eq:equal on the last layer} follows in this case too.

Taking into account \eqref{eq:deflimit} on each $Q(\lambda_n,x_h)$, the function $u_{\e}\in\mathcal{PC}_{\e}(\S_\e)$ converges in $L^1(\Omega;\RR^2)$ to the function $\mathring u_n \in L^1(\Omega;S^1)$ defined by  
\begin{equation*}
\mathring u_n(x) :=
\begin{cases}
u_n(x) &\mbox{if $x\in \Omega\setminus\bigcup_{h=1}^N Q(\lambda_n,x_h)$} \, ,
\\
\left(\begin{smallmatrix} 1 & 0\\ 0 &\deg(u)(x_h)\end{smallmatrix}\right) u^{\lambda_n}_0(x-x_h) &\mbox{if $x\in Q(\lambda_n,x_h)$ for some $1\leq h\leq N$} \, ,  
\end{cases}
\end{equation*}
We remark that the precise structure for fixed $\lambda_n$ is not important. Just note that due to the fact that $Q(\lambda_n,x_h)\subset B_{\eta/2}(x_h)$ and~\eqref{eq:deflimit}, the layer in each $Q(\lambda_n,x_h)$ where~$\mathring u_n$ differs from $\big(\tfrac{x-x_h}{|x-x_h|}\big)^{\pm 1}$, and thus from $u$, is of thickness $2\lambda_n$. Consequently
\begin{equation}\label{eq:lambdalimit}
\mathring u_n\to u\quad\text{in }L^1(\Omega;\RR^2)\quad\text{as } n \to +\infty\,.
\end{equation}

It remains to estimate the energy of $u_{\e}$ in terms of $\lambda_n$ and $\eta$. In particular, we need to estimate the interactions between the square $Q(\lambda_n,x_h)$ and the exterior. 
Thanks to~\eqref{eq:equal on the last layer} we can split the energy as 
\begin{align*}
\frac{1}{\e\theta_{\e}}E_{\e}(u_{\e};\Omega)-2\pi N|\log \e|\frac{\e}{\theta_{\e}}& \leq \frac{1}{\e\theta_{\e}}E_{\e}\Big(u'_{\e};\Omega\setminus \bigcup_{h=1}^N Q^{\lambda_n}_0(x_h)\Big) \\
& \quad + \sum_{h=1}^N\left(\frac{1}{\e\theta_{\e}}E_{\e}(u_{\e};Q(\lambda_n,x_h))-2\pi|\log \e|\frac{\e}{\theta_{\e}}\right) \, .
\end{align*}
By \eqref{eq:step3final} and \eqref{eq:limiteps} we can pass to the limit in $\e$ and conclude that
\begin{align}\label{eq:almostrecovery}
\limsup_{\e \to 0}\left(\frac{1}{\e\theta_{\e}}E_{\e}(u_{\e};\Omega)-2\pi N|\log \e|\frac{\e}{\theta_{\e}}\right)& \leq \int_{J_{u_{n}}\cap(\Omega\setminus \bigcup_{h=1}^NQ^{\lambda_n}_0(x_h))^{\lambda_n}}{ \hspace{-1em}\geo(u_{n}^-,u_{n}^+)|\nu_{u_{n}}|_1}{\mathrm{d}\mathcal{H}^1}\nonumber
\\
&\quad +C\,\eta\,.
\end{align}
Before we can conclude, we have to identify the flat limit of the vorticity measures $\mu_{u_{\e}}$ associated with the sequence $u_{\e}$. Since the right hand side in \eqref{eq:almostrecovery} is finite, Proposition~\ref{prop:compactness M vortices} implies that (up to a subsequence) $\mu_{u_{\e}}\flat\bar{\mu}$ for some $\bar{\mu}=\sum_{k=1}^N d_k\delta_{y_k}$ with $d_k\in\ZZ$ and  $|\bar{\mu}|(\Omega)\leq N$ (we allow $d_k=0$ in order to sum from $1$ to $N$). We claim that $\bar{\mu}=\mu$ with $\mu$ defined in the statement of the proposition. Here comes the argument. Fix $x_0\in\Omega\setminus\{x_1,\dots,x_N\}$. Since the singular part $2\pi N|\log \e|\tfrac{\e}{\theta_{\e}}$ of the estimate \eqref{eq:almostrecovery} is concentrated in the set $\bigcup_{h=1}^NB_{2r_{\e}}(x_h)$ (cf.~\eqref{eq:exactconcentration}) and $r_{\e}\to 0$, we deduce that for $0<\rho\ll\lambda$ small enough $\limsup_{\e \to 0}\frac{1}{\e\theta_{\e}}E_{\e}(u_{\e};B_{\rho}(x_0))<+\infty$. Since we assume here that $\theta_{\e}\ll \e|\log \e|$, Remark \ref{rmk:vortices vanish} yields that $\mu_{u_{\e}}\mres B_{\rho}(x_0)\flat 0$. Testing this convergence with a Lipschitz-function $\varphi\in C_c^{0,1}(B_{\rho}(x_0))$ such that $\varphi(x_0)=1$ we obtain that $x_0\notin\{y_1,\dots,y_N\}$ (or $x_0=y_k$ for some $k$ with $d_k=0$). Since $x_0\in\Omega\setminus\{x_1,\dots,x_N\}$ was arbitrary, we can write $\bar{\mu}=\sum_{h=1}^N d_h\delta_{x_h}$. It remains to prove that $d_h=\deg(u)(x_h)$ for all $1\leq h\leq N$. Note that for $\rho\ll\lambda$ it holds that $u_{\e}=v_{\e}$ on each $B_{\rho}(x_h)$, where $v_{\e}$ is defined in \eqref{eq:defbestapproximation}. Due to \eqref{eq:step1final} we have for $\e$ small enough
\begin{equation}\label{eq:loggrowth}
\frac{1}{\e^2}E_{\e}(v_{\e};B_{\rho}(x_h))\leq C\eta\frac{\theta_{\e}}{\e}+2\pi|\log \e|\leq C|\log \e|\,.
\end{equation}
Hence we can apply \cite[Proposotion 5.2]{Ali-Cic-Pon}, which states that in dimension $2$ the flat convergence of $\mu_{v_{\e}}\mres B_{\rho}(x_h)$ is equivalent to the flat convergence of the (normalized) Jacobians of the piecewise affine interpolation of $v_{\e}$ on $B_{\rho}(x_h)$. Denote this piecewise affine interpolation and the one associated to the function $u$ on $B_{\rho}(x_h)$ by $\widehat{v}_{\e}$ and $\widehat{u}(\e)$, respectively. Inserting the estimate \eqref{almostbestapproximation} in the definition of the piecewise affine interpolation one can show that
\begin{equation}
|\widehat{v}_{\e}(x)-\widehat{u}(\e)(x)|\leq C\theta_{\e}\quad\text{for all }x\in B_{\rho}(x_h)\,.
\end{equation}
Taking into account one more time the estimate \eqref{eq:loggrowth}, we conclude that
\begin{align}\label{eq:diffpiecewiseaffine}
&\|\widehat{v}_{\e}-\widehat{u}(\e)\|_{L^2(B_{\rho/2}(x_h))}\left(\|\nabla\widehat{v}_{\e}\|_{L^2(B_{\rho/2}(x_h))}+\|\nabla\widehat{u}(\e)\|_{L^2(B_{\rho/2}(x_h))}\right)\nonumber
\\
\leq&\, C\theta_{\e}\left(\frac{1}{\e^2}E_{\e}(v_{\e};B_{\rho}(x_h))+\frac{1}{\e^2}E_{\e}(u;B_{\rho}(x_h))\right)^{\! \frac{1}{2}}\leq C \theta_{\e}|\log \e|^{\frac{1}{2}}\,,
\end{align}
where the bound $\frac{1}{\e^2}E_{\e}(u;B_{\rho}(x_h))\leq C|\log \e|$ can be proven with similar arguments used to show \eqref{eq:XYprecise}. The above right hand side vanishes when $\e\to 0$. Thus \cite[Lemma 3.1]{Ali-Cic-Pon} implies that the Jacobians fulfill $\mathrm{J}\widehat{v}_{\e}-\mathrm{J}\widehat{u}(\e)\flat 0$. Recalling that $u=\big(\tfrac{x-x_h}{|x-x_h|}\big)^{\pm 1}$ on $B_{\rho}(x_h)$, it follows from Step 1 of the proof of \cite[Theorem 5.1 (ii)]{Ali-Cic} that $\tfrac{1}{\pi}\mathrm{J}\widehat{u}(\e)\flat \deg(u)(x_h)\delta_{x_h}$. Fixing again $\varphi\in C_c^{0,1}(B_{\rho}(x_h))$ such that $\varphi(x_h)=1$, the above arguments imply 
\begin{equation*}
d_h=\langle \mu,\varphi\rangle=\lim_{\e \to 0}\langle \mu_{u_{\e}},\varphi\rangle = \lim_{\e \to 0}\langle \tfrac{1}{\pi}\mathrm{J}\widehat{v}_{\e},\varphi\rangle=\lim_{\e \to 0}\langle \tfrac{1}{\pi}\mathrm{J}\widehat{u}(\e),\varphi\rangle=\deg(u)(x_h)
\end{equation*}
as claimed.

Since the limit measure equals $\mu$ for all $\lambda$, we deduce from the $L^1(\Omega)$-lower semicontinuity of the $\Gamma\hbox{-}\limsup$, \eqref{eq:lambdalimit}, and  \eqref{eq:energy of pc far from singularity} that 
\begin{equation*}
\Gamma\hbox{-}\limsup_{\e \to 0}\left(\frac{1}{\e\theta_{\e}}E_{\e}-2\pi N|\log \e|\frac{\e}{\theta_{\e}}\right)(u,\mu)\leq C\,\eta+\int_{\Omega}|\nabla u|_{2,1}\,\mathrm{d}x\,.
\end{equation*}
The claim then follows by the arbitrariness of $0<\eta<\eta_0$ (recall that $|\mu|(\Omega)=N$).
\end{proof}

Together with Propositions \ref{prop:compactness M vortices} and \ref{prop:lb for M vortices} the next result finishes the proof of Theorem~\ref{thm:e smaller theta with vortices}.
\begin{proposition}[$M$ vortices, Upper bound]\label{prop:limsup theta<<eloge}
 Assume that $\e\ll\theta_{\e}\ll \e|\log\e|$. Let $\mu = \sum_{h=1}^N d_h \delta_{x_h}$ with $|\mu|(\Omega) = M \in \NN$ and let $u \in BV(\Omega;\SS^1)$. Then 
\begin{equation*}
\Gamma\text{-}\limsup_{\e \to 0} \Big( \frac{1}{\e \theta_\e} E_\e - 2 \pi M |\log \e|\frac{\e}{\theta_\e}\Big)(u,\mu)\leq \integral{\Omega}{|\nabla u|_{2,1} }{\d x} + |\DD^{(c)} u|_{2,1}(\Omega) + \mathcal{J}(\mu, u;\Omega) \, .
\end{equation*}
\end{proposition}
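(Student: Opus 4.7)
The plan is to construct the recovery sequence through a chain of approximations that gradually reduces $(u,\mu)$ to the model situation handled in Proposition~\ref{p.smoothapprox}, and then to close the argument by $L^1$-lower semicontinuity of the $\Gamma$-limsup together with a diagonal extraction. Fix $\eta>0$ and choose a near-optimal $T\in\mathrm{Adm}(\mu,u;\Omega)$ so that $\int_{J_T}\ell_T|\nu_T|_1\,\d\H^1\leq\mathcal{J}(\mu,u;\Omega)+\eta$; by Lemma~\ref{lemma:parametric in terms of u} and Remark~\ref{rmk:punctured domains} the parametric integral $\int_{\Omega\x\RR^2}\Phi(\vec T)\,\d|T|$ then equals $\mathcal{E}(\mu,u)$ up to the additive error $\eta$.

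First I would apply Lemma~\ref{lemma:approximation with sing} to $T$ to obtain smooth maps $u_k\in C^\infty(\tilde\Omega_\mu;\SS^1)\cap W^{1,1}(\tilde\Omega;\SS^1)$ on some $\tilde\Omega\Supset\Omega$ with $\deg(u_k)(x_h)=d_h$, $u_k\to u$ in $L^1(\Omega;\RR^2)$, and $\int_\Omega|\nabla u_k|_{2,1}\,\d x\to\int_{\Omega\x\RR^2}\Phi(\vec T)\,\d|T|$ (via Reshetnyak continuity applied as in Proposition~\ref{prop:density smooth}, using that $|G_{u_k}|(\Omega\x\RR^2)\to|T|(\Omega\x\RR^2)$ and $\Phi$ is continuous $1$-homogeneous). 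Next, to reach the setting of Proposition~\ref{p.smoothapprox}, I apply successively Lemma~\ref{lemma:splitting degree} (to create only singularities of degree $\pm 1$, keeping the energy and forcing $\mu^\tau\flat\mu$ because $|\mu^\tau|(\Omega)=\sum_h|d_h|=M$), Lemma~\ref{lemma:correct points} (to move the singularities onto $\lambda_n\ZZ^2$ via a smooth diffeomorphism, preserving degrees and giving $W^{1,1}$-convergence, hence convergence of $\int|\nabla\cdot|_{2,1}$), and finally Lemma~\ref{lemma:modifications near sing} around each singularity (to reshape $u$ into $\left(\begin{smallmatrix}1 & 0\\0 & d_h\end{smallmatrix}\right)\frac{x-x_h^\lambda}{|x-x_h^\lambda|}$ on a ball of radius $\eta_0$, at the cost of an additive error $\sigma$ in the anisotropic gradient). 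At every step the associated vortex measures converge flatly to the previous one.

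For each of these intermediate maps $\tilde u$, with associated vortex measure $\tilde\mu=\sum_{h=1}^{\tilde N}\tilde d_h\delta_{\tilde x_h}$ satisfying $|\tilde\mu|(\Omega)=\tilde N=M$, Proposition~\ref{p.smoothapprox} supplies a discrete sequence $\tilde u_\e\colon\Omega_\e\to\S_\e$ with $\mu_{\tilde u_\e}\flat\tilde\mu$, $\tilde u_\e\to\tilde u$ in $L^1(\Omega;\RR^2)$, and
\begin{equation*}
\limsup_{\e\to 0}\Big(\tfrac{1}{\e\theta_\e}E_\e(\tilde u_\e)-2\pi M|\log\e|\tfrac{\e}{\theta_\e}\Big)\leq\int_\Omega|\nabla\tilde u|_{2,1}\,\d x.
\end{equation*}
Combining this with the $L^1$-lower semicontinuity of the $\Gamma$-limsup with respect to joint $L^1$ and flat convergence of $(u,\mu)$, and passing to the limit in the sequence of auxiliary parameters $(\rho$-regularization, $k\to\infty$, $\tau\to 0$, $\lambda_n\to 0$, $\sigma\to 0$) by a standard diagonal argument yields
\begin{equation*}
\Gamma\text{-}\limsup_{\e\to 0}\Big(\tfrac{1}{\e\theta_\e}E_\e-2\pi M|\log\e|\tfrac{\e}{\theta_\e}\Big)(u,\mu)\leq\int_{\Omega\x\RR^2}\Phi(\vec T)\,\d|T|\leq\mathcal{E}(\mu,u)+\eta,
\end{equation*}
and letting $\eta\to 0$ concludes the proof.

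The main obstacle, conceptually, is the joint convergence statement: at every approximation step one must verify simultaneously $L^1$-convergence of the maps and flat convergence of the vorticity measures with preserved total mass $M$, so that the renormalized energy behaves continuously and the diagonal extraction is legitimate. Mass preservation is precisely the content of the explicit degree bookkeeping in Lemma~\ref{lemma:splitting degree} and Lemma~\ref{lemma:correct points} (and of \eqref{eq:degree of smooth} in Lemma~\ref{lemma:approximation with sing}), which is the delicate part; once this is in place, the chain of approximations is routine and the final application of Proposition~\ref{p.smoothapprox} is immediate.
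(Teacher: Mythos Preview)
Your proposal is correct and follows essentially the same approach as the paper: choose a near-optimal $T\in\mathrm{Adm}(\mu,u;\Omega)$, pass to smooth maps with prescribed singularities via Lemma~\ref{lemma:approximation with sing} and Reshetnyak continuity, then successively apply Lemmata~\ref{lemma:splitting degree}, \ref{lemma:correct points}, and \ref{lemma:modifications near sing} to land in the setting of Proposition~\ref{p.smoothapprox}, and conclude by lower semicontinuity of the $\Gamma$-limsup under joint $L^1$ and flat convergence. The paper's proof is more terse but the chain of reductions is identical; your extra remarks on mass preservation and degree bookkeeping are exactly the points implicitly used there (the stray ``$\rho$-regularization'' parameter in your list is not needed and can be dropped).
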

\begin{proof}
Fix $\sigma > 0$. By the definition~\eqref{eq:def of surface L} of $\mathcal{J}$ there exists a $T \in \D_2(\Omega \x \RR^2)$, with $T \in \cart(\Omega_\mu \x \SS^1)$, $\de T|_{\Omega \x \RR^2} = - \mu \x \llbracket \SS^1 \rrbracket$, and $u_T = u$ such that 
\begin{equation} \label{eq:from BV to cart}
\integral{\Omega \x \RR^2}{\Phi(\vec{T})}{\d |T|} \leq \integral{\Omega}{|\nabla u|_{2,1} }{\d x} + |\DD^{(c)} u|_{2,1}(\Omega) + \mathcal{J}(\mu, u;\Omega) + \sigma \, .
\end{equation}
In the previous inequality we applied Lemma~\ref{lemma:parametric in terms of u} with $\Omega_\mu$ in place of $\Omega$  (cf. Remark \ref{rmk:punctured domains}).

Due to Lemma~\ref{lemma:approximation with sing} we find an open set $\tilde \Omega \supset  \supset \Omega$ and a sequence of maps $u_k \in C^\infty(\tilde \Omega_\mu;\SS^1) \cap W^{1,1}( \tilde{\Omega};\SS^1)$ such that $u_k \to u$ in $L^1(\Omega;\RR^2)$, $|G_{u_k}|(\Omega \x \RR^2) \to |T|(\Omega \x \RR^2)$, and $\deg(u_k)(x_h) = d_h$ for $h = 1,\dots,N$. Reshetnyak's Continuity Theorem implies that 
\begin{equation} \label{eq:from cart to smooth}
\integral{\Omega}{|\nabla u_k|_{2,1}}{\d x} = \integral{\Omega \x \RR^2}{\Phi(\vec{G}_{u_k})}{\d |G_{u_k}|} \leq \integral{\Omega \x \RR^2}{\Phi(\vec{T})}{\d |T|} + \sigma \, ,
\end{equation}
for $k$ large enough. In the first equality we applied Lemma~\ref{lemma:parametric in terms of u} \& Remark \ref{rmk:punctured domains} to $u_k$ in $\Omega_\mu$.
Applying Lemmata~\ref{lemma:splitting degree}, \ref{lemma:correct points}, and  \ref{lemma:modifications near sing} we reduce to the assumptions in Proposition~\ref{p.smoothapprox}. By the lower semicontinuity of the $\Gamma$-$\limsup$ with respect to the strong $L^1$-convergence of~$u$ and the flat convergence of $\mu$, we conclude the proof.
\end{proof}
\noindent {\bf Acknowledgments.}  The work of M.\ Cicalese was supported by the DFG Collaborative Research Center TRR 109, “Discretization in Geometry and Dynamics”. G.\ Orlando has received funding from the Alexander von Humboldt Foundation and the European Union’s Horizon 2020 research and innovation programme under the Marie Sk\l odowska-Curie grant agreement No 792583. M.\ Ruf acknowledges financial support from the European Research Council under the European Community's Seventh Framework Program (FP7/2014-2019 Grant Agreement QUANTHOM 335410).

\end{document}